\theoremstyle{plain}
  \newtheorem{theorem}{Theorem}[section]
  \newtheorem{corollary}[theorem]{Corollary}
  \newtheorem{proposition}[theorem]{Proposition}
  \newtheorem{lemma}[theorem]{Lemma}
\theoremstyle{definition}
  \newtheorem{definition}{Definition}[section]
\theoremstyle{remark}
\numberwithin{equation}{section}
 \newcounter{smallarabics}
\newenvironment{arabicenumerate}
{\begin{list}{{\normalfont\textrm{\arabic{smallarabics})}}}
  {\usecounter{smallarabics}\setlength{\itemindent}{0cm}
  \setlength{\leftmargin}{5ex}\setlength{\labelwidth}{4ex}
  \setlength{\topsep}{0.75\parsep}\setlength{\partopsep}{0ex}
   \setlength{\itemsep}{0ex}}}
{\end{list}}
\newcounter{smallroman}
\newcommand{\ben}{\begin{arabicenumerate}}
\newcommand{\een}{\end{arabicenumerate}}
\DeclareMathOperator{\Tr}{Tr}
\newcommand\otimesal{\mathop{\hbox{\raise 1.6 ex
  \hbox{$\scriptscriptstyle\mathrm{al}$}
\kern -0.92 em \hbox{$\otimes$}}}}
\newcommand\oplusal{\mathop{\hbox{\raise 1.6 ex
  \hbox{$\scriptscriptstyle\mathrm{al}$}
\kern -0.92 em \hbox{$\oplus$}}}}
\newcommand\Gammal{\hbox{\raise 1.7 ex
\hbox{$\scriptscriptstyle\mathrm{al}$}\kern -0.50 em $\Gamma$}}
\renewcommand\i{\mathrm{i}}
\let\al=\alpha \let\be=\beta  
  \let\ga=\gamma 
\let\ka=\kappa  \let\om=\omega 
\let\si=\sigma
 \let\Ga=\Gamma  \let\Om=\Omega
\newcommand{\caB}{{\mathcal B}}
\newcommand{\caD}{{\mathcal D}}
\newcommand{\caF}{{\mathcal F}}
\newcommand{\caH}{{\mathcal H}}
\newcommand{\caI}{{\mathcal I}}
\newcommand{\caJ}{{\mathcal J}}
\newcommand{\caK}{{\mathcal K}}
\newcommand{\caN}{{\mathcal N}}
\newcommand{\caP}{{\mathcal P}}
\newcommand{\caR}{{\mathcal R}}
\newcommand{\caS}{{\mathcal S}}
\newcommand{\caZ}{{\mathcal Z}}
\newcommand{\scrA}{{\mathscr A}}
\newcommand{\scrB}{{\mathscr B}}
\newcommand{\scrD}{{\mathscr D}}
\newcommand{\scrE}{{\mathscr E}}
\newcommand{\scrF}{{\mathscr F}}
\newcommand{\scrG}{{\mathscr G}}
\newcommand{\scrH}{{\mathscr H}}
\newcommand{\scrK}{{\mathscr K}}
\newcommand{\bbC}{{\mathbb C}}
\newcommand{\bbE}{{\mathbb E}}
\newcommand{\bbH}{{\mathbb H}}
\newcommand{\bbN}{{\mathbb N}}
\newcommand{\bbP}{{\mathbb P}}
\newcommand{\bbR}{{\mathbb R}}
\newcommand{\bbS}{{\mathbb S}}
\newcommand{\bbZ}{{\mathbb Z}}
\newcommand{\opunit}{\text{1}\kern-0.22em\text{l}}
\newcommand{\frB}{{\mathfrak B}}
\newcommand{\frG}{{\mathfrak G}}
\newcommand{\bsS}{{\boldsymbol S}}
\newcommand{\norm}{ \|}
\newcommand{\str}{ |}
\newcommand{\e}{{\mathrm e}}
\renewcommand{\d}{{\mathrm d}}
\newcommand{\spann}{\mathrm{Span}}
\newcommand{\beq}{ \begin{equation} }
\newcommand{\eeq}{ \end{equation} }
\newcommand{\bet}{ \begin{theorem} }
\newcommand{\eet}{ \end{theorem} }
\newcommand{\lone}{\mathbbm{1}}
\newcommand{\links}{\mathrm{L}}
\newcommand{\rechts}{\mathrm{R}}
\newcommand{\adjoint}{\mathrm{ad}}
\newcommand{\ad}{\adjoint}
\newcommand{\sgn}{\mathrm{sgn}}
\newcommand{\diam}{\mathrm{diam}}
\newcommand{\distance}{\mathrm{dist}}
\newcommand{\tr }{\mathrm{tr}}
\newcommand{\proj }{\mathrm{proj}}
\newcommand{\degree}{\mathrm{deg}}
\newcommand{\pot}{X}
\newcommand{\bigvolume}{\Lambda}
\newcommand{\conn}{\#_c}
\begin{document}

\title{
Asymptotic quantum many-body localization from thermal disorder}
\author{W. De Roeck, F. Huveneers}
\author{
\vspace{0.1cm}\\
Wojciech De Roeck\\
\normalsize
Instituut voor Theoretische Fysica, K.U.Leuven\\
\normalsize
Celestijnenlaan 200 D\\
\normalsize
3001 Leuven, Belgium\\
\normalsize
E-mail: \texttt{wojciech.deroeck@fys.kuleuven.be }\\
\vspace{0.5cm}\\
Fran\c cois Huveneers\\
\normalsize
CEREMADE\\
\normalsize
Universit\' e Paris-Dauphine\\
\normalsize
Place du Mar\' echal De Lattre De Tassigny\\
\normalsize
75775 Paris CEDEX 16, France\\
\normalsize
E-mail: \texttt{huveneers@ceremade.dauphine.fr}
}
\date{}

\maketitle
\vspace{0.5cm}

\begin{abstract}
We consider a quantum lattice system with infinite-dimensional on-site Hilbert space, very similar to the Bose-Hubbard model.  We investigate many-body localization in this model, induced by thermal fluctuations rather than disorder in the Hamiltonian.  We provide evidence that  the Green-Kubo conductivity $\kappa(\beta)$, defined as the time-integrated current autocorrelation function,  decays faster than any polynomial in the inverse temperature $\beta$ as $\beta \to 0$. 
More precisely,  we define approximations $\kappa_{\tau}(\beta)$ to $\kappa(\beta)$ by integrating the current-current autocorrelation function up to a large but finite time $\tau$ and we rigorously show that  $\be^{-n}\kappa_{\be^{-m}}(\beta)$ vanishes as $\be \to 0$,  for  any $n,m \in \bbN$ such that $m-n$ is sufficiently large. 

\end{abstract}

\vspace{\stretch{1}}

\section{Introduction}

\subsection{Localization and its characterization}
The phenomenon of localization was introduced in the context of non-interacting electrons in random lattices in \cite{andersonabsence}.  It is now widely accepted that in such systems, a delocalization-localization (or metal-insulator) transition occurs as the disorder strength is increased.   This transition is often discussed by referring to the nature of the one-particle wavefunctions that are exponentially localized in space in the insulator, but delocalized in the metallic regime.   The localized phase has been studied with mathematical rigor starting with \cite{frohlichspencerabsence}, whereas for the delocalized regime, this has not been successful up to now. 

The  natural question how interactions modify this transition has received renewed attention lately.  Both theoretical  \cite{baskoaleineraltschuler,vosk2013many} and numerical \cite{palhuse,oganesyanhuse} work suggests that the localization-delocalization transition persists, at least for short range interactions. When talking about models with interaction, most authors choose a model where the localization is manifest in the absence of interaction (whereas, in the original model of \cite{andersonabsence}, it was a highly nontrivial result). For example, a simple model from \cite{palhuse} is the random-field Ising chain
\beq \label{eq: ham spin chain}
  H = \sum_{x=1}^{L}  h_x S^{(3)}_x + J  \bsS_x \cdot \bsS_{x+1} 
  \eeq
where $\bsS_x=(S^{(1)}_x,S^{(2)}_x,S^{(3)}_x)$ are the Pauli-matrices at site $x$ and $h_x$ are i.i.d. random variables with $\bbE(h_x)=0$.
We  think of many-body localization as the property that a local in space excess of energy  does not spread into the rest of the system. However, before formalizing this intuition, we give another possbile definition of many-body localization, used e.g.\ by \cite{palhuse, imbriespencer}, in the model defined by \eqref{eq: ham spin chain}.   Let  $\Psi$ label eigenfunctions of $H$, then `many-body localization' at infinite temperarure $\be=0$ ($\be$ is the inverse temperature) could be defined as the occurrence of the inequality
\beq \label{eq: definition mbl}
 \lim_{L \to \infty} \frac{1}{2^L} \sum_{\Psi} \bbE_h(\str\langle \Psi, S_{L/2}^{(3)} \Psi \rangle\str^2)  \neq      \bbE_h( \str\langle S_{L/2}^{(3)} \rangle_{\be=0} \str^2)=0.
 \eeq
where $\langle \cdot \rangle_\be$ on the right hand side refers to the thermal average and $\bbE_h(\cdot)$ refers to disorder average.  Of course, one can also ask whether this inequality holds at $\be >0$, in which case the average over eigenfunctions $\frac{1}{2^L} \sum_{\Psi}$ on the left-hand side should be restricted to those eigenfunctions with an energy density corresponding to the inverse temperature $\be$, and the right hand side does not automatically vanish. 
Depending on the disorder strength, the validity of \eqref{eq: definition mbl} can then depend on the temperature as well. The appeal of the inequality \eqref{eq: definition mbl} is that it violates the so-called Eigenstate Thermalization Hypothesis (ETH) which  states that most eigenvectors of the Hamiltonian define an ensemble that is equivalent to the standard (micro)-canonical ensemble; i.e.\ with the notation as in \eqref{eq: definition mbl}, it states  that, for  for any $\delta>0$, the bound
\beq \label{eq: definition mbl consequence}
\big\str \langle \Psi, S^{(3)}_{L/2} \Psi \rangle -   \langle S^{(3)}_{L/2} \rangle_{\be=0} \big\str \leq \delta
 \eeq
is satisfied for a  fraction of eigenfunctions $\Psi$ that approaches $1$ as $L \to \infty$. 
  Even though the ETH has not been proven for any interesting non-integrable system (the difficulty of doing so is related to the difficulty of proving delocalization), it has nevertheless been accepted by the theoretical physics community, starting with the works \cite{lindenpopescushortwinter,lebowitzgoldsteinmastrodonatotumulka}.  
It is however important to point out that the ETH also fails for ballistic systems like the ideal crystal for which there is surely no localization in the sense of non-spreading of energy excess.

There is at present no mathematical proof of many-body localization.  Some progress was made for the (one-particle) Anderson model on a Cayley tree in  \cite{aizenmanwarzel}, which is often quoted as a toy model for many-body localization and, recently, an approach via iterative perturbation theory for the model \eqref{eq: ham spin chain} was initiated by \cite{imbriespencer} (see \cite{imbriespencernote} for an outline of their strategy in the one-particle setting).

As already indicated, we prefer a characterization that stresses the dynamics of energy fluctuations, and therefore we consider the Green-Kubo formula for the heat conductivity
\beq
\ka(\be)=  \frac{\be^2}{2} \int_{-\infty}^\infty  \, \d t   \lim_{L \to \infty} \sum_{x} \langle j_{L/2}(t)j_{x}(0)\rangle_\be
\eeq
where $j_x(t)$ are local energy currents at site $x$. Many-body localization is then understood as the vanishing of $\ka(\be)$. The picture underlying such a definition is that $\ka(\be)=0$ means that energy excitations do not spread diffusively (or faster than diffusively) through the system.  
  Let us bypass the question of the relation between these two characterizations of many-body localization; in the few cases where there exists up to date a convincing argument for many-body localization, those arguments  would imply $\ka(\be)=0$, as well.   In any case, it seems to us that the characterization via the conductivity is clearly physically relevant. 
  
In classical mechanics, one can consider models of the same flavour: One-particle localization occurs in a chain of harmonic oscillators with random masses. Adding anharmoncity to this setup yields a model that is a candidate for many-body localization, but the expectation seems to be that these models do not exhibit strict many-body localization. However, the phenomenology can still manifest itself through the dependence of $\ka(\be,g)$ on the anharmonicity $g$. Form the works \cite{oganesyanpalhuse, basko,fishman2009perturbation,huveneers}, one conjectures that,  
\beq  \label{eq: asymptotic localization}
\lim_{g \to 0} g^{-n}\ka(g,\be) \to  0, \qquad \text{for any $n >0$}.
\eeq
In other words, the conductivity has a non-perturbative origin for small $g$. Below, we refer to this scenario as 'asymptotic localization'.

\subsection{Thermal disorder instead of quenched disorder}

Whereas all the models hinted at above have disorder in the Hamiltonian, this paper is concerned with the question whether one can in principle replace the disorder by thermal fluctuations, i.e.\ disorder due to the thermal Gibbs state. As far as we see, this question does not have any one-particle analogue but it is natural in many-body systems. 
Indeed, whereas disorder can model defects, it is also sometimes used as a model for slow degrees of freedom that are, in principle, influenced by the rest of the system. 

The fact that randomness in the strict sense of the word is not necessary for localization had up to now been investigated by replacing the random field in the Hamiltonian by a quasi-random field, which is quite different from what we do.   In the one-particle setup, this led to the study of models like the Aubry-Andr{\'e} model \cite{aubry1980analyticity}, and recently it was argued \cite{iyer} that also in the many-body setting, quasi-randomness suffices for many-body localization.
To explain our setup and question, we now introduce our model. We consider a variant of the Bose-Hubbard model: 
\begin{equation} \label{eq: our model}
H
\; =\; \sum_{x =1 }^L N_x^q  + g  (a_x^{*} a_{x+1} + a_x a_{x+1}^{*}), \qquad q >2
\end{equation}
where $a_x,a^*_x$ are annihilation/creation operators of a boson at site $x$ and $N_x=a^*_xa_x$.   For $q=2$, this model is exactly the Bose-Hubbard model.  In fact, the model we study is slightly more general than  \eqref{eq: our model} to avoid conceptual complications related to conserved quantities and nonequivalence of ensembles, see Section \ref{sec: ham}, however this is not relevant for the discussion here. 
W.r.t.\ the thermal state at $g=0$, the occupations $N_x$ behave as i.i.d.\ random variables whose distribution is given by
\beq \label{eq: def prob}
\mathrm{Prob}(N_x=\eta(x)) = \frac{1}{Z_0(\be)} \e^{-\be \eta(x)^q}, \qquad \text{with}\,Z_0(\be)\, \text{a normalizing constant}
\eeq 
We split our Hamiltonian as 
\beq
H =  E_{0}  + \tilde g V,  \qquad \textrm{with $  E_0= \sum_{x} N^q_x $ and $\tilde g, V$ defined in \eqref{def: tilde g}}
\eeq
and we treat $ \tilde g V$ as a perturbation of $E_0 $. 
Intuitively, a perturbative analysis is possible, if for a pair of eigenstates $\eta,\eta' $ of $E_0 $, we have the non-resonance condition
\beq \label{eq: condition perturbation theory}
\str \langle \eta,  \tilde g V  \eta' \rangle \str  \ll   \str  E_0(\eta) -  E_0(\eta')  \str
\eeq
where $E_0(\eta):=\langle \eta, E_0  \eta \rangle $.
Since the distance between consecutive eigenvalues (level spacing) of the operator $N_x^q$ grows roughly as $N^{q-1}_x$ and the matrix elements of $ \tilde g V$, locally at site $x$, grow as $N_x$ (since they are quadratic  in the field operators), the condition  \eqref{eq: condition perturbation theory} seems  satisfied for most pairs $\eta,\eta'$ if $q>2$, that is, with high probability w.r.t.\ the probability measure \eqref{eq: def prob} when $\be$ is sufficiently small.   This is the basic intuition why this model should exhibit  some localization effect at high temperature\footnote{One should not confuse this with the situation at $\be=\infty$, where one expects a quantum phase transition between a conducting superfluid phase and an insulating Mott phase. This has nothing to do with our results.}.
 However, because of the many-body setup, it is not straightforward that the above claims make sense. In particular, it is certainly \emph{false} that one could apply perturbation theory directly to the eigenstates $\eta$ of $E_0$.  Indeed, since the number of eigenstates should be thought of as $C^{L}$ and the range of energies has width $CL$, the level spacing (difference between nearest levels) vanishes fast as $L \to \infty$.  Therefore, the locality of the operators is a crucial issue that should be used in making the above heuristics precise.  Instead of having resonant and non-resonant configurations $\eta$, we will assign to any $\eta$ 'resonance spots' (where a local version of \eqref{eq: condition perturbation theory} fails). 
 
 Up to now, the heuristic reasoning is in fact no different from the one that would develop for the disordered Ising chain, except that we replaced the 'disorder distribution' by 'distribution in the uncoupled Gibbs state'.  The difference kicks in when one realizes that the non-resonance condition is not static but it can change as the dynamics changes the occupations $\eta$.  Therefore, it is not sufficient to argue that resonant spots are sparse, but one should investigate the dynamics of these resonance spots and exclude that this dynamics induces a current. The most intuitive part of this issue takes the form of a question in graph theory: The vertices of the graph are the configurations $\eta$ and the edges are pairs of configurations that satisfy some resonance condition.  If the connected components of this graph are small, i.e.\ they typically consist of a few configurations, then this hints at localization.   The main problem to be overcome in the present article is to show that, indeed, typical graphs decompose into many small disconnected components. Our analysis is however only valid in the limit $\beta \rightarrow 0$, and for this reason, we do not know yet, even at an heuristic level, whether our model exhibits many-body localization in the strict sense (see also Section \ref{sec: sketch of the proof} and the recent paper \cite{Schiulaz, deroeckhuveneerssearch}), that is, whether the conductivitiy $\ka(\be)=0$ for $\be<\be_c$ with $\be_c >0$, or whether the localization is only asymptotic as in \eqref{eq: asymptotic localization}, i.e. 
\beq  \label{eq: asymptotic localization quantum}
\lim_{\be \to 0} \be^{-n} \ka(\be) =0, \qquad \text{for any}\,\, n>0
\eeq

In this paper, we give a strong indication why at least \eqref{eq: asymptotic localization quantum} should hold, even in higher dimensions $d>1$, see Theorem \ref{thm: vanishing conductivity}.  
This is done by approximating the current-current correlation function by truncation at times that grow like an arbitrary polynomial in $\be^{-1}$ and proving \eqref{eq: asymptotic localization quantum} for these approximations.    We refer to Section \ref{sec: sketch of the proof} for a more detailed overview of the main ideas. 

Similar reasoning was developed earlier in \cite{huveneers} for disordered classical systems, and in \cite{deroeckhuveneers}, for classical systems where the setup is analogous to the present paper, i.e.\ disorder is replaced by thermal fluctuations.

\subsection{Outline of the paper}

In Section \ref{section: Model}, we introduce the model in precise terms and we state our results and in Section \ref{sec: sketch of the proof} we outline the strategy and we present a glossary of the most important symbols used throughout the proof.  
Section \ref{section: Diagonalization} deals with the iterative diagonalization of our Hamiltonian, excluding the resonant configurations (see explanation above).  The sum of all terms that were not treated by iterative diagonalization is called 'the resonant Hamiltonian', indicated by the symbol $Z$. 
Sections \ref{sec: analysis invariant subspaces} and \ref{sec: left right splitting}  contain the analysis of the resonant Hamiltoninian $Z$. As such, they are fully independent of Section \ref{section: Diagonalization} and they form the main part of our work. In Section \ref{sec: proofs}, we finally combine the results of Section \ref{section: Diagonalization} with the analysis of Sections \ref{sec: analysis invariant subspaces} and \ref{sec: left right splitting} to prove our results. In the appendix, we establish exponential decay of correlations at small $\be$ for our model.

\paragraph{Acknowledgements.}

We benefited a lot from discussions with John Imbrie and David Huse and we also thank them for their encouragement regarding this work. 

W.D.R thanks the DFG for financial support and the University of Helsinki for hospitality.
F.H. thanks the University of Helsinki and Heidelberg University for hospitality,
as well as the ERC MALADY
and the ERC MPOES for financial support.

\section{Model and result}\label{section: Model}

\subsection{Preliminaries} \label{sec: preliminaries}
Let $\bigvolume \subset \bbZ^d$ be a finite set. We define the Hilbert space
\beq
\caH_{} := \otimes_{x \in \bigvolume} \ell^2(\bbN) \sim \ell^2(\bbN^{\bigvolume}),
\eeq
i.e.\ at each site there is an infinite-dimensional 'spin'-space.
  For an operator $O$ acting on $\caH_\bigvolume$ we denote by $s(O)$  ('support' of $O$) the minimal set $A$ such that $O= O_A \otimes \lone_{\bigvolume \setminus A}$ for some $O_A$ acting on $\caH_A$, and $\lone_{A'}$ the identity on $\caH_{A'}$ for any $A' \subset \bigvolume$.  We do not distinguish between $O_A$ and $O$, and we will denote them by the same symbol. 

 Let $a,a^*$  be the bosonic annihilation/creation operators on $\ell^2(\bbN)$:
\beq
(a f)(n) = \sqrt{n+1} f(n+1), \qquad  (a^* f)(n+1) = \sqrt{n+1} f(n), \qquad \text{for $n \in \bbN$}
\eeq
We write $a_x,a_x^*$ for the annihilation/creation operators acting on site $x$, and, as announced above, we do not distinguish between $a_x $ and $a_x \otimes \lone_{\bigvolume \setminus \{x\}}$. We also define the number operators
\beq
N_x : = a^*_x a_x
\eeq
The vectors diagonalizing the operators $N_x$ play a distinguished role in our analysis. For a finite set $A$, we define the \emph{phase space} $\Om_A:=  \bbN^{A}$ 
with elements
\beq
\eta = (\eta(x))_{x \in A}, \qquad \eta(x) \in \bbN
\eeq
such that $\caH_A \sim \ell^2(\Om_A)$ and we often use $\eta$ as a label for the function $\delta_\eta$ i.e.\ $\delta_\eta(\eta')=\delta_{\eta,\eta'}$ for $ \eta',\eta \in \Om_A$. 

\subsection{Hamiltonian}\label{sec: ham}
We introduce the Hamiltonian of our model in finite volume $\bigvolume \subset \bbZ^d$ and with free boundary conditions;
\begin{equation} \label{def: hamiltonian}
H_{\bigvolume} 
\; =\; \sum_{x\in \bigvolume } N_x^q +g_1 (a_x + a_x^{*})^2 +  \sum_{x,x' \in \bigvolume, x \sim x'} g_2  (a_x^{*} a_{x'} + a_x a_{x'}^{*})
\end{equation}
where $
N_x \; = \; a_x^{*} a_x$, $x \sim x'$ means that $x,x'$ are nearest neighbours, and the exponent  $q > 2$.   
By standard methods (e.g.\ Kato-Rellich), one checks that $H_\bigvolume$  is self-adjoint on the domain of $\sum_{x\in \bigvolume} N_x^q$.
  The term $g_1 (a_x + a_x^{*})^2$ destroys the conservation of the total occupation number $\sum_{x \in \bigvolume} N_x$.
In the sequel, we will assume that $g_1, g_2 \sim 1$, so that total energy is the only conserved quantity. Nonetheless, all our results remain valid when $g_1$ or $g_2$ vanish.
The  reason  why we find it important to destroy the second conserved quantity is that  similar models with two conserved quantities typically exhibit non-equivalence of ensembles. As explained in \cite{rumpf}, one expects in a microcanonical ensemble equilibrium states where a macroscopic part of the particles  (the total number of particles would correspond  to $\sum_{x \in \bigvolume} N_x$ in our model) is concentrated on a single lattice site.  We want to stress that this type of 'statistical localization' has nothing in common with the localisation mechanism in the present paper. 

To avoid constants later on, we demand that $\str g_1\str, \str g_2 \str \leq 1$. 


\subsection{States} \label{sec: states}

The thermal equilibrium state $\om_{\be, \bigvolume}$ of the system at inverse temperature $\be$ and in finite volume $\bigvolume$ is defined as
\beq
\om_{\be,\bigvolume} (O) =  \frac{\Tr O \e^{-\be H_\bigvolume}  }{\Tr \e^{-\be H_\bigvolume}  }, \qquad    O \in \caB(\caH_\bigvolume)
\eeq
We are interested in the high-temperature regime, where the finite-volume states $\om_{\be,\bigvolume}$ have a unique infinite-volume limit (for, say, $\bigvolume \nearrow \bbZ^d$ in the sense of Van Hove), independent of boundary conditions.  Morally speaking, this results belongs to standard knowledge, but, literally, it does not, because of the infinite one-site Hilbert space. In principle, we deal with this issue in the appendix, but, since we in fact only need exponential decay of correlations, uniformly in $\bigvolume$, we will not explicitly address the construction of the infinite volume state. 
We drop the volume $\bigvolume$ and inverse temperature $\be$ from the notation for the time being, writing simply $\om(\cdot)$.  It is understood that sums over $x,x'$ are always restricted to the volume $\bigvolume$.


\subsection{Currents} \label{sec: currents}
We fix once and for all the vector $e_1= (1,0,\ldots, 0) \in \bbZ^d$ and we study the current  in this direction. 
First, we decompose the Hamiltonian as
\beq
H=\sum_{x } H_x
\eeq
where
\beq
H_x = N_x^q +g_1 (a_x + a_x^{*})^2 + \tfrac{1}{2}\sum_{x': x' \sim x} g_2  (a_x^{*} a_{x'} + a_x a_{x'}^{*}) 
\eeq
We define local current operators $J_x$ by
\beq \label{def: local currents}
J_x =    \i \sum_{x': x'_1 > x_1} [H_{x'},H_{x}]
\eeq
Since the operators $H_x$ act on at most $2d+1$ sites, all $x'$ that contribute a nonzero term to the sum in \eqref{def: local currents} are nearest neighbours of $x$. 
One way to convince oneself that this is a meaningful definition is to consider first the total current through the (restriction of a) hyperplane $\bbH_a= \{x \in \bigvolume:\, x_1=a\}$ as the time-derivative of  the total energy to the left of this hyperplane, i.e.\
\beq \label{current accross an hyperplane}
J_{\bbH_a}  :=  \i [H, H^{(\links)} ]  = \frac{\d}{\d t}  H^{(\links)}(t)\big\str_{t=0}, \qquad  \textrm{with}\,\,   H^{(\links)}=  \sum_{x: x_1 \leq a} H_x
\eeq
Then it follows that 
\beq
J_{\bbH_a}  = \sum_{x: x_1=a}  J_x 
\eeq
Note that, by the time-invariance of the equilibrium state, $\om(O(t))=\om(O)$, we have
\beq \label{eq: invariance current hyperplane}
 \om(J_{\bbH_a} )=0
\eeq

\subsection{Green-Kubo formula} \label{sec: green kubo}

To study the Green-Kubo formula, we introduce an empiric average of the local current over  space and time:
\beq\label{current appearing in Green Kubo}
\caJ_{\tau} =   \frac{1}{\sqrt{\tau\str \bigvolume \str}} \int_0^\tau  \d t  \sum_{x }   J_x(t)
\eeq
where the scaling anticipates a central limit theorem, relying on the fact that  the equilibrium expectation of $\caJ_{\tau}$ vanishes:
\beq
 \om (\caJ_{\tau} ) =    0.
 \eeq
This follows directly from \eqref{eq: invariance current hyperplane} by using the decomposition  $\sum_{x }= \sum_a \sum_{x:x_1=a}$. 
 We introduce the finite-time conductivity
\beq
 \ka_{\tau}(\be)= \be^2 \lim_{\bigvolume \nearrow \bbZ^d} \om(   \caJ^{*}_{\tau }  \caJ_{\tau} )
\eeq
A basic intuition in transport theory states that in systems with normal (diffusive) transport, the current-current correlations decay in an integrable way, resulting in the convergence of the finite-time conductivity to the conductivity $\ka:= \lim_{\tau \to \infty}  \ka_{\tau}$ with $0 <\ka< \infty$.  At present, this has however not been proven in any interacting Hamiltonian system. 
Instead, we study the behaviour of the approximants  $ \ka_{\tau}$ for arbitrarily large $\tau$ (polynomial in $\be^{-1}$) and we show that at all such times, the conductivity vanishes:

\bet[Conductivity in small $\be$ limit] \label{thm: vanishing conductivity} There is a $C>0$ such that
for any $0<n<m-C$, 
\beq
 \lim_{\be \to 0}    \be^{-n} \ka_{\be^{-m}}(\be)      =0
\eeq
\eet
As already explained in the introduction, we take this result as a strong indication that also
\beq \label{eq: conjecture}
 \lim_{\be \to 0}    \be^{-n} \ka(\be)      =0, \qquad \text{for any}\,\, n>0
\eeq
To make this precise, we should understand what type of processes dominate the dynamics after very long times, i.e. superpolynomial in $\be^{-1}$.  In \cite{deroeckhuveneers}, we argued  for models of classical mechanics that in the case that the dynamics becomes chaotic at such large times, the conjecture \eqref{eq: conjecture} is definitely true.   This was done by introducing an energy-conserving stochastic term in the dynamics of arbitrarily small strength and proving that the conductivity (which in that case can be shown to be finite) has the same order of magnitude as the stochastic term.  This is not attempted in the present paper. 
On the other hand, without such a stochastic term, it remains an enormous task to prove that the conductivity is even finite and nonzero, see for example \cite{bonettolebowitzreybellet} for an exposition of this problem. 

An alternative way to view our results, is to compare them to Nekhoroshev estimates in classical systems. Such estimates typically establish results very reminiscent of ours, but they are restricted to a finite number of degrees of freedom. We refer to \cite{deroeckhuveneers} for a more thorough discussion of this point and for relevant references. 

\subsection{Splitting of the current} \label{sec: splitting of the current}

From a technical point of view, the key result in this paper is  a splitting of the current  $J_{\bbH_a} $ into an oscillatory part and a small part. 
To describe it, let us introduce a  multi-dimensional strip (whose width is called $2r^2$) containing the  hyperplane $\bbH_a$;
\beq \label{def: strip first}
\bbS_{a,r^2}:= \{x \in \bigvolume: \str x_1 -a \str < r^2 \}
\eeq
and we often drop the parameters by simply writing $\bbS=\bbS_{a, r^2}$. 
\bet[Splitting of current] \label{thm: splitting of current}
For any $r>0$, and sufficiently small $\be$, depending on $r$, the following holds uniformly in the volume $\bigvolume$ and the choice of $a$:
There are collections of operators $(O_A)_{A \subset \bbS_{a,r^2}}, (I_A)_{A \subset \bbS_{a,r^2+2}}$, such that
\beq\label{explicit splitting of the current}
J_{\bbH_a}= \sum_{A \subset \bbS_{a,r^2}} \i [H, O_A]  +     \sum_{A \subset \bbS_{a,r^2+2}} I_A
\eeq
and 
\begin{enumerate}
\item  The operators $O_A$ and $I_A$ are supported in $A$, i.e.\ $s(I_A), s(O_A) \subset A$, and  $O_A=I_A=0$ whenever $A$ is not connected. 
\item   $O_A$ and $ I_A$ have zero average: $\om(O_A)= \om(I_A)=0$
\item  They  are bounded as
\begin{align} \label{eq: hs norms}
 \om(O^*_{A} O_{A})   \leq    C(r) \be^{-C +c(r)\str A \str  }, \qquad  \om(I^*_{A} I_A )     \leq    C(r) \be^{-C + cr  +c(r)\str A\str } 
\end{align}
\end{enumerate}
Here, $C,c$ denote constants with $C < \infty, c>0$ that depend only on the dimension $d$, and the exponent $q$. The parameters $C(r),c(r)$ can additionally depend on $r$. 
\eet
The relevance of this theorem in establishing asymptotic energy localization is explained in more details below. 



\section{Overview of the method}\label{sec: sketch of the proof}

Before embarking into the proof of our results, let us informally describe the main steps leading to them. 
Let us first observe that Theorem \ref{thm: vanishing conductivity} is readilly deduced from Theorem \ref{thm: splitting of current}, as detailed in Section \ref{sec: proof of main theorem}.
Indeed, to start with, the first sum in the right hand side of \eqref{explicit splitting of the current} just represents local energy oscillations; 
the contribution of such an oscillation to the current \eqref{current appearing in Green Kubo} is given by
\begin{equation*}
\frac{1}{\sqrt\tau}\int_0^\tau i [H,O](t) \d t = \frac{O (\tau) -  O(0)}{\sqrt\tau} \;\;\rightarrow\;\; 0 \qquad \text{as} \qquad \tau\;\; \rightarrow\;\; \infty. 
\end{equation*}
Next, the terms in the second in sum in the right hand side of \eqref{explicit splitting of the current} possibly contribute to the conductivity, 
but are very small in the Hilbert-Schmidt norm $\norm I_A \norm_\om:= \om(I^*_{A} I_A )^{1/2} $ based on the thermal state $\om$.
In fact, they are seen to decay as an arbitrary large power in $\beta$, if $r$ is taken large enough, 
thanks to the presence of the term `$cr$' in the exponent of the bound in \eqref{eq: hs norms}. 
Fianally, the terms $c(r)\str A\str $ in the exponents in \eqref{eq: hs norms} ensure that we can perform sums over the connected sets $A$.

We can thus now focus on the derivation of Theorem \ref{thm: splitting of current}. 
Let us start by explaining the origin of the oscillatory term in \eqref{explicit splitting of the current}.
For the sake of the argument, let us consider a strongly localized solid. So we imagine that
the unitary change of basis $U$ that diagonalizes $H$ is written as $U= \e^{-K}$, where the anti-hermitian matrix $K$ is a sum of almost local terms 
(see Section \ref{subsec: Interaction potentials} for a precise definition of what almost local means). 
The diagonalized Hamiltonian $\Delta$ then takes the form 
\begin{equation}\label{Delta operator}
\Delta \; = \; U^\dagger H U \; = \;  \sum_x \Delta_x \; = \;  \sum_x \Big\{ f_1(N_x) + f_2(N_x,N_{x+1}) + f_3(N_{x-1},N_x,N_{x+1}) + \dots  \Big\}
\end{equation}
where the terms $f_k$ quickly decay to $0$ as $k\rightarrow \infty$ (we took $d=1$ here for simplicity). 
We now could say that $H^{(L)}$ defined in \eqref{current accross an hyperplane} was the naive left part of the total energy. We define
\begin{equation*}
\widetilde{H}^{(L)} \; = \; U \Delta^{(L)} U^\dagger \qquad \text{with} \qquad \Delta^{(L)} \; = \; \sum_{x:x_1\le a} \Delta_x. 
\end{equation*}
But then, from \eqref{current accross an hyperplane}, we find
\begin{equation}\label{the relation between two currents}
J_{\mathbb H_a} \; = \; i [H,H^{(L)}] \; = \; i [H,H^{(L)} - \widetilde{H}^{(L)}] + i [H,\widetilde{H}^{(L)}]. 
\end{equation}
On the one hand, the locality properties of $U$ allow to conclude that $H^{(L)} - \widetilde{H}^{(L)}$ is localized near the hyperplane $\mathbb H_a$, 
so that the first term in this last equation may be identified with the first sum in the right hand side of \eqref{explicit splitting of the current}. 
On the other hand $[H,\widetilde{H}^{(L)}]$ would here vanish. 
In reality, we will however not be able to fully diagonalize $H$, so that a ``rest term" $\sum_A I_A$ appears in \eqref{explicit splitting of the current}. 
Technically, this step consisting in deriving Theorem \ref{thm: splitting of current}
once the change of basis $U$ and the opertaor $\Delta$ are known, is performed in Sections \ref{sec: splitting of ham in strip}-\ref{subsec: Classification of current operators}. 
This leads to heavy computations, as the operator $\Delta$ that we manage to obtain is far less simple than \eqref{Delta operator}; 
this issues from both conceptual (resonances) and technical questions (high energies).  

We now need to find a change of  basis $U$ that will remove as much oscillations as possible, and then analyze the Hamiltonian in the new basis. 
The construction of the change of basis is performed in Section \ref{section: Diagonalization} 
(the notation $U$ does not appear yet in Proposition \ref{thm: renormalization}; it only shows up in Section \ref{sec: unitary trafo} when we restrict our attention to finite volumes). 
As already stressed in the introduction, the interaction between atoms can be treated as a perturbation at high temperature, thanks to the choice $q > 2$ in the Hamiltonian \eqref{eq: our model}: 
resonances are only met in some exceptional places in the solid (see figure \ref{figure: resonant and non resonant interaction}). 
To make this a bit more transparent at this level of the discussion, we can rewrite $H$ given by \eqref{eq: our model} as 
\begin{equation}  \label{def: tilde g}
H \; = \; E_0 \, + \, \tilde{g} V \qquad \text{with} \qquad E_0 \; = \; \sum_{x}N_x^q, \qquad \tilde{g} = \beta^{1- 2/q}g, \qquad V \; = \; \beta^{-1+ 2/q}\sum_x  (a_x^* a_{x+1} + a_x a^{*}_{x+1}).
\end{equation}
With these notations, both typical self-energy differences and terms in $V$ are of order $\beta^{-(1-1/q)}$, so that $\tilde g$ is indeed a perturbative parameter.  
We will however not explicitly make use of these notations in the proofs.

\begin{figure}[h!]

\vspace{0.5cm}

\begin{center}

\begin{tikzpicture}[scale=1]
\draw [ultra thick] (-2,1.5) -- (-1,1.5);
\draw [ultra thick] (-1,0.5) -- (0,0.5);
\draw [ultra thick] (0,3) -- (1,3);
\draw [ultra thick] (1,0) -- (2,0);
\draw [ultra thick] (2,2) -- (3,2);
\draw [ultra thick] (3,1) -- (4,1);
\draw [ultra thick] (4,1.5) -- (5,1.5);
\draw [ultra thick] (5,2.5) -- (6,2.5);

\draw [>=stealth,->] (-0.5,0.5) -- (-0.5,0.9);
\draw [>=stealth,->] (0.5,3) -- (0.5,2.6);
\draw [>=stealth,->] (3.5,1) -- (3.5,1.4);
\draw [>=stealth,->] (4.5,1.5) -- (4.5,1.1);

\draw [ultra thick, dashed] (-1,1) -- (0,1);
\draw [ultra thick, dashed] (0,2.5) -- (1,2.5);
\draw [ultra thick, dashed] (3,1.5) -- (4,1.5);
\draw [ultra thick, dashed] (4,1) -- (5,1);


\draw [>=stealth,->] (-2.2,-0.5) -- (6.3,-0.5);
\draw [>=stealth,->] (-2,-0.7) -- (-2,4);

\draw (6,-0.6) node[below]{$x$} ;
\draw (-2,3.6) node[left]{$N_x$} ;
\end{tikzpicture}
\end{center}

\caption{
\label{figure: resonant and non resonant interaction}
Resonances in first order in perturbation. For simplicity we assume $d=1$. 
The situation on the left is typical at high temperature, and non-resonant, as the self-energy difference is much larger than the interaction energy: 
$\big( N_x^q + N_{x+1}^q \big) - \big( (N_x +1)^q + (N_{x+1} - 1)^q \big) \; >> \; g \sqrt{N_x (N_{x+1} - 1)}.$
The situation on the right is rare and resonant: the self-energy difference even vanishes in this case.  
}

\end{figure}
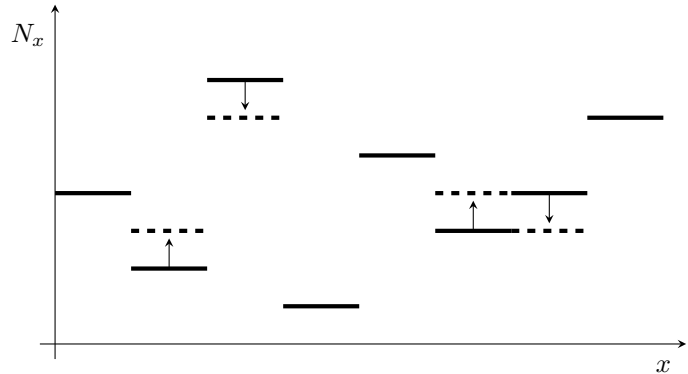

We construct $U$ via an iterative KAM-like scheme, recently developed by Imbrie and Spencer \cite{imbriespencer} in the contex of quenched disordered systems. 
Naively, the scheme works as follows. 
In a first step, we determine $U$ so that $H' := U^\dagger H U$ takes the form $H'= E'_0 + \tilde{g}^2 V'$, for some new self energy $E_0' = E_0 + \mathcal O(\tilde{g}^2)$ and some new perturbation $V'$. 
For this, we write $U = \e^{-K}$ and, assuming that $K$ is a sum of local terms of order $\tilde{g}$, we expand $U^\dagger H U$ in powers of $\tilde g$: 
\begin{equation}\label{naive expansion renormalization}
U^\dagger H U \; = \; \e^{K} (E_0 + \tilde g V) e^{-K} \; = \; E_0 \, +  \, \big( \tilde g V + [K,E_0] \big) \, + \, \mathcal O (\tilde g^{2}). 
\end{equation}
Writing $V = \sum_x V_x$ and $K = \sum_x K_x$, we get rid of the first order in $\tilde{g}$ by setting 
\begin{equation}\label{cohomological equation}
\langle \eta | K_x | \eta' \rangle \; = \; \tilde{g} \,\frac{\langle \eta | V_x | \eta' \rangle}{E_0(\eta) - E_0(\eta')} \qquad \Rightarrow \qquad  \tilde g V + [K,E_0] \; = \; 0
\end{equation}
($\langle \eta | K_x | \eta' \rangle=0$ if $\langle \eta | V_x | \eta' \rangle=0$ by definition). 
The fact that resonances are rare precisely means that for most of the pairs of states $\eta$ and $\eta'$, the matrix element $\langle \eta | K_x | \eta' \rangle$ is well defined and of order $\tilde{g}$.
Let us ignore resonances for the moment. 
We would then conclude that the expansion \eqref{naive expansion renormalization} was justified. 
So, we would also have determined a renormalized Hamiltonian $H'$ with a perturbation of order $\tilde{g}^2$. 
This strategy could then be iterated, constructing a sequence of change of variables $\e^{K^{(j)}}$, where $K^{(j)}$ is a sum of local terms of order $\tilde{g}^{2^{j-1}}$.
Doing so, we would readilly conclude that $H$ is strongly localized in the sense of \eqref{Delta operator}. 

As an aside, let us observe that it is only possible to find $K$ so that \eqref{cohomological equation} holds if the pertubation $V$ has no diagonal element.
This explains that in principle we need to renormalize the self-energy ($E_0' = E_0 + \mathcal O(\tilde{g}^2)$ as mentionned), 
in order to absorbe all the diagonal part of the new Hamiltonian. 
It is however not what we do in practice, as we simply treat these extra diagonal elements as resonances. 
However, in order to investigate possible true localization in translation-invariant models, there would be a deeper reason to take account of the self-energy renormalization. 
Indeed, if this phenomenon is ignored, it is readilly seen that, for fixed $\beta$, resonances would eventually become typical as one goes on to higher orders in perturbation theory. 
Since, at higher orders, atoms may change level by more than one unit, the interaction could now just swap the levels of any two near atoms
(whereas at the first order this was only possible if the energies were nearly the same as depicted on the right interaction of figure \ref{figure: resonant and non resonant interaction}).
So all atoms would be in resonance with their neighbours, allowing energy to travel into the solid.  
On the other hand, such a drastic conclusion could not be reached if the renormalization of the self-energy was taken into account. 
It has in fact been suggested in \cite{Schiulaz} that this effect could guarantee that resonances rarefy as one moves to higher orders. 
To support this view, 
we indeed observe that the perturbative splitting of the levels could and should be exploited to show localization in the one-body Anderson model when the disorder only takes a finite numner of values, 
a model for which localization is clearly expected to hold. 

Let us come back to the description of the scheme initiated in \eqref{naive expansion renormalization}. 
It is clear that resonances, even if very rare, cannot just be ignored as we pretended up to now. 
We just do as much as we can: the perturbation $V$ is splitted into a resonant and non-resonant term (see \eqref{resonant non resonant splitting}), 
and \eqref{cohomological equation} is only solved with $V$ replaced by the non-resonant part of $V$.
While the change of variables $\e^{K^{(j)}}$ are now well defined and enjoy good decay properties, this replacement comes with a price.
A first, technical, consequence is that the speed of the iteration procedure is much slowed down.     
Indeed, in this version of the scheme, we just let the resonant part as it is, so that at each step, resonant terms of order $\tilde{g}$ are present in the perturbation. 
Though they do not create any trouble as such, it is seen that, itarating the scheme once more, 
non-resonant terms are generated that would be too large for a superexponential bound like $\tilde{g}^{2^{j-1}}$ to survive. 
Instead, we can only obtain that $K^{(j)}$ is a sum of terms of order $\tilde{g}^j$ (so we do not progress faster than in usual perturbation theory).

The true problem is however that, after a large but finite number of iterations, 
we are left with a Hamiltonian containing still a perturbation of order $\tilde{g}$ 
(see the term $\scrG^{(r)}$ in \eqref{eq: change of variables}, and, later on, the resonant Hamiltonian $Z$ defined in \eqref{eq: def z}).
The resonant Hamiltonian is well sparse, but not as much as needed to get our results: 
a look at figure \ref{figure: resonant and non resonant interaction} shows indeed that the probability of two atoms to be resonant is at best bounded by $\beta^{1/q}$.  
Before indicating how we will get off the hook, let us stress here that the analysis of resonances reveals a fundamental difference between quenched and thermal disorder.   

To see this, let us for example consider the first order resonances in a quenched disorder spin chain, as studied by \cite{palhuse} \cite{imbriespencer}. 
In this model, it is possible determine bonds on the lattice such that resonances can only occur on these bonds. 
Moreover, if the disorder is strong enough, these potentially resonant bonds form small isolated islands. 
In this case, it is then in fact possible to completely get rid of the resonant Hamiltonian at each step of the procedure. 
Indeed, one can diagonalize the Hamiltonian ``on the resonant islands", meaning that we conjugate it with a change of basis that affects only the terms in $H$ that act inside the islands.  
This rotation is non-perturbative, but does not entail any delocalization, as the resonant spots do not percolate. 
At the opposite, in the translation invariant set-up, it is no longer possible to visualize resonances on the physical lattice. Instead, we directly need to analyze a percolation problem in the full set of states
(it should however be noticed that the eigenstates of the resonant Hamiltonian could still be localized even in the presence of a giant percolation cluster, 
but we are not aware of any convincing argument supporting this view). 
This is a rather delicate problem, illustrated on figure \ref{figure: motion flemish mountain}.

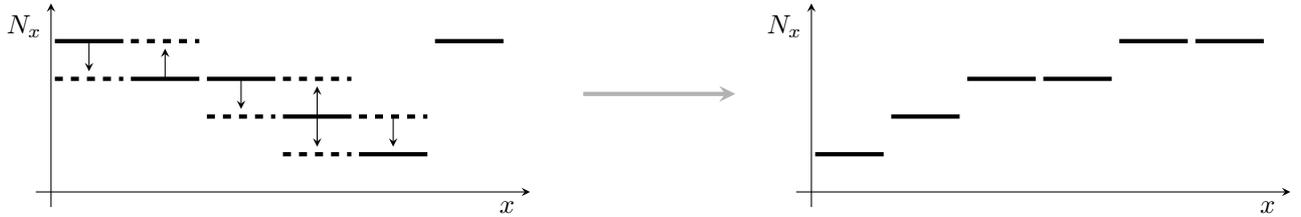
\begin{figure}[h!] 

\vspace{0.5cm}

\begin{center}

\begin{tikzpicture}[scale=1]
\captionsetup{singlelinecheck=off}
\draw [ultra thick] (0.05,2) -- (0.95,2);
\draw [ultra thick] (1.05,1.5) -- (1.95,1.5);
\draw [ultra thick] (2.05,1.5) -- (2.95,1.5);
\draw [ultra thick] (3.05,1) -- (3.95,1);
\draw [ultra thick] (4.05,0.5) -- (4.95,0.5);
\draw [ultra thick] (5.05,2) -- (5.95,2);

\draw [ultra thick,dashed] (0.05,1.5) -- (0.95,1.5);
\draw [ultra thick,dashed] (1.05,2) -- (1.95,2);
\draw [ultra thick,dashed] (2.05,1) -- (2.95,1);
\draw [ultra thick,dashed] (3.05,1.5) -- (3.95,1.5);
\draw [ultra thick,dashed] (3.05,0.5) -- (3.95,0.5);
\draw [ultra thick,dashed] (4.05,1) -- (4.95,1);

\draw [>=stealth,->] (0.5,2) -- (0.5,1.6);
\draw [>=stealth,->] (1.5,1.5) -- (1.5,1.9);
\draw [>=stealth,->] (2.5,1.5) -- (2.5,1.1);
\draw [>=stealth,->] (3.5,1) -- (3.5,1.4);
\draw [>=stealth,->] (3.5,1) -- (3.5,0.6);
\draw [>=stealth,->] (4.5,1) -- (4.5,0.6);

\draw [>=stealth,->] (-0.2,0) -- (6.3,0);
\draw [>=stealth,->] (0,-0.2) -- (0,2.5);

\draw[>=stealth,->,ultra thick,color=gray!60] (7,1.3) -- (9,1.3);

\draw[ultra thick] (10.05,0.5) -- (10.95,0.5);
\draw[ultra thick] (11.05,1) -- (11.95,1);
\draw[ultra thick] (12.05,1.5) -- (12.95,1.5);
\draw[ultra thick] (13.05,1.5) -- (13.95,1.5);
\draw[ultra thick] (14.05,2) -- (14.95,2);
\draw[ultra thick] (15.05,2) -- (15.95,2);

\draw [>=stealth,->] (9.8,0) -- (16.3,0);
\draw [>=stealth,->] (10,-0.2) -- (10,2.5);

\draw (16,0) node[below]{$x$} ;
\draw (10,2.2) node[left]{$N_x$} ;

\draw (6,0) node[below]{$x$} ;
\draw (0,2.2) node[left]{$N_x$} ;

\end{tikzpicture}
\end{center}
\captionsetup{singlelinecheck=off}
\caption[foo bar]{\label{figure: motion flemish mountain}
In translation invariant chains, resonances do travel into the system.  Let us assume that next to nearest neighbor level swapping is allowed (which anyway occurs in second order in perturbation). More precisely, this means that a configuration
 $\ldots, N_x,N_{x+1},N_{x+2},\ldots $ can be transformed into
  \begin{enumerate}
    \item[$i.$] $\ldots,N_{x+1}, N_x,N_{x+2},\ldots \quad $ if $\str N_{x+1}-N_{x} \str \leq 1$
    \item[$ii.$] $\ldots, N_x,N_{x+2},N_{x+1},\ldots \quad$ if $\str N_{x+2}-N_{x+1} \str \leq 1$ 
       \item[$iii.$]  $\ldots,N_{x+2},N_{x+1}, N_x\ldots \quad$ if $\str N_{x+2}-N_{x} \str \leq 1$
  \end{enumerate}
 With a bit of trial and error, we discover that the left configuration can be transformed into the right configuration in a few steps.  This means that 
the time evolution of the state on the left under the dynamics generated by the resonant Hamiltonian can have an overlap with the state on the right. 
We see that the most right atom can enter in resonance with the other ones, though it was not initially so. 
  }
  

\end{figure}

We will not attempt to diagonalize the resonant Hamiltonian. 
Instead, the total energy will be separated into a left and right part, in a state dependent way, by a surface close to $\mathbb H_a$ that  ``slaloms" between the resonances. This is described in Section \ref{sec: left right splitting}, see in particular Figure \ref{fig: leftright1} where the spirals indicate the resonant spots. 
So, we will  arrive in the situation described  by \eqref{the relation between two currents}: 
the second term in the right hand side of this equation will now be sufficiently sparse current, while the first term still is just an oscillation. 

To see how to define this surface, we need to analyze the motion of resonances (see Section \ref{sec: analysis invariant subspaces}). 
Let us first restrict the Hamiltonian to a large but fixed volume $V$ around a point on $\mathbb H_a$ (a volume that will not be sent to infinity).
We show the following. Let us pick up a state $\eta$ in $V$, 
and let us collect all the other states in $V$ that could have an overlap with the time evolution of $\eta$ under the dynamics generated by the resonant Hamiltonian. 
We show that for an overwhelming majority of states $\eta$, there exists small isolated islands in $V$ such that any of the state that we have collected, only differ from $\eta$ on these islands. 
The set of states for which this does not hold is small enough to be neglected. 
On the one hand, we can convince ourselves of the validity of this statement by looking at figure \ref{figure: resonant and non resonant interaction}. 
To simplify, let us assume that resonances are first order, and only occur when two levels are swapped as it is the case for the interaction on the right. 
Then on that example, it is seen that the only resonant island is located on the sites 5,6,7, assuming that atoms have been labeled from 1 to 8.  
On the other hand, a look at figure \ref{figure: motion flemish mountain} hints that this statement could be violated if $V$ was sent to infinity for fixed $\beta$. 
Indeed, as the volume gets larger and larger, configurations that are rare locally, eventually occur. 
It is thereofre concivable that a big resonant spot starts invading the full space, connecting configurations that would have remained separted if the perturbation was confined to the volume $V$.  
 
So we have found a way to construct the surface close to $\mathbb H_a$ in the volume $V$, 
but this is not completely satisfactory as we take the thermodynamic limit $\Lambda \rightarrow \infty$ before sending $\beta\rightarrow 0$.
Two issues are raised. 
First, if the dimension is larger than one, we may take a volume $V$ around each point in $\mathbb H_a$ and construct a piece of surface in each of these volumes, 
but we then have to glue them together.  
Second, even in one dimension, where the surface just reduces to a single point, we must analyze what extra-current is produced if the Hamiltonian is now defined on the full space. 
Let us bypass here the first question, that leads to intricate constructions (see Section \ref{sec: left right splitting}), as the second one appears to us as more fundamental. 
We actually observe that the set of states for which an extra current is produced when reintroducing the interaction at the border is extremely small. 
Indeed, a non zero current could only be created if a small energy change at the border, induced by the perturbation, could completely modify the island picture up to the center of $V$. 
However, in most cases, the configuration of the islands is far less fragile: 
a very atypical configuration would be required for a single change at the border to propagate in the bulk of $V$ 
(too few atoms appear on figure \ref{figure: resonant and non resonant interaction} to see this neatly, but one can be readily become convinced by adding a few sites). 
We thus see that the current is indeed very sparse. 

This summarizes most of the conceptual points addressed in this article. 

\section*{Glossary}
Here is an overview of symbols that appear in different parts of the article (excluding the appendix). The middle column gives the page where the symbol appears for the first time. \vspace{0.5cm}\\
\noindent Potentials (script fonts: $\scrA,\scrB,\ldots$); 
\begin{center}
  \begin{tabular}{ p{2.3cm} p{1cm} p{10cm} }
    $\scrE (\scrE_0)$& \pageref{def: full potential}&Potential of the model Hamiltonian (without interaction). \\ 
   $ \scrF,\scrG, \scrD$   & \pageref{eq: change of variables}& Renormalized  potential:  nonresonant, resonant, diagonal.\\ 
   $ \hat \scrF,\hat \scrG$   & \pageref{eq: hat decomp} & Finite-range approximations to renormalized potentials. \\ 
    \end{tabular}
\end{center}
Operations on potentials (Calligraphic fonts);
\begin{center}
  \begin{tabular}{ p{2.3cm} p{1cm} p{10cm} }
    $\caP_{\leq M}, $   &  \pageref{def: volume cutoff}   & Cutoff in occupation number.   \\ 
      $\caP_{\mathrm{Res}},\caP_{\mathrm{NRes}} $  & \pageref{resonant non resonant splitting}  &Projection onto resonant, nonresonant parts. \\ 
    $\caK$      & \pageref{def: total trafo potentials}& Total renormalization transformation. \\
        $\caD$      & \pageref{def: diagonal restriction}& Restriction to diagonal. \\
      $\caI_A$   & \pageref{def: volume restriction potentials} &Restriction to volume $A$. \\
      $\caR_n$    & \pageref{def: range restriction potentials} &Restriction to range $n$. \\
    \end{tabular}
\end{center}
Notions from the analysis of the resonant Hamiltonian, for configurations $\eta$ and components $\mu$;
\begin{center}
  \begin{tabular}{ p{2.3cm} p{1cm} p{10cm} }
    $\caP= \caP^{(V)} $   & \pageref{def: moves}  &The set of moves, in volume $V$. \\ 
       $\caP_A(\eta), \caP_A(\mu) $   & \pageref{def: partition} &   Moves with support in $A$ that are active from $\eta$, $\mu$.  \\ 
      $\caP'_A(\eta)  $   & \pageref{def: capprime} &Moves with support in $A$  that are not too far from $\eta$ to be active.\\ 
       $\caP''(\eta)  $   & \pageref{def: capdoubleprime}  &Slight modification  of $\caP'(\eta)$. \\  
      $\caF^{(V)},\caF^{(y)} $   & \pageref{def: partition} &Partition of phase space in volume $V$, $B_y$ into components $\mu$. \\
            $\links(\mu),\rechts(\mu) $   & \pageref{def: left right} &Left, right regions depending on component $\mu \in \caF$. \\
                        $Z_\links,Z_\rechts$   & \pageref{def: z links and z rechts} &Left, right resonant Hamiltonian. \\
                                                $U_A$   & \pageref{eq: cak and unitary} &Unitary restriction of transf.\ $\caK$ to volume $A$.  \\
                                                                                      $B_y, \tilde B_y$   & \pageref{def: balls}&Balls (within $\bbS$) centered at $(a,y)$.  \\
    \end{tabular}
\end{center}
Important parameters; 
\begin{center}
  \begin{tabular}{ p{2.3cm} p{1cm} p{10cm} }
    $\delta $   &  \pageref{thm: renormalization} &resonance threshold, set to $M^{-\ga_1}$ in \eqref{eq: def of m via gamma}. \\ 
      $M  $   &  \pageref{def: volume cutoff}  & occupation cutoff, 
set to $M= \be^{-(1+c(r))/q}$ in Thm.\ \ref{thm: splitting of hamiltonian}.  \\ 
    $\gamma_1,\gamma_2$      &  \pageref{eq: def of m via gamma},  \pageref{prop: reduced mobility} & Exponents of $M$. \\
    \end{tabular}
\end{center}
Norms, with $\ka, \ka_j \geq 1$ and $\nu$ a state (density matrix);
\begin{center}
  \begin{tabular}{ p{2.3cm} p{1cm} p{10cm} }
    $\str \cdot \str $   & \pageref{def: range restriction potentials}& Euclidian norm.  \\ 
        $\norm \cdot \norm $   & \pageref{def: def kappa operator norm}& operator norm. \\ 
                $\norm \cdot \norm_\kappa $   & \pageref{def: def kappa operator norm} & (non standard) weighted operator norm.  \\ 
                $\interleave \cdot \interleave_{\kappa_1,\kappa_2}, \interleave \cdot \interleave_{\kappa}    $   & \pageref{def: interleave norms} &weighted potential norm.   \\
                                $\norm \cdot \norm_{\nu} $   & \pageref{def: hs norm trace} &Hilbert-Schmidt norm from scalar product $\langle A, B \rangle= \nu(A^*B)$.  \\ 
    \end{tabular}
\end{center}

%
%
%
%

\section{Perturbative diagonalization of $H$}\label{section: Diagonalization}

In this section, we introduce the formalism of interaction potentials and we implement an iterative diagonalization scheme, acting on interaction potentials.

 \subsection{Energy cutoff} \label{sec: energy cutoff}
 In our analysis, we find it convenient to introduce a high-energy cutoff, even though, in principle, the main reasoning of the paper is the more applicable, the higher the energy.
Given a number $M >0$ and an operator $O$ with finite range $s(O)$, we set  
\beq \label{def: volume cutoff}
\caP_{\leq M}(O) :=    \left(\mathop{\otimes}\limits_{x \in s(O)}  \chi(N_x \leq M) \right)  O    \left(\mathop{\otimes}\limits_{x \in s(O)}  \chi(N_x \leq M) \right) 
\eeq
and, analogously, we define  $\caP_{> M}(O)$ by replacing $N_x \leq M$ by $N_x >M$.  Note that in general, $O\neq \caP_{> M}(O)+ \caP_{\leq M}(O)$.
The cutoff will be chosen, at the end of the analysis, to be $M = \be^{-(1+\gamma_c)/q}$, for some small $\gamma_c>0$
 
 \subsection{Interaction potentials}\label{subsec: Interaction potentials}
The Hamiltonian $H$ is strictly local, i.e.\ it is a sum of terms that act  on at most  two lattice sites. When performing an iterative diagonalization, this will no longer be true and hence we first introduce a weaker notion of locality by introducing \emph{interaction potentials}. 
\begin{definition}\emph{
An interaction potential $\scrA$ is a map from finite, connected sets $ A \subset \bbZ^d$ to bounded operators $\scrA(A)$ on $\caH_A$. 
A Hamiltonian in finite volume ${V}$ associated to a potential $\scrA$ is defined by 
\beq  \label{eq: def sum potential}
\pot_{V}(\scrA) = \sum_{A \, \text{connected}:A \subset {V}} \scrA(A)
\eeq 
For simplicity, we henceforth assume that, for any interaction potential $\scrA$,  $\scrA(A)=0$ if $A$ is not connected and we omit the restriction to connected $A$ from  sums like \eqref{eq: def sum potential}. }
\end{definition}
In the literature, one almost always uses the notation $H_{{V}}(\scrA)$ but we have chosen $\pot_{V}(\scrA) $ to avoid confusion with the Hamiltonian $H_\bigvolume$ defined in \eqref{def: hamiltonian}.
Obviously, the denomination 'Hamiltonian'  is a misnomer in case the operators $\scrA(A)$ are not Hermitian. 
For a potential $\scrA$, we define the cutoff potential
\beq
\left(\caP_{\leq M} (\scrA)\right)(A) :=  \caP_{\leq M} \left(\scrA(A)\right)
\eeq
and analogously for $\caP_{>M}(\scrA)$. An important example of a potential is the potential  $\scrE$ specifying our model Hamiltonian itself, with an energy cutoff. It is defined by 
\beq \label{def: full potential}
\scrE(A) := \begin{cases} \caP_{\leq M} (H_x)  &   \text{if}\,\,     A=\{x' \in \bigvolume: \str x'  -x \str \leq 1\} \,\, \text{for some $x$}     \\  0  & \text{otherwise} \end{cases}
\eeq
We also define the potential of the free Hamiltonian 
\beq\label{eq: def free potential}
\scrE_0(\{x\}) = \caP_{\leq M} (N^q_x), \qquad \text{and} \qquad \scrE_0(A)=0, \quad \text{whenever}\quad \str A \str >1. \eeq
so that indeed
 \beq \pot_\bigvolume(\scrE)= \sum_{x \in \bigvolume }  \caP_{\leq M} (H_x), \qquad   \pot_\bigvolume(\scrE_0) =  \sum_{x \in \bigvolume}  \caP_{\leq M} (N^q_x).   \eeq 
 Note however that other choices are possible for $\scrE$; different potentials can define the same Hamiltonian.

 \subsubsection{Norms} \label{sec: norms}
 
Note that interaction potentials form a linear space under the addition $(\scrA+\scrA')(A):= \scrA(A)+\scrA'(A)$. 
We introduce a family of suitable norms on interaction potentials, based on the following weighted operator norms: For an operator  $O $ on $ \caH_A$,  we define an associated operator $\breve O $ on $ \caH_A$ by 
\beq
\langle \eta, \breve O  \eta' \rangle :=   \str \langle \eta, O  \eta' \rangle\str, \qquad \eta,\eta' \in \Om_{A}
\eeq
such that, in particular, $\norm O \norm \leq \norm \breve O \norm$ where $\norm \cdot \norm$ is the standard operator norm. Further, for $\ka > 1$,  we set
\beq \label{def: def kappa operator norm}
\norm O \norm_{\ka} :=     \sup_{\substack{w \in \bbR_+^{A} \\ \ka^{-1} \leq w(x) \leq \ka }}\norm w^N \breve O w^{-N}  \norm,\qquad \text{with}\,\,  w^N= \prod_{x \in A}    w(x)^{N_x}
\eeq
 For $\ka=1$, we define simply $\norm O \norm_{1} := \norm O \norm$ and we note that 
\beq
\norm O \norm_{\ka'}  \leq \norm O \norm_{\ka}, \qquad \text{for} \, \, 1 \leq \ka' \leq \ka 
\eeq 
 Note that these definitions are independent of $A$ provided $s(O) \subset A$.  For $\ka>1$, the $\norm \cdot \norm_{\ka} $-norm penalizes off-diagonal elements in the number basis.  The corresponding class of norms on interaction potentials is
\beq \label{def: interleave norms}
\interleave \scrA \interleave_{\ka_1,\ka_2} :=  \sup_{x \in \bbZ^d}  \sum_{A: A \ni x}    \ka_1^{ \str A \str}  \norm  \scrA(A) \norm_{\ka_2},\qquad   \interleave \scrA \interleave_{\ka} := \interleave \scrA \interleave_{\ka,\ka} 
\eeq
There is no compelling reason to consider  $\ka_1=\ka_2$,  but we often do so for reasons of simplicity. 

\subsection{Operations on interaction potentials}
Given two interaction potentials $\scrA, \scrB$ we define a new potential 
\beq
[\scrA,\scrB] (A) := \sum_{A_1,A_2: A_1 \cup A_2=A}  [\scrA(A_1),\scrB(A_2)]
\eeq
 and we note that every term in the sum on the right hand side vanishes unless $A_1 \cap A_2 \neq \emptyset$. In particular,  if $\scrA,\scrB$ assign zero to every non-connected set $A$, then so does  $[\scrA,\scrB]$.
  The motivation for this definition is of course that, for any volume ${V}$ 
\beq
{\pot}_{V} ([\scrA,\scrB] ) = [{\pot}_{V}(\scrA), {\pot}_{V}(\scrB) ]
\eeq
Often, we prefer to use the notation
\beq
\adjoint_{\scrA}(\scrB) = - \adjoint_{\scrB}(\scrA) = [\scrA,\scrB] 
\eeq
If one imagines that $\i {\pot}_{V}(\scrA)$ is an anti-Hermitian operator and hence that it generates a time evolution, then one might ask how this time-evolution affects a potential $\scrB$. To address such questions, we define
(for the moment as a formal series)
\beq \label{eq: definition unitary trafo}
\e^{ \adjoint_\scrA}(\scrB) := \sum_{n \geq 0}   \frac{1}{n!}   \adjoint^n_\scrA (\scrB) 
\eeq
Provided this series converges (in one of the norms $\interleave \cdot \interleave_{\ka}$), we can conclude that 
\beq \label{eq: unitary trafo}
{\pot}_{V}(\e^{ \i \adjoint_\scrA}(\scrB)) =    \e^{\i {\pot}_{V}(\scrA)} {\pot}_{V}(\scrB)   \e^{ - \i  {\pot}_{V}(\scrA)} 
\eeq 
In particular, for any time $t$, we can consider the time-evolution
\beq
\scrB_t :=  \e^{\i t \adjoint_\scrA} (\scrB) 
\eeq
The intuition that ${\scrB_t}$ is still a bonafide interaction potential, though with range growing with $t$, is captured by the so-called Lieb-Robinson bounds that have received a lot of attention lately \cite{nachtergaele2010lieb}.  In some sense, we rederive such bounds  in the following lemma (in particular $3)$), which helps us to handle multiple commutators of potentials.  We do not require Hermiticity, but we are restricted to small potentials, corresponding to small time $t$ in the setup above. 

\begin{lemma} \label{lem: composing norms}
Let $\ka_1 > \ka_1' \geq 1$ and $\ka,\ka_2 \geq 1$,  let $\scrA,\scrB$ be interaction potentials and let $O_1,O_2$ be bounded operators.  In all inequalities below, both sides can be infinite.
\begin{enumerate}
\item 
\beq \label{eq: factorization norm}
\norm O_1 O_2 \norm_{ \ka}  \leq   \norm O_1 \norm_{ \ka} \norm O_2 \norm_{\ka}
\eeq
\item
\beq
\interleave \ad_{\scrA}(\scrB) \interleave_{\ka'_1, \ka_2}  \leq 4(\log(\ka_1/\ka_1'))^{-1} \interleave \scrA \interleave_{\ka_1,\ka_2} \interleave \scrB \interleave_{\ka_1,\ka_2}\label{eq: bound norms one}
\eeq 
\item If $4(\log(\ka_1/\ka_1'))^{-1}\interleave \scrA \interleave_{\ka_1,\ka_2} <1$, then, for any bounded sequence $\str g(k) \str \leq 1, k \in \bbN$
\beq
\interleave \sum_{k \geq 0}  \frac{g(k)}{k !} \ad^{k}_{\scrA}(\scrB) \interleave_{\ka'_1,\ka_2} \leq  \frac{1}{1-   4(\log(\ka_1/\ka_1'))^{-1}\interleave \scrA \interleave_{\ka_1,\ka_2} }   \interleave \scrB  \interleave_{\ka_1,\ka_2}   \label{eq: bound norms}
\eeq  In particular, by choosing $g(k)=1$, the potential on the left hand side equals  $\e^{ \adjoint_{\scrA}}(\scrB)$. 
\end{enumerate}
\end{lemma}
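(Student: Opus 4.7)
The plan is to handle the three items in sequence: part 1 by an entry-wise positivity argument, part 2 by a combinatorial rearrangement that feeds off the mismatch between $\ka_1$ and $\ka_1'$, and part 3 by iterating part 2 along a geometric chain of scales. For part 1, I would first observe the entry-wise domination $\breve{O_1 O_2} \leq \breve{O_1} \breve{O_2}$, which is the triangle inequality applied to $\langle \eta, O_1 O_2 \eta' \rangle = \sum_\xi \langle \eta, O_1 \xi\rangle \langle \xi, O_2 \eta'\rangle$. Since matrices with non-negative entries attain their operator norm on non-negative vectors, this entry-wise domination yields $\norm \breve{O_1 O_2} \norm \leq \norm \breve O_1 \norm \cdot \norm \breve O_2 \norm$. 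The conjugation $w^N (\cdot) w^{-N}$ is diagonal in the number basis with positive entries, so it commutes with both taking products and taking moduli of entries; the bound upgrades to $\norm \cdot \norm_\ka$ after taking the supremum over $w$.

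For part 2, expand $\ad_\scrA(\scrB)(A) = \sum_{A_1 \cup A_2 = A,\, A_1 \cap A_2 \neq \emptyset} [\scrA(A_1), \scrB(A_2)]$, apply the commutator bound $\norm [X,Y] \norm \leq 2 \norm X\norm \norm Y\norm$ combined with part 1, and use $\ka_1'^{\str A_1 \cup A_2 \str} \leq \ka_1'^{\str A_1 \str + \str A_2\str}$ (valid since $\ka_1' \geq 1$). The key rearrangement is that the constraint $x \in A_1 \cup A_2$ splits into the cases $x \in A_1$ and $x \in A_2$ (symmetric up to a factor $2$), and the overlap constraint $A_1 \cap A_2 \neq \emptyset$ is relaxed via a witness $\sum_{y \in A_1} \lone\{y \in A_2\}$. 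This produces sums of the form $\sum_{A_1 \ni x} \ka_1'^{\str A_1\str} \str A_1 \str \norm \scrA(A_1)\norm \cdot \sup_y \sum_{A_2 \ni y} \ka_1'^{\str A_2 \str} \norm \scrB(A_2)\norm$. The factor $\str A_1 \str$ is the critical loss, absorbed by trading $\ka_1'$ for $\ka_1$ via the elementary estimate $\sup_{n \geq 0} n (\ka_1'/\ka_1)^n = 1/(e \log(\ka_1/\ka_1'))$, which yields the stated bound (in fact with an extra factor $1/e$ to spare).

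For part 3, I would iterate part 2 along a chain of scales $\ka_1 = \ka^{(0)} > \ka^{(1)} > \cdots > \ka^{(k)} = \ka_1'$ chosen so that $\log(\ka^{(j-1)}/\ka^{(j)}) = \log(\ka_1/\ka_1')/k$. Since $\interleave \scrA \interleave_{\ka^{(j)}, \ka_2} \leq \interleave \scrA \interleave_{\ka_1, \ka_2}$, retaining the sharper constant $4/(e\log)$ from the proof of part 2 gives
\begin{equation*}
\frac{1}{k!} \interleave \ad_\scrA^k(\scrB) \interleave_{\ka_1',\ka_2} \leq \frac{k^k}{k!} \left(\frac{4\,\interleave \scrA \interleave_{\ka_1, \ka_2}}{e \log(\ka_1/\ka_1')}\right)^{\!k} \interleave \scrB \interleave_{\ka_1, \ka_2}.
\end{equation*}
The Stirling-type bound $k^k/k! \leq e^k$ cancels the factor $(k/e)^k$ and delivers a clean geometric term $\bigl(4 \interleave \scrA\interleave/\log(\ka_1/\ka_1')\bigr)^k \interleave\scrB\interleave$; the sum over $k$ is then the stated geometric series. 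The general statement for $\sum_k \frac{g(k)}{k!} \ad_\scrA^k(\scrB)$ follows since $\str g(k) \str \leq 1$ and the interleave norm is subadditive.

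The main obstacle I anticipate is the constant accounting in part 3: naively iterating the statement of part 2 with its advertised constant $4/\log$ and then using Stirling's estimate $k^k/k! \leq e^k$ would produce only convergence for $4e\, \interleave \scrA \interleave < \log(\ka_1/\ka_1')$, which is weaker than claimed. The fix is to keep the factor $1/e$ obtained from the extremization in part 2 alive through the iteration, so that the $e^k$ from Stirling exactly cancels against it; the part-2 statement is then presented with the slightly sub-optimal constant $4$ purely for aesthetic symmetry with part 3. A minor technical point to verify is that $A_1 \cup A_2$ of two connected overlapping sets is again connected, so that the left-hand side in part 2 is indeed a bonafide interaction potential under the paper's blanket convention that $\scrA(A)=0$ whenever $A$ is disconnected.
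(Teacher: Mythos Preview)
Your proof is correct, but the route you take for part 3 is genuinely different from the paper's. The paper does not iterate the single-commutator estimate along a chain of intermediate scales. Instead, it introduces an auxiliary family of weighted norms $\interleave \cdot \interleave_{F_{m,\ka}}$ with $F_{m,\ka}(A)=\str A\str^{-m}\ka^{\str A\str}$, proves the one-step estimate $\interleave \ad_\scrA(\scrB)\interleave_{F_{m+1,\ka}}\leq 4\interleave\scrA\interleave_{F_{0,\ka}}\interleave\scrB\interleave_{F_{m,\ka}}$, and then iterates at a \emph{fixed} scale $\ka$: after $k$ commutators one has accumulated a weight $\str A\str^{-k}$, and the single conversion $(\ka'/\ka)^{\str A\str}\leq k!\,(\log(\ka/\ka'))^{-k}\str A\str^{-k}$ (i.e.\ $\sup_{a>0} a^k\e^{-a}\leq k!$) cancels the $1/k!$ from the exponential series in one shot. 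Your approach trades a bit of scale at each step and uses $\sup_{n} n(\ka_1'/\ka_1)^n=1/(e\log(\ka_1/\ka_1'))$ together with the Stirling-type cancellation $k^k/(k!\,e^k)\leq 1$; the paper's approach avoids this delicate constant bookkeeping by packaging the combinatorics into the $\str A\str^{-m}$ weights. Both land on the same constant $4/\log(\ka_1/\ka_1')$. Your observation that part 2 actually holds with the sharper constant $4/(e\log)$ is correct and is essential to your scheme, whereas the paper can afford to state part 2 with the looser constant because its proof of part 3 does not proceed by iterating part 2.
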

\begin{proof} Point $1)$ is trivial. To address points  $2),3)$, we introduce some more structure. 
Let us first define, for a function $F \geq 0$ on finite subsets of $\bbZ^d$, the norm on potentials
\beq
\interleave \scrA \interleave_{F} := \sup_{x} \sum_{A: A \ni x}   F(A) \norm \scrA(A) \norm
\eeq 
The following class of functions $F$ will be of relevance:
\begin{align}
F_{m,\ka}(A)  &:=  \str A \str^{-m}  \ka^{\str A \str}, \qquad    m  \geq 0.
\end{align}
We  establish 
\begin{lemma}\label{lem: norm of commutator}
For any $\ka \geq 1$ and $m\geq 0$,
\beq \label{eq: inequality of fs}
\interleave \ad_\scrA(\scrB) \interleave_{F_{m+1,\ka}} \leq 4 \interleave \scrA \interleave_{F_{0,\ka}} \interleave \scrB \interleave_{F_{m,\ka}}
\eeq
\end{lemma}
\begin{proof}
\begin{align}
\interleave \ad_\scrA(\scrB) \interleave_{F_{m+1,\ka}}  & \leq \sup_x      \sum_{A_1: A_1 \ni x}  \sum_{x' \in A_1}    \sum_{A_2: A_2 \ni x'}       F_{m+1,\ka}(A_1 \cup A_2) \left( \norm [\scrA(A_1), \scrB(A_2) ] \norm  +   \norm [\scrA(A_2), \scrB(A_1) ] \norm  \right)  \label{eq: basic manip f} 
\end{align}
To deal with the first term and second term, we dominate, respectively,  
\begin{align}
F_{m+1,\ka}(A_1 \cup A_2) \leq &   F_{0,\ka}(A_1) F_{m,\ka}(A_2)  \str A_1 \str^{-1} \label{eq: manip f1}  \\
F_{m+1,\ka}(A_1 \cup A_2) \leq &   F_{m,\ka}(A_1) F_{0,\ka}(A_2)  \str A_1 \str^{-1}  \label{eq: manip f2}  
\end{align}
and $\norm [\scrA(A), \scrB(A') ] \norm \leq 2 \norm\scrA(A) \norm \norm \scrB(A') \norm  $. The claim follows. 
\end{proof}
In the same spirit, we now estimate, for $1\leq \ka'<\ka$, 
\begin{align}
  \interleave \sum_{k \geq 0}  \frac{g(k)}{k !} \ad^{k}_{\scrA}(\scrB) \interleave_{F_{0,\ka'}}  & \leq   \sum_{k \geq 0}\sup_x \sum_{A \ni x}    \frac{1}{k !}  (\frac{\ka'}{\ka})^{\str A \str}  F_{0,\ka}(A)      \norm (\ad^{k}_{\scrA}(\scrB))(A)  \norm  \nonumber \\ 
  &  \leq   \sum_{k \geq 0} \sup_x \sum_{A \ni x}    ( \log(\ka/\ka') \str A \str)^{-k}   F_{0,\ka}(A)     \norm (\ad^{k}_{\scrA}(\scrB))(A)  \norm \nonumber \\ 
      &  \leq   \sum_{k \geq 0}    (\log(\ka/\ka'))^{-k}    \interleave\ad^{k}_{\scrA}(\scrB)  \interleave_{F_{k,\ka} } \nonumber \\ 
 &  \leq   \sum_{k \geq 0}    (\log(\ka/\ka'))^{-k}   4^k  \interleave \scrA \interleave^{k}_{F_{0,\ka}}   \interleave \scrB  \interleave_{F_{0,\ka} } \nonumber \\ 
  &  \leq   (1-   4(\log(\ka/\ka'))^{-1}\interleave \scrA \interleave_{F_{0,\ka}})^{-1}   \interleave \scrB  \interleave_{F_{0,\ka} }  \label{eq: long manip f}
\end{align}
where the second inequality follows from 
\beq
\sup_{a>0} a^k \e^{- a }  \leq    k !, \qquad k \in \bbN  
\eeq
and the fourth inequality follows by $k$ applications of Lemma \ref{lem: norm of commutator}. 
 
 This means that we have obtained items $2),3)$ for $\ka_2=1$ because $\norm \cdot \norm_{F_{0,\ka}} = \norm \cdot \norm_{\ka,1}  $.  More precisely, for $2)$, take $m=0$ in \eqref{eq: inequality of fs} and use that 
 \beq \interleave \adjoint_{\scrA}(\scrB) \interleave_{\ka',1} \leq  (\log(\ka/\ka'))^{-1}    \interleave \adjoint_{\scrA}(\scrB) \interleave_{F_{1,\ka}}, \qquad \text{for $1 \leq \ka' < \ka$}. \eeq  
 By inspection of the above estimates we see that the reasoning applies just as well with $\ka_2>1$, so that $2),3)$ are proven.

 \end{proof}

 \subsection{Perturbative diagonalization}

 Let us define the cut-off \emph{phase-space}, for finite $A \subset \bbZ^d$ 
 $$\Om_A^{(M)}  =  \{0,1,2,\ldots M\}^{A}, \qquad \text{with $M$ as in Section \ref{sec: energy cutoff}}   $$
Slightly abusing notation, we denote its elements by $\eta,\eta'$ and we recall that they index eigenvectors of the free Hamiltonian $\sum_{x \in A} N^q_x$, with eigenvalues
\beq
 E_A(\eta)= \sum_{x \in A}\langle \eta, N^{q}_x \eta \rangle =    \sum_{x \in A} \eta_x^q
\eeq
Moreover,  we will decompose interaction potentials in \emph{resonant} and \emph{non-resonant} parts.  For this purpose, we fix some small \emph{resonance threshold} $0< \delta < 1$  (that will be related to the cutoff $M$ in Section \ref{sec: analysis invariant subspaces}) and we define
\begin{align*}
\mathrm{Res}_A \; &:= \; \big\{ (\eta,\eta') \in \Om_A^{(M)} \times \Om_A^{(M)}  : |E_A(\eta) - E_A(\eta') | \; \le \; \delta^{-1} M  \big\},\\
\mathrm{NRes}_A \; &:= \; \big\{ (\eta,\eta') \in \Om_A^{(M)} \times \Om_A^{(M)} : |E_A(\eta) - E_A(\eta') |  \; > \; \delta^{-1} M  \big\}.
\end{align*}
and the linear maps on interaction potentials
\beq \label{resonant non resonant splitting}
(\caP_{\mathrm{Res}}(\scrA))({A}) :=  \sum_{(\eta,\eta') \in \mathrm{Res}_A} P_\eta \scrA(A) P_{\eta'},   \qquad  \caP_{\mathrm{NRes}}(\scrA)(A) :=  \sum_{(\eta,\eta') \in \mathrm{NRes}_A}  P_\eta \scrA(A) P_{\eta'}\eeq
where  $P_{\eta} \in \caB(\caH_A)$ is the one-dimensional orthogonal projection on the space spanned by the vector $\eta$, i.e., by $\delta_\eta(\cdot)$, see  Section \ref{sec: preliminaries}.   The following proposition is inspired by \cite{imbriespencer}:
\begin{proposition}[Perturbative diagonalization] \label{thm: renormalization}
For any $r =0,1,2,\ldots$ and sufficiently small $\delta>0$, depending on $r$, we find interaction potentials $\scrF^{(r)} , \scrG^{(r)}, \scrK^{(r)} $  such that
\beq  \label{eq: change of variables}
\e^{ \adjoint_{\scrK^{(r)}} } \ldots \e^{ \adjoint_{\scrK^{(2)}} }\e^{ \adjoint_{\scrK^{(1)}} } \left(\scrE \right)  = \scrE_0+ \scrF^{(r)} +{\scrG^{(r)}}
\eeq
(where the left hand side is understood to be $\scrE$ for $r=0$),
and the following properties hold with
\beq
\nu= \frac{1}{4(2 d +3)}, \qquad   e(r)= (2r-1)/3,
\eeq
\begin{enumerate}
\item All potentials have the $M$-cutoff;
\beq
\caP_{\leq M}(\scrF^{(r)}) =\scrF^{(r)},\qquad   \caP_{\leq M}(\scrG^{(r)}) =\scrG^{(r)},\qquad  \caP_{\leq M}(\scrK^{(r)}) =\scrK^{(r)}
\eeq
\item The $\scrF^{(r)}$-potential is small and nonresonant
\beq
\interleave    \scrF^{(r)}   \interleave_{\delta^{-\nu}} \leq C(r) M  \delta^{e(r)}, \qquad  \caP_{\mathrm{Nres}}( \scrF^{(r)} )=\scrF^{(r)}. 
\eeq
\item
The $\scrG^{(r)}$-potential is `not too big' and  resonant
\beq
\interleave \scrG^{(r)} \interleave_{\delta^{-\nu}} \leq C(r) \delta^{e(0)} M, \qquad  \caP_{\mathrm{Res}} (\scrG^{(r)})=\scrG^{(r)} 
\eeq 
\item The  $\scrK^{(r)} $-potential is small;
\beq
\interleave \scrK^{(r+1)} \interleave_{\delta^{-\nu}}  \leq  C(r) \delta^{e(r)+1}  
\eeq
\end{enumerate}
\end{proposition}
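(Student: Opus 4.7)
The plan is a KAM-like induction on $r$. For the base case $r=0$, take $\scrF^{(0)} := \caP_{\mathrm{NRes}}(\scrE-\scrE_0)$ and $\scrG^{(0)} := \caP_{\mathrm{Res}}(\scrE-\scrE_0)$, so that \eqref{eq: change of variables} reduces to the tautological decomposition $\scrE = \scrE_0 + \scrF^{(0)} + \scrG^{(0)}$. Items (2)--(3) are then checked by direct inspection of the interaction terms $g_1(a_x+a_x^*)^2$ and $g_2(a_x^*a_{x'}+a_xa_{x'}^*)$: their matrix elements on the cutoff phase space $\Omega_A^{(M)}$ are $O(M)$, and the weighted norm $\interleave\cdot\interleave_{\delta^{-\nu}}$ contributes only a fixed dimensional power of $\delta^{-\nu}$, absorbed into $\delta^{e(0)}=\delta^{-1/3}$ because $\nu=\frac{1}{4(2d+3)}$ is chosen small enough.

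For the inductive step, assume the decomposition at level $r$ and construct $\scrK^{(r+1)}$ by solving, for each connected $A$, the cohomological equation $[\scrE_0,\scrK^{(r+1)}]=-\ad_{\scrK^{(r+1)}}(\scrE_0)=\scrF^{(r)}$. Since $\ad_{\scrE_0}$ acts diagonally, multiplying the matrix element between $|\eta\rangle$ and $|\eta'\rangle$ by $E_A(\eta)-E_A(\eta')$, set
\beq
\langle\eta|\scrK^{(r+1)}(A)|\eta'\rangle := \frac{\langle\eta|\scrF^{(r)}(A)|\eta'\rangle}{E_A(\eta)-E_A(\eta')},\qquad (\eta,\eta')\in\mathrm{NRes}_A,
\eeq
and zero on $\mathrm{Res}_A$. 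The nonresonance bound $|E_A(\eta)-E_A(\eta')|>\delta^{-1}M$ yields $\interleave\scrK^{(r+1)}\interleave_{\delta^{-\nu}}\le \delta M^{-1}\interleave\scrF^{(r)}\interleave_{\delta^{-\nu}}\le C(r)\delta^{e(r)+1}$, establishing item (4). Using the cohomological identity to cancel the $k=1$ term applied to $\scrE_0$ against the explicit $\scrF^{(r)}$, one obtains
\beq
e^{\ad_{\scrK^{(r+1)}}}(\scrE_0+\scrF^{(r)}+\scrG^{(r)}) = \scrE_0 + \scrG^{(r)} + \scrR^{(r+1)},
\eeq
where a rearrangement of the expansion produces
\beq
\scrR^{(r+1)} = \sum_{k\ge 1}\frac{k}{(k+1)!}\ad_{\scrK^{(r+1)}}^k(\scrF^{(r)}) + \sum_{k\ge 1}\frac{1}{k!}\ad_{\scrK^{(r+1)}}^k(\scrG^{(r)}).
\eeq
By part (3) of Lemma \ref{lem: composing norms}, the dominant contribution is the single commutator $[\scrK^{(r+1)},\scrG^{(r)}]$, whose norm is bounded by $C(r)\delta^{e(r)+1}\cdot M\delta^{e(0)}=C(r)M\delta^{e(r+1)}$ since $e(r)+1+e(0)=e(r)+\tfrac{2}{3}=e(r+1)$; the remaining terms are smaller by further powers of $\delta$. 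Finally set $\scrF^{(r+1)}:=\caP_{\mathrm{NRes}}(\scrR^{(r+1)})$ and $\scrG^{(r+1)}:=\scrG^{(r)}+\caP_{\mathrm{Res}}(\scrR^{(r+1)})$. Item (2) inherits the bound on $\scrR^{(r+1)}$, and item (3) survives because the increment $\caP_{\mathrm{Res}}(\scrR^{(r+1)})$ is smaller than $\scrG^{(r)}$ by a factor $\delta^{2/3}$, so the existing bound for $\scrG^{(r)}$ is preserved up to an adjusted $C(r{+}1)$. Item (1) is automatic because the cohomological construction and the projections $\caP_{\mathrm{Res}}$, $\caP_{\mathrm{NRes}}$ commute with $\caP_{\le M}$.

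The main obstacle is bookkeeping the weighted norm $\interleave\cdot\interleave_{\delta^{-\nu}}$ through the iteration. Each application of part (3) of Lemma \ref{lem: composing norms} requires strict inequality of weight parameters $\kappa_1>\kappa_1'$ and introduces a factor $(\log(\kappa_1/\kappa_1'))^{-1}$ that blows up as the gap shrinks. One must therefore work along a finite ladder of auxiliary weights $\delta^{-\nu(1+\eta_0)}>\delta^{-\nu(1+\eta_1)}>\cdots>\delta^{-\nu}$, with $\eta_j\downarrow 0$ chosen so that all $r$ inductive steps can be absorbed; the specific value $\nu=\frac{1}{4(2d+3)}$ is calibrated so that the cumulative loss stays within the $r$-dependent prefactor $C(r)$, at the price of requiring $\delta$ small depending on $r$. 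A secondary technical point is verifying that $\caP_{\mathrm{Res}}$ and $\caP_{\mathrm{NRes}}$ are bounded on the weighted norms, which follows by inspection in the number basis.
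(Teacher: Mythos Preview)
Your proposal is correct and follows essentially the same approach as the paper. The only cosmetic difference is that the paper separates what you call $\scrR^{(r+1)}$ into two pieces $\scrF^{(r+1/2)}$ and $\scrG^{(r+1/2)}-\scrG^{(r)}$ before projecting, and it makes the ladder of weights explicit by proving the induction in the norm $\interleave\cdot\interleave_{(1+2^{-r})\delta^{-\nu}}$ rather than leaving the choice of auxiliary parameters $\eta_j$ implicit; your identification of the dominant commutator $[\scrK^{(r+1)},\scrG^{(r)}]$ and the exponent arithmetic $e(r)+1+e(0)=e(r+1)$ match the paper exactly.
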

 Before giving the proof, we slightly reformulate this theorem to put it in the form in which it will be used.  To that order, let us define two additional operations on interaction potentials:   First the operation $\scrA \mapsto \caD(\scrA)$ that selects only the diagonal terms
\beq  \label{def: diagonal restriction}
(\caD(\scrA))({A}) :=  \sum_{\eta \in \Om_A} P_\eta \scrA(A) P_\eta
\eeq
and $\scrA \mapsto \caR_n(\scrA)$ for some $n>0$, the restriction to terms of  range not larger than $n$ on the lattice and in the number-operator basis
\beq  \label{def: range restriction potentials}
\caR_{n}(\scrA)({A}) :=   \chi(\str A \str \leq  n)  \sum_{\eta, \eta' \in \Om_A}  \chi( \str \eta-\eta'\str \leq n)  P_\eta \scrA(A)P_{\eta'}
\eeq
where $\str \eta\str^2 := \sum_{x \in A} \str \eta(x)\str^2$.
Now we define a new decomposition of  potentials:
\begin{align}
\e^{ \adjoint_{\scrK^{(r)}} } \ldots \e^{ \adjoint_{\scrK^{(2)}} }\e^{ \adjoint_{\scrK^{(1)}} } \left(\scrE \right)& =  \left(\caD \caR_{r}(\scrE_0+ \scrG^{(r)}))\right) +\left((1- \caD) \caR_{r} ( \scrG^{(r)})) \right) +  \left(  {\scrF^{(r)}}+ (1-\caR_{r})(\scrG^{(r)})\right) \nonumber \\[1mm]
& = : \scrD^{(r)} +\hat \scrG^{(r)}  +  \hat\scrF^{(r)} \label{eq: hat decomp}
\end{align}
This is indeed a decomposition since $\caR_{r}(\scrE_0) =\scrE_0$ and $\caD(\scrE_0) =\scrE_0$.  Then 
\begin{corollary}\label{cor: splitting} 
\beq  \interleave \hat\scrF^{(r)} \interleave_{\delta^{-\nu/2}} \leq  C(r) M ( \delta^{e(0)+(\nu r)/2} + \delta^{e(r)}),     \eeq
\end{corollary}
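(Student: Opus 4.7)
The plan is to bound the two summands of $\hat\scrF^{(r)} = \scrF^{(r)} + (1-\caR_r)(\scrG^{(r)})$ separately. For the first summand, I exploit monotonicity of the norm: since $\delta<1$ implies $\delta^{-\nu/2} \leq \delta^{-\nu}$, the ball $w \in [\delta^{\nu/2},\delta^{-\nu/2}]^A$ is contained in $[\delta^{\nu},\delta^{-\nu}]^A$, so $\|\cdot\|_{\delta^{-\nu/2}} \leq \|\cdot\|_{\delta^{-\nu}}$ and similarly $\interleave\cdot\interleave_{\delta^{-\nu/2}} \leq \interleave\cdot\interleave_{\delta^{-\nu}}$. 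Proposition \ref{thm: renormalization}(2) then yields $\interleave \scrF^{(r)}\interleave_{\delta^{-\nu/2}} \leq C(r) M \delta^{e(r)}$, accounting for the $\delta^{e(r)}$ term in the target bound.

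The nontrivial work is on $(1-\caR_r)(\scrG^{(r)})$, which keeps terms where $|A|>r$ or $|\eta-\eta'|>r$. I split the sum over $A$ accordingly. For the $|A|>r$ contribution I use the identity $(\delta^{-\nu/2})^{|A|} = (\delta^{-\nu})^{|A|}\,\delta^{\nu|A|/2} \leq \delta^{\nu r/2}(\delta^{-\nu})^{|A|}$, together with $\|\cdot\|_{\delta^{-\nu/2}}\leq\|\cdot\|_{\delta^{-\nu}}$, so this piece is bounded by $\delta^{\nu r/2}\interleave\scrG^{(r)}\interleave_{\delta^{-\nu}}\leq C(r) M \delta^{e(0)+\nu r/2}$ via Proposition \ref{thm: renormalization}(3).

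For the remaining piece, with $|A|\le r$ but support on $|\eta-\eta'|_2>r$ (hence $|\eta-\eta'|_1\geq |\eta-\eta'|_2>r$), I need an operator-norm comparison of the following shape: if $O$ is supported on matrix elements with $|\eta-\eta'|_1>r$ and $1\leq\kappa'\leq\kappa$, then $\|O\|_{\kappa'}\leq (\kappa'/\kappa)^{r}\,2^{|A|}\,\|O\|_{\kappa}$. To prove it, set $\kappa_0=\kappa/\kappa'$ and note the elementary inequality $\kappa_0^{|\eta-\eta'|_1}\leq \sum_{\sigma\in\{-1,+1\}^A}\kappa_0^{\sigma\cdot(\eta-\eta')}$, obtained by expanding $\prod_x(\kappa_0^{\eta(x)-\eta'(x)}+\kappa_0^{\eta'(x)-\eta(x)})$. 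For any $w\in[\kappa'^{-1},\kappa']^A$, each $w_\sigma(x):=w(x)\kappa_0^{\sigma(x)}$ lies in $[\kappa^{-1},\kappa]^A$, so conjugation by $w_\sigma^N$ controls $\breve O$ by $\|O\|_\kappa$. The entrywise decomposition $\kappa_0^{|\eta-\eta'|_1}w^{\eta-\eta'}|O_{\eta,\eta'}|\leq \sum_\sigma w_\sigma^{\eta-\eta'}|O_{\eta,\eta'}|$, combined with the fact that all entries of $\breve O$ are nonnegative (so entrywise majorization transfers to operator norms), gives $\|\kappa_0^{|\cdot|_1}\cdot(w^N\breve O w^{-N})\|\leq 2^{|A|}\|O\|_\kappa$. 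Since $\breve O$ vanishes outside $|\eta-\eta'|_1>r$, this yields $\|w^N\breve O w^{-N}\|\leq \kappa_0^{-r}2^{|A|}\|O\|_\kappa$, and taking the supremum over $w$ gives the lemma.

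Applying the lemma with $\kappa'=\delta^{-\nu/2}$, $\kappa=\delta^{-\nu}$, and using $2^{|A|}\leq 2^r$, the extra factor $2^{|A|}$ is absorbed into the potential weight, since for $\delta$ sufficiently small $2\delta^{\nu/2}\leq 1$ yields $(\delta^{-\nu/2})^{|A|}2^{|A|}\leq(\delta^{-\nu})^{|A|}$. Thus the $|A|\leq r$ contribution is also bounded by $\delta^{\nu r/2}\interleave\scrG^{(r)}\interleave_{\delta^{-\nu}}\leq C(r) M \delta^{e(0)+\nu r/2}$, and summing the three pieces proves the corollary. The main obstacle is the operator-norm lemma above, and in particular guaranteeing that the sign-sum decomposition passes from entrywise inequalities to the operator norm — which works precisely because $\breve O$ has nonnegative entries by construction.
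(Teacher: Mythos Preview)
Your proof is correct and follows essentially the same route as the paper. Both arguments split $(1-\caR_r)(\scrG^{(r)})$ into the $\str A\str>r$ and $\str A\str\le r$ pieces, and for the latter both extract the gain $(\ka'/\ka)^r$ via a sum over sign patterns $\sigma\in\{-1,+1\}^A$, relying on the entrywise nonnegativity of $\breve O$ to pass from pointwise to operator-norm bounds. The only cosmetic difference is that the paper first decomposes $O=\sum_\sigma O_\sigma$ and computes $\norm O_\sigma\norm_{\ka'}$ exactly (the supremum over $w$ is attained at $w(x)=(\ka')^{\sigma(x)}$), whereas you keep $O$ intact and use the entrywise inequality $\ka_0^{\str\eta-\eta'\str_1}\le\sum_\sigma\ka_0^{\sigma\cdot(\eta-\eta')}$ directly; these are two phrasings of the same computation, yielding the same $2^{\str A\str}(\ka'/\ka)^r$ factor.
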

\begin{proof} By Proposition \ref{thm: renormalization} 2), 
$\scrF^{(r)}$ satisfies this estimate and we only need to check $(1-\caR_{r})(\scrG^{(r)})$.  We note that, in general, for $n \geq 1$, 
\beq \label{eq: to bound cutoff}
 \norm (1-\caR_n)(\scrA)(A) \norm_{\ka'}   \leq \chi(\str A \str >n)  \,  \norm \scrA(A) \norm_{\ka'}  +  \chi(\str A \str \leq n)   \norm\sum_{\eta,\eta'} \chi(\str \eta-\eta'\str > n) P_{\eta}\scrA(A)P_{\eta'} \norm_{\ka'},
\eeq
To analyze the last term, we introduce, for $\si \in \{1,-1\}^{A}$, 
\beq
O_{\sigma}:=      \sum_{\eta,\eta'}     \chi(\str \eta-\eta'\str > n)   \big(\prod_{x \in A} \chi(\sgn(\eta(x)-\eta'(x))= \si(x)) \big)   P_{\eta}\scrA(A)P_{\eta'}
\eeq
where we use the signum function $\sgn(a) := a/\str a \str$ for $a \in \bbR_0$ and $\sgn(0)=1$. Note that 
$$\sum_{\eta,\eta'} \chi(\str \eta-\eta'\str > n) P_{\eta}\scrA(A)P_{\eta'}= \sum_{\si \in \{1,-1\}^{A}} O_\si. $$
The advantage of the operators $O_\si$ is that we can explicitly  perform the supremum over    $w \in \bbR_+^{A}$ in \eqref{def: def kappa operator norm} to obtain, for $\ka\geq \ka' >1$,
\begin{align}
\norm O_\si \norm_{\ka'}  &=  \norm \sum_{\eta,\eta'} (\ka')^{\str\eta-\eta'\str_1} P_\eta \breve O_\si P_{\eta'}  \norm \\[2mm] 
 & \leq\big( \max_{f \in \bbR^{A}: \str f \str \geq n} (\ka'/\ka)^{\str f \str_1}\big) \norm  \sum_{\eta,\eta'} \ka^{\str\eta-\eta'\str_1} P_\eta \breve O_\si P_{\eta'}  \norm  = (\ka'/\ka)^{n} \norm   O_\si \norm_\ka
\end{align}
where we put $\str g \str_1:= \sum_{x } \str g(x) \str$ for functions $g \in \bbR^{A}$ and we recall  the notation $\str g \str^2 = \sum_{x} \str g(x)\str^2$ so that $\str g \str \leq \str g \str_1$, which we used in the last equality for $g=f$.   Hence the last term on the right hand side of \eqref{eq: to bound cutoff} is bounded by 
\beq
 2^{n} (\ka'/\ka)^{n} \norm\scrA(A) \norm_{\ka}, \qquad \text{for}\,\, \ka >\ka'.  
\eeq
because the number of $\si$'s is no larger than $2^n$. 
Therefore, \eqref{eq: to bound cutoff} yields
\beq
 \interleave (1-\caR_n)(\scrA) \interleave_{\ka'}  \leq     (1+2^{n}) (\ka'/\ka)^{n}    \interleave \scrA \interleave_{\ka}
\eeq
We apply this with $\scrA= \scrG^{(r)}$, $n=r$,  and $\ka=(\ka')^2 = \delta^{-\nu}$ and we use the bound of Proposition \ref{thm: renormalization} 3). 
\end{proof}
\begin{proof}[Proof of Proposition \ref{thm: renormalization}]
Our proof is by induction, but of a slightly different statement than that given in the proposition; namely we replace the norm $\interleave \cdot \interleave_{\delta^{-\nu}}$ by $\interleave \cdot \interleave_{(1+2^{-r}) \delta^{-\nu}}$ such that at each induction step, we can reduce the decay parameter in the norm. This is necessary in view of point $3)$ of Lemma \ref{lem: composing norms}, i.e.\ the necessity of $\ka-\ka'>0$.  Throughout the proof, we denote the potential on the right hand side of \eqref{eq: change of variables} by $\scrH^{(r)}$.
 
To save some writing in the formulas, we abbreviate
\beq
\interleave \cdot \interleave_{m(r)}=\interleave \cdot\interleave_{(1+2^{-r}) \delta^{-\nu}}
\eeq
For $r=0$, we set
\beq
 \scrG^{(0)} := \caP_{\mathrm{Res}} (\scrE-\scrE_0), \qquad  \scrF^{(0)} :=\caP_{\mathrm{NRes}}  (\scrE-\scrE_0), \qquad \scrK^{(0)}:=0
\eeq
We choose $e(0)$ and $\nu$ such that 
\beq
\interleave  (\scrE-\scrE_0)   \interleave_{m(0)}  \leq  C   \delta^{e(0)}  M 
\eeq
To satisfy this, note that $\interleave (\scrE-\scrE_0)   \interleave_\ka \leq CM \ka^{2d+6}$, hence we need the condition
\beq
\delta^{-\nu (2d+6)} \leq \delta^{e(0)} \quad \Rightarrow \quad   \nu (2d+6)+e(0) <0
\eeq 
Then the bounds are satisfied because $\caP_{\mathrm{NRes}}, \caP_{\mathrm{Res}}$ are contractions.
  This  establishes the induction hypothesis for $r=0$. \\
  
We now assume that the result holds for a given $r \geq 0$ and we show it for $r+1$.  We consider a transformation 
\begin{align*}
\scrH^{(r+1)} := \e^{ \adjoint_{\scrK^{(r+1)}}} ( \scrH^{(r)} )
\end{align*}
such that, to lowest order in $\scrK^{(r+1)}$, the nonresonant potential $\scrF^{(r)}$ is eliminated.  
 \begin{equation} \label{eq: perturbation condition}
[\scrK^{(r+1)}, \scrE_0] =  - \scrF^{(r)}.
\end{equation}
A possible choice is
\begin{align}
\langle \eta, \scrK^{(r+1)}(A) \eta' \rangle \; :=  \;  \frac{\langle \eta,  \scrF^{(r)}(A) \eta' \rangle}{E_A(\eta)-E_A(\eta')}       
\end{align} 
where the right hand side is defined to be $0$ whenever $E_A(\eta)=E_A(\eta')$. 
It follows\footnote{Here (and only here) we exploit the fact that the weighted norm $\norm \cdot\norm_\ka$ was defined in Section \ref{sec: norms} by replacing an operator $ O$ by $\breve O$} that for any $\ka>0$
\beq
\norm \scrK^{(r+1)}(A) \norm_\ka \leq  \frac{ \delta}{M }  \norm  \scrF^{(r)}(A) \norm_\ka   \eeq
hence in particular
\beq \label{eq: from psi to upsilon}
\interleave \scrK^{(r+1)}  \interleave_{m(r)} \leq  \frac{ \delta}{M } \interleave \scrF^{(r)}   \interleave_{m(r)}. 
\eeq
Now we calculate
\begin{align*}
\scrH^{(r+1)}  \; &= \; 
\scrE_{0} \; +  \; \sum_{k\ge 1} \frac{1}{k!} \ad_{\scrK^{(r+1)}}^k (\scrE_{0})
+\e^{ \adjoint_{\scrK^{(r+1)}}}  (\scrG^{(r)}  )
 +  \sum_{k\ge 0} \frac{1}{k!} \ad_{\scrK^{(r+1)}}^k (\scrF^{(r)})  \\ 
 \; &= \; 
\scrE_{0} \; 
\; + \e^{ \adjoint_{\scrK^{(r+1)}}}   (\scrG^{(r)}  )  +\sum_{k \ge 0} \frac{(k+1) }{(k+2)!} \ad^{k+1}_{\scrK^{(r+1)} } (\scrF^{(r)}) 
\end{align*}
where we used \eqref{eq: perturbation condition} to get the last line. 
We define 
\begin{align*}
\scrG^{(r+1/2)} & :=  \e^{ \adjoint_{\scrK^{(r+1)}}} ( \scrG^{(r)}), \\
\scrF^{(r+1/2)} &:=  
\sum_{k \ge 0} \frac{(k+1) }{(k+2)!} \ad^{k+1}_{\scrK^{(r+1)}} (\scrF^{(r)} ) \\   
\scrG^{(r+1)} \; & :=   \caP_{\mathrm{Res}}  \left( \scrG^{(r+1/2)}
\; + \;   \scrF^{(r+1/2)}   \right)    \\
\scrF^{(r+1)}\; & :=   \caP_{\mathrm{NRes}}  \left(   \scrG^{(r+1/2)}
\; + \;  \scrF^{(r+1/2)}   \right)   
\end{align*}
so that indeed $\scrH^{(r+1)}=\scrE_0+\scrF^{(r+1)}+\scrG^{(r+1)}$. 
It remains to verify the bounds. Let us first consider $\scrF^{(r+1)}$: 

Note that
\begin{align*}
 \scrF^{(r+1/2)}=    \sum_{k \ge 0} \frac{g(k) }{k!} \ad^{k}_{\scrK^{(r+1)} } (\ad_{\scrK^{(r+1)}}(\scrF^{(r)}) ) \\
   \end{align*}
 for  a bounded sequence $\str g(k) \str \leq 1$. 
Therefore, by Lemma \ref{lem: composing norms} $2),3)$,
\begin{align}
  \interleave  \caP_{\mathrm{NRes}}( \scrF^{(r+1/2)} ) \interleave_{m(r+1)}   & \leq   \interleave  \scrF^{(r+1/2)}\interleave_{m(r+1)}    \nonumber  \\[2mm]
  & \leq    (1- C(r)  \interleave \scrK^{(r+1)} \interleave_{m(r+1/2)} )^{-1}  \interleave \ad_{\scrK^{(r+1)}}\scrF^{(r)} \interleave_{m(r+1/2)}     \nonumber   \\[2mm]
 & \leq  \frac{C(r)}{  1- C(r)  \interleave \scrK^{(r+1)} \interleave_{m(r)} }   \interleave  \scrK^{(r+1)} \interleave_{m(r)} \interleave  \scrF^{(r)}  \interleave_{m(r)}     \label{eq: first contribution}
  \end{align}
where we also used $\interleave \cdot \interleave_{\ka'} \leq \interleave \cdot \interleave_{\ka}$ for $1 \leq \ka' \leq \ka$. 

Next, we estimate the contribution to $\scrF^{(r+1)}$  from $\caP_{\mathrm{NRes}}   \scrG^{(r+1/2)}$.  Proceeding as above, we get, for some sequence $\str g(k) \str \leq 1$, 
\begin{align}
\interleave  \caP_{\mathrm{NRes}}    \scrG^{(r+1/2)}   \interleave_{m(r+1)}  & \leq   \sum_{k \geq 1}   \frac{1 }{k!}   \interleave  \ad^{k}_{\scrK^{(r+1)}} (\scrG^{(r)}) \interleave_{m(r+1)}  
\nonumber  \\[2mm]
& \leq   \sum_{k \geq 0}   \frac{g(k) }{k!}   \interleave  \ad^{k}_{\scrK^{(r+1)}} (\ad_{\scrK^{(r+1)}}(\scrG^{(r)})) \interleave_{m(r+1)} \nonumber  \\[2mm]
& \leq  \frac{C(r)}{  1- C(r) \interleave \scrK^{(r+1)} \interleave_{m(r)} }   \interleave \scrK^{(r+1)} \interleave_{m(r)} \interleave  \scrG^{(r)}  \interleave_{m(r)}       \label{eq: second contribution} 
   \end{align}
The first inequality follows because the induction hypothesis $ \caP_{\mathrm{NRes}}  (\scrG^{(r)}) =0$ allows to drop the $k=0$ term.
   By the induction hypothesis  and \eqref{eq: from psi to upsilon}, we have $\interleave \scrK^{(r+1)} \interleave_{m(r)} \leq C(r) \delta^{e(r)+1}$ and 
   therefore the denominators in the above formulae are of order $1$ since
   \beq
1+e(r) >0.
   \eeq   
Adding the two contributions \eqref{eq: first contribution}  and \eqref{eq: second contribution}, we get
   \begin{align*}
 &  \interleave  \scrF^{(r+1)}\interleave_{m(r+1)}  \leq    C(r) \left(  \delta^{2e(r)+1} + \delta^{e(0)+e(r)+1} \right) 
  \end{align*}
  and hence the bound on $ \scrF^{(r+1)}$ holds because
  \beq
e(r+1) \leq \min (2 e(r)+1, e(r)+e(0)+1)
  \eeq
The bound on the potential $\scrG^{(r+1)}$ is derived by analogous (though simpler) reasoning. 
\end{proof}

 \subsection{Transformations and spatial truncations}\label{sec: unitary trafo}

Proposition \ref{thm: renormalization} is set in the language of transformed potentials. We investigate the question how accurately such transformations can be restricted to small volumes.  The  results are Lemma \ref{lem: bound cutoff pot} and \ref{lem: local unitaries}. These are fairly intuitive technical statements that are necessary in Section \ref{sec: proofs}, but their proofs appear complicated, which is definitely a drawback of the use of interaction potentials.  We think one can safely omit these Lemma's in a first reading.  

First, if two potentials $\scrA, \scrK$ are finite in one of the $\interleave \cdot \interleave_{\ka_1,\ka_2}$-norms, then the equality 
\beq \label{eq: really inverses}
\e^{-\adjoint_{\scrK}} \e^{ \adjoint_{\scrK}} (\scrA) =  \scrA 
\eeq
holds (in a weaker norm). This can be checked explicitly by manipulating the defining series \eqref{eq: definition unitary trafo}. 
Let us abbreviate
\beq \label{def: total trafo potentials}
\caK(\scrA) = \caK^{(r)}(\scrA):= \e^{ -\adjoint_{\scrK^{(1)}} }  \e^{- \adjoint_{\scrK^{(2)}} }  \ldots   \e^{ -\adjoint_{\scrK^{(r)}} } (\scrA). 
\eeq
with $\scrK^{(j)}$ as given in Proposition \ref{thm: renormalization}.  Then, by \eqref{eq: really inverses}, we can invert the operator $\caK$: 
  \beq
 \caK^{-1}(\scrA)=  (\caK^{(r)})^{-1}(\scrA) = \e^{ \adjoint_{\scrK^{(r)}} } \ldots \e^{ \adjoint_{\scrK^{(2)}} }\e^{ \adjoint_{\scrK^{(1)}} }  (\scrA). 
  \eeq
By repeated application of Lemma \ref{lem: composing norms} 3) and Proposition \ref{thm: renormalization} 4), one shows that
\beq \label{eq: cak preserves bounds}
\interleave \caK(\scrA) \interleave_{\tfrac{\ka_1}{2}, \ka_2   } \leq C(r) \interleave \scrA \interleave_{\ka_1, \ka_2   }, \qquad   \interleave \caK^{-1}(\scrA) \interleave_{\tfrac{\ka_1}{2}, \ka_2   } \leq C(r)   \interleave \scrA \interleave_{\ka_1, \ka_2   }.
\eeq
for $\ka_1,\ka_2 \leq \delta^{-\nu}$ and $\delta$ small enough, depending on $r$.
 For a finite set $D$, we define
  \beq
  U_{D}  := \e^{ {\pot}_D(\scrK^{(r)}) } \ldots \e^{{\pot}_D({\scrK^{(2)}}) }\e^{ {\pot}_D({\scrK^{(1)}}) }
  \eeq
  Note that $U_D$ is unitary since ${\pot}_D(\scrK^{(j)})$ are anti-Hermitian matrices, as one checks by inspecting the definitions of $\scrK^{(j)}$ and $\scrF^{(j)}$.  By repeated application of \eqref{eq: unitary trafo} we derive
\beq \label{eq: cak and unitary}
\pot_D(\caK(\scrA)) = U_D \pot_D(\scrA) U^*_D
\eeq
In what follows, we will interpret an operator $O$ as a potential $\scrA_{O}$ such that
\beq
\scrA_O(A)= \begin{cases} O & A=A_O \\  0 & A \neq A_O \end{cases}
\eeq
  for some connected set $A_O$ such that  $s(O) \subset A_O$. If  $A_O \subset D$, the identity \eqref{eq: cak and unitary} reads
\beq  \label{eq: cak and unitary restricted}
{\pot}_{D}(\caK(O)) = U_{D}OU^{*}_{D}
\eeq
where, as announced, $\caK(O)=\caK(\scrA_O)$ on the left hand side. From now on, we write $O$ for $\scrA_O$ without further comment.  
To quantify the dependence on the set $D$ in the above formula, it is helpful to define first the restriction of a potential to some volume: 
  Let 
\beq \label{def: volume restriction potentials}
\caI_{D}(\scrA)(A) := \chi(A \cap D \neq \emptyset) \scrA(A).\eeq
then we have, for $A_O \subset D \subset V$ 
\beq \label{eq: decomp caic and u}
{\pot}_V(\caK(O)) = {\pot}_V(\caI_{D^c}\caK(O) )+   U_{D}OU^{*}_{D}
\eeq
 The upcoming Lemma \ref{lem: bound cutoff pot} provides some bounds. 
In what follows, we will stop keeping track of the precise value of exponents like $\nu$.
 We will also set $\kappa_2=1$ in the norm $\norm \cdot \norm_{\ka_1,\ka_2}$  for simplicity, because, once $\hat\scrF,\hat\scrG$ have been defined, the parameter  $\ka_2$ plays no role anymore. 
\begin{lemma}  \label{lem: bound cutoff pot}
Let $\distance(D^c, A_O)> c\str A_O \str$ for some $c>0$, then
\beq \label{eq: bound cutoff pot}
\interleave \caI_{D^c}\caK(O) \interleave_{\delta^{-c' }, 1} \leq C(r) \delta^{c'' \,  \distance(D^c, A_O) } \norm O \norm
\eeq
for some $c',c''>0$. 
\end{lemma}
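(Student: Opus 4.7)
The plan is to combine a support/connectivity property of $\caK(O)$ with a simple weight-trading identity in the interleave-norms and the preservation bound \eqref{eq: cak preserves bounds}.

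I would first establish that if $O$ is viewed as a potential $\scrA_O$ supported on the single connected set $A_O$, then $\caK(O)(A)$ vanishes unless $A$ is connected and contains $A_O$. Indeed, writing $\caK(O)$ as a formal series of nested commutators $\ad_{\scrK^{(j_1)}}\cdots\ad_{\scrK^{(j_k)}}(\scrA_O)$, each elementary commutator $[\scrK(A_1),\scrA(A_2)]$ vanishes unless $A_1\cap A_2\neq\emptyset$, hence propagates connectedness of the supports and preserves the property of containing $A_O$ (applied first to $A_2=A_O$, then inductively). Since necessarily $A_O\subset D$ (otherwise the distance hypothesis is vacuous), imposing the additional constraint $A\cap D^c\neq\emptyset$ from $\caI_{D^c}$ and using connectedness of $A$ in $\bbZ^d$ forces a lattice path in $A$ linking $A_O$ to $D^c$, so every $A$ contributing to $\caI_{D^c}\caK(O)$ satisfies $\str A \str \geq \distance(A_O,D^c) + 1 =: \ell$.

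The second ingredient is the weight-trade: for any potential $\scrB$ whose nonzero support-sets all satisfy $\str A \str\geq\ell$ and any $1\leq\ka'<\ka$, the elementary inequality $(\ka')^{\str A \str}\leq(\ka'/\ka)^{\ell}\,\ka^{\str A \str}$ immediately gives $\interleave\scrB\interleave_{\ka',1}\leq(\ka'/\ka)^{\ell}\interleave\scrB\interleave_{\ka,1}$. Applying this to $\scrB=\caI_{D^c}\caK(O)$, combining with \eqref{eq: cak preserves bounds} in the form $\interleave\caK(O)\interleave_{\ka,1}\leq C(r)\interleave O\interleave_{2\ka,1}$ (valid whenever $2\ka\leq\delta^{-\nu}$), and using the direct identity $\interleave O\interleave_{2\ka,1}=(2\ka)^{\str A_O \str}\norm O\norm$ (since $O$ as a potential is supported on the single set $A_O$), I obtain
\begin{equation*}
\interleave\caI_{D^c}\caK(O)\interleave_{\ka',1}\;\leq\;C(r)\,(\ka'/\ka)^{\ell}(2\ka)^{\str A_O \str}\norm O\norm.
\end{equation*}

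Finally, I would calibrate by taking $\ka=\delta^{-\nu}/2$ and $\ka'=\delta^{-c'}$ with $c'$ slightly smaller than $\nu$, so that $\ka'/\ka\leq\delta^{c_0}$ for some $c_0>0$ (for $\delta$ small enough). The bound becomes $C(r)\,\delta^{c_0\ell-\nu\str A_O \str}\norm O\norm$, and the distance hypothesis $\ell>c\str A_O \str$ lets me choose $c$ sufficiently large (depending on $\nu$ and $c_0$) so that $c_0\ell-\nu\str A_O \str\geq c''\ell$ for some $c''>0$, yielding the claim. The main obstacle is exactly this last calibration: the prefactor $(2\ka)^{\str A_O \str}$ is genuinely of order $\delta^{-\nu\str A_O \str}$ and would swamp the spatial decay $(\ka'/\ka)^{\ell}$ unless $\ell$ exceeds $\str A_O \str$ by a margin of order $\nu/c_0$; it is precisely to guarantee this margin that the distance hypothesis is stated with a free constant $c>0$, at the cost of making $c',c''$ depend on $c$.
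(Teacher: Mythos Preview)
Your approach is the same as the paper's, but the final calibration has a gap. You correctly argue that any contributing $A$ is connected and contains $A_O$; however, you then only record $\str A\str\geq \distance(A_O,D^c)+1$, discarding the inclusion $A_O\subset A$. The paper keeps both facts and uses the sharper bound $\str A\str\geq \distance(D^c,A_O)+\str A_O\str$. This matters: with your weaker bound the exponent you obtain is $c_0\ell-\nu\str A_O\str$ with $c_0<\nu$, and the inequality $c_0\ell-\nu\str A_O\str\geq c''\ell$ forces $c>\nu/c_0>1$. But $c$ is \emph{given} in the hypothesis, not yours to choose; the lemma must hold for every $c>0$, and as written your argument fails for $c\leq 1$.

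The fix is immediate once you use $\str A\str\geq \distance+\str A_O\str$. Then the weight--trade yields a factor $(\ka'/\ka)^{\distance+\str A_O\str}$, and the $(\ka'/\ka)^{\str A_O\str}$ part almost cancels the $(2\ka)^{\str A_O\str}$ cost, leaving only a residual $\delta^{-c'\str A_O\str}$--type factor with $c'$ arbitrarily small. Concretely, one gets an exponent of the form $c_2\,\distance - (\text{small})\cdot\str A_O\str$, and now the hypothesis $\distance>c\str A_O\str$ absorbs the residual for \emph{any} $c>0$ by taking $c'$ (and the auxiliary exponents) small enough. This is exactly what the paper does; see the chain of inequalities culminating in \eqref{eq: constant circus} and the choice $c'':=c_2-(1/c)((1+c_3)c'+c_3c_2)>0$.
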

\begin{proof}
Trivially, for any $c_1>0$
\beq
\interleave O \interleave_{\delta^{-c_1},1} \leq \delta^{-c_1 \str A_O \str} \norm O \norm
\eeq
and hence, by  \eqref{eq: cak preserves bounds}, for $c_1>0$ small enough, 
 \beq  \label{eq: bound on cak o}
 \interleave \caK(O) \interleave_{\frac{\delta^{-c_1}}{2},1} \leq C(r) \delta^{-c_1 \str A_O \str}  \norm O \norm.  \eeq 
Furthermore, if $(\caK(O))(A) \neq 0$, then $A_O \subset A$ and hence, if  $(\caI_{D^c}\caK(O))(A) \neq 0$, then  $\str A \str \geq \distance(D^c, A_O)+\str A_O\str$. Therefore
\begin{align}
\interleave \caI_{D^c}\caK(O) \interleave_{\delta^{-c'}, 1}  & \leq  \delta^{c_2 (\distance(D^c, A_O)+\str A_O\str)} \sum_{A}  \delta^{-(c'+c_2) \str A \str} \norm (\caI_{D^c}\caK(O))(A) \norm \\[2mm]  & \leq
\delta^{c_2 (\distance(D^c, A_O)+\str A_O\str)}  \interleave  \caK(O)   \interleave_{\delta^{-(c'+c_2)},1}\\[2mm]  
& \leq C(r)
\delta^{c_2 (\distance(D^c, A_O)+\str A_O\str)} \delta^{-(1+c_3)(c'+c_2) \str A_O \str} \norm O \norm.\\[2mm]  
& \leq C(r)
\delta^{c_2 \distance(D^c, A_O)}  \delta^{ -(  (1+c_3)  c' + c_3c_2) \str A_O \str} \norm O \norm. \label{eq: constant circus}
\end{align}
The second inequality follows from $\interleave \caI_{D^c}(\scrA)\interleave_{\ka_1,\ka_2} \leq \interleave \scrA \interleave_{\ka_1,\ka_2} $ and the third inequality follows from  \eqref{eq: bound on cak o}, for $\delta$ small enough such that $\delta^{-(c'+c_2)} \leq  (1/2)\delta^{-(1+c_3)(c'+c_2)}$.    
The claim now follows from \eqref{eq: constant circus} by using $\distance(D^c, A_O)> c\str A_O \str$ and choosing $c',c_3$ small enough such that
\beq
c'':= c_2 -(1/c)(  (1+c_3)  c' + c_3c_2)  >0. 
\eeq

\end{proof}

Obviously, changing the volume $D$ far away from $A_O$ leads to small changes in $U_{D}OU^*_{D}$, as we show next.
We denote  the symmetric difference of sets  by $D \Delta D' : = (D \cup D' )\setminus (D \cap D' )$. 
\begin{lemma}\label{lem: local unitaries}   If $\distance(D \Delta D', A_O) \geq c \str A_O \str$ for some $c>0$, then 
 \beq
\norm U_{D}OU^*_{D} -  U_{D'}OU^*_{D'}  \norm \leq C(r)  \delta^{c' \, \distance(D \Delta D', A_O)}   \norm O \norm. 
 \eeq
 for some $c'>0$. 
\end{lemma}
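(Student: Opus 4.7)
The plan is to reduce the statement to Lemma~\ref{lem: bound cutoff pot}. First I would pick a finite volume $V \supset D \cup D' \cup A_O$ and apply the decomposition \eqref{eq: decomp caic and u} to both $D$ and $D'$; subtracting, the common ``bulk'' term $\pot_V(\caK(O))$ cancels, leaving the potential-level identity
\begin{equation*}
U_D O U_D^* - U_{D'} O U_{D'}^* \; = \; \pot_V\big((\caI_{(D')^c} - \caI_{D^c})\caK(O)\big).
\end{equation*}

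Next, I would identify which sets $A$ actually contribute. The prefactor $\chi(A \cap (D')^c \neq \emptyset) - \chi(A \cap D^c \neq \emptyset)$ is nonzero only when the two indicators disagree, and then necessarily $A \cap (D \Delta D') \neq \emptyset$. Moreover, a straightforward induction on $r$ using the definition \eqref{def: total trafo potentials} of $\caK$ together with the commutator formula on potentials shows that $(\caK(O))(A) \neq 0$ forces $A \supset A_O$. Combining these two facts with connectedness of $A$ gives $\str A \str \geq \str A_O \str + \distance(D \Delta D', A_O)$, and shows that the difference above is dominated, term by term in operator norm, by $\caI_{D \Delta D'} \caK(O)$.

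At this point I would invoke Lemma~\ref{lem: bound cutoff pot} applied with $\widetilde{D} := (D \Delta D')^c$ in place of $D$. Its hypothesis $\distance(\widetilde{D}^c, A_O) = \distance(D \Delta D', A_O) \geq c \str A_O \str$ is precisely our own, so the conclusion reads
\begin{equation*}
\interleave \caI_{D \Delta D'} \caK(O) \interleave_{\delta^{-c'}, 1} \; \leq \; C(r) \delta^{c'' \distance(D \Delta D', A_O)} \norm O \norm.
\end{equation*}
Since every surviving term of the difference potential $\scrB := (\caI_{(D')^c} - \caI_{D^c})\caK(O)$ is supported on a set containing any fixed $x_0 \in A_O$, the operator norm is controlled by the weighted potential norm via $\norm \pot_V(\scrB) \norm \leq \sum_{A \ni x_0} \norm \scrB(A) \norm \leq \interleave \caI_{D \Delta D'}\caK(O) \interleave_{\delta^{-c'},1}$, which closes the argument. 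I anticipate no substantial obstacle beyond the one already resolved inside Lemma~\ref{lem: bound cutoff pot}: the geometric decay $\delta^{c_2 \distance(D \Delta D',A_O)}$ must absorb the unavoidable loss $\delta^{-\mathrm{O}(1) \str A_O\str}$ in the bound on $\interleave \caK(O) \interleave$, and this is precisely the function of the hypothesis $\distance(D \Delta D',A_O) \geq c \str A_O \str$; hence the present proof reduces to essentially a one-line substitution $D^c \to D \Delta D'$.
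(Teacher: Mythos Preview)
Your argument is essentially the paper's argument for the special case $A_O \subset D \cap D'$, but there is a genuine gap: the decomposition \eqref{eq: decomp caic and u} is stated (and only valid) under the hypothesis $A_O \subset D$. Indeed, from \eqref{eq: cak and unitary} one has $\pot_D(\caK(\scrA_O)) = U_D\,\pot_D(\scrA_O)\,U_D^*$, and $\pot_D(\scrA_O)$ equals $O$ only when $A_O \subset D$; otherwise $\pot_D(\scrA_O)=0$, so the right-hand side of \eqref{eq: decomp caic and u} becomes $\pot_V(\caI_{D^c}\caK(O))+0$ rather than $\pot_V(\caI_{D^c}\caK(O))+U_D O U_D^*$. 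Hence your subtraction identity $U_D O U_D^* - U_{D'} O U_{D'}^* = \pot_V\big((\caI_{(D')^c}-\caI_{D^c})\caK(O)\big)$ is only justified when $A_O \subset D \cap D'$.

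The hypothesis of the lemma, however, only gives $A_O \cap (D\Delta D')=\emptyset$, which allows points of $A_O$ to lie in $D^c\cap(D')^c$; in fact the lemma is later applied (in the proof of Proposition~\ref{thm: splitting of hamiltonian}) in a situation where $s(O_\rechts)$ need not be contained in either of the two volumes. The paper handles the case $G:=A_O\setminus(D\cap D')\neq\emptyset$ by a separate device: one enlarges $D,D'$ to $\tilde D:=D\cup G$, $\tilde D':=D'\cup G$ and simultaneously modifies the potentials $\scrK^{(j)}$ by zeroing out all terms $\scrK^{(j)}(A)$ with $A\not\subset D\cup D'$, so that the modified unitaries satisfy $\tilde U_{\tilde D}=U_D$ and $\tilde U_{\tilde D'}=U_{D'}$ while now $A_O\subset\tilde D\cap\tilde D'$ and $\tilde D\Delta\tilde D'=D\Delta D'$. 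Only then can your reduction to Lemma~\ref{lem: bound cutoff pot} be carried out. Once you restrict to $A_O\subset D\cap D'$, your argument and the paper's coincide.
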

\begin{proof}
Note that this lemma is not restricted to the case $A_O \subset (D \cap D')$. Let us however first treat this case. Then \eqref{eq: decomp caic and u}, applied to both $D$ and $D'$, yields for large enough $V$,
\beq \label{eq: difference}
 U_{D}OU^*_{D} -  U_{D'}OU^*_{D'}   = {\pot}_V(\caI_{(D')^c}\caK(O))- {\pot}_V(\caI_{D^c}\caK(O) ) = \sum_{A \subset V: A \cap (D \Delta D') \neq \emptyset}  \varsigma(A) \times (\caK(O))(A)
\eeq
where $\varsigma(A)=\pm 1$.  The operator norm of the left-most expression is trivially bounded by
\beq
 \interleave \caI_{(D \Delta D')}\caK(O) \interleave_{\ka,1}, \qquad \text{for any $\ka>1$}, 
\eeq 
and hence the claim follows by Lemma \ref{lem: bound cutoff pot} with $(D \Delta D')$ in the role of $D^c$.\\
Next, we consider the case where $ G := A_O \setminus (D\cap D') $ is not empty.  Note that $G \cap (D \cup D')=\emptyset$ since $(D \Delta D') \cap A_O  =\emptyset$. 
 Set 
$$ \tilde D:= D \cup G, \qquad    \tilde D':= D' \cup  G. $$
and define modified potentials $\tilde \scrK^{(j)}$ by 
\beq
\tilde \scrK^{(j)}(A) := \begin{cases} \scrK^{(j)}(A) & A \subset (D \cup D')  \\   0&   A \not \subset (D \cup D')   \end{cases}
\eeq  
and let  $\tilde U_A$ (for a set $A$) by the modified version of $ U_A$ obtained by replacing $\scrK^{(j)}$ by $\tilde \scrK^{(j)}$.  
Then it is clear that
\beq
\tilde U_{\tilde D} =  U_{ D}, \qquad    \tilde U_{\tilde D'} =  U_{ D'}
\eeq
such that in particular 
\beq \label{eq: difference tilded}
  U_{ D}O  U^*_{ D} -   U_{ D'}O  U^*_{ D'}  =      \tilde U_{\tilde D}O \tilde U^*_{\tilde D} -  \tilde U_{\tilde D'}O \tilde U^*_{\tilde D'}.   \eeq
  For the second expression, the above proof still applies since  $A_O \subset \tilde D \cap \tilde D'$ and hence we conclude that its norm is bounded by
  \beq
  C(r)  \delta^{c' \, \distance(\tilde D \Delta \tilde D', A_O)}   \norm O \norm
  \eeq
 Since however  $\tilde D \Delta \tilde D'=  D \Delta  D'$, we have obtained the claim of the lemma.
\end{proof}

\section{Analysis of the resonant Hamiltonian: Invariant subspaces}\label{sec: analysis invariant subspaces}

We define  the resonant Hamiltonian in the strip $\bbS$ defined in \eqref{def: strip first}:
\beq \label{eq: def z}
Z=Z^{(r)}:= \pot_{\bbS}(\scrD^{(r)}) + \pot_{\bbS}(\hat \scrG^{(r)}) 
\eeq
where $\scrD^{(r)},  \hat \scrG^{(r)} $ were defined preceding Corollary \ref{cor: splitting}.  Note that the potential $\hat \scrG^{(r)}$  depends on the resonance threshold $\delta$ that we choose as
 \beq \label{eq: def of m via gamma}
 \delta=M^{-\gamma_1}, \qquad \text{for some $0<\gamma_1<q-2$} \eeq
It is always understood that $M$ is taken large enough, possibly depending on $r$. This will not be repeated at every step.
The main point of the analysis below is to show that the non-diagonal terms in the Hamiltonian $Z$ are  \emph{sparse}, and therefore, transport induced by this Hamiltonian is small. This goal will be achieved in Proposition \ref{proposition: commutator left right} and one can consider the Sections \ref{sec: analysis invariant subspaces} and \ref{sec: left right splitting} as the proof of this result. 

\subsection{Setup and definition}
In the present section \ref{sec: analysis invariant subspaces}, our analysis will depend on a volume $V \subset \bigvolume$ that should be thought of as being much smaller  than $\bigvolume$.
Even though this is not necessary for most of the statements below, we will always assume that $\str V \str \leq (2r)^{2d}$, as will anyhow be done in Section \ref{sec: left right splitting}.  
We mostly drop the dependence on $r$, for example writing $Z=Z^{(r)}$, but
we write $C(r),c(r)$ for constants $C(r) < \infty, c(r)>0$ that can depend on $r$.  Recall that $\Om_V^{(M)}$ is the phase space in $V$ with a cutoff at $M$.  In what follows we often  abbreviate $\Om_V=\Om_V^{(M)}$ because the high-energy cutoff is always in place.

To write  the Hamiltonian $Z$ in a more explicit way, we introduce
\begin{definition}[Moves] \label{def: moves} \emph{
For a  volume $V \subset \bbS$ and $r  \in \bbN$, we set 
$$\caP^{(V)} := \{ \rho \in \bbZ^{V}:  1 \leq \str \rho \str \leq r,   \str{s(\rho)}\str \leq r \}$$
where $s(\rho)=\{ x :  \rho(x) \neq 0\}$.   We also define the `dependence set'  of a move
\beq \label{eq: dependence set}
 S(\rho)  : = \bigcup_{\substack{A \subset \bbS \\    \str A \str \leq r,    s(\rho) \subset A    } }   A
 \eeq
such that, in particular,  $\diam(S(\rho)) \leq 2 r$. }
 \end{definition}
 To recast the Hamiltonian $Z$ in terms of `moves', we first introduce the `move'-operators
\beq \label{def: move operators}
W_{\rho} : =  \sum_{A \subset \bbS} \sum_{\eta \in \Om_{\bbS}}  P_\eta \hat\scrG^{(r)}(A) P_{\eta+\rho}
\eeq 
They satisfy 
\begin{enumerate}
\item  the high-energy cutoff  $\caP_{\leq M} (W_\rho)=W_\rho$
\item the locality property  $s(W_\rho) \subset S(\rho)$ (In particular, the sum over $A$ in \eqref{def: move operators} can be restricted to subsets of $S(\rho)$).
\item  the bound $ \norm W_\rho \norm \leq C(r) M^C   $.
\item  a resonance condition: $\langle \eta, W_\rho \eta' \rangle=0 $ unless $ \str E_{S(\rho)}(\eta)-E_{S(\rho)}(\eta') \str \leq M^{1+\gamma_1} $.
\end{enumerate}
This is easily checked relying on the locality and bounds on $\hat \scrG^{(r)}$, and \eqref{eq: def of m via gamma}. 
We can now recast  the  Hamiltonian $Z$ as 
\beq
 Z= \pot_{\bbS}(\scrD)+ \pot_{\bbS}(\hat\scrG)= \pot_{\bbS}(\scrD)  + \sum_{\rho \in \caP^{(\bbS)}} W_{\rho}  
\eeq 
Moreover, we recall that $\scrD(A)=0$ unless $\str A \str \leq r$.

Next, we define a partition of the phase space into (possibly delocalized) components such that the resonant Hamiltonian $Z$ cannot induce transport between the components. In the remaining part of this section, we will not need the strip $\bbS$, nor the Hamiltonian $Z$. Instead, we focus on the (joint) structure of the operators $W_\rho$ with $\rho \in \caP^{(V)}$.   When confusion is excluded, we sometimes drop $V$ from our notation.   

\begin{definition}[Partition] \label{def: partition}\emph{
Let  $\eta,\eta' \in  \Om_V$. Define
\beq \label{def: equivalence etas}
\eta \mathop{\sim}\limits_\rho \eta'  \qquad \Leftrightarrow \qquad  \left(  \eta'- \eta \in \{-\rho, \rho\} \quad \text{and}\quad  \str E_V(\eta)-E_V(\eta') \str \leq M^{1+\gamma_1}   \right)
\eeq
and 
\beq
\eta \mathop{\sim} \eta'  \qquad \Leftrightarrow \qquad  \big(  \eta \mathop{\sim}\limits_\rho \eta' \quad \text{for some} \quad \rho \in \caP  \big)
\eeq
Note that the relation $\sim$ is  an adjacency relation, hence it induces a  partition of $\Om_V$ into connected components. We call this partition $\caF=\caF^{(V)}$ and its elements are denoted by $\mu,\mu',\ldots \in \caF$. We write
\beq
\caP(\mu) = \caP^{(V)}(\mu) :=\{  \rho \in \caP\,\str \, \exists \eta,\eta' \in \mu:   \eta \mathop{\sim}\limits_\rho \eta'   \}
\eeq
and, for $A \subset {V}$,
\beq
\caP_A(\mu)=\caP_A^{(V)}(\mu):= \{ \rho \in \caP(\mu): s(\rho)  \subset A\}.
\eeq
and we also write $\caP_A(\eta)= \caP_A(\mu(\eta))$ where $\mu (\eta)$ is the unique $\mu \in \caF$ such that $\eta \in \mu$. 
}

 \end{definition}
 Note  that for $\rho \in \caP^{(V)}$, {it is not guaranteed that $s(W_\rho) \subset V$} because $s(\rho) \subset V$ does not imply $S(\rho)\subset V$.
From Definition \ref{def: partition}, it is immediate that  
$$[P_\mu,W_\rho]=0, \qquad \text{with}\,\, P_\mu  = \sum_{\eta \in \mu} P_\eta \quad \text{and}\,\,  \rho \in \caP^{(V)}, \mu \in \caF^{(V)}$$ and, since $\scrD(A)$ is diagonal in the $\eta$-basis, also $
[\scrD(A), P_\mu]=0$.  Hence we have indeed found invariant subspaces for $Z$.

\subsection{Structure of the partition $\caF$}
The main virtue of this construction is that the partition $\caF$ is rather fine, so that transport by $Z$ can only take place in small sets (in configuration space). We show indeed that if $\eta,\eta'$ belong to the same $\mu$ in the partition, then $\str \eta-\eta'\str \leq M^{1-c}$ for some $c>0$, in other words the size of the sets $\mu$ is small compared to, $M$, the size of the local phase space. In practice, it is more convenient to work with transformed $\eta$'s: 
\begin{equation}\label{definition of theta}
\theta(x) \; = \; (\frac{\eta(x)}{M})^{q-1} 
\quad \text{for} \quad x \in V.
\end{equation}
We write $\theta(\eta)$ for $\theta$ defined in this way.  Note that $\theta \in [0,1]^{\str V \str}$.  Recall that we write $\str \xi \str=\str \xi \str_2=  (\sum_x  \str \xi(x)\str^2)^{1/2}$ for $\xi \in \bbR^{\str V\str}$. 
\begin{proposition} \label{prop: reduced mobility}
Assume that $0<\gamma_1<q-2$ in the resonance condition \eqref{eq: def of m via gamma} and let $\gamma_2$ satisfy $0< {\gamma_2} < \min (1,(q-2) -\gamma_1)$
Then, there is  $ C_0(r)<\infty$ such that, for sufficiently large $M$ (depending on $r$), 
\beq
\max_{\mu \in \caF}  \max_{\eta,\eta' \in \mu} \str \theta(\eta)-\theta(\eta')\str \leq  C_0(r)M^{-\gamma_2}.
\eeq 
\end{proposition}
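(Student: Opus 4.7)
The plan is a two-step argument: first establish a per-move bound on $\theta(\eta) \cdot \rho$ via Taylor expansion, then propagate this to the whole component $\mu$ using the convexity of $E_V$.

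\textbf{Step 1 (per-move bound).} For any $\eta \in \mu$ and any active move $\rho$ at $\eta$ (i.e.\ $\eta \sim_\rho \eta + \rho$), I would Taylor-expand
\[
E_V(\eta + \rho) - E_V(\eta) \;=\; q M^{q-1}\, \theta(\eta) \cdot \rho \;+\; R,
\]
where the remainder $R$ collects the second- and higher-order terms of $(\eta(x) + \rho(x))^q - \eta(x)^q$. Using $|\rho(x)| \leq r$, $|s(\rho)| \leq r$ and $\eta(x) \leq M$, one estimates $|R| \leq C(r)\, M^{q-2}$. The defining bound $|E_V(\eta + \rho) - E_V(\eta)| \leq M^{1+\gamma_1}$ of the relation $\sim_\rho$ then gives
\[
|\theta(\eta) \cdot \rho| \;\leq\; \frac{M^{1+\gamma_1} + C(r)\, M^{q-2}}{q M^{q-1}} \;\leq\; C(r)\, M^{-\min(1,\,q-2-\gamma_1)},
\]
which is $\leq C(r)\, M^{-\gamma_2}$ under the hypothesis $\gamma_2 < \min(1, q-2-\gamma_1)$, once $M$ is taken large enough (depending on $r$).

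\textbf{Step 2 (propagation along a path).} For $\eta, \eta' \in \mu$, pick a connecting path $\eta = \eta_0, \eta_1, \ldots, \eta_k = \eta'$ with steps $\sigma_i := \eta_{i+1} - \eta_i \in \pm \mathcal{P}$. Step~1 gives $|\theta(\eta_i) \cdot \sigma_i| \leq C(r)\, M^{-\gamma_2}$ at each step, and the goal is to deduce $|\theta(\eta_k) - \theta(\eta_0)| \leq C_0(r)\, M^{-\gamma_2}$. The structural input is convexity of $E_V$: monotonicity of $\nabla E_V = q M^{q-1} \theta$ provides $[\theta(\eta_{i+1}) - \theta(\eta_i)] \cdot \sigma_i \geq 0$, and the Bregman remainder $R_i \geq 0$ (the second-order Taylor term) equals, up to the factor $q M^{q-1}$, the per-step increment $[\theta(\eta_{i+1}) - \theta(\eta_i)] \cdot \sigma_i$ itself.

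\textbf{Main obstacle.} The hard part is that the path length $k$ can be much larger than any polynomial in $r$, so a naive bound by $k$ times the per-step $\theta$-change fails. My plan is to decompose the path into maximal runs of consecutive identical-move steps. Within a run of $n$ successive $\rho$-steps from some $\eta$, the function $j \mapsto \theta(\eta + j\rho) \cdot \rho$ is monotone non-decreasing (again by convexity of $E_V$ in the direction $\rho$, since $(\eta(x) + j\rho(x))^{q-2} \rho(x)^2 \geq 0$), so all its values lie in an interval of length at most $2 C(r)\, M^{-\gamma_2}$ by Step~1; the integral form of Taylor's theorem then bounds the total $\theta$-change across the run by $O(M^{-\gamma_2})$. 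Since $|\mathcal{P}| \leq C(r)$, the number of \emph{distinct} moves is bounded, but the number of runs can still be large because the path may oscillate. Bounding the total $\theta$-change therefore requires either a canonical choice of path (for instance the shortest one) or an argument that re-orderings of moves within the same component $\mu$ do not change the net $\theta$-shift; either route again exploits convexity of $E_V$ through a Bregman-type inequality connecting endpoints of the path. This cancellation argument, making precise how the per-step constraints aggregate to constrain the global $\theta$-diameter of $\mu$, is where I expect the main technical content of the proof to lie.
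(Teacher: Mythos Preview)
Your Step~1 is correct and is exactly the paper's Lemma~5.3. The gap is in Step~2, and it is genuine.

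First, a quantitative slip: within a run of $n$ consecutive $\rho$-steps you correctly note that $\theta(\eta+n\rho)-\theta(\eta)$ has the form $\nu(x)\rho(x)$ with $\nu(x)\geq 0$, and that $[\theta(\eta+n\rho)-\theta(\eta)]\cdot\rho=\sum_x\nu(x)\rho(x)^2$ is bounded by the endpoint difference of $\theta\cdot\rho$, hence by $O(M^{-\gamma_2})$. But this only controls $\sum_x\nu(x)\rho(x)^2$, whereas $|\theta(\eta+n\rho)-\theta(\eta)|^2=\sum_x\nu(x)^2\rho(x)^2$; using $\nu(x)\leq 1$ you get at best $O(M^{-\gamma_2/2})$ for the $\theta$-change across a run, not $O(M^{-\gamma_2})$.

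More seriously, you have no mechanism to bound the \emph{number of runs}. You cannot reorder moves freely, because whether $\eta\sim_\rho\eta+\rho$ holds depends on the current $\eta$; and even the shortest path in $\mu$ can alternate between many move types. Your appeal to a ``Bregman-type inequality connecting endpoints'' is too vague to carry weight: the Bregman divergence of $E_V$ controls energy differences, but Step~1 already exhausts the information in the resonance bound on energies, so convexity alone does not yield the missing global constraint.

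The paper's argument is structurally different. It introduces \emph{mobility zones} $\caZ(m)$: regions where $\theta$ is nearly orthogonal to $m$ linearly independent moves but far from orthogonal to every move outside their span. While the path stays close to a point of $\caZ(m)$, only moves in that $m$-dimensional span can be active, so the $\eta$-displacement lies in that span, and hence the $\theta$-displacement has the coordinatewise form $\nu(x)v(x)$ with $v$ in the span. A geometric lemma (Lemma~5.5) then shows such a displacement cannot be orthogonal to the span, so once the path has moved a definite amount it lands in some $\caZ(m')$ with $m'<m$. Since $m\leq|V|\leq C(r)$, this descent terminates after $C(r)$ stages, each contributing $O(L^{m}M^{-\gamma_2})$ with $L=L(r)$, and the total $\theta$-diameter is $C_0(r)M^{-\gamma_2}$. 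The linear-algebraic bookkeeping of which moves span the currently active directions is the missing idea.
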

Recall that we assumed $\str V \str \leq (2r)^{2d}$, which is the reason there is no explicit dependence on  $V$ in the bound on the right hand side.

We define the scalar product 
\beq
\langle \xi,\xi' \rangle = \sum_{x \in V}  \xi(x) \xi'(x), \qquad \xi,\xi'   \in \bbR^{ V}
\eeq
corresponding to the norm $\str\xi\str$ used above. 
In what follows, $x$ always ranges over $V$ and we drop this from the notation. 
It is clear from the definition of the partition $\caF$ that, if two configurations $\eta,\eta'$ belong to the same partitioning set $\mu$, then there must be a finite sequence $(\eta_{n})_{n\ge 1}\subset \Om_V$  such that 
\begin{equation}\label{condition 2 sequence eta}
\eta _{n+1} \mathop{\sim}\limits_{\rho_n} \eta_{n}, \qquad \text{for any $1 \leq n < l$ and $\rho_n \in \caP$},
\end{equation}
and $\eta_1=\eta, \eta_l=\eta'$. 
 In what follows we abbreviate $\theta_n:=\theta(\eta_n)$
We will now show in a series of lemma's that for any such sequence (in particular, for any $l$),   $\str \theta_l-\theta_1 \str$ is bounded as in the statement of Proposition \ref{prop: reduced mobility}. 
The main idea is as follows: The relation $\eta _{n+1} \mathop{\sim}\limits_{\rho_n} \eta_{n}$ imposes a strong constraint on $\eta_n$ or $\theta_n$. As we see in Lemma \ref{lem: condition on gammathree}, it essentially means that $\rho_n \perp \theta_n$.  If we could pretend that  $\eta \parallel \theta(\eta)$, then $\eta_n$ would be in the plane perpendicalur to $\rho_n$ and we see therefore that $\eta_{n+1}= \eta_n+\rho_n$ moves away from this plane; addding $\rho_n$ sufficiently many times, the resulting  $\eta$ will not longer be orthogonal to $\rho_n$.  This is eventually the effect of nonlinearity and it is the main reason why the components $\mu$ are small.  Of course, $\eta \not \parallel \theta(\eta)$ in general, but we clearly see that $\theta(\eta)$ has a component collinear with $\eta$. The task accomplished in the next four lemma's is to make this idea precise, in particular when condition $\rho \perp \theta_n$ holds for several moves $\rho$.

\begin{lemma}\label{lem: condition on gammathree} Let $0< {\gamma_2} < \min (1,(q-2) -\gamma_1)$, then
\begin{equation} \label{eq: theta mobility condition}
\big| \langle \theta_{n}, \rho_{n} \rangle \big|
\; \le \; 
M^{-\gamma_2}.
\end{equation}
\end{lemma}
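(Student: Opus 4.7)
The plan is a direct first-order Taylor expansion of the energy difference $E_V(\eta_{n+1}) - E_V(\eta_n) = \sum_x (\eta_n(x)+\sigma \rho_n(x))^q - \eta_n(x)^q$ around $\rho_n=0$, where $\sigma \in \{+1,-1\}$ is determined by $\eta_{n+1} - \eta_n = \sigma \rho_n$; since the conclusion is invariant under $\rho_n \mapsto -\rho_n$, I may take $\sigma=+1$.

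First I would isolate the leading term. For each $x$, apply the mean value theorem twice (since $q > 2$) to write
\[
(\eta_n(x) + \rho_n(x))^q - \eta_n(x)^q \;=\; q\,\eta_n(x)^{q-1}\rho_n(x) \;+\; R_x,
\]
where $|R_x| \leq C(q)\,(\eta_n(x) + r)^{q-2}\,\rho_n(x)^2$. Since $\rho_n \in \caP^{(V)}$, we have $|\rho_n(x)|\leq r$, $\str s(\rho_n)\str \leq r$, and $\eta_n(x) \leq M$ (by the $M$-cutoff enforced on the space $\Om_V = \Om_V^{(M)}$). Summing over $x \in s(\rho_n)$ yields the crude bound
\[
\Big| \sum_x R_x \Big| \;\leq\; C(q,r)\, M^{q-2}.
\]
The main term, by the substitution $\eta_n(x) = M\,\theta_n(x)^{1/(q-1)}$ in its rewritten form $\eta_n(x)^{q-1} = M^{q-1}\theta_n(x)$, rearranges to $q M^{q-1} \langle \theta_n, \rho_n \rangle$.

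Next I would insert the resonance hypothesis: from the definition of $\mathop{\sim}\limits_{\rho_n}$ one has $|E_V(\eta_{n+1})-E_V(\eta_n)| \leq M^{1+\gamma_1}$. Combining with the Taylor identity gives
\[
q\, M^{q-1}\, |\langle \theta_n, \rho_n\rangle| \;\leq\; M^{1+\gamma_1} \;+\; C(q,r)\, M^{q-2},
\]
so that
\[
|\langle \theta_n, \rho_n\rangle| \;\leq\; \tfrac{1}{q}\,M^{-((q-2)-\gamma_1)} \;+\; C(q,r)\, M^{-1}.
\]
The hypotheses $\gamma_2 < (q-2)-\gamma_1$ and $\gamma_2 < 1$ imply that both terms on the right are bounded by $\tfrac12 M^{-\gamma_2}$ as soon as $M$ is large enough (depending on $r,q$), which yields the claim.

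I don't anticipate any real obstacle: the only mild subtlety is that $q$ need not be an integer, but this is handled cleanly by applying MVT twice (or equivalently using the integral form of the Taylor remainder), and the $M^{q-2}$ remainder is absorbed because $\gamma_2 < 1$ guarantees $M^{-1} \ll M^{-\gamma_2}$ and $\gamma_2 < (q-2)-\gamma_1$ guarantees $M^{\gamma_1-(q-2)} \ll M^{-\gamma_2}$. The role of the nonlinearity $q>2$ is precisely that the ``level spacing'' of $N^q$ at occupation $\sim M$ is of order $M^{q-1}$, which is why dividing the allowed energy window $M^{1+\gamma_1}$ by this spacing produces a small factor $M^{-((q-2)-\gamma_1)}$, matching the intuition described before the lemma.
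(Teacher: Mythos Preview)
Your proof is correct and follows essentially the same approach as the paper: extract the linear term $q\sum_x \rho_n(x)\eta_n(x)^{q-1} = qM^{q-1}\langle\theta_n,\rho_n\rangle$ from the energy difference via a first-order Taylor/mean-value estimate, bound the remainder by $C(r)M^{q-2}$, invoke the resonance bound $|E_V(\eta_{n+1})-E_V(\eta_n)|\le M^{1+\gamma_1}$, and divide through by $M^{q-1}$. The paper's writeup is terser (it phrases the remainder bound via the fundamental theorem of calculus) but the argument is the same.
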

\begin{proof}
\begin{align}
q \bigl\str \sum_x \rho_n(x) \eta_n(x)^{q-1} \bigr \str
\;&\le\; 
\bigl\str q\sum_x \rho_n(x) \eta_n(x)^{q-1}  - ( E_{V}(\eta_{n+1}) - E_{V}(\eta_n))\bigr \str
\; + \; \big| E_{V}(\eta_{n+1}) - E_{V}(\eta_n) \big| 
\nonumber\\
\; &\le \; 
C(r) \, \big( 1 + \sum_{x} \eta_n(x)^{q-2} \big) + \; \big| E_{V}(\eta_{n+1}) - E_{V}(\eta_n) \big| 
\nonumber\\
\; &\le \; 
C(r) M^{q-2}  \; + \;  M^{1+\gamma_1}.
\end{align}
The second inequality is by  the fundamental theorem of calculus, the third inequality uses $\str\eta_n(x)\str \leq M$ and the resonance condition \eqref{def: equivalence etas}. 
Dividing by $M^{q-1}$ and taking $M$ large enough yields the claim. 
\end{proof}
We will now define regions $\caZ(m)$ in $\bbR_+^{\str V\str}$  such that for all $\theta \in \caZ(m)$, the condition  $\big| \langle \theta, \rho \rangle \big|
\; \le \; 
M^{-\gamma_2}$ is `nearly satisfied' for $m$ linearly independent 'moves' $\rho^{1}, \ldots, \rho^{m}$, but far from satisfied for any move $\rho$ that is not contained in $\spann \{\rho^{1}, \ldots, \rho^{m}\}$.  The construction depends  on a parameter $L>2$ that will be chosen to be large enough later on.    
\begin{definition} \label{def: mobility regions}
\emph{  Let $\caZ(m) \subset \bbR_+^{\str V\str}$ with $1\leq m \leq \str V \str$ be the set of those $\theta$ for which there is a linearly independent collection $\{\rho^{1}, \ldots, \rho^{m}\} \subset \caP$ such that 
\begin{enumerate}
\item   $  |\langle \theta, \rho^j \rangle| \; \le \; L^{m-1} M^{-{\gamma_2}} \qquad \text{for $j=1,\ldots,m $}.  $
\item  $ |\langle \theta, \rho \rangle| \; > \; L^{m} M^{-{\gamma_2}}\qquad  \text{for any } \rho \in \mathcal P \setminus \mathrm{span} \{ \rho^1, \dots , \rho^{m} \} $
\end{enumerate}
For $m=0$, we let $\caZ(0) \subset \bbR_+^{\str V\str}$ be the set of $\theta$ such that $ |\langle \theta, \rho \rangle| \; > \; M^{-{\gamma_2}}$ for any $\rho \in \mathcal P$ (recall that $\str \rho\str \geq 1$).
}
\end{definition}

Note that  $(\bbR_+)^{ V } = \cup_{j=0}^{\str V \str} {\caZ} (j)$ but the regions $\caZ(m)$ are in general not  disjoint.   In what follows we say that $\theta \in \caZ(m)$ by virtue of $\rho^1,\ldots, \rho^m$ if   $\{\rho^{1}, \ldots, \rho^{m}\} \subset \caP$ is one of the linearly independent collections for which the above condition 1.\  holds.  
We first argue that If  $\theta \in {\caZ} (m)$ by virtue of  $\{\rho^1,\ldots, \rho^m\}$, then for any $u \in \spann \{\rho^1,\ldots, \rho^m\}$ with $\str u \str=1$
  \beq \label{eq: has to be bigger}
\str \langle \theta, u \rangle \str \leq  C_1(r) L^{m-1}M^{-{\gamma_2}},  \qquad \text{for some $C_1(r)$.}
\eeq
Let us abbreviate (only here and in Lemma \ref{lem: acrobatics g})    $\caR:=\{\rho^1,\ldots, \rho^m\}$. 
To see \eqref{eq: has to be bigger}, let $e(\caR)$ be the lowest eigenvalue of the symmetric $m\times m$ matrix with entries $\langle \rho^{i},\rho^j \rangle$. Since $\caR$ is linearly independent, $e(\caR)>0$. For $u=\sum_j u_j \rho^j$ with $u_j \in \bbR$, we then have $\sum_j \str u_j\str^2 \leq  (e(\caR))^{-1} \str u \str^2$, hence $\str u_j\str \leq C(\caR) \str u \str$. Since the number of possible collections $\caR$ is $C(r)$, we conclude \eqref{eq: has to be bigger} from condition 1 in Definition \ref{def: mobility regions}.

 \begin{lemma}\label{lem: acrobatics g}
Define the closed set $G_{\caR} \subset \bbR^{\str V\str}$ consisting of $\theta$ such that, for all $x \in V$
\beq
\theta(x)= \nu(x) v(x), \qquad   \text{with $\nu(x) \geq 0, v \in  \spann \caR$}.
\eeq
Then
\beq
 \inf_{\theta \in G_{\caR}, \str \theta\str=1 } \, \,   \sup_{u \in \spann \caR, \str u \str=1} \str \langle \theta, u \rangle \str  \geq c(\caR)  \geq c(r)
\eeq
 \end{lemma}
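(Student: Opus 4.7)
The first step I would take is to rewrite the inner supremum as the norm of an orthogonal projection,
\beq
\sup_{u \in \spann \caR, \str u \str=1} \str \langle \theta, u \rangle \str \; = \; \str \proj_{\spann \caR} \theta \str,
\eeq
so the claim reduces to establishing $\str \proj_{\spann \caR} \theta \str \geq c(\caR)$ uniformly on $\{\theta \in G_\caR : \str \theta\str = 1\}$. The plan is then compactness plus a contradiction argument: attain the infimum at some $\theta^*$ and rule out that it equals zero.

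The structural observation that makes this work is that the positivity of $\nu$ forces $G_\caR$ to be a finite union of closed convex cones, indexed by the sign pattern of the accompanying $v \in \spann \caR$. For $s \in \{-1,0,+1\}^V$ I would set
\beq
C_s := \{\theta \in \bbR^V : \theta(x) s(x) \geq 0 \text{ for all } x, \text{ and } \theta(x) = 0 \text{ whenever } s(x) = 0\};
\eeq
then the relation $\theta(x) = \nu(x) v(x)$ with $\nu \geq 0$ is equivalent to $\theta \in C_{\sgn v}$. Since only finitely many sign patterns are realized by vectors in $\spann \caR$, one obtains $G_\caR = \bigcup_{v \in \spann \caR} C_{\sgn v}$ as a finite union of closed convex cones, hence closed, so its intersection with the unit sphere is compact. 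The map $\theta \mapsto \str \proj_{\spann \caR} \theta \str$ is continuous, so the infimum is attained at some $\theta^* \in G_\caR$, say $\theta^*(x) = \nu^*(x) v^*(x)$ with $\nu^* \geq 0$ and $v^* \in \spann \caR$. Suppose for contradiction the infimum is zero; then $\theta^* \perp \spann \caR$, and in particular
\beq
0 \; = \; \langle \theta^*, v^* \rangle \; = \; \sum_{x \in V} \nu^*(x) v^*(x)^2.
\eeq
Every summand on the right is non-negative, so each must vanish, whence $\theta^*(x) = \nu^*(x) v^*(x) = 0$ for every $x$, contradicting $\str \theta^* \str = 1$. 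This yields $c(\caR) > 0$.

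For the uniform bound $c(\caR) \geq c(r)$, I would then argue that the combinatorial type of $\caR$ (the integer values $\rho^j(x)$ together with the incidence pattern of their supports) determines $G_\caR$ and $\spann \caR$ up to a permutation of $V$ and the addition of coordinates on which every $\rho^j$ vanishes, the latter contributing nothing to $\proj_{\spann \caR}$. Since each $\rho \in \caP^{(V)}$ has support of size $\leq r$ and entries bounded by $r$, and $\str V \str \leq (2r)^{2d}$, there are only finitely many such types, and the minimum of $c(\caR)$ over them supplies the desired $c(r)>0$. The one step that genuinely requires care is the closedness of $G_\caR$: the naive parametrization $(\nu, v) \mapsto \nu \cdot v$ is not proper, as $\nu(x)$ can blow up when $v(x) \to 0$ while $\theta(x)$ stays bounded. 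The sign-pattern decomposition above is precisely the device that sidesteps this issue, and once it is in place the remainder of the argument is a standard compactness-plus-contradiction routine.
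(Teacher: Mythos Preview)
Your proof is correct and follows essentially the same route as the paper's: compactness of $G_{\caR}\cap\{\str\theta\str=1\}$, attainment of the infimum, and the contradiction via $\langle\theta^*,v^*\rangle=\sum_x\nu^*(x)v^*(x)^2$. Your treatment is in fact more complete than the paper's, which simply asserts that $G_{\caR}$ is closed; your sign-pattern decomposition into finitely many closed convex cones supplies exactly the justification the paper omits, and your finiteness argument for $c(\caR)\geq c(r)$ is a slightly more elaborate version of the paper's one-line observation that the number of collections $\caR$ is bounded by $C(r)$.
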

 \begin{proof}
Assume there is no such $c(\caR) >0$. Since the intersection of $G_{\caR}$ with the unit sphere is compact,  it follows that there is a $\theta \in G_{\caR}, \str \theta\str=1$ such that $\theta \perp \spann \caR$.  However, this is false because
\beq
\langle \theta, v \rangle =  \sum_{x} \nu(x) (v(x))^2     >0
\eeq 
where $\nu, v$ are related to $\theta$ as in the definition of $G_{\caR}$, in particular $v \in \spann \caR$.  $c(\caR)  \geq c(r)$ follows because the number of collections $\caR$ is $C(r)$. 
 \end{proof}
 
 \begin{lemma}\label{lem: stay in mobility plane}
 Let $(\theta_{n})_{n\ge 1}$ be the sequence defined from \eqref{condition 2 sequence eta}.
We fix $k,k' \in \bbN$ with $k < k'$ and we
assume that $\theta_{k}\in {\caZ} (m)$ for some $ 1 \leq m \leq  \str V\str$ by virtue of $\{\rho^1, \dots , \rho^m\}$. 
If 
\begin{equation}\label{distance not too big}
|\theta_{n} - \theta_k| \le \frac{ L}{2r} L^{m-1} M^{-{\gamma_2}} \quad \text{for every $ n $ with} \quad k\leq n \leq k', 
\end{equation}
then
 $$\theta_{k'}-\theta_{k} \in G_{\rho^1, \dots , \rho^{m}}$$
  \end{lemma}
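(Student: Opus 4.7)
The plan is to show that every single move $\rho_n$ used in the chain from $\eta_k$ to $\eta_{k'}$ already lies in $\spann\{\rho^1,\ldots,\rho^m\}=\spann\,\caR$. Once this is established, then $\eta_{k'}-\eta_k=\sum_{n=k}^{k'-1}(\eta_{n+1}-\eta_n)=\sum_{n=k}^{k'-1}\pm\rho_n$ will automatically sit in $\spann\,\caR$, and a one-variable mean value theorem applied coordinatewise to $f(t)=(t/M)^{q-1}$ will upgrade this to the claim $\theta_{k'}-\theta_k\in G_{\rho^1,\ldots,\rho^m}$.

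The core step is the first one, which I would handle by contradiction. Fix any $n$ with $k\le n< k'$ and suppose $\rho_n\notin\spann\,\caR$. Condition 2 in Definition \ref{def: mobility regions}, applied to $\theta_k\in\caZ(m)$, gives
\[
|\langle\theta_k,\rho_n\rangle|\;>\;L^m M^{-\gamma_2},
\]
whereas Lemma \ref{lem: condition on gammathree} applied at step $n$ yields $|\langle\theta_n,\rho_n\rangle|\le M^{-\gamma_2}$. Subtracting and using $L\ge 2$ (so that $L^m-1\ge L^m/2$),
\[
|\langle\theta_k-\theta_n,\rho_n\rangle|\;>\;(L^m-1)M^{-\gamma_2}\;\ge\;\tfrac{L^m}{2}M^{-\gamma_2}.
\]
On the other hand, the standing hypothesis \eqref{distance not too big} reads $|\theta_n-\theta_k|\le \frac{L^m}{2r}M^{-\gamma_2}$, and $\rho_n\in\caP$ has $|\rho_n|\le r$ by Definition \ref{def: moves}. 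Cauchy–Schwarz therefore gives
\[
|\langle\theta_k-\theta_n,\rho_n\rangle|\;\le\;|\theta_k-\theta_n|\,|\rho_n|\;\le\;\tfrac{L^m}{2}M^{-\gamma_2},
\]
contradicting the previous strict inequality. Hence $\rho_n\in\spann\,\caR$ for every such $n$.

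It follows that $v:=\eta_{k'}-\eta_k=\sum_{n=k}^{k'-1}\pm\rho_n\in\spann\,\caR$. For the final step, fix $x\in V$ and apply the one-dimensional mean value theorem to $t\mapsto(t/M)^{q-1}$ on the interval between $\eta_k(x)$ and $\eta_{k'}(x)$, which lies in $[0,M]$: there is $\xi_x\ge 0$ with
\[
\theta_{k'}(x)-\theta_k(x)\;=\;\frac{q-1}{M}\Big(\frac{\xi_x}{M}\Big)^{q-2}\bigl(\eta_{k'}(x)-\eta_k(x)\bigr)\;=:\;\nu(x)\,v(x),
\]
where $\nu(x)\ge 0$ thanks to $q>2$ and $\xi_x\ge 0$ (and with the convention $0^{q-2}=0$ or, equivalently, $\nu(x)=0$ whenever $v(x)=0$). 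Since $v\in\spann\,\caR$, this is exactly the representation defining $G_{\rho^1,\ldots,\rho^m}$.

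I expect no real obstacle here — the argument is two triangle inequalities plus a scalar MVT. The only bookkeeping point to watch is the choice of constants: the factor $\frac{L}{2r}L^{m-1}=\frac{L^m}{2r}$ in the hypothesis is exactly engineered to match the gap $L^m M^{-\gamma_2}$ in condition~2 of Definition \ref{def: mobility regions} after multiplying by $|\rho_n|\le r$, and this is why the contradiction closes cleanly for every $L\ge 2$.
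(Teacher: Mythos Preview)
Your proof is correct and follows essentially the same approach as the paper's: both establish that every move $\rho_n$ must lie in $\spann\{\rho^1,\ldots,\rho^m\}$ via the reverse triangle inequality combining condition~2 of Definition~\ref{def: mobility regions} with Lemma~\ref{lem: condition on gammathree} and the hypothesis~\eqref{distance not too big}, and then pass from $\eta_{k'}-\eta_k\in\spann\caR$ to $\theta_{k'}-\theta_k\in G_{\rho^1,\ldots,\rho^m}$ via the monotonicity of $t\mapsto t^{q-1}$ (which you phrase as the mean value theorem, the paper as ``strictly increasing''). The only difference is cosmetic: the paper argues directly that $|\langle\theta_n,\rho\rangle|>M^{-\gamma_2}$ for any $\rho\notin\spann\caR$, whereas you organize the same estimate as a contradiction.
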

\begin{proof}
By \eqref{distance not too big}, for every $\rho \in \caP \setminus \mathrm{span} \{ \rho^{1}, \dots , \rho^{m} \}$,
\begin{equation}\label{useful for identifying constant}
|\langle \theta_{n}, \rho \rangle|
\; \ge \;  
|\langle \theta_k, \rho \rangle | - |\langle \theta_{n} - \theta_{k}, \rho \rangle |
\; > \; 
L^m M^{-{\gamma_2}} -   \frac{ \str \rho \str}{2r} L^{m} M^{-{\gamma_2}} 
\; \ge \;
M^{-{\gamma_2}} ,
\end{equation}
because $\str \rho \str \leq r$ and $L \geq 2$. 
It follows therefore that condition \eqref{eq: theta mobility condition} does not hold for $\rho$, hence
\begin{equation*}
\eta_{n+1}
\; = \;
\eta_{n} + \rho_{n} \quad \text{with} \quad \rho_{n} \in \mathcal \caP \cap \spann\{ \rho^1, \dots , \rho^m \}.   
\end{equation*}
and, since this holds for any $n$ with $k\leq n \leq k'$, we have 
\begin{equation} \label{eq: diff eta etaprime}
\eta' \; = \; \eta + v  \quad \text{for some} \quad
 v  \in  \mathrm{span}\{ \rho^1, \dots , \rho^m \}.
\end{equation}
Since the function $t \mapsto t^{q-1}$ is strictly increasing for $t \geq 0$, this implies that for each $x$
\begin{equation*}
(\eta'(x))^{q-1} \; = \; (\eta(x))^{q-1} + \nu(x) v(x)  \qquad \text{for some $\nu(x) \geq 0$}
\end{equation*}
which proves the claim because $\theta_k(x)=(\eta(x)/M)^{q-1}, \theta_{k'}(x)= (\eta'(x)/M)^{q-1} $
\end{proof}

\begin{lemma}\label{lem: leave mobility plane}
Let $\theta_k,\theta_{k'}$ be as in Lemma \ref{lem: stay in mobility plane} and assume that all conditions of Lemma \ref{lem: stay in mobility plane} hold. If additionally 
\begin{equation*}
|\theta_{k'} - \theta_k| \; \ge \; \frac{ L}{4r} L^{m-1} M^{-{\gamma_2}}, 
\end{equation*} 
then  $\theta_{k'} \in {\caZ} (m')$ for some $m' < m$.
\end{lemma}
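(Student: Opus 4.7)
My plan is to combine a single geometric input from Lemma~\ref{lem: acrobatics g} with a dimension-counting pigeonhole argument applied to the non-decreasing integer function $N(t) := \dim \spann \{\rho \in \caP : |\langle \theta_{k'}, \rho \rangle| \leq t\}$.

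\emph{Geometric lower bound.} By Lemma~\ref{lem: stay in mobility plane}, $\theta_{k'} - \theta_k \in G_\caR$ with $\caR = \{\rho^1,\ldots,\rho^m\}$. Applying Lemma~\ref{lem: acrobatics g} to the unit vector $(\theta_{k'}-\theta_k)/|\theta_{k'}-\theta_k|$ produces a unit $u^* \in \spann \caR$ with $|\langle \theta_{k'}-\theta_k, u^*\rangle| \geq c(r)|\theta_{k'}-\theta_k| \geq \tfrac{c(r)L}{4r} L^{m-1} M^{-\gamma_2}$. Since $u^* \in \spann \caR$ and $\theta_k \in \caZ(m)$ by virtue of $\caR$, estimate \eqref{eq: has to be bigger} gives $|\langle \theta_k, u^*\rangle| \leq C_1(r) L^{m-1} M^{-\gamma_2}$; the triangle inequality then yields, for $L$ large enough in terms of $r$,
\[
|\langle \theta_{k'}, u^*\rangle| \geq \frac{c(r)}{8r}\, L^m M^{-\gamma_2}.
\]

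\emph{Containment and upper bound on $N$.} For any $\rho \in \caP \setminus \spann \caR$, condition~2 of $\theta_k \in \caZ(m)$ gives $|\langle \theta_k, \rho\rangle| > L^m M^{-\gamma_2}$, while Cauchy--Schwarz combined with $|\rho|\le r$ and the hypothesis yields $|\langle \theta_{k'}-\theta_k, \rho\rangle| \leq \tfrac{1}{2} L^m M^{-\gamma_2}$. Hence $|\langle\theta_{k'},\rho\rangle|>\tfrac12 L^m M^{-\gamma_2}$, which for $L>2$ exceeds $L^{m-1}M^{-\gamma_2}$; thus $\caP \cap \{\rho:|\langle\theta_{k'},\rho\rangle|\leq L^{m-1}M^{-\gamma_2}\}\subset\spann\caR$, giving $N(L^{m-1}M^{-\gamma_2})\leq m$. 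Equality is ruled out: if it held, a maximal linearly independent set $\tilde\caR$ inside this intersection would satisfy $\spann\tilde\caR=\spann\caR\ni u^*$, and the same derivation as for \eqref{eq: has to be bigger} (which needs only condition~1 plus linear independence, and thus applies to $\theta_{k'}$ with $\tilde\caR$) would give $|\langle\theta_{k'},u^*\rangle|\leq C_1(r)L^{m-1}M^{-\gamma_2}$, contradicting the lower bound of the first step for $L$ large. Therefore $N(L^{m-1}M^{-\gamma_2})\leq m-1$.

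\emph{Pigeonhole descent.} A direct check of Definition~\ref{def: mobility regions} shows the equivalence $\theta\in\caZ(m') \Leftrightarrow N(L^{m'-1}M^{-\gamma_2})=N(L^{m'}M^{-\gamma_2})=m'$. Setting $f(i):=N(L^{i-1}M^{-\gamma_2})-i$, monotonicity of $N$ gives $f(i+1)-f(i)\geq -1$, $f(0)\geq 0$, and $f(m)\leq -1$ by the previous paragraph. Taking $i^* := \max\{i\le m : f(i)\geq 0\}$ gives $0\leq i^*\leq m-1$ with $f(i^*)\geq 0$ and $f(i^*+1)\leq -1$; combined with $N(L^{i^*-1}M^{-\gamma_2}) \leq N(L^{i^*}M^{-\gamma_2})$ this forces both $N$-values to equal $i^*$, so $\theta_{k'}\in\caZ(i^*)$ for some $i^*<m$.

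\emph{Main obstacle.} The delicate quantitative step is the first one: one has to ensure that the $L/r$ enhancement from the hypothesis $|\theta_{k'}-\theta_k|\geq \tfrac{L}{4r}L^{m-1}M^{-\gamma_2}$ survives the $r$-dependent loss $c(r)$ of Lemma~\ref{lem: acrobatics g} and still dominates the correction $C_1(r)L^{m-1}M^{-\gamma_2}$ from $\theta_k$, so that taking $L$ large (in terms of $r$ only) both establishes the geometric bound and closes the contradiction in the second step. Once this is pinned down, the descent step is purely combinatorial.
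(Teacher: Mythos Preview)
Your proof is correct and follows the same strategy as the paper's: both invoke Lemma~\ref{lem: acrobatics g} and the estimate \eqref{eq: has to be bigger} to produce a unit $u^* \in \spann\caR$ with $|\langle\theta_{k'}, u^*\rangle|$ too large for condition~1 of $\caZ(m)$ to be satisfiable with any basis spanning $\spann\caR$. Your pigeonhole descent via $N(t)$ is a clean way to make explicit the passage from $N(L^{m-1}M^{-\gamma_2}) \leq m-1$ to $\theta_{k'}\in\caZ(m')$ for some $m'<m$, a step the paper leaves implicit in its ``so to conclude''.
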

\begin{proof}
The inequalities in \eqref{useful for identifying constant} already show that, for $L$ large enough, 
\begin{equation}\label{a think to check rrr}
|\langle \theta_{k'}, \rho \rangle| \;  > \; L^{m-1} M^{-{\gamma_2}}
\quad \text{for any} \quad
\rho \in \mathcal P \setminus \mathrm{span} \{ \rho^{1}, \dots , \rho^{m} \}.
\end{equation}
So to conclude, by \eqref{eq: has to be bigger}, 
it is enough to find  $u \in \mathrm{span} \{ \rho^{1}, \dots , \rho^{m} \}, \str u \str=1$ 
for which \eqref{eq: has to be bigger} is violated (with $\theta=\theta_{k'}$).
Using \eqref{eq: has to be bigger} for $\theta_k$, we have
\begin{equation*}
|\langle \theta_{k'} , u \rangle| \; \ge \; |\langle \theta_{k'} - \theta_k, u \rangle | - |\langle \theta_{k}, u \rangle|
\; \ge \; 
|\langle \theta_{k'} - \theta_k, u \rangle | -C_1(r) L^{m-1} M^{-{\gamma_2}}
\end{equation*}
and hence,  by choosing $L$ large, it suffices  to choose $u$ such that, for some $c(r) > 0$,
\begin{equation}\label{finally to show for facebookgroup analysis}
|\langle \theta_{k'} - \theta_k, u \rangle | \; \ge \; c(r)\, | \theta_{k'} - \theta_k |,
\end{equation}
This is possible by Lemma \ref{lem: acrobatics g} because, by Lemma \ref{lem: stay in mobility plane},   $\theta_{k'}-\theta_{k}\in G_{\rho^1,\ldots, \rho^m}$. 
\end{proof}
We are now ready to conclude the 
\subsubsection{Proof of Proposition \ref{prop: reduced mobility}}

We pick a sequence $(\theta_n)_{1\leq n\leq l}$ defined from \eqref{condition 2 sequence eta}. By Lemma \ref{lem: condition on gammathree},   $\theta_n \not\in {\caZ} (0)$ unless, possibly, for $n=l$. 
To analyse this sequence, we inductively construct sequences $n_j,m_j$ with $j=1,\ldots, s$ for some $s<\infty$. 

Set $n_1=1$ and choose $0\leq m_1 \leq \str V\str$ to be such that $\theta_1=\theta_{n_1} \in {\caZ} (m_1)$.
If $n_1=l$, then we are done, i.e.\ $s=1$. Otherwise, $m_1 \neq 0$ and we continue.  Assume that $n_j,m_j$ and $\theta_{n_j}$ have already  been chosen for some $j\geq 1$ such that  $n_j < l$ and $\theta_{n_j} \in \caZ(m_j)$ with $0< m_j < \str V\str$.    Then,  let $n_{j+1}$ be the smallest number larger than $n_j$ such that at least one of the following occurs
\begin{enumerate}
\item[$a)$]    $n_{j+1}=l$,
\item[$b)$]   $ |\theta_{n_{j+1}}- \theta_{n_{j}}| \; \ge \; \frac{ L}{4r} L^{m_j-1} M^{-{\gamma_2}}$.
\end{enumerate}
 If $a)$ occurs then we stop the sequence, i.e.\ $s:=j+1$ and we set (a dummy) $m_{j+1}:=m_j$.  If $b)$ occurs but not $a)$, then we choose  $0< m_{j+1}<m_j$  such that $\theta_{n_{j+1}} \in {\caZ}(m_{j+1})$, which is possible by 
 Lemma \ref{lem: leave mobility plane}, and we continue.   In both cases it holds that 
 \beq
  |\theta_{n_{j+1}} - \theta_{n_j}|  \leq  \frac{L^{m_j}}{2r}M^{-\gamma_2}
 \eeq
 Indeed, if $b)$ did not occur, this is trivial and if $b)$ did occur then we derive it from the fact that $b)$ had not occured for $n= n_{j+1}-1$ and from the fact that
 $\str \theta_{n+1}-\theta_n \str \leq C(r) M^{-1}$ (from a simple explicit calculation). \\
 Since $l<\infty$ and $n_j$ is strictly increasing, this procedure ends at some step. 
Collecting all the bounds we get
\beq
|\theta_{l} - \theta_1| \leq   \sum_{j=1}^{s-1}  |\theta_{n_{j+1}} - \theta_{n_j}|   \leq   \sum_{m=1}^{\str V\str} \frac{L^{m-1}}{2r}M^{-\gamma_2} \leq   \frac{L^{\str V\str}}{2r}M^{-\gamma_2}.
\eeq
which ends the proof, as explained before Lemma \ref{lem: condition on gammathree}. 
\qed

\subsection{Locality of the partition $\caF$}

The aim of this section is to control  the set of moves $\caP^{(V)}_A(\eta)$ \emph{locally in $A \subset V$}, i.e.\ without knowing the configuration $\eta$ outside of $A$. 
As such, this is impossible because $\caP^{(V)}_A(\eta)$ is determined globally in $V$, as we in a striking way in Figure \ref{figure: motion flemish mountain}. However, we can still achieve this control if we impose a condition on the boundary of the set $A$, roughly saying that no element of $\caP^{(V)}(\eta)$ is supported there.  This is the content of Lemma \ref{lem: second locality}. Such locality statements become powerful when combined with an argument that tells us that it is easy to find regions $A$ such that this condition on the boundary holds, which we will do in Section \ref{sec: smallness moves}. 

To state a convenient boundary condition, we introduce a set $ \caP'_A(\eta)$ that is bigger than $ \caP^{(V)}_A(\eta)$ but easier to control. Let
\beq \label{def: capprime}
    \caP'_A(\eta) := \bigcup_{\eta' :  \str \theta(\eta)- \theta(\eta') \str \leq C_0(r) M^{-\gamma_2}   }     \{\rho:  \, s(\rho) \subset A \, \text{and}\,\,  \exists \eta'': \eta' \mathop{\sim}\limits_{\rho} \eta''\}
 \eeq
 with $C_0(r)$ as in Proposition \ref{prop: reduced mobility}.
Proposition \ref{prop: reduced mobility} immediately yields 
   \beq
   \caP^{(V)}_A(\mu) \subset \caP_A'(\eta), \qquad \text{    for any $\eta \in \mu, \mu  \in \caF^{(V)}$ and any $V \supset A$ with $\str V \str \leq (2r)^{2d}$}. \label{eq: inclusion cap in capprime}
    \eeq
Note that  $    \caP'_A(\eta)$ is defined locally, which is the reason that it will be indeed easy to control.  
  
  For subvolumes $A \subset V$, we write $\eta_A$ for the restriction of $\eta$ to $A$ and we introduce the boundary set
\beq
\partial_k A:= \{ x \in A, \distance(x, A^c) \leq k\}. 
\eeq
 Next, we consider two volumes $V,V'$ with $\str V \str, \str V'\str \leq (2r)^{2d}$. 
 \begin{lemma} \label{lem: second locality}
 Assume $A \subset V \cap V'$. Let $\eta \in \Om_V, \eta' \in \Om_{V'}$ be such that $\eta_A=\eta'_A$ and
\beq  \label{eq: empty boundary}
\caP_{ \partial_{2r} A}'(\eta)= \emptyset
\eeq
Then, 
\beq \label{eq: equality fbg}
\caP^{(V)}_{A}(\eta)= \caP^{(V')}_{A}(\eta'). 
\eeq
 \end{lemma}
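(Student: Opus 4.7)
The plan is to establish \eqref{eq: equality fbg} by lifting witness sequences between $\Om_V$ and $\Om_{V'}$. As a preliminary observation, the hypothesis $\caP'_{\partial_{2r}A}(\eta)=\emptyset$ is insensitive to replacing $\eta$ by $\eta'$: since $\partial_{2r}A\subset A$, any move $\rho$ with $s(\rho)\subset\partial_{2r}A$ has activity determined by $\tilde\eta|_{s(\rho)}$, and saturating the $\theta$-closeness constraint by choosing $\tilde\eta$ to equal $\eta$ (or $\eta'$) on $V\setminus s(\rho)$ reduces the defining condition of $\caP'_{\partial_{2r}A}(\cdot)$ to data on $s(\rho)\subset A$, on which $\eta$ and $\eta'$ agree. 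Hence $\caP'_{\partial_{2r}A}(\eta')=\emptyset$ as well, and by symmetry it suffices to prove $\caP^{(V)}_A(\eta)\subset\caP^{(V')}_A(\eta')$.

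Pick $\rho\in\caP^{(V)}_A(\eta)$, witnessed by a sequence $\eta=\eta_0\sim_{\rho_1}\eta_1\sim\cdots\sim_{\rho_l}\eta_l$ in $\mu^{(V)}(\eta)$ together with a final step $\eta_l\sim_\rho\eta_l\pm\rho$. The core of the argument is to reduce to the situation where every $\rho_i$ satisfies $s(\rho_i)\subset A$. Because $\hat\scrG^{(r)}$ is supported on connected sets of size at most $r$, every move has $\diam(s(\rho_i))<r$; a move straddling $A$ therefore has its $A$-part $\rho_i|_A$ supported in $\partial_r A\subset\partial_{2r}A$. Using Proposition~\ref{prop: reduced mobility} to keep all $\eta_j$ within the $\theta$-neighbourhood of $\eta$, together with the inclusion \eqref{eq: inclusion cap in capprime}, one argues that the activation of $\rho_i|_A$ in $\partial_{2r}A$ is forced, contradicting the hypothesis. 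Pure $A^c$-moves commute with every $A$-supported step, so they may be pushed to the tail of the sequence and discarded without altering the final $\sim_\rho$ relation.

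Once every $\rho_i$ has $s(\rho_i)\subset A$, define $\tilde\eta_i\in\Om_{V'}$ by $\tilde\eta_i|_A:=\eta_i|_A$ and $\tilde\eta_i|_{V'\setminus A}:=\eta'|_{V'\setminus A}$, so that $\tilde\eta_0=\eta'$. Each relation $\tilde\eta_i\sim_{\rho_i}\tilde\eta_{i+1}$ in $\Om_{V'}$ is immediate: $s(\rho_i)\subset A$ implies that both the validity of $\tilde\eta_i\pm\rho_i$ and the energy difference $|E_{V'}(\tilde\eta_i)-E_{V'}(\tilde\eta_{i+1})|$ depend only on the restriction to $A$, which coincides with that of the $V$-sequence. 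The final step $\tilde\eta_l\sim_\rho\tilde\eta_l\pm\rho$ follows identically, yielding $\rho\in\caP^{(V')}_A(\eta')$. The main obstacle is the elimination of straddling moves from the sequence: this requires combining the diameter bound on $s(\rho_i)$ with the propagation of the boundary condition along the component (via Proposition~\ref{prop: reduced mobility}) in such a way that the $A$-part of a straddling move is exhibited as an active move supported in $\partial_{2r}A$.
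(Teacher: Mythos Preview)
Your overall architecture matches the paper's: reduce the witnessing sequence for $\rho\in\caP^{(V)}_A(\eta)$ to one whose moves are supported in a set on which $\eta$ and $\eta'$ agree, then transplant. The difficulty, as you yourself flag, is the elimination of straddling moves, and the argument you sketch for it does not go through.

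The problem is the step ``the activation of $\rho_i|_A$ in $\partial_{2r}A$ is forced.'' Membership in $\caP'_{\partial_{2r}A}(\eta)$ requires some $\tilde\eta$ $\theta$-close to $\eta$ with $\tilde\eta\sim_{\rho_i|_A}\tilde\eta''$, i.e.\ the \emph{restricted} move must satisfy the resonance bound $|E(\tilde\eta)-E(\tilde\eta+\rho_i|_A)|\le M^{1+\gamma_1}$. But knowing $\eta_{i-1}\sim_{\rho_i}\eta_i$ only tells you that the \emph{full} energy increment is small; the $A$-part and $A^c$-part of the increment may each be of order $M^{q-1}$ and cancel. Restricting a resonant move does not yield a resonant move, so you cannot exhibit $\rho_i|_A$ as an element of $\caP'_{\partial_{2r}A}(\eta)$, and the contradiction with \eqref{eq: empty boundary} never materialises. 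Proposition~\ref{prop: reduced mobility} keeps $\eta_{i-1}$ $\theta$-close to $\eta$, but it says nothing about resonance of truncated moves.

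The paper avoids this obstacle by shrinking the target set rather than truncating moves. Set $\tilde A:=A\setminus\partial_r A$. With a diameter bound on $s(\rho_n)$, any $\rho_n$ in the sequence that touches $\tilde A$ must lie entirely in $A$; if it also touches $\partial_r A$ then in fact $s(\rho_n)\subset\partial_{2r}A$, which contradicts \eqref{eq: empty boundary} via \eqref{eq: inclusion cap in capprime} applied to the \emph{whole} move $\rho_n\in\caP^{(V)}(\eta)$. Thus every $\rho_n$ lies either in $\tilde A$ or in $\tilde A^c$, no restriction is ever taken, and the $\tilde A^c$-moves can be dropped exactly as you describe for pure $A^c$-moves. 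The transplant to $V'$ then uses $(\eta)_{\tilde A}=(\eta')_{\tilde A}$, which follows from $\eta_A=\eta'_A$.
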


\begin{proof}
From  \eqref{eq: empty boundary} and \eqref{eq: inclusion cap in capprime}, we get
\beq \caP^{(V)}_{ \partial_{2r} A}(\eta) = \emptyset, \qquad  \caP^{(V')}_{ \partial_{2r} A}(\eta') = \emptyset. \label{eq: again symmetric diff empty}      \eeq
Call $\tilde A:= A \setminus  \partial_{r} A $, then for any $\rho \in  \caP^{(V)}_{A}(\eta)$,
\beq
s(\rho) \subset \tilde A, \qquad \text{or}\quad s(\rho) \subset \tilde A^c \quad (\text{where $\tilde A^c:= (\tilde A)^c$}).
\eeq
 because of \eqref{eq: empty boundary} and $\str s(\rho)\str \leq r$. 
As already used below Proposition \ref{prop: reduced mobility}, the claim $\rho \in  \caP^{(V)}_{A}(\eta)$ is equivalent to the existence of a finite sequence $\rho_1,\ldots, \rho_l $ with $\rho_l=\rho$ such that 
\beq \label{eq: condition sequence}
\eta_{n} \mathop{\sim}\limits_{\rho_n} \eta_{n+1}, \qquad \text{ for $n=1,\ldots, l$, and with  $\eta_1=\eta$}
\eeq
(and hence $\eta_{n>1}$ determined by $\rho_n$ via $\eta_{n+1}=\eta_{n}+\rho_{n}$). We observe that we can in fact always find such a sequence with $s(\rho_n)  \subset \tilde A$ for $n=1,\ldots,l$.   Indeed, the validity of the relation
\beq \label{eq: validity}
\eta_{n} \mathop{\sim}\limits_{\rho_n} \eta_{n+1} 
\eeq
 depends on the values of $\eta_n$ in the region $s(\rho_n)$ only, therefore the presence of a $\rho_{n'}, n' <n$ in the sequence with $s(\rho_{n'}) \subset \tilde A^c$ (which influences the configuration $\eta_n$ in $\tilde A^c$ only)  does not influence the validity of \eqref{eq: validity}.  Hence, one can omit all $\rho_n$ with $s(\rho_{n}) \subset \tilde A^c$ and obtain a shorter sequence that still satisfies \eqref{eq: condition sequence}. 
Hence, we now assume that the sequence $\rho_1,\ldots, \rho_l$ was chosen such that  $s(\rho_{n}) \subset \tilde A$ . 
For such a sequence we check that 
\beq \label{eq: condition sequence prime}
\eta'_{n} \mathop{\sim}\limits_{\rho_n} \eta'_{n+1}, \qquad \text{ for $n=1,\ldots, l$ and with $ \eta_1=\eta'$.}
\eeq
(and hence $\eta'_{n>1}$ determined by $\rho_n$ via $\eta'_{n+1}=\eta'_{n}+\rho_{n}$).
Indeed, since the validity of $\eta'_{1} \sim_{\rho_1} \eta'_{2}$ depends on the configurations $\eta'_1$ in $s(\rho_1) \subset \tilde A$ only, and since $(\eta_1)_{\tilde A}=(\eta_1')_{\tilde A} $ and $\eta_{1} \sim_{\rho_1} \eta_{2}$, we see that $\eta'_{1} \sim_{\rho_1} \eta'_{2}$ holds and moreover  $(\eta_2)_{\tilde A}=(\eta'_2)_{\tilde A} $. 
We can iterate this argument to obtain \eqref{eq: condition sequence prime} together with $(\eta_n)_{\tilde A}=(\eta'_n)_{\tilde A} $ for $n=1,\ldots, l$. 
Hence we have proven in particular $\rho \in \caP^{(V')}_{A}(\eta')$, hence  $\caP^{(V')}_{A}(\eta') \subset  \caP^{(V')}_{A}(\eta')$. The opposite inclusion follows in the same way (there is a symmetry between primed and unprimed variables). 
\end{proof}

\subsection{Smallness of $\caP(\mu)$} \label{sec: smallness moves}
We already established that the components $\mu$ are small, but that in itself does not yet capture the intuition of `sparse resonant spots'. That intuition is however made precise now: We show in Lemma \ref{lem: bad is sparse} that  for \emph{most} components $\mu$, the union of sets $S(\rho), \rho \in \caP^{(V)}(\mu)$ is sparse in $V$. Such components are called `good'. 

 In addition to the sets $\caP,\caP'$, we define also 
\beq  \label{def: capdoubleprime}
    \caP''(\eta) := \bigcup_{\eta': \str \theta(\eta)- \theta(\eta') \str \leq 2C_0(r) M^{-\gamma_2}   }     \{\rho\, :\,   \exists \eta'': \eta' \mathop{\sim}\limits_{\rho} \eta''\}
 \eeq
 By Proposition \ref{prop: reduced mobility}, we have (here $\caP'=\caP'_V$)
 \beq \label{eq: inclusion prime doubleprime}
 \cup_{\eta' \in \mu(\eta)} \caP'(\eta') \subset \caP''(\eta)
 \eeq
   Since we need to count configurations $\eta$, it is useful to introduce the counting probability measure $\mathbb{P}^{(M)}$  on $\Om_{{V}}^{(M)}$. 
We abbreviate $\mathbb{P}=\mathbb{P}^{(M)}$.   Also, from now on, we do not keep track of specific exponents like $\gamma_1,\gamma_2,\ldots$ and we simply write $c$. 

\begin{lemma} \label{lem: locally sparse}
\beq \label{eq: locally sparse}
\bbP( \rho \in \caP''(\eta))  \leq  C(r) M^{-c}
\eeq
\end{lemma}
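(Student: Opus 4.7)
The goal is to bound the probability that a given $\rho$ belongs to $\caP''(\eta)$. My plan is to translate the defining condition of $\caP''(\eta)$ into a geometric constraint on $\theta(\eta)$, and then use the distribution of one coordinate to realize the constraint as a small-probability event.

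\textbf{Step 1: Reduce to a linear slab condition on $\theta(\eta)$.}  Suppose $\rho \in \caP''(\eta)$. By definition there exist $\eta',\eta''$ with $\str\theta(\eta)-\theta(\eta')\str \leq 2C_0(r)M^{-\gamma_2}$ and $\eta' \mathop{\sim}\limits_\rho \eta''$. The relation $\eta' \mathop{\sim}\limits_\rho \eta''$ in particular means $\str E_V(\eta')-E_V(\eta'')\str \leq M^{1+\gamma_1}$, so exactly the proof of Lemma~\ref{lem: condition on gammathree} applies (with $(\eta',\eta'')$ in place of $(\eta_n,\eta_{n+1})$) to yield
\[
\big|\langle \theta(\eta'), \rho\rangle\big| \;\leq\; M^{-\gamma_2}.
\]
Combining with Cauchy--Schwarz, $\str \rho\str \leq r$ and $\str\theta(\eta)-\theta(\eta')\str \leq 2C_0(r)M^{-\gamma_2}$, the triangle inequality gives
\[
\big|\langle \theta(\eta), \rho\rangle\big| \;\leq\; \big(1 + 2C_0(r)r\big)\,M^{-\gamma_2} \;=:\; C_1(r)\,M^{-\gamma_2}.
\]

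\textbf{Step 2: Count configurations via one free coordinate.}  Since $\rho \in \caP$ satisfies $\str \rho\str \geq 1$ and has integer entries, there is at least one $x_0\in V$ with $|\rho(x_0)|\geq 1$. Write
\[
\langle \theta(\eta), \rho\rangle \;=\; \rho(x_0)\,\theta(\eta(x_0)) \;+\; R(\eta_{V\setminus\{x_0\}}),
\]
where the remainder $R$ does not depend on $\eta(x_0)$. Under $\bbP$, the coordinates $\eta(x)$ are independent and uniform on $\{0,1,\dots,M\}$, so conditioning on all $\eta(x)$ with $x\neq x_0$ fixes $R$, and the condition of Step~1 becomes $\theta(\eta(x_0)) \in I$ for some interval $I$ of length at most $2C_1(r)M^{-\gamma_2}$.

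\textbf{Step 3: The one-dimensional counting estimate.}  The value $\theta_n := (n/M)^{q-1}$ with $n\in\{0,\dots,M\}$ lies in an interval $[a,a+\varepsilon]\subset[0,1]$ iff $n/M \in [a^{1/(q-1)},(a+\varepsilon)^{1/(q-1)}]$. Since $t\mapsto t^{1/(q-1)}$ is concave for $q>2$, this interval has length at most $\varepsilon^{1/(q-1)}$, so the number of admissible $n$ is at most $M\varepsilon^{1/(q-1)}+1$. Therefore
\[
\bbP\big(\theta(\eta(x_0))\in I\big) \;\leq\; \frac{M\,(2C_1(r)M^{-\gamma_2})^{1/(q-1)}+1}{M+1} \;\leq\; C(r)\,M^{-c},
\]
with $c := \min\!\big(\gamma_2/(q-1),\,1\big)>0$. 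Averaging over the conditioning yields the claim.

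\textbf{Expected obstacle.}  The argument is essentially a one-variable density estimate; the only subtle point is handling the worst case $a=0$ in Step~3, where the density of $\theta_n$ degenerates like $\varepsilon^{1/(q-1)}$ rather than $\varepsilon$. This costs a factor in the exponent but still gives a positive power of $M$ decay, which is all that is needed.
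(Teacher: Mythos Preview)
Your proof is correct and takes a genuinely different route from the paper. Both arguments begin with the same Step~1, reducing to the slab condition $|\langle\theta(\eta),\rho\rangle|\leq C(r)M^{-\gamma_2}$. From there the paper replaces the counting measure by Lebesgue measure on $[0,1]^{|s(\rho)|}$, changes variables from $\eta/M$ to $\theta$ (picking up the Jacobian $J(\theta)=\prod_{x\in s(\rho)}\theta(x)^{-(q-2)/(q-1)}$, which is singular at the coordinate hyperplanes), and then applies H\"older to separate the singular Jacobian from the slab indicator. You instead condition on all coordinates except one with $|\rho(x_0)|\geq 1$ and count integer points directly via the inverse map $t\mapsto t^{1/(q-1)}$, using concavity to control the worst case at the origin. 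Your argument is more elementary (no change of variables, no H\"older), and it actually yields the slightly better exponent $c=\gamma_2/(q-1)$ where the paper's H\"older step can only reach any $c<\gamma_2/(q-1)$; since the lemma only asks for \emph{some} $c>0$, this makes no difference downstream. The paper's approach is more symmetric in the coordinates of $\rho$, but that symmetry is not exploited anywhere.
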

\begin{proof}
If $\rho \in \caP''(\eta)$, then there are $\eta',\eta''$ as in \eqref{def: capdoubleprime}, i.e.\ such that $\str \theta(\eta)-\theta(\eta') \str  \leq  2C_0(r) M^{-\gamma_2}$ and, by  Lemma \ref{lem: condition on gammathree},  $\str \langle \theta(\eta'),\rho \rangle\str   \leq M^{-\gamma_2} $. Since $\str \rho \str \leq C(r)$, we then conclude   
\beq  \rho \in \caP''(\eta) \quad  \Rightarrow \quad   \str\langle \theta(\eta),\rho \rangle  \str  \leq C(r) M^{-\gamma_2}. \label{condition on capdoubleprime} \eeq 
Hence we are led to estimate 
\begin{align}
 \bbP( \str  \langle \theta(\eta),\rho \rangle  \str \leq C(r) M^{-\gamma_2} ) & = M^{-\str V \str} \sum_{\eta}  \chi(    \str  \langle \theta(\eta),\rho \rangle  \str \leq C(r) M^{-\gamma_2}   ) \\[1mm]
  &\leq     \int _{[0,1]^{\str V\str}}\d (\eta/M)       \chi(    \str \langle \theta(\eta),\rho \rangle  \str  \leq C(r) M^{-\gamma_2}   )  \label{eq: restricted integral theta}
\end{align}
To get the inequality, we replaced the sum by an integral at the cost of adjusting $C(r)$.  This is easily justified by observing that
\beq
\str \theta(\eta+t) - \theta(\eta) \str \leq C(r)M^{-1}, \qquad \text{for $t \in [0,1]^{\str V\str}$.}
\eeq
 Obviously, we can restrict  \eqref{eq: restricted integral theta} to the subvolume $s(\rho)\subset V$ without change. By a change of variables, \eqref{eq: restricted integral theta} then equals
\beq
 \int _{[0,1]^{\str s(\rho) \str}}\d \theta   \,  J(\theta)      \chi(   \str  \langle \theta,\rho \rangle  \str  \leq C(r) M^{-\gamma_2}   ), \qquad \text{with $J(\theta)=  (\prod_{x\in s(\rho)} \theta(x))^{-\tfrac{q-2}{q-1}}$.}  
 \eeq
which is bounded by 
$C(r) M^{-c}$ by a  H{\"o}lder inequality.

\end{proof}
 The following definition of 'good' components $\mu$ depends on a constant $c_1$ that will be chosen to be small enough in Lemma \ref{lem: frame} below.

\begin{definition}[Good partitioning sets] \label{def: good} \emph{
A $\mu \in \caF^{(V)}$ is 'good' if the collection of subsets of $V$ 
\beq \label{eq: collection to be covered}
\{ S(\rho) \cap V: \,  \rho \in \caP_V'(\eta)\, \text{for some}\,\,  \eta \in \mu   \}
   \eeq
can be covered by $c_1 r$ sets such that each of those sets has diameter $r$. 
 We let  $\caF^{(V)}_{\mathrm{g}}\subset \caF^{(V)}$ be the collection of good $\mu$. }
\end{definition} 
We now deduce that configurations $\eta$ such that $\mu(\eta)$ is not `good',  have small probability.  
\begin{lemma} \label{lem: bad is sparse}
\beq \label{eq: bad is sparse 2}
\bbP( \mu \not \in \caF^{(V)}_{\mathrm{g}})  \leq  C(r) M^{-cr}
\eeq
\end{lemma}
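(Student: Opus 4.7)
The plan is to reduce badness of $\mu(\eta)$ to the existence of many moves in $\caP''(\eta)$ whose supports are \emph{pairwise disjoint}, and then to exploit the product structure of $\bbP$ through the locality observation \eqref{condition on capdoubleprime} from Lemma \ref{lem: locally sparse}.

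First, I would extract many well-separated moves. Using \eqref{eq: inclusion prime doubleprime}, if $\mu(\eta)\notin \caF^{(V)}_{\mathrm g}$, then the a priori larger collection $\{S(\rho)\cap V : \rho\in \caP''(\eta)\}$ also fails to be covered by $c_1 r$ sets of diameter $r$. By a standard greedy packing-covering dichotomy, the union $\bigcup_{\rho\in\caP''(\eta)} S(\rho)\cap V$ then contains at least $c_1 r$ lattice points that are pairwise at distance $>r$. Because $|V|\le (2r)^{2d}$ lives in $\bbZ^d$, a further thinning by the usual factor $(4+\varepsilon)^d$ produces $k:=c_2 r$ points that are pairwise at distance $>4r$. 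Each such point $x_i$ lies in $S(\rho^i)\cap V$ for some $\rho^i\in \caP''(\eta)$, and since $\diam S(\rho^i)\le 2r$ we have $s(\rho^i)\subset B(x_i,2r)$; consequently the supports $s(\rho^1),\dots,s(\rho^k)$ are pairwise disjoint. The constant $c_1$ in Definition \ref{def: good} is chosen small enough that this extraction can be absorbed later.

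Second, I would use locality plus the product structure of the counting measure $\bbP$ to decouple the events $\{\rho^i\in\caP''(\eta)\}$. As recorded in \eqref{condition on capdoubleprime}, the implication
\[
\rho\in\caP''(\eta)\;\Longrightarrow\;|\langle \theta(\eta),\rho\rangle|\le C(r)M^{-\gamma_2}
\]
reduces $\rho\in\caP''(\eta)$ to a condition that depends on $\eta$ only through its restriction to $s(\rho)$. For any fixed tuple $(\rho^1,\dots,\rho^k)$ of moves with pairwise disjoint supports, the $k$ events are therefore independent under the product measure $\bbP$, so Lemma \ref{lem: locally sparse} yields
\[
\bbP\bigl(\rho^i\in\caP''(\eta)\ \text{for all}\ i=1,\dots,k\bigr)\;\le\;\prod_{i=1}^k\bbP\bigl(\rho^i\in\caP''(\eta)\bigr)\;\le\;\bigl(C(r)M^{-c}\bigr)^k.
\]

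Third, I would union-bound over the choice of tuple. Since every $\rho\in\caP^{(V)}$ has $s(\rho)\subset V$ with $|V|\le(2r)^{2d}$ and entries bounded by $r$, the cardinality $|\caP^{(V)}|$ is bounded by a constant $C(r)$ (independent of $M$), so the number of admissible ordered $k$-tuples is at most $C(r)^k$. Combining with the previous step,
\[
\bbP\bigl(\mu\notin\caF^{(V)}_{\mathrm g}\bigr)\;\le\; C(r)^k\bigl(C(r)M^{-c}\bigr)^k\;=\;C(r)^{c_3 r}\,M^{-c_4 r},
\]
and, taking $M$ large enough depending on $r$, the factor $C(r)^{c_3 r}$ is absorbed into a smaller power of $M^{-r}$, giving \eqref{eq: bad is sparse 2}.

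The only non-routine point is the first step: turning the failure-to-cover condition of Definition \ref{def: good} into a quantitative packing statement that produces $\Omega(r)$ pairwise-disjoint supports (and not merely $\Omega(r)$ pairwise-distant points). This is a $d$-dimensional greedy argument whose losses dictate how small $c_1$ must be chosen in Definition \ref{def: good}; once this geometric extraction is in place, independence and the union bound are mechanical.
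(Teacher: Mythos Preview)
Your argument is correct and matches the paper's proof: from the failure-to-cover condition you extract $\sim r$ moves in $\caP''(\eta)$ with pairwise disjoint supports, factorize the probability via the locality of $\caP''$ and the product structure of $\bbP$, apply Lemma~\ref{lem: locally sparse} to each factor, and union-bound over the $C(r)$ possible tuples. The only cosmetic point is that \eqref{condition on capdoubleprime} is an implication, so it yields containment in $s(\rho)$-local events (already sufficient for the bound) rather than literal independence of the events $\{\rho^i\in\caP''(\eta)\}$; the paper instead uses directly that $\{\rho\in\caP''(\eta)\}$ is $s(\rho)$-measurable (one may always take the witness $\eta'$ in \eqref{def: capdoubleprime} to agree with $\eta$ off $s(\rho)$), which gives the equality in \eqref{eq: bound on collection}.
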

\begin{proof}
If the collection \eqref{eq: collection to be covered} cannot be covered by $n$ sets with diameter $r$, then, for any  $\eta \in \mu$, there are at least $m=cn$ moves
$\rho^{1}, \ldots, \rho^{m} \in \caP''_V(\eta)$ with mutually disjoint supports, i.e.\ $s(\rho^{i}) \cap s(\rho^j) =\emptyset$ for any $\rho^{i} \neq \rho^j$. To pass from $\caP'$ to $\caP''$, we used \eqref{eq: inclusion prime doubleprime}.  We will now prove that 
\beq \label{eq: bound on collection}
 \bbP\big(\rho^{1}, \ldots, \rho^{m} \in \caP''_V(\eta)  \big) =  \prod_{i=1}^m  \mathbb{P} ( \rho^{i} \in \caP''(\eta) ) \leq   C(r) M^{-c m }
\eeq
First, because of the locality of the definition of $    \caP_V''(\eta) $ and the fact that $\bbP$ is a product measure, the 
events
\beq
\rho^{i} \in  \caP_V''(\eta), \qquad    \rho^{j} \in  \caP_V''(\eta), \qquad  \text{for}\,\,  s(\rho^{i}) \cap s(\rho^j) =\emptyset
\eeq
are $\bbP$-independent.  This is the equality in \eqref{eq: bound on collection}.  The inequality is Lemma \ref{lem: locally sparse}. 
The claim \eqref{eq: bad is sparse 2} now follows from \eqref{eq: bound on collection}  by  choosing $n$ (hence $m$) proportional to $r$ and noting that the number of collections of $m$ distinct $\rho$'s is bounded by $C(r)$.
\end{proof}

%
%


\section{Analysis of the resonant Hamiltonian: Left-Right splitting} \label{sec: left right splitting}
As announced, we split the Hamiltonian $Z$ into a left and a right part, $Z_\links$ and $Z_\rechts$, such that these parts have a sparse commutator.  The main result is in Proposition \ref{proposition: commutator left right}.
\subsection{Preliminary definitions}

Recall  the (restricted) hyperplane $\bbH_a = \{ x: x_1 =a\}$ and the strip 
\beq
\bbS= \bbS_{a,r^2}=  \{ x \in \bigvolume, \str x_1 -a \str < r^2 \}. 
\eeq
 For convenience we gather $y=(x_2,x_3,\ldots, x_d) \in \bbZ^{d-1}$. Sums over $y$ are understood to range over the set $\{y: (a,y) \in \bigvolume\}$,  and we define the regions ($\tilde B_y$ will be used only in Section \ref{sec: proofs})
\begin{align}
B_y & :=\{x :\,  \str x- (a,y)\str \leq 2r^2 \} \cap \bbS_{a,r^2}. \nonumber\\[2mm]
\tilde B_y &:= \{ x :  \str x - (a,y) \str\leq (2r)^{2d} \} \cap \bbS_{a,r^2} \label{def: balls}
\end{align}
We also abbreviate $\caP^{(y)}= \caP^{(B_y)}$  and  $\caF^{(y)} = \caF^{(B_y)}$.

First, we define a procedure that assigns to any $\mu \in \caF^{(y)} $ a decomposition of $B_{y}$ into a left and right region $\mathrm{L}(\mu)$ and $\mathrm{R}(\mu)$. This is the `slaloming' between resonant spots that was discussed in Section \ref{sec: sketch of the proof}. 
 \begin{definition}[Left-right decomposition] \label{def: left right} \emph{
 Fix $y$ and $\mu \in \caF^{(y)}$. 
Let $K^{(y)}_1,K^{(y)}_2, \ldots$ be the connected components of the collection 
\beq \{(S(\rho) \cap B_y) :  \rho \in \caP^{(y)}(\mu) \}, \eeq i.e.\
\beq
\cup_j K^{(y)}_j = \cup_{\rho \in \caP^{(y)}(\mu) } (S(\rho) \cap B_y),\qquad \textrm{and}\quad  j \neq j' \Rightarrow  \quad K^{(y)}_j \cap K^{(y)}_{j'} = \emptyset
\eeq
Then, the left, resp.\ right region is 
\beq
\links(\mu) := \{x \in B_y: x_1 \leq a \} \bigcup_{j: K_j  \cap  \{x_1 \leq a\}  \neq \emptyset } K^{(y)}_j, \qquad   \rechts(\mu):=B_{y} \setminus \links(\mu)
\eeq }
\end{definition}
Note that, if $\caP^{(y)}(\mu)=\emptyset$, then $\links(\mu) = \{x \in B_y: x_1 \leq a \}$, i.e.\ the left-right splitting is the most obvious one.  The intuition is that for `good' $\mu$, the $\links(\mu)$ deviates from $\{x \in B_y: x_1 \leq a \}$ only in a few places, and in particular, $\links(\mu)$ can be determined locally. This is established in the following lemma. The reader might find it helpful to consult Figure \ref{fig: leftright1}, even though the latter is not meant to illustrate the full generality of  Lemma \ref{lem: frame}
\begin{lemma} \label{lem: frame}   Let $r$ be large enough and the constant $c_1$ in Definition \ref{def: good}  small enough.
Fix $y,y'$.  Let the triple  $\{F_0,F_1,F_2\} $  form a partition of $B_y \cup B_{y'}$  such that 
\beq
\distance(F_1,F_2) \geq r^2/4, \qquad \text{and}\qquad
  B_y \setminus F_2 =   B_{y'} \setminus F_2 =:F_{01}
  \eeq
  and hence $F_{01}=F_0 \cup F_1 \subset (B_y \cap B_{y'})$.
Choose $\mu  \in \caF^{(y)}, \mu'  \in \caF^{(y')}  $  such that at least one of them is good, i.e.\ 
  $\mu \in \caF^{(y)}_{\mathrm{g}} $ or $  \mu' \in \caF^{(y')}_{\mathrm{g}}$ and such that
 \beq \label{eq: overlap fbgroups}
 \eta_{F_{01}}= \eta'_{F_{01}} \quad \text{for some}\,\,  \eta \in \mu,  \eta' \in \mu'
 \eeq
  Then,
  \beq
  \caP^{(y)}_{F_1}(\mu)=     \caP^{(y')}_{ F_1}(\mu')
  \eeq
  and  
\beq \label{eq: equality lefts}
 \links(\mu) \cap F_1 =  \links(\mu') \cap F_1 
\eeq 
\end{lemma}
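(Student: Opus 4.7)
By the symmetry of the roles of $\mu$ and $\mu'$, assume without loss of generality that $\mu$ is good, and fix $\eta \in \mu$, $\eta' \in \mu'$ with $\eta_{F_{01}} = \eta'_{F_{01}}$ as in \eqref{eq: overlap fbgroups}. The strategy is to construct a ``buffer'' set $A$ with $F_1 \subset A \subset F_{01} \subset B_y \cap B_{y'}$ whose thick boundary $\partial_{2r} A$ is disjoint from every $S(\rho) \cap B_y$ for $\rho \in \caP'_{B_y}(\eta)$, and then apply Lemma \ref{lem: second locality} to $A$. The $2r$-insulation provided by this construction will force the left--right decompositions of $\mu$ and $\mu'$ to coincide on $F_1$.

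\textbf{Constructing $A$.} For $t \in (4r, r^2/4 - 2r)$, let $A_t := \{x \in F_{01} : \distance(x,F_1) \leq t\}$. Since $\distance(F_1,F_2) \geq r^2/4$, $A_t$ stays at distance $>2r$ from $F_2$, so the $F_2$-side of $\partial_{2r}A_t$ is empty, and the rest of $\partial_{2r}A_t$ lies in the annular shell $S_t := \{x \in F_{01}: t-2r < \distance(x,F_1) \leq t\}$. Taking candidate values $t_k := 4r + 3rk$ produces at least $r/24$ pairwise disjoint shells $S_{t_k}$ inside the allowed range. By goodness of $\mu$, the collection $\{S(\rho) \cap B_y : \rho \in \caP'_{B_y}(\eta)\}$ is covered by $c_1 r$ subsets of diameter $r$; each such covering piece has extent at most $r$ in the coordinate $\distance(\cdot,F_1)$, so it meets at most two of the disjoint shells. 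For $c_1$ sufficiently small and $r$ sufficiently large, strictly fewer shells are hit than the $r/24$ available, so some $t^*$ gives $A := A_{t^*}$ with $S(\rho) \cap \partial_{2r}A = \emptyset$ for every $\rho \in \caP'_{B_y}(\eta)$; in particular $\caP'_{\partial_{2r}A}(\eta) = \emptyset$, and $\distance(F_1, A^c) > 4r$.

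\textbf{Deductions and main obstacle.} Since $A \subset B_y \cap B_{y'}$ and $\eta_A = \eta'_A$, Lemma \ref{lem: second locality} gives $\caP^{(B_y)}_A(\eta) = \caP^{(B_{y'})}_A(\eta')$; restricting to $s(\rho) \subset F_1$ proves the first equality. For the second, split $\caP^{(y)}(\mu)$ into ``inner'' moves with $s(\rho) \subset A$ and ``outer'' ones. Inner moves satisfy $S(\rho) \subset A \setminus \partial_{2r}A$ by the construction, so $S(\rho) \cap B_y = S(\rho) \cap B_{y'}$. For an outer move there is $x \in s(\rho) \cap A^c$ with $\distance(x, F_1) > 4r$, whence $\distance(S(\rho), F_1) > 2r$; combined with $\diam S(\rho) \leq 2r$ and the $2r$-gap between $A \setminus \partial_{2r}A$ and $A^c$, every outer $S(\rho)$ is disjoint from every inner $S(\rho')$ and from $F_1$. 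Hence any connected component $K_j^{(y)}$ meeting $F_1$ is composed entirely of inner moves and lies in $A \setminus \partial_{2r}A$; since $\caP^{(y)}_A(\mu) = \caP^{(y')}_A(\mu')$, it occurs identically as a component $K_{j'}^{(y')}$, and the test $K_j \cap \{x_1 \leq a\} \neq \emptyset$ is then performed on the same subset of $\bbZ^d$. Combined with the trivial agreement of $\{x \in F_1: x_1 \leq a\}$ for both partitions, this gives \eqref{eq: equality lefts}. The main obstacle is the shell-counting step: one must verify that the annular shells used to slide $A$ genuinely lie inside $F_{01}$, which requires handling the geometry of $B_y \cap B_{y'}$ near its outer boundary in dimensions $d \geq 2$ (and justifying that the combinatorics of ``each diameter-$r$ piece hits at most two shells'' survives in this general setting).
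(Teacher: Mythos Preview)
Your proof is correct and follows essentially the same approach as the paper: both construct an intermediate set (your $A=A_{t^*}$, the paper's $\tilde F_{01}$) with $F_1$ well inside it and a $2r$-boundary clean of active moves, then invoke Lemma~\ref{lem: second locality} and check that the connected components $K_j$ touching $F_1$ are built entirely from moves supported in that set. Your shell-sliding argument is precisely the content of what the paper summarizes as ``straightforward geometric considerations, using that $\mu$ is good, that $\distance(F_1,F_2)\ge r^2/4$ and that $r$ is large enough and $c_1$ small enough.''

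The obstacle you flag is overstated. The shells $S_t$ lie in $F_{01}$ by definition, and ``each diameter-$r$ covering piece hits at most two shells'' is just the $1$-Lipschitz bound for $x\mapsto\distance(x,F_1)$, which requires no special care in $d\ge 2$. The only genuine subtlety --- whether $\partial_{2r}A$ acquires extra pieces near the outer boundary of $B_y$ or of the strip --- is equally present in the paper's proof (its claim $\partial_{2r}\tilde F_{01}\subset\tilde F_0$ rests only on $\distance(\tilde F_1,\tilde F_2)>2r$) and is resolved by reading the complement in the definition of $\partial_k A$ relative to the ambient volume $V=B_y$, which is the interpretation consistent with the way Lemma~\ref{lem: second locality} is actually applied. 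One point worth making explicit in your write-up: your separation of outer from inner moves for the possibly non-good $\mu'$ works not via the shell construction (which used $\caP'_{B_y}(\eta)$) but via the simpler observation that any $\rho$ with $s(\rho)\not\subset A$ has $S(\rho)\cap(A\setminus\partial_{2r}A)=\emptyset$ by the diameter bound alone --- this is what the paper handles in the paragraph beginning ``This need not be true for the components $K_i^{(y')}$.''
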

\begin{proof} 
For concreteness, let us assume that $\mu$ is good. 
From  straightforward geometric considerations, using that $\mu$ is good, that $\distance(F_1,F_2) \geq r^2/4$ and that $r$ is chosen large enough and $c_1$ small enough, we can construct a partition $(\tilde F_0,\tilde F_1,\tilde F_2) $ of $B_y \cup B_{y'}$ such that 
\begin{enumerate}
\item  \beq  \tilde F_0 \subset F_0,   \qquad   F_1 \subset \tilde F_1, \qquad   F_2 \subset \tilde F_2 \eeq
\item   \beq  \tilde F_0 \cap S(\rho)=\emptyset \qquad \text{for any}\,\,  \rho \in \mathop{\cup}\limits_{\eta \in \mu} \caP'_{B_y}(\eta).  \label{eq: no sets in tilde f zero}   \eeq
\item \beq
\distance(\tilde F_1, \tilde F_2) > 2r. 
\eeq 
\end{enumerate}
We put 
\beq
 \tilde F_{01} :=     B_y \setminus \tilde F_2 = B_{y'} \setminus \tilde F_2
\eeq
(and hence $\tilde F_{01}=\tilde F_0 \cup \tilde F_1$).  
From $1)$ we get $ \tilde F_{01} \subset  F_{01}$ and from $3)$, we get  $\partial_{2r} \tilde F_{01}\subset \tilde F_0 =\tilde F_0 \cap B_{y''} $ with $y''=y,y'$. Therefore,  $2)$ implies   
\beq
\caP'_{\partial_{2r} \tilde F_{01}}(\eta)=\emptyset \quad \text{for any}\,\, \eta \in \mu \eeq
Take now $\eta,\eta'$ as in \eqref{eq: overlap fbgroups}, i.e.\ in particular $\eta_{\tilde F_{01}}= \eta'_{\tilde F_{01}}$. For these $\eta,\eta'$,  we can apply Lemma \ref{lem: second locality} with $V=B_y, V'=B_{y'}, A=  \tilde F_{01}$  to conclude that 
\beq \label{eq: equality in tilde f zero one}
  \caP^{(y)}_{\tilde F_{01}}(\mu)=   \caP^{(y')}_{\tilde F_{01}}(\mu')
\eeq
We now show \eqref{eq: equality lefts}.   Let us consider the connected components $K^{(y)}_j, K^{(y')}_i $ from Definition \eqref{def: left right} for $y,y'$, respectively.   From \eqref{eq: no sets in tilde f zero}  we get (by \eqref{eq: inclusion cap in capprime}) that 
\beq \label{eq: nothing in border}  S(\rho) \cap \tilde F_0 =\emptyset, \qquad \text{for any $\rho \in \caP^{(y)}(\mu)$}. \eeq 
  Hence, $\tilde F_0$ does not intersect any of the components $K_j^{(y)}$ and therefore any one of them is either contained in $\tilde F_1$ or in $\tilde F_2$.
This need not be true for the components $K_i^{(y')}$, but nevertheless, we can still deduce  that no $K_i^{(y')}$ can intersect both $\tilde F_1$ and $\tilde F_2$.
Indeed, if a given $K_i^{(y')}$ would intersect  $\tilde F_1$ and $\tilde F_2$, than, 
since $\diam(S(\rho)) \leq 2r$ for any $\rho$ and $\distance(\tilde F_1,\tilde F_2) >2r$, we conclude that there must be a $\rho \in \caP^{y'}({\mu'})$ with $S(\rho) \in \tilde F_{01}$. 
However, this is in contradicition with 
(\ref{eq: equality in tilde f zero one}-\ref{eq: nothing in border}).  Now we conclude by \eqref{eq: equality in tilde f zero one} that the connected components $K^{(y)}_j$  contained in $\tilde F_{01}$ coincide with the connected components $K^{(y')}_i$  contained in $\tilde F_{01}$
(and moreover, these components are in fact contained in $\tilde F_1$).  This implies \eqref{eq: equality lefts}. 
%
%

\end{proof}

\subsection{Definition of Left-right splitting}

For notational reasons, we associate, in an arbitrary way,  to any subset $A \subset \bbS$ with $\str A \str \leq r$ a unique coordinate $y=y(A) \in \bbZ^{d-1}$ such that
\beq
y(A) \in \proj_{2,\ldots, d-1} A
\eeq
where we used the coordinate projections: if $x=(x_I,x_{I^c})$ with $I$ a subset of $\{1,\ldots, d\}$, then $\proj_I x= x_I $.

The definition of the left-right splitting is
\begin{align}
Z_{\mathrm{L},y}  & : =   \sum_{\mu \in \caF^{(y)}}  P_\mu  \left(     
\sum_{\scriptsize{\begin{array}{c}  A \subset \bbS: y(A)=y \\      A \subset \links(\mu)             \end{array} }  }
 \scrD(A) +   
\sum_{\scriptsize{\begin{array}{c}  \rho \in \caP^{(\bbS)}: y( {s(\rho)}) =y \\      {s(\rho)} \subset \links(\mu)                \end{array} }  }   W_{\rho}     \right)\\
Z_{\mathrm{R},y}  & : =   \sum_{\mu \in \caF^{(y)}}   P_\mu  \left( \sum_{\scriptsize{\begin{array}{c}  A \subset \bbS: y(A)=y \\      A \not\subset \links(\mu)  \end{array} }}    \scrD(A)   +   \sum_{\scriptsize{\begin{array}{c}  \rho \in \caP^{(\bbS)}: y( {s(\rho)}) =y \\      {s(\rho)} \not\subset \links(\mu)                \end{array} }  }     W_{\rho}     \right)   \label{def: z links and z rechts}
  \end{align}
 It immediately follows that
\beq
 Z_{\mathrm{L},y}+Z_{\mathrm{R},y} = \sum_{A: y(A)=y}   \scrD(A) +    \sum_{\rho \in \caP^{(y)}: y( {s(\rho)}) =y  }      W_{\rho}  
\eeq
Since the region $B_y$ is 'broader' than the strip $\bbS$, any $\rho \in \caP^{(\bbS)}$ satisfies $ \rho \in  \caP^{(y( {s(\rho)}))}$ and hence $\rho$ appears exactly once in the sum on the right hand side. 
Therefore,  we have indeed defined a splitting of $Z$: 
\beq \label{def: zway}
Z= \sum_{y \in \bbZ^{d-1}: (a,y) \in \bigvolume } Z_y, \qquad  Z_y:=    Z_{\mathrm{L},y}+Z_{\mathrm{R},y} 
\eeq
We will now describe the good properties of this splitting.  To translate the sparseness of a collection of configurations $\eta$ into a bound on operators, we introduce the normalized trace of operators $O$ that are restricted to low-energy, i.e.\ $\caP_{\leq M}(O)=O$, by
\beq
\tr^{(M)}_A (O ):=  \frac{\Tr_{A} (O)}{\Tr_A (\lone_{\leq M})},\qquad \text{whenever}\,\,  s(O) \subset A
\eeq
where we used $\lone_{\leq M} = \otimes_{x \in A} \chi(N_x \leq M)$.
We  note that the right-hand side does not depend on the set $A$, provided that $s(O) \subset A$ and so we can write $\tr^{(M)}(O)$ without ambiguity.   For example, 
we will use the projections onto good configurations:
\beq \label{def: projections on bad}
P_{\caF^{(y)}_{\mathrm{g}}} =\sum_{\mu \in \caF^{(y)}} P_\mu, \qquad \bar P_{\caF^{(y)}_{\mathrm{g}}}= \lone -P_{\caF^{(y)}_{\mathrm{g}}}
\eeq
acting on $\caH_{B_{y}}$, then, with the normalized trace, we can restate \eqref{eq: bad is sparse 2} as 
\beq  \label{eq: bad is sparse 3} 
\tr^{(M)}(\bar P_{\caF^{(y)}_{\mathrm{g}}}) = \bbP(\mu(\eta) \not \in \caF^{(y)}  )    \leq   C(r) M^{-cr}
\eeq 
We will also need  the associated Hilbert-schmidt norm
\beq \label{def: hs norm trace}
\norm O\norm^2_{\tr^{(M)}}:= \tr^{(M)}(O^*O)
\eeq
and the following bound (from straightforward manipulations using cyclicity of the trace): for an orthogonal projection $P$,
\beq \label{eq: noncomm hoelder}
\norm O_1PO_2 \norm_{\tr^{(M)}}  \leq (\tr^{(M)} (P))^{1/2}   \norm O_1\norm \norm O_2\norm.
\eeq
\begin{figure}[h!] 
\vspace{0.5cm}
\begin{center}
\def\svgwidth{8cm}
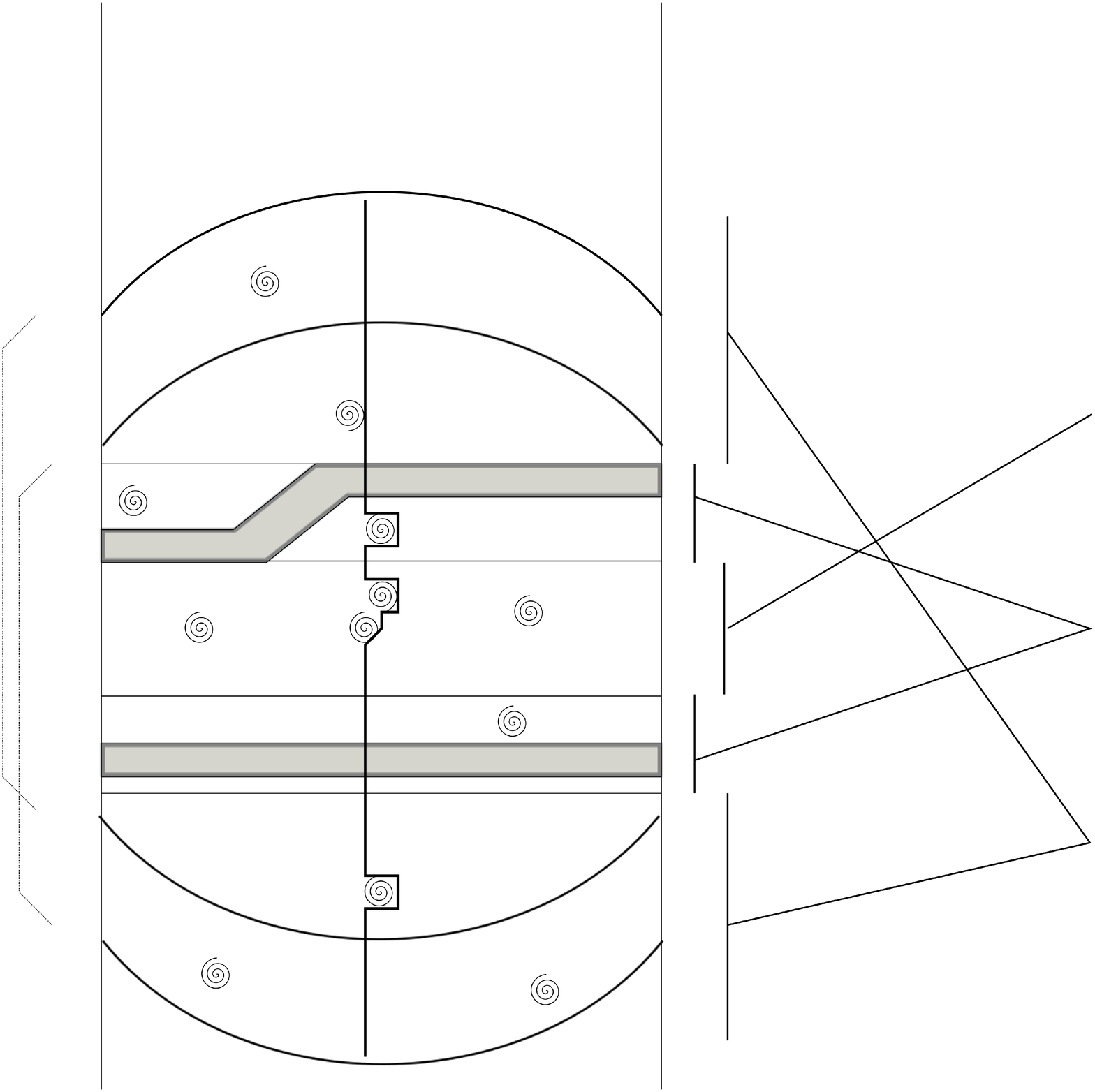
\caption{Spatial structure of the sets $F_0, F_1,F_2$ as used in Case C of the proof of Proposition \ref{proposition: commutator left right}. 
The shaded area is the set $\tilde F_0= \tilde F_0 \cap B_y = \tilde F_0 \cap B_{y'}$. The small spirals indicate the sets $S(\rho)$ with $\rho \in \caP'(\eta)$ or $\rho \in \caP'(\eta')$.}
\label{fig: leftright1} 
\end{center}
\end{figure}

\begin{proposition}\label{proposition: commutator left right}
Let $ Z_{\links,y}, Z_{\rechts,y}$ be as described above. If $r$ is chosen large enough, the following properties hold: 
\begin{enumerate}
\item  The supports satisfy $s(Z_{\links, y}),  s(Z_{\rechts, y}) \subset B_y$ and the  operators are bounded as
\beq
\norm Z_{\links, y} \norm  \leq C(r) M^C, \qquad    \norm Z_{\rechts, y} \norm \leq  C(r) M^C
\eeq
\item The 'left' and 'right' operators commute up to a sparse term:
\beq
\norm [Z_{\links,y}, Z_{\rechts,y'}]  \norm_{\tr^{(M)}}  \leq   C(r) M^{C-cr}  
\eeq
\item The 'left' part does not go too far to the right:  for any operator $O_\rechts$ such that $\proj_1(s(O_\rechts)) >a+ r^2/2$,
\begin{align}
 & \norm [Z_{\links,y}, O_\rechts ] \norm_{\tr^{(M)}}  \leq 
C(r) M^{C-c r}  \norm O_\rechts \norm    \label{eq: z links with o rechts}
\end{align}
 Analogously, for any $O_\links$ such that $\proj_1(s(O_\links)) < a- r^2/2$,
\begin{align}
 & \norm [Z_{\rechts,y}, O_\links ] \norm_{\tr^{(M)}}  \leq 
C(r) M^{C-c r}  \norm O_\links \norm      \label{eq: z rechts with o links}
\end{align}
\end{enumerate}
\end{proposition}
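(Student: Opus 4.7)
}

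Part 1 is combinatorial. The definitions of $Z_{\links,y}$ and $Z_{\rechts,y}$ involve at most $C(r)$ terms of the form $\scrD(A)$ or $W_\rho$ (the constraint $y(A)=y$ or $y(s(\rho))=y$ fixes a coordinate, and $\str A \str, \str s(\rho) \str \leq r$), each of operator norm at most $C(r) M^C$ by the third bullet in the properties of $W_\rho$ and the analogous bound for $\scrD^{(r)}$. The family $(P_\mu)_{\mu \in \caF^{(y)}}$ is mutually orthogonal, so summing over $\mu$ costs no factor. Each summand has spatial support in $S(\rho) \cup A$, which lies in a $2r$-enlargement of $B_y$, absorbed into $B_y$ since $r^2 \gg r$.

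Parts 2 and 3 share the same strategy: isolate the contribution from the \emph{good} sector and estimate the rest via \eqref{eq: bad is sparse 3}. Note that $[Z_{\links,y}, P_{\caF^{(y)}_{\mathrm{g}}}] = 0$ and similarly for $Z_{\rechts,y'}$, since each $P_\mu$ with $\mu \in \caF^{(y)}$ commutes with $Z_{\links,y}$ by construction. For part 2, split
\[
[Z_{\links,y}, Z_{\rechts,y'}] \;=\; P_{\caF^{(y)}_{\mathrm{g}}} P_{\caF^{(y')}_{\mathrm{g}}} [Z_{\links,y}, Z_{\rechts,y'}] \, + \, R,
\]
where $R$ contains at least one factor $\bar P_{\caF^{(y)}_{\mathrm{g}}}$ or $\bar P_{\caF^{(y')}_{\mathrm{g}}}$. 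Applying \eqref{eq: noncomm hoelder} with the operator-norm bound of part 1 and \eqref{eq: bad is sparse 3} yields $\norm R \norm_{\tr^{(M)}} \leq C(r) M^{C} (M^{-cr})^{1/2}$, which is of the required form after adjusting constants. The same setup handles part 3 with only one $P_{\caF^{(y)}_{\mathrm{g}}}$ in play.

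It remains to show that the good-sector pieces vanish (or are absorbed in $R$). For part 2 with $y=y'$, sector-preservation gives $[Z_{\links,y}, Z_{\rechts,y}] = \sum_\mu P_\mu [X_\mu, Y_\mu]$, where $X_\mu, Y_\mu$ are the parenthesized operators in the definitions of $Z_{\links,y}, Z_{\rechts,y}$; for good $\mu$, the $W_\rho$ entering $X_\mu$ satisfy $s(\rho) \subset \links(\mu)$ and (if $W_\rho P_\mu \neq 0$) $\rho \in \caP^{(y)}(\mu)$, forcing $S(\rho) \cap B_y$ to sit in a $K_j$-component lying entirely in $\links(\mu)$; the analogous statement for $Y_\mu$ puts its supports in $\rechts(\mu)$, so $[X_\mu, Y_\mu] = 0$. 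For $y \neq y'$ with $B_y \cap B_{y'} \neq \emptyset$, invoke Lemma \ref{lem: frame} with $F_2$ a thin strip containing the R-supports of $y'$, $F_1$ a buffer of width $\geq r^2/4$, and $F_0$ the remainder, as in Figure \ref{fig: leftright1}; the conclusion $\links(\mu) \cap F_1 = \links(\mu') \cap F_1$ on compatible pairs reconciles the two left--right splittings and yields disjoint supports on the overlap. For part 3, note that for good $\mu$ the union of $K_j$-components intersecting $\{x_1 \leq a\}$ has diameter at most $c_1 r \cdot r = c_1 r^2$; choosing $c_1 < 1/2$ forces $\links(\mu) \subset \{x_1 \leq a + r^2/2\}$, disjoint from $s(O_\rechts) \subset \{x_1 > a + r^2/2\}$, so $[X_\mu, O_\rechts] = 0$ on the good sector. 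The inequality \eqref{eq: z rechts with o links} follows by the symmetric argument.

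The main technical obstacle is the good-sector geometry: (i) verifying that the operator supports $S(\rho)$, which exceed $s(\rho)$ by up to $2r$, stay inside $\links(\mu)$ or $\rechts(\mu)$ as required; (ii) implementing the Lemma \ref{lem: frame} cover argument in $d \geq 2$, where $B_y \cap B_{y'}$ has nontrivial shape and the decomposition $F_0 \cup F_1 \cup F_2$ must simultaneously respect the hyperplane geometry and the gap $\distance(F_1, F_2) \geq r^2/4$; and (iii) choosing $r$ large and $c_1$ small so that all buffer conditions are compatible. The remaining verifications are mechanical.
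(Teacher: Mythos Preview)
Your overall strategy—split into good and bad sectors, bound the bad part via \eqref{eq: bad is sparse 3} and \eqref{eq: noncomm hoelder}, and argue the good part vanishes via geometry and Lemma \ref{lem: frame}—matches the paper. Part 1 and the remainder bound on $R$ are fine. However, your good-sector analysis has a genuine gap. The operator $Z_{\links,y}$ is not just a sum of $W_\rho$'s whose supports sit in $\links(\mu)$; it carries the projections $P_\mu$, which are supported on \emph{all} of $B_y$. Writing $P_{\caF^{(y)}_{\mathrm{g}}} Z_{\links,y} = \sum_\rho P^{\mathrm{g}}_{\rho\to\links} W_\rho + (\scrD\text{-terms})$ with $P^{\mathrm{g}}_{\rho\to\links} := \sum_{\mu \in \caF^{(y)}_{\mathrm{g}}:\, s(\rho)\subset\links(\mu)} P_\mu$, one has for any $O$
\[
[P_{\caF^{(y)}_{\mathrm{g}}} Z_{\links,y},\,O] \;=\; \sum_\rho P^{\mathrm{g}}_{\rho\to\links}\,[W_\rho, O] \;+\; \sum_\rho [P^{\mathrm{g}}_{\rho\to\links}, O]\, W_\rho \;+\; \cdots.
\]
Your argument for part 3 (``$\links(\mu) \subset \{x_1 \leq a + r^2/2\}$ for good $\mu$, hence $[X_\mu, O_\rechts]=0$'') kills only the first sum; it does not touch the second, since the condition $s(\rho)\subset\links(\mu)$ depends a priori on the configuration throughout $B_y$, so $P^{\mathrm{g}}_{\rho\to\links}$ has full support in $B_y$ and need not commute with $O_\rechts$. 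The same issue appears in part 2 when $2r^2-r \leq |y-y'| \leq 2r^2$: then $s(\rho)$ with $y(s(\rho))=y$ may lie outside $B_{y'}$, so $[P_{\mu'}, W_\rho]\neq 0$ in general, and your unified ``reconcile the two splittings on the overlap'' picture (which is correct for small $|y-y'|$, the paper's Case C) breaks down.

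The paper closes this gap by a second, distinct application of Lemma \ref{lem: frame}, this time with $y'=y$: one shows $[P_{\caF^{(y)}(\rho\to\links)},\, P_{\caF^{(y)}_{\mathrm{g}}}\, O\, P_{\caF^{(y)}_{\mathrm{g}}}] = 0$ whenever $\distance(s(O), s(\rho)) \gtrsim r^2$. The reason is that for good $\mu$, whether $s(\rho)\subset\links(\mu)$ is determined by the configuration near $S(\rho)$ alone (take $F_1$ a neighbourhood of $S(\rho)\cap B_y$ and $F_2 \supset s(O)$ in Lemma \ref{lem: frame}); altering $\eta$ inside $s(O)$ cannot flip it. This is the paper's Case D, and it is the essential step missing from your sketch for both parts 2 and 3.
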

\begin{proof}[Proof of Propostion \ref{proposition: commutator left right}]
Let us start with some easy remarks.
\begin{itemize}
\item [$a)$] If $\mu \in \caF^{(y)},\mu' \in \caF^{(y')}$, then   $[P_\mu,P_{\mu'}]=0$ because both projections are diagonal in the same basis.   For the same reason
$[P_\mu, \scrD(A)]=0$.
\item [$b)$] For $\rho$ such that ${s(\rho)} \subset B_{y}$, $[P_{\mu}, W_\rho]=0$ for any $\mu \in \caF^{(y)}$, by the definition of $\caF^{(y)}$, i.e.\ Definition \ref{def: partition}. 
\item [$c)$] If ${s(\rho)} \cap S(\rho')= \emptyset$ and $s(\rho') \cap S(\rho)=\emptyset$, then $[W_\rho,W_{\rho'}]=0$.
\end{itemize}
 The bounds in 1) follow trivially from the bounds following Definition \ref{def: moves}.  
We now prove point 2). 

To estimate the commutator $[Z_{\mathrm{L},y}, Z_{\mathrm{R},y'}] $, we first consider the term
\beq \label{eq: comm to be considered}
[ \sum_{\mu \in \caF^{(y)}}  P_\mu      \sum_{\scriptsize{\begin{array}{c} \rho \in \caP^{(y)}: y( {s(\rho)}) =y \\   {s(\rho)} \subset \links_y(\mu) \end{array} } }    W_{\rho},   \sum_{\mu' \in \caF^{(y')}}  P_{\mu'}      \sum_{\scriptsize{\begin{array}{c} \rho' \in \caP^{(y')}: y( {s(\rho')}) =y'  \\   {s(\rho')} \not\subset \links_{y'}(\mu') \end{array} } }         W_{\rho'}      ]
\eeq
where we added primes to the variables in the second term of the commutator for clarity.
We consider four cases: \\ 

\noindent \textbf{Case A:} $y=y'$\\[1mm]
If $\mu\neq \mu'$, then \eqref{eq: comm to be considered} vanishes by  $b)$ above and $P_\mu P_{\mu'}=P_\mu \delta_{\mu,\mu'}$.  To see that 
\beq
[P_\mu W_\rho, P_{\mu} W_{\rho'}] =0, \qquad   \textrm{if} \, \,  {s(\rho)} \subset \links(\mu),   {s(\rho')} \not \subset \links(\mu)
\eeq
we use that, by the definition of the Left-Right decomposition, $S(\rho) \cap S(\rho')=\emptyset$ and therefore $[W_\rho,W_{\rho'}] =0$, by $c)$ above.\\

\noindent \textbf{Case B:}  $ 3r < \str y-y'\str \leq 2r^2-r$\\[1mm]
We have that  ${s(\rho)} \subset B_{y'}$ and ${s(\rho')} \subset B_{y}$. Therefore, by $a),b)$ above, all commutators  involving a projection $P_\mu$ vanish, just as in \textbf{case A}, and hence it suffices to consider the commutator
\beq \label{eq: contributing commutators}
 [W_{\rho}, W_{\rho'}]
\eeq
Because $\str y -y' \str > 3r$, this trivially vanishes by $c)$ above. \\

\noindent \textbf{Case C:}  $ 0<  \str y-y'\str \leq 3r $\\[1mm]
By the same reasoning as in \textbf{Case B}, it suffices to consider the commutator \eqref{eq: contributing commutators}.
Define
\begin{align}
F_0 & := \{ x \in B_y \cup B_{y'}:    \,        r^2/4 < \str y(x)- \frac{y+y'}{2} \str \leq  r^2/2    \},     \\[2mm]
 F_1 & := \{ x \in B_y \cup B_{y'} :  \,   \str y(x)- \frac{y+y'}{2} \str \leq  r^2/4    \},  \\[2mm]
  F_2 & :=  (B_y \cup B_{y'}) \setminus (F_0 \cup F_1) 
\end{align}
If either $\mu$ or $\mu'$ is good, then the conditions of Lemma \ref{lem: frame} are satisfied.  Moreover, $S(\rho),S(\rho') \subset F_1$.  If $S(\rho)\cap S(\rho') =\emptyset$, then there is nothing to prove because \eqref{eq: contributing commutators} vanishing trivially.  If $S(\rho)\cap S(\rho') \neq \emptyset$, then Lemma \ref{lem: frame} tells us that either both $S(\rho), S(\rho') $ are included in the 'left' $\links$-set, or both are not included in the $\links$-set.  Therefore, their commutator does not appear in \eqref{eq: comm to be considered}.

So we conclude that the only non-vanishing contribution to \eqref{eq: comm to be considered} originates from pairs $(\mu,\mu')$ such that  none of them is good. 
We recast the sum over such pairs, for fixed $\rho,\rho'$, as
\beq
\sum_{\mu \in \caF^{(y)} \setminus \caF^{(y)}_{\mathrm{g}}} \sum_{\mu' \in \caF^{(y')} \setminus \caF^{(y')}_{\mathrm{g}}} [P_\mu         W_{\rho},   P_{\mu'}    W_{\rho'} ] =  [\bar P_{\caF^{(y)}_{\mathrm{g}}}         W_{\rho},   \bar P_{\caF^{(y')}_{\mathrm{g}}}      W_{\rho'} ] 
\eeq
with the projections $\bar P_{\caF^{(y)}_{\mathrm{g}}} $ (on non-good configurations) as in \eqref{def: projections on bad}.  Then we bound  by \eqref{eq: noncomm hoelder}:
\beq \label{eq: bounding comm in norm}
 \norm [\bar P_{\caF^{(y)}_{\mathrm{g}}}         W_{\rho},   \bar P_{\caF^{(y')}_{\mathrm{g}}}      W_{\rho'} ]  \norm_{\tr^{(M)}}  \leq   2 \norm  W_{\rho'} \norm  \norm  W_{\rho} \norm  \norm \bar P_{\caF^{(y')}_{\mathrm{g}}} \norm  \,     (\tr^{(M)} (\bar P_{\caF^{(y)}_{\mathrm{g}}} ) )^{1/2}
\eeq
  The norms on the RHS are estimated as $ \norm W_\rho \norm \leq  C(r) M^{C} $ by the properties stated following Definition \ref{def: moves}, $ \norm \bar P_{\caF^{(y')}_{\mathrm{g}}} \norm \leq 1 $ using that $\bar P_{\caF^{(y')}_{\mathrm{g}}}$ is a projection,  and  the trace  
is bounded by   $C(r) M^{-cr}$, see  \eqref{eq: bad is sparse 3}.  Finally the sum over pairs $\rho,\rho'$ is bounded by $C(r)$, the number of terms in the sum.\\

\noindent\textbf{Case D:}  $ 2r^2-r \leq \str y-y'\str \leq 2r^2$\\[1mm]
In this case there is no reason for  $ [ W_{\rho}, P_{\mu'}] $ to be small since  it can happen that ${s(\rho)} \not \subset B_{y'}$. 
Instead we recast \eqref{eq: comm to be considered} as
\beq \label{eq: comm with proj}
 \sum_{\rho,\rho': y( {s(\rho)}) =y, y( {s(\rho')}) =y' } [ P_{\caF^{(y)}(\rho \to \links)}    W_{\rho},   \bar P_{\caF^{(y')}(\rho' \to \links)}     W_{\rho'}      ]
\eeq
with
\beq
P_{\caF^{(y)}(\rho \to \links)} :=  \sum_{\mu \in \caF^{(y)}:  {s(\rho)} \subset \links(\mu)}  P_\mu,\qquad   \bar P_{\caF^{(y)}(\rho \to \links)}  :=   \lone - P_{\caF^{(y)}(\rho \to \links)} 
\eeq
Since $S(\rho) \cap S(\rho') =\emptyset$,  \eqref{eq: comm with proj} is rewritten as a sum over $W_\rho [ P_{\caF^{(y)}(\rho \to \links)}  ,    W_{\rho'}      ] + [W_{\rho},   P_{\caF^{(y')}(\rho' \to \rechts)}  ]W_{\rho'}$. Let us look more generally at an expression of the form
\beq  \label{eq: template far frontier}
[ P_{\caF^{(y)}(\rho \to \links)}  ,  O   ] , \qquad \distance(s(O), s(\rho)) \geq r^2/2
\eeq
Then we claim 
\beq \label{eq: vanishing commutator}
[ P_{\caF^{(y)}(\rho \to \links)}  ,   P_{\caF^{(y)}_{\mathrm{g}}}  O  P_{\caF^{(y)}_{\mathrm{g}}}      ] =0.
\eeq
To check this, we first write  
\beq   \label{eq: decompose p into eta}  P_{\caF^{(y)}_{\mathrm{g}}}O P_{\caF^{(y)}_{\mathrm{g}}}= \sum_{\eta,\eta': \mu(\eta), \mu(\eta') \in \caF^{(y)}_{\mathrm{g}}} P_\eta OP_{\eta'},
\eeq
and, using the formula
\beq \label{eq: projectors}
P_{\caF^{(y)}(\rho \to \links)}P_{\eta} =P_{\eta}  P_{\caF^{(y)}(\rho \to \links)} =  \chi\big( \mu(\eta) \in \caF^{(y)}:  {s(\rho)} \subset \links(\mu) \big)\, P_{\eta},
\eeq
 we note that any nonzero contribution to \eqref{eq: vanishing commutator} has to come from pairs $(\eta,\eta')$ in \eqref{eq: decompose p into eta} such that
\begin{enumerate}
\item $\eta_{s(O)^c}= \eta'_{s(O)^c}$.
\item  $\mu(\eta),\mu'(\eta')$ are good.
\item  $s(\rho) \subset \links(\mu(\eta)), s(\rho) \not\subset \links(\mu(\eta')) $ or vice versa (see \eqref{eq: projectors}).
\end{enumerate}
Since $\distance (s(O),s(\rho)) >r^2/2$,  we can construct a partition $\{F_0,F_1,F_2\}$ as in Lemma \ref{lem: frame} (applied with $y'=y$) such that ${S(\rho) \cap B_y} \subset F_1$ and $s(O) \subset F_2$.  Then Lemma \ref{lem: frame} implies that there are no pairs that can contribute to the commutator in  \eqref{eq: vanishing commutator} and hence  \eqref{eq: vanishing commutator} holds. 
  Therefore, 
\beq
[ P_{\caF^{(y)}(\rho \to \links)}  ,    O      ] =    [ P_{\caF^{(y)}(\rho \to \links)}  ,   \bar P_{\caF^{(y)}_{\mathrm{g}}}  O    P_{\caF^{(y)}_{\mathrm{g}}}  + P_{\caF^{(y)}_{\mathrm{g}}}  O   \bar P_{\caF^{(y)}_{\mathrm{g}}}  + \bar P_{\caF^{(y)}_{\mathrm{g}}} O   \bar P_{\caF^{(y)}_{\mathrm{g}}}  ] 
\eeq
and proceeding as in \eqref{eq: bounding comm in norm},  this is bounded by $\norm O \norm M^{-cr}$.  Plugging in $O=W_\rho$, we obtain again the bound $C(r) M^{-cr}$.\\

\noindent \textbf{Case E:}  $\str y-y'\str >2 r^2$\\[1mm]
 In this case, the commutator vanishes obviously since the two terms act on disjoint regions. \\

Finally, if we consider instead of the commutator \eqref{eq: comm to be considered}, one of the terms containing $\scrD(A)$, then we can repeat the above reasoning (though with many simplifications) and obtain the same result.  This proves point $2)$.  \\  

And now to point $3)$:  The estimates \eqref{eq: z links with o rechts} and   \eqref{eq: z rechts with o links} are of course analogous and we consider only the former. 
  Again we first consider only the terms $W_\rho$ in $Z_{\links,y}$.  
Let us first assume that $\proj_1(s(\rho)) \geq a+r^2/8$.  
Since  $s(\rho) \subset \links(\mu)$, this means that the left $\links(\mu)$ set extends far to the right, and hence, if $c_1$ in Definition \ref{def: good} is small enough, this implies that $\mu$ cannot be good. Just as in the estimates above, non-good $\mu$ give a contribution that is sufficiently small in $\norm \cdot \norm_{\tr^{(M)}}$ (see \eqref{eq: bounding comm in norm}) and hence we can disregard them. 
Then, we are left with $\rho$ such that $\proj_1(s(\rho)) < a+r^2/8$. In that case  $[W_{\rho}, O_\rechts]=0$ because of disjoint supports and we only have to estimate 
\beq
[ P_{\caF^{(y)}(\rho \to \links)}  ,  O_\rechts   ], \qquad  \distance(s(\rho),s(O_\rechts)) \geq 3r^2/8
\eeq
The same expression was considered in \eqref{eq: template far frontier} in \textbf{Case D} above, except that there we had the distance $r^2/4$ instead of $3r^2/8$.  However, the same reasoning applies with $3r^2/8$ as well, provided that $r$ is large enough. 

The terms originating from $\scrD(A)$ are treated analogously. 
\end{proof}

\section{Proof of Theorems \ref{thm: splitting of current} and \ref{thm: vanishing conductivity} }\label{sec: proofs}

The crux of the proof of Theorem \ref{thm: splitting of current} is a left-right splitting of the Hamiltonian in the strip $\bbS$. This will be achieved below following Proposition \ref{thm: splitting of hamiltonian} which itself relies heavily on Proposition \ref{proposition: commutator left right}.   This left-right decomposition will allow us to study the local currents as sums of operators with small variance. We do this in Section \ref{sec: decomposition of the current}.   The proof of Theorem \ref{thm: vanishing conductivity} is then a simple consequence, and it is given in the final Section \ref{sec: proof of main theorem}. 
In the present section, we will also remove the high-energy cutoff.  The terms in the Hamiltonians and currents (e.g.\ he first term on the right hand side in \eqref{eq: splitting of h in strip} that appear because of this, can always be treated in a simple-minded way, exploiting the fact that the Gibbs state gives very small weight to high-energy states. 

\subsection{States and Hilbert-Schmidt norms}\label{sec: states and norms}
We recall/introduce the states
\beq \label{def: recall and introduce states}
\om_{\be, \bigvolume} (O) :=  \frac{\Tr \e^{-\be H_\bigvolume} O}{\Tr  \e^{-\be H_\bigvolume}},\qquad   \om_{\be, \bigvolume,0} (O) :=  \frac{\Tr \e^{-\be H^{(0)}_\bigvolume} O}{\Tr  \e^{-\be H^{(0)}_\bigvolume}}, \qquad O \in \caB(\caH_\bigvolume)
\eeq
In what follows, we suppress the dependence on $\be,\bigvolume$ since all of our estimates will be uniform in $\bigvolume$ and in $\be$ whenever $\be$ is small enough, hence we simply write $\om(\cdot), \om_0(\cdot)$.    
We define the associated Hilbert-Schmidt norms
\beq
\norm O \norm_{\om}:= \om(O^* O)^{1/2}, \qquad   \norm O \norm_{\om_0}:= \om_0(O^* O)^{1/2},
\eeq
completely analogous to the the norm \eqref{def: hs norm trace} that was the Hilbert-Schmidt norm associated to the state $\tr^{(M)}(\cdot)$. 
We denote the covariance of observables by 
\beq
\om(O_1 ; O_2) := \om(O_1 O_2)- \om(O_1)\om(O_2)  = \om ((O_1-\om(O_1)) (O_2-\om(O_2)))
\eeq

\subsection{Decomposition of the Hamiltonian in a strip} \label{sec: splitting of ham in strip}
First we split the Hamiltonian $H=H_\bigvolume$ as 
\beq
H= H^{(\links)}_{\bbS^c}+ H^{(\rechts)}_{\bbS^c}  + H_\bbS
\eeq
where
\beq
H^{(\links)}_{\bbS^c}=\sum_{x: \,x_1< a-r^2+2}H_x  ,\qquad   H^{(\rechts)}_{\bbS^c}=  \sum_{x:\, x_1 > a+ r^2-2}H_x
\eeq
and\footnote{Note that $x$ is included in $H_{\mathbb{S}}$ if the star domain $\{x': \str x-x' \str \leq 1\}$ is in $\bbS$,  to be consistent with the definition of the  resonant strip Hamiltonian in Section \ref{sec: analysis invariant subspaces} }
\beq
H_\bbS = \sum_{x: \, a-r^2+2 \leq  x_1 \leq a+r^2-2 } H_x
\eeq

From the technical point of view, the main step in our analysis is the splitting of the Hamiltonian  $H_\bbS$ as a sum of three terms.   It is accomplished in the next proposition, which relies crucially on Proposition \ref{thm: renormalization} and  Proposition \ref{proposition: commutator left right}.    
Recall the regions $\tilde B_y$ introduced in \eqref{def: balls}. 
\begin{proposition}[Splitting of Hamiltonian]\label{thm: splitting of hamiltonian}
Let $M= \be^{-(1+c(r))/q}$. 
For sufficiently large $r \in \bbN$,  there is a decomposition
\beq \label{eq: splitting of h in strip}
 H_{\bbS} =  \sum_{x \in \bbS} (1-\caP_{\leq M}) H_x  +    \pot_{\bbS}(\scrA)  +  \sum_{y \in \bbZ^{d-1}:\,  (a,y) \in \bigvolume} \tilde Z_{y}
\eeq
such that 
\begin{enumerate}
\item The potentail $\scrA$ consists only of low-energy terms, $\caP_{\leq M}(\scrA)=\scrA$ and
\beq
\interleave\scrA \interleave_{C \be^{-c},1}  \leq C(r) \be^{-C+cr}
\eeq
\item The operators $\tilde Z_{y}$ can be split as
\beq
\tilde Z_y = \tilde Z_{\links, y} + \tilde Z_{\rechts, y}
\eeq
such that 
\beq s(\tilde Z_{\links, y}) \subset   \tilde B_{y},  \qquad  \caP_{\leq M}(\tilde Z_{\links, y})=\tilde Z_{\links, y},   \qquad \norm  \tilde Z_{\links, y} \norm  \leq  C(r) \be^{-C}, \eeq idem for $\tilde Z_{\rechts, y}$, and 
\beq \label{eq: mother of all commutators}
\norm  [\tilde Z_{\links,y}, \tilde Z_{\rechts,y'}] \norm_{\om_0}  \leq C(r) \be^{cr-C}
\eeq
Furthermore, the left/right part is  supported on the left/right side of the strip $\bbS$ in the following sense:
\begin{align}
 & \norm  [\tilde Z_{\links, y}, O_\rechts] \norm_{\om_0} \leq  C(r) \be^{cr-C} \norm O_\rechts \norm      \label{eq: comm with boundary}  \\[2mm]
 &\norm  [\tilde Z_{\rechts, y}, O_\links] \norm_{\om_0} \leq  C(r) \be^{cr-C} \norm O_\links \norm      \label{eq: comm with boundary 2} 
\end{align}
for any operators $O_\rechts, O_\links$ satisfying  $\str s(O_\rechts)\str, \str s(O_\links) \str \leq C$ and 
\beq
 a+(3/4)r^2  < \proj_1(s(O_\rechts)), \qquad     a-(3/4)r^2  > \proj_1(s(O_\links)). 
\eeq
\end{enumerate}
 \end{proposition}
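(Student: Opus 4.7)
The plan is to combine the perturbative renormalization from Proposition \ref{thm: renormalization} and Corollary \ref{cor: splitting} with the left--right splitting of the resonant Hamiltonian supplied by Proposition \ref{proposition: commutator left right}, and then undo the conjugation locally to preserve spatial locality. Fix $M = \be^{-(1+c(r))/q}$ and $\delta = M^{-\gamma_1}$ with $0<\gamma_1<q-2$ as in \eqref{eq: def of m via gamma}, so that $\delta\to 0$ when $\be\to 0$ and the hypotheses of Proposition \ref{thm: renormalization} are satisfied for $\be$ small enough. First I would separate out the high-energy part: $H_\bbS = \sum_{x\in\bbS}(1-\caP_{\leq M})H_x + \pot_{\bbS'}(\scrE) + (\text{edge})$, with $\bbS'\supset\bbS$ a mild enlargement accommodating the star-domains of $\scrE$; the edge corrections will be absorbed into $\pot_\bbS(\scrA)$ at the end.

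Using $\scrE = \caK(\scrD^{(r)}+\hat\scrG^{(r)}+\hat\scrF^{(r)})$ from Corollary \ref{cor: splitting} together with \eqref{eq: cak and unitary}, I rewrite
\[
\pot_{\bbS'}(\scrE) = U_{\bbS'}\,\pot_{\bbS'}(\hat\scrF^{(r)})\,U_{\bbS'}^{*} + U_{\bbS'}\,\pot_{\bbS'}(\scrD^{(r)}+\hat\scrG^{(r)})\,U_{\bbS'}^{*}.
\]
The first summand is $\pot_{\bbS'}(\caK(\hat\scrF^{(r)}))$. By \eqref{eq: cak preserves bounds} and Corollary \ref{cor: splitting}, and using $\delta=M^{-\gamma_1}$, $M=\be^{-(1+c)/q}$, the potential $\caK(\hat\scrF^{(r)})$ obeys $\interleave \caK(\hat\scrF^{(r)}) \interleave_{\delta^{-\nu/4},1} \leq C(r)\be^{-C+cr}$; after restricting the index-sets to subsets of $\bbS$ (the tail over sets extending into $\bbS^c$ being controlled by the decay in Lemma \ref{lem: bound cutoff pot}), this defines the first contribution to $\scrA$, with the bound required in item~(1). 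Within the strip, $\pot_\bbS(\scrD^{(r)}+\hat\scrG^{(r)})$ equals $Z$ by \eqref{eq: def z}, and Proposition \ref{proposition: commutator left right} provides the splitting $Z = \sum_y(Z_{\links,y}+Z_{\rechts,y})$; the difference between $\pot_{\bbS'}$ and $\pot_\bbS$ of the resonant potential is again small by Lemma \ref{lem: bound cutoff pot} and absorbed in $\scrA$.

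To preserve support locality after conjugation, I set $\tilde Z_{\alpha,y}:=U_{\tilde B_y}\,Z_{\alpha,y}\,U_{\tilde B_y}^{*}$ for $\alpha\in\{\links,\rechts\}$. Then $s(\tilde Z_{\alpha,y})\subset\tilde B_y$ and $\norm\tilde Z_{\alpha,y}\norm = \norm Z_{\alpha,y}\norm \leq C(r)\be^{-C}$ by Proposition \ref{proposition: commutator left right}.1; the cutoff $\caP_{\leq M}\tilde Z_{\alpha,y}=\tilde Z_{\alpha,y}$ holds because $\scrK^{(j)}$ is $M$-cutoff by Proposition \ref{thm: renormalization}.1, so $U_{\tilde B_y}$ commutes with $\bigotimes_{x\in\tilde B_y}\chi(N_x\leq M)$. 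Replacing $U_{\bbS'}$ by the local $U_{\tilde B_y}$ in the conjugation of $Z_{\alpha,y}$ costs an error of norm $\leq C(r)\delta^{c((2r)^{2d}-2r^2)}\be^{-C}$ by Lemma \ref{lem: local unitaries}, since the symmetric difference $\tilde B_y\Delta\bbS'$ lies at distance at least $(2r)^{2d}-2r^2$ from $B_y\supset s(Z_{\alpha,y})$; this error is also absorbed in $\scrA$.

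For item~(2), Proposition \ref{proposition: commutator left right}.2 yields $\norm[Z_{\links,y},Z_{\rechts,y'}]\norm_{\tr^{(M)}}\leq C(r)M^{C-cr}$, and this is preserved under conjugation by $U_{\tilde B_y}$ since the $\tr^{(M)}$-norm is invariant under unitaries that commute with the low-energy projection. To pass to the $\om_0$-norm, I would use the pointwise bound: for any low-energy operator $O$ with $\str s(O)\str\leq C(r)$,
\[
\norm O\norm_{\om_0}^2 \leq (M/Z_0(\be))^{C(r)}\norm O\norm_{\tr^{(M)}}^2 \leq C(r)\be^{-C}\norm O\norm_{\tr^{(M)}}^2,
\]
using $Z_0(\be)\sim\be^{-1/q}$ at high temperature, which yields \eqref{eq: mother of all commutators}. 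For \eqref{eq: comm with boundary}, I would write $[\tilde Z_{\links,y},O_\rechts]=U_{\tilde B_y}\,[Z_{\links,y},U_{\tilde B_y}^{*}O_\rechts U_{\tilde B_y}]\,U_{\tilde B_y}^{*}$, decompose $U_{\tilde B_y}^{*}O_\rechts U_{\tilde B_y}$ via Lemma \ref{lem: bound cutoff pot} into a piece localized within distance $r^2/8$ of $s(O_\rechts)$ (hence still strictly right of $a+r^2/2$) plus a tail of norm $C(r)\delta^{cr^2}\norm O_\rechts\norm$, and then apply Proposition \ref{proposition: commutator left right}.3 to the localized part. The main obstacle will be precisely this last piece of bookkeeping: one must track the support growth under $U_{\tilde B_y}$ with enough margin (the slack $3r^2/4-r^2/2=r^2/4$) so that the strict inequality $\proj_1(s(\cdot))>a+r^2/2$ of Proposition \ref{proposition: commutator left right}.3 survives, and simultaneously ensure that the $\tr^{(M)}\to\om_0$ conversion does not consume the $\be^{cr}$ decay.
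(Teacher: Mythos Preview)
Your proposal is correct and follows the paper's route: write $\pot_\bbS(\scrE)=\pot_\bbS(\caK(\hat\scrF))+\sum_y\pot_\bbS(\caK(Z_y))$, localize each $\caK(Z_y)$ to $\tilde B_y$ via \eqref{eq: decomp caic and u} and Lemma~\ref{lem: bound cutoff pot} (the remainder going into $\scrA$), set $\tilde Z_{\alpha,y}=U_{\tilde B_y}Z_{\alpha,y}U_{\tilde B_y}^*$, and convert the $\tr^{(M)}$-bounds of Proposition~\ref{proposition: commutator left right} to $\om_0$-bounds via Lemma~\ref{lem: converting sparse bounds} by letting the cutoff exponent $\gamma_c=c(r)$ shrink with $r$. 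The only slip is the detour through an enlarged $\bbS'$: the definitions are arranged so that $H_\bbS-\sum_{x}(1-\caP_{\leq M})H_x=\pot_\bbS(\scrE)$ holds exactly (the sum defining $H_\bbS$ runs over $x$ whose star domain lies in $\bbS$), and your distance claim for $\tilde B_y\Delta\bbS'$ actually fails because $\bbS'\setminus\bbS$ sits at distance $1$ from $B_y$ in the $e_1$-direction---working in $\bbS$ directly, as the paper does, removes the issue.
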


\begin{proof}
We recall  the operator $Z_y=Z_{\links,y}+ Z_{\rechts,y} $ as defined in \eqref{def: zway} and we also  use the notation $Z_y$ to denote the potential $\scrA_{Z_y}$ defined by 
\beq
\scrA_{Z_y}(A) := \begin{cases}  Z_y & A= B_y \,\, \text{for some $y$}  \\     0 & \text{otherwise} \end{cases}
\eeq
First, we will show that 
\beq \label{eq: reassembled ham}
H_{\bbS} =  \sum_{x \in \bbS} (1-\caP_{\leq M}) H_x  + \pot_\bbS\big( \caK(\hat \scrF )) + \sum_{y} \pot_\bbS\big( \caK(Z_{ y}) \big)
\eeq
where (also below) sums over $y$ are understood to range over the set $\{y \in \bbZ^{d-1}:\,  (a,y) \in \bigvolume\}$. 
By the decomposition in  \eqref{eq: hat decomp}  and the relation $\caK \caK^{-1}=1$ from Section \ref{sec: unitary trafo}, 
\beq    \label{eq: silly one} H_{\bbS}-\sum_{x \in \bbS} (1-\caP_{\leq M})(H_x)= \pot_{\bbS}(\scrE) = \pot_{\bbS}(\caK(\scrD+\hat\scrG+ \hat \scrF))   \eeq
Furthermore,  
\beq \label{eq: silly two}
 \pot_\bbS\big( \caK(\scrD+ \hat\scrG) \big) =    U_{\bbS}\pot_\bbS(\scrD+ \hat\scrG)U^*_{\bbS} =  U_{\bbS}ZU^*_{\bbS} = \sum_y U_{\bbS}Z_yU^*_{\bbS}  =   \sum_{y} \pot_\bbS\big( \caK(Z_{ y}) \big)
\eeq
The first equality is  \eqref{eq: cak and unitary},  the second is the definition of $Z$ (see  \eqref{eq: def z}), the third is $Z=\sum_y Z_y$ (see \eqref{def: zway}) and the fourth is \eqref{eq: cak and unitary restricted} with $O=Z_y$.
Hence, \eqref{eq: reassembled ham} follows from \eqref{eq: silly one} by the equality of the first and last expression in \eqref{eq: silly two}.

Next, we subtract from the terms with $Z_{y}  $ a part that is small enough to be included into $\scrA$: 
We split, according to \eqref{eq: decomp caic and u}, 
\beq
H_\bbS( \caK(Z_{ y} ))   = U_{\tilde B_y}   Z_{ y}  U^*_{\tilde B_y}  +   H_\bbS( \caI_{\tilde B^c_y }\caK( Z_{y} )) 
\eeq
with $\caI_{B^c}$  for a set $B$ the restriction of potentials introduced in Section \ref{sec: unitary trafo}. 
Since $s( Z_{ y})=B_y$, we have $\distance( \tilde B^c_y, s( Z_{y}) ) \geq r^2 \geq  c  \str s( Z_{ y}) \str $, and therefore we conclude from \eqref{eq: bound cutoff pot} that
\beq   \label{eq: bound for small part in z}
\interleave \caI_{\tilde B^c_y }\caK( Z_{y} ) \interleave_{M^c,1} \leq C(r) M^{-c r^2+C}
\eeq
This estimate was in fact the reason to choose  $(2r)^{2d}$ as the radius of the balls $\tilde B_y$. 
As announced, we now define
\beq
\scrA :=  \caK(\hat\scrF) + \sum_{y}  \caI_{\tilde B^c_y }\caK( Z_{ y} )
\eeq
and we check that it satisfies the bound claimed in item 1.\ of the Proposition.  Indeed, for the second  term on the right hand side this is the bound  in \eqref{eq: bound for small part in z}. For the first term, the bound follows from  \eqref{eq: cak preserves bounds} and item 2.  of Proposition \ref{thm: renormalization} (upon plugging $\delta=M^{-\gamma_1}$). \\

 Now we move to the left-right splitting. Set 
\beq
V:= U_{\tilde B_y} , \qquad \tilde Z_{\links, y} :=  V   Z_{\links, y}  V^* , \qquad    \tilde Z_{\rechts, y} :=V   Z_{\rechts, y}  V^*.
\eeq
By the unitarity of $V$, 
\begin{align}
\norm [\tilde Z_{\links, y}, \tilde Z_{\rechts, y'}] \norm_{\tr^{(M)}} =   \norm [ Z_{\links, y}, Z_{\rechts, y'}] \norm_{\tr^{(M)}}
\end{align}
We can bound the right hand side by Proposition \ref{proposition: commutator left right} 2) and we thus obtain the bound \eqref{eq: mother of all commutators}, but in the norm $\norm \cdot \norm_{\tr^{(M)}}$ rather than $\norm_{\om_0}$. In Lemma \ref{lem: converting sparse bounds} below, we explain how to relate these norms.

Finally, we  have to control $ [\tilde Z_{\links, y}, O_\rechts] $ with $O_\rechts$ (as given in the statement of the proposition). Set
\beq 
W : = U_{B'_y}, \qquad \text{with}\,\, B'_y = \tilde B_y \cap  \{x :\, \distance(x, s(O_\rechts) ) \leq r \}.
\eeq
and calculate 
\begin{align}
  [\tilde Z_{\links, y}, O_\rechts] &= V  [ Z_{\links, y},  V^*  O_\rechts  V  ] V^*  \nonumber   \\[2mm]
    &= V [ Z_{\links, y},  W^* O_\rechts  W   ] V^*  + V [ Z_{\links, y},  (V^* O_\rechts  V-  W^* O_\rechts  W)   ] V^*    \label{eq: comm acro 3}
\end{align}
We bound the last line as follows, using unitarity of $V,W$,  
\begin{align}
\norm   [\tilde Z_{\links, y}, O_\rechts]  \norm_{\tr^{(M)}}  &\leq   \norm [ Z_{\links, y}, W^* O_\rechts  W ]\norm_{\tr^{(M)}}    +   
2\norm Z_{\links, y} \norm \norm  V^* O_\rechts  V-  W^* O_\rechts  W  \norm      \label{eq: comm acro 4}
\end{align}
For the first term, we use that the operator $W^* O_\rechts  W$ is supported to the right of $a+(3/4)r^2-r$ and  hence Proposition \ref{proposition: commutator left right} 3) gives, for large enough $r$,   the bound $C(r) M^{C-cr} \norm W^* O_\rechts  W \norm$. 
For the second term, we use, from Lemma \ref{lem: local unitaries}, that
\beq \label{eq: comm acro 1}
\norm  V^* O_\rechts  V-  W^* O_\rechts  W \norm \leq C(r) M^{-cr } \norm O_\rechts \norm 
\eeq 
and the bound $\norm Z_{\links, y} \norm  \leq C(r)M^C$ from Proposition \ref{proposition: commutator left right} 1. Putting these bound together, we get
%
%
\beq
\norm  [\tilde Z_{\links, y}, O_\rechts] \norm_{\tr^{(M)}} \leq  C(r) M^{C-cr} \norm O_\rechts \norm
\eeq
and analogously for  $ [\tilde Z_{\rechts, y}, O_\links]$. 
To finish the proof of item 2., it remains to argue that the bounds on operators in $\norm \cdot \norm_{\tr^{(M)}}$ can be converted to bounds in the norm $\norm \cdot \norm_{\om_0}$:
\begin{lemma} \label{lem: converting sparse bounds}
For any  operator $O$ with $\caP_{\leq M}(O)=O$ and $M= \be^{-(1+\gamma_c)/q} $,  with $\gamma_c>0$, 
\beq \label{eq: comparison of states}
\norm O \norm_{\om_0} \leq ( C \be^{-\gamma_c/q})^{\str s(O) \str}   \norm O \norm_{\tr^{(M)}}, \qquad   
\eeq
\end{lemma}
\begin{proof}
Since the density matrices of the state $\om_0$ and $\tr^{(M)}$ are both product and diagonal in the same basis, this boils down to the estimate
\beq
M (\sum_{\eta \geq 0} \e^{-\beta \eta^{q}}  )^{-1}  \leq  C M  \be^{1/q} \leq  C \be^{-\gamma_c/q}. 
\eeq
\end{proof}
Hence, to get the bounds (\ref{eq: mother of all commutators}, \ref{eq: comm with boundary}, \ref{eq: comm with boundary 2}) from the corresponding bounds with $\norm \cdot \norm_{\tr^{M}}$ and Lemma \ref{lem: converting sparse bounds}, we have to choose the exponent ${\gamma_c/q}$ such that $ cr - {(\gamma_c/q)}r^{2d} > c'r $ (for some $c'>0$ depending on $c$). This is achieved by decreasing  the cut-off exponent $\gamma_c=\gamma_c(r)$ sufficiently fast as $r$ grows. 
\end{proof}

Finally, we define a left-right decomposition of the full strip Hamiltonian
\beq \label{eq: first tilded splitting}
 H_{\mathbb{S}}=  \tilde H^{(\links)}_{\mathbb{S}}+ \tilde H^{(\rechts)}_{\mathbb{S}} 
\eeq
Since only the splitting of the $\tilde Z_{y}$-terms in the Hamiltonian matters in the end, we can simply assign all other terms to, say, the right part, and define
by setting simply
\beq
 \tilde H^{(\links)}_{\mathbb{S}}: =   \sum_{y} \tilde Z_{\links, y} , \qquad   \tilde H^{(\rechts)}_{\mathbb{S}} = H_{\mathbb{S}} -\tilde H^{(\links)}_{\mathbb{S}} 
\eeq
The tildes in this expression serve to distinguish this splitting from the 'naive' left-right decomposition  $H_\bbS=  H^{(\links)}_{\mathbb{S}}+  H^{(\rechts)}_{\mathbb{S}}$ with
\beq
  H^{(\links)}_{\mathbb{S}} := \sum_{x: \,  a-r^2+2 \leq  x_1 \leq a} H_x, \qquad     H^{(\rechts)}_{\mathbb{S}} :=  \sum_{x: \,  a < x_1\leq a+r^2-2} H_x
\eeq
corresponding to the 'naive' left-right decomposition of the total Hamiltonian $H=  H^{(\links)}+  H^{(\rechts)}$
\beq
H^{(\links)}=    H^{(\links)}_{\mathbb{S}}+   H^{(\links)}_{\mathbb{S}^c},\qquad  H^{(\rechts)}=    H^{(\rechts)}_{\mathbb{S}}+   H^{(\rechts)}_{\mathbb{S}^c}
\eeq
which was already introduced in Section \ref{sec: currents} to define the current.
\subsection{Decomposition of the current}\label{sec: decomposition of the current}

Our goal is to estimate 
\beq
J = J_{\bbH} = \i  [H,H^{(\links)}]
\eeq
where $H=H^{(\links)}+H^{(\rechts)}$ by using the left-right decomposition of the Hamiltonian constructed in Section \ref{sec: splitting of ham in strip}. 
Namely, we set
\beq
\tilde H^{(\links)} =  \tilde H_{\bbS}^{(\links)} +  H_{\bbS^c}^{(\links)},\qquad  \tilde H^{(\rechts)} =  \tilde H_{\bbS}^{(\rechts)} +  H_{\bbS^c}^{(\rechts)}
\eeq
such that
\beq
H=\tilde H^{(\links)} +\tilde H^{(\rechts)}.
\eeq
Hence by splitting
\beq
H^{(\links)} =   \tilde O_{\bbS}  +  \tilde H^{(\links)},\qquad  \tilde O_{\bbS}:=  (H^{(\links)}_{\bbS} -   \tilde H^{(\links)}_{\mathbb{S}}  ) 
\eeq
we get 
\begin{align}
-\i J = -\i J^{(1)}- \i J^{(2)}  &:= [H,  \tilde O_{\bbS} ]  +  [H, \tilde H^{(\links)}]   \label{eq: precise current decomp} \\
 &= [H,  \tilde O_{\bbS} ]  +  [\tilde H^{(\rechts)}, \tilde H^{(\links)}]   \\
&= [H,  \tilde O_{\bbS} ] +    [H^{(\rechts)}_{\bbS^c}, \tilde H_{\bbS}^{(\links)}] +   [\tilde H^{(\rechts)}_{\bbS}, \tilde H_{\bbS}^{(\links)}]  +   [\tilde H^{(\rechts)}_{\bbS}, H_{\bbS^c}^{(\links)}]  
\end{align}
For future use,  note that $\om(J)=\om(J^{(1)})=0$ by stationarity, and therefore also $\om(J^{(2)})=0$.  
We recognize the expression for $J$ in Theorem \ref{thm: splitting of current} since the commutators on the right-hand side are sums of local operators. 
For further discussion, let us make this explicit by defining
\begin{align}
V^{(\rechts,j)}_A & := \left\{\begin{array}{lll}  (1-\caP_{\leq M})(H_x)   &  \text{if}\,\,  A=\{x': \, \str x' - x \str \leq 1 \} \subset \bbS    & j=1  \\[2mm]
\scrA(A) &  \text{if}\,\,  A \subset \bbS    & j=2  \\[2mm]  
 \tilde Z_{\rechts,y} \qquad &  \text{if}\,\,   A= \tilde B_y \,\,   \qquad & j=3  \\[2mm] 
 \caP_{\leq M}(H_x)  \qquad &    \text{if}\,\,  A=\{x': \, \str x' - x \str \leq 1 \} \,\, \text{with $x_1-a =r^2-1\,\, \text{or}\,\, r^2 $}   \qquad & j=4   \\[2mm] 
(1-\caP_{\leq M})(H_x)  &    \text{if}\,\,  A=\{x': \, \str x' - x \str \leq 1 \} \,\, \text{with $x_1-a =r^2-1\,\, \text{or}\,\, r^2 $}   \qquad &  j=5 
\end{array} \right. \nonumber  
\end{align}
for some $x,y$ and  $V^{(\rechts,j)}_A=0$ in all other cases. Similarly,
\begin{align}
V^{(\links,j)}_A & := \left\{\begin{array}{lll}  
 \tilde Z_{\links,y} \qquad &  \text{if}\,\,   A= \tilde B_y \,\,   \qquad & j=3  \\[2mm] 
 \caP_{\leq M}(H_x)  \qquad &    \text{if}\,\,  A=\{x': \, \str x' - x \str \leq 1 \} \,\, \text{with $-(x_1-a) =r^2-1\,\, \text{or}\,\, r^2 $}   \qquad & j=4   \\[2mm] 
(1-\caP_{\leq M})(H_x)  &    \text{if}\,\,  A=\{x': \, \str x' - x \str \leq 1 \} \,\, \text{with $-(x_1-a) =r^2-1\,\, \text{or}\,\, r^2 $}   \qquad &  j=5 
\end{array} \right. \nonumber  
\end{align}
for some $x,y$ and  $V^{(\links,j)}_A=0$ in all other cases.  The assymetry between $\rechts$ and $\links$ in these formulas is due to the arbitrary choice, made following \eqref{eq: first tilded splitting}, to assign all nonessential terms to the $\rechts$ part. 
Next, we set
\beq
-\i \tilde I^{(j,j')}_{A,A'} : =[V^{(\rechts,j)}_A, V^{(\links,j')}_{A'} ]
\eeq
such that, from the decomposition of Proposition \ref{thm: splitting of hamiltonian} and the definition of $J^{(2)}$ above, we indeed have
\beq
J^{(2)}  = \sum_{A,A': A \cap A' \neq \emptyset} \sum_{j,j'= 1, \ldots, 5}  \tilde I^{(j,j')}_{A,A'}
\eeq
This is straightforward to check; the terms with $j=4,5$ are those originating from terms $H_x$ such that the star domain $\{x': \, \str x' - x \str \leq 1 \}$ has overlap both with $\bbS$ and $\bbS^c$. Those terms are included in $H^{(\links)}_{\bbS^c}$ or $H^{(\rechts)}_{\bbS^c}$ but they do not commute with $H_x$ for $x$ inside the strip, hence they contribute to the current.  The terms with $j=1,2,3$ correspond to the three terms on the right hand side of \eqref{eq: splitting of h in strip}. 
The tilde on $\tilde I^{(j,j')}_{A,A'}$ is to distinguish it from the centered operators $I^{(j,j')}_{A,A'}:=\tilde I^{(j,j')}_{A,A'}-\om(I^{(j,j')}_{A,A'})$ that will be used later.

Next, we  establish the desired properties of these local terms.

\subsection{Classification of current operators and proof of Theorem \ref{thm: splitting of current}}\label{subsec: Classification of current operators}
We classify the `current' operators  $
\tilde I^{(j,j')}_{A,A'}$ introduced above. 
\begin{lemma}\label{lem: all currents are k}
For any $j,j'$ and $A,A'$, the operator
$
\tilde I^{(j,j')}_{A,A'}$  can be written as 
\beq \label{eq: i as sum}
\tilde I^{(j,j')}_{A,A'} = \sum_{i=1}^{C}  K^{(j,j',i)}_{A,A'}
\eeq
where each of the operators $K^{(j,j',i)}_{A,A'}$ is of the $K$-type introduced in Section \ref{sec: observables} (Appendix), such that, for any of these operators $K=K^{(j,j',i)}_{A,A'}$, we have $s(K) \subset A \cup A'$ and 
\beq \label{eq: desired bound on k}
w(K) \leq C(r) \be^{-C+c r+c(r) \str A \cup A' \str  }.
\eeq
with $w(K)$ as defined in \eqref{eq: definition of w k} (Appendix). 
Similarly, for any pair of the operators   $
\tilde I^{(j,j')}_{A,A'}, \tilde I^{(j'',j''')}_{A'',A'''} $ (not necessarily distinct) with $(A \cup A') \cap (A'' \cup A''') \neq \emptyset $,  the product
\beq \label{eq: products of i}
(\tilde I^{(j,j')}_{A,A'})^*  \tilde I^{(j'',j''')}_{A'',A'''}
\eeq
is a sum of $C$ operators of the  $K$-type satisfying  $s(K) \subset A \cup A' \cup A'' \cup A'''$ and 
\beq
w(K) \leq C(r) \be^{-C+c r+c(r) \str A \cup A' \cup A'' \cup A''' \str  }.
\eeq
\end{lemma}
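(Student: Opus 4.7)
My plan is to reduce Lemma \ref{lem: all currents are k} to a case analysis over the possible values of $(j,j')$ in the commutator $-\i\tilde I^{(j,j')}_{A,A'} = [V^{(\rechts,j)}_A, V^{(\links,j')}_{A'}]$. Since $V^{(\links,j')}_{A'}$ is defined to be nonzero only for $j' \in \{3,4,5\}$, only $5 \times 3 = 15$ combinations are relevant. The support bound $s(K) \subset A \cup A'$ is automatic because each factor is supported in its respective set, and the bound on the number of $K$-terms in \eqref{eq: i as sum} will come from the fact that every case produces either a single commutator or a small constant number of pieces. The real content of the lemma is therefore the weight estimate \eqref{eq: desired bound on k}.

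First I would organise the cases by which mechanism produces the factor $\be^{cr}$. The most delicate case is $(j,j') = (3,3)$, namely $[\tilde Z_{\rechts,y}, \tilde Z_{\links,y'}]$: the required bound comes directly from the sparse-commutator estimate \eqref{eq: mother of all commutators}, which states $\norm[\tilde Z_{\links,y},\tilde Z_{\rechts,y'}]\norm_{\om_0} \leq C(r)\be^{cr-C}$. The mixed cases $(j,3)$ with $j \in \{1,4,5\}$ and $(3,j')$ with $j' \in \{4,5\}$ are handled by \eqref{eq: comm with boundary}--\eqref{eq: comm with boundary 2}: the operators $V^{(\rechts,j)}_A$ or $V^{(\links,j')}_{A'}$ with $j,j' \in \{4,5\}$ are supported at positions with $\str x_1 - a \str \in \{r^2-1, r^2\}$, which for large enough $r$ satisfies the separation hypotheses $a+(3/4)r^2 < \proj_1(s(O_\rechts))$ or $a-(3/4)r^2 > \proj_1(s(O_\links))$ required by Proposition \ref{thm: splitting of hamiltonian}. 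The cases involving $j=2$ are handled via the potential bound $\interleave \scrA \interleave_{C\be^{-c},1} \leq C(r)\be^{-C+cr}$ from item 1 of the same proposition, which already contains the factor $\be^{cr}$.

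The remaining terms involve the high-energy operators $(1-\caP_{\leq M})(H_x)$ appearing in $j=1$ or $j=5$ (and symmetrically in $j'=5$). Here the plan is to exploit the choice $M = \be^{-(1+c(r))/q}$: in the free thermal state one has $\om_0\bigl(\chi(N_x > M)\bigr) \leq C\e^{-\be M^q} = C\e^{-\be^{-c(r)}}$, a superpolynomial suppression that dwarfs any $\be^{-C}$ arising from operator-norm estimates of $H_x$ or of its commutators with the other $V$-pieces. Consequently, every $K$-type component containing such a high-energy factor has $w$-weight far smaller than the required $\be^{-C+cr+c(r)\str A\cup A'\str}$. Once each case has produced a bound in terms of $\norm\cdot\norm_{\om_0}$ (for the $(3,3)$ and boundary cases), the potential norm $\interleave\cdot\interleave$ (for the $j=2$ case), or the operator norm plus the high-energy suppression, I would invoke the definition \eqref{eq: definition of w k} of $w(K)$ in the appendix to translate these into the stated $w(K)$-bound; the volume factor $\be^{c(r)\str A \cup A'\str}$ will collect the $\ka^{\str A\str}$-type growth inherent in passing between these norms and the one controlling $w$.

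For the second part of the lemma, the product $(\tilde I^{(j,j')}_{A,A'})^* \tilde I^{(j'',j''')}_{A'',A'''}$, I would first use the first part of the lemma to write each factor as a sum of $C$ operators of $K$-type, then rely on a multiplicativity property of $w$ under products of operators with overlapping supports, which should be part of the infrastructure in the appendix on observables. The support bound $s(K) \subset A \cup A' \cup A'' \cup A'''$ is again immediate from the supports of the factors, and the weight bound follows by multiplying the individual $w$-bounds, with the factor $\be^{cr}$ appearing once (it is enough that one of the two factors carries it; the other contributes only $\be^{-C}$). The main obstacle I anticipate is the $(3,3)$ case: one needs to verify carefully that the sparsity-based bound \eqref{eq: mother of all commutators}, which is stated in the $\norm\cdot\norm_{\om_0}$-norm and not directly in terms of a local potential, still fits into the framework of $K$-type operators with the correct volume-dependent weight; all the other cases reduce either to standard operator-norm estimates augmented by super-polynomial suppression of high-energy configurations, or to the already established potential-norm control of $\scrA$.
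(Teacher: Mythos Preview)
Your plan for the first part is the paper's: a case analysis on $(j,j')$, with the factor $\be^{cr}$ coming from either \eqref{eq: mother of all commutators}, the boundary estimates \eqref{eq: comm with boundary}--\eqref{eq: comm with boundary 2}, the $\scrA$-bound, or high-energy suppression. Your worry about $(3,3)$ is resolved exactly as the paper does: the commutator is itself low-energy, hence $K$-type with $Y=\lone$ and $w(K)=C^{\str s(O)\str}\norm O\norm_{\om_0}$, and $\str A\cup A'\str=C(r)$ in that case so the prefactor is harmless. One slip in your bookkeeping: you list $(1,3),(5,3),(3,5)$ as boundary cases, but the relevant $V$-operator there is $(1-\caP_{\leq M})(H_x)$, which is unbounded, so \eqref{eq: comm with boundary}--\eqref{eq: comm with boundary 2} give nothing (and $V^{(\rechts,1)}_A$ is not even located at the strip boundary). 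Those cases must go through the high-energy mechanism of your third paragraph; the paper does exactly this, via the split $(1-\caP_{\leq M})(h_x)=\caP_{>2M}(h_x)+E_x$, in which the first summand annihilates the low-energy partner and the bounded remainder $E_x$ (satisfying $\caP_{\leq 2M}(E_x)=E_x$) produces a $K$-type term.

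The second part has a genuine gap. You propose to multiply the $K$-decompositions of the two factors and invoke ``a multiplicativity property of $w$ under products \ldots\ which should be part of the infrastructure in the appendix'': no such property is established there, and it fails in general. The norm $\norm\cdot\norm_{\om_0}$ is Hilbert--Schmidt, so one has only $\norm K_1^*K_2\norm_{\om_0}\le\norm K_1\norm\,\norm K_2\norm_{\om_0}$, which is useless when $K_1$ carries an unbounded $\caP_{>M}(Y)$ factor; and a product of two $K$-type operators with overlapping supports need not itself be $K$-type (consider $\caP_{>M}(Y_1)^*O_2$ with $s(Y_1)\cap s(O_2)\ne\emptyset$). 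The paper does not attempt this route; it redoes the case analysis directly on the expanded product of the two commutators, applying the same splittings and estimates as for the first part.
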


Before proceeding with the proof, let us try to clarify the meaning of this lemma.  Let us choose one term  $\tilde I=\tilde I^{(j,j')}_{A,A'}$ contributing to $J^{(2)}$. Then, the bounds \eqref{eq: products of i} and \eqref{eq: desired bound on k} tell us that, for some operators $K_1,\ldots, K_C$
\beq
\om(\tilde I^*  \tilde I) = \om(K_1)+\ldots +\om(K_C) \leq w(K_1)+\ldots +w(K_C)  \leq C(r) \be^{-C+c r+c(r) \str A \cup A' \str  }
\eeq
where the first inequality follows from Theorem \ref{thm: correlation decay}.      Hence, $\tilde I$ is small in the $\norm \cdot \norm_{\om}$ norm.  This fact is of course a crucial ingredient of the intuition that correlations of the current are small in the thermal state.  Furthermore, the lemma stresses the fact that these operators $K_i$ are of the $K$-type and this takes most of the effort in the proof.   This is important because operators of $K$-type are the operators for which we can prove spatial decorrelation estimates and estimate the $\norm \cdot \norm_{\om}$. The philosophy of estimating $\norm K\norm_{\om}$ in Theorem \ref{thm: correlation decay} consists in essence in relating $\norm K\norm_{\om}$ to $\norm K\norm_{\om_0}$. 

Then, let us give the main (quite simple intuition) why $\norm \tilde I \norm_{\om_0}=\om_0( \tilde I^* \tilde I )$ is small, taking for granted that this can then be translated to the $\norm \cdot \norm_\om$ norm.  Recall that $\tilde I$ is a commutator of the form $\tilde I=[V^{\rechts}, V^{\links}]$ for some operators $V^\rechts, V^\links$. The most simple-minded bound is (by Cauchy-Schwarz)
\beq \label{eq: cs for commutants}
\str \om_0([V^\rechts, V^\links])\str \leq  \str \om_0(V^\rechts V^\links)\str +\str \om_0(V^\links V^\rechts )\str   \leq  2\norm V^\rechts \norm_{\om_0} \norm V^\links \norm_{\om_0},
\eeq
 hence it suffices to show that at least one of the $V$-operators is small in the $\norm \cdot \norm_{\om_0}$ and the other not too big.  Of course, it also suffices if this is true in the $\norm \cdot \norm$-norm since 
\beq
\norm V \norm_{\om_0} \leq \norm V \norm.
\eeq
 Let us now apply this to the problem at hand. The operators $V^{\links, j},V^{\rechts, j}$ are small in $\norm \cdot \norm_{\om_0}$ in the cases $j=1,2,5$, and not too big in all other cases $j=3,4$, meaning that the norm is bounded by $C(r)M^C$.  Hence any commutator involving $j=1,2,5$ is obviously small. That leaves the commutators
 $[V^{\links, 3},V^{\rechts, 3}]$,   $[V^{\links, 3},V^{\rechts, 4}]$  and  $[V^{\links, 4},V^{\rechts, 3}]$. Those commutators cannot be controlled by the simple bound \eqref{eq: cs for commutants}.  Instead, the first of these commutators is small by the bound \eqref{eq: mother of all commutators} in Proposition \ref{thm: splitting of hamiltonian} (This was the main result achieved in the previous sections) and the second and third are small because of \eqref{eq: comm with boundary} and  \eqref {eq: comm with boundary 2} in Proposition \ref{thm: splitting of hamiltonian}. The reason that these bounds apply is that
   $V^{\rechts, 4}$ is situated `far to the right' and $V^{\links, 4}$ `far to the left', because they are terms of the Hamiltonian that are situated at the boundary of the strip $\bbS$.

\begin{proof} 
We consider the cases for $j,j'$ separately and we give the proof of \eqref{eq: i as sum},\eqref{eq: desired bound on k} for some exemplary cases, the others being simplifications of the former.    The bounds on \eqref{eq: products of i} are then also obtained analogously and therefore we skip them entirely. 
\\ [2mm]
\noindent  \textbf{The case $j=3,j'=3$.} This is the most intuitive case.   The bound \eqref{eq: mother of all commutators} in Proposition \ref{thm: splitting of hamiltonian}  gives immediately, with $O:= \tilde I^{(3,3)}_{A,A'}$
\beq
\norm O \norm_{\om_0} \leq C(r) \be^{cr -C}, \qquad  \caP_{\leq 2M}(O)=O
\eeq
and hence $O$ is of $K$-type, and the bound on $w(K)$ follows since  $\str A \cup A' \str = C(r)$. \\[1mm] 

\noindent  In what follows, we let  $h_x$ stand for one of the following three operators $a_x,a^*_x, N^q_x$.  

\noindent  \textbf{The case $j=3,j'=4$.} 
Here, $\tilde I^{(3,4)}_{A,A'}$ is a sum of terms of the form 
\beq
[Z_{\rechts, y},E_{x_1}] \otimes E_{x_2},\qquad    E_{x_i}= \caP_{\leq M}(h_{x_i}), \quad  x_1 \in A, x_2 \not \in A'.
\eeq
Since $x_1$ is necessarily on the left boundary of the strip $\bbS_a$, the
 bound \eqref{eq: comm with boundary} in Proposition \ref{thm: splitting of hamiltonian} yields 
\beq 
\norm O \norm_{\om_0}  \leq C(r) \be^{-C+cr}, \qquad \text{with}\, \,   O:=[Z_{\rechts, y},E_{x_1}]
\eeq
Then, $O'= O \otimes E_{x_2}$ is of $K$-type. By the Cauchy-Schwarz inequality
\beq
 \norm O_1 O_2 \norm^2_{\om_0} =     \om_0(O_1 O_2 O^*_2O^*_1) \leq   \norm O_1 \norm_{\om_0}   \norm  O_2 O^*_2O^*_1 \norm_{\om_0} \leq  \norm O_1 \norm_{\om_0}   \norm  O_2\norm^2 \norm O_1 \norm
\eeq
 applied to $O_1=O', O_2= E_{x_2}$, we then get the desired bound on $w(K)$, because $\str A \cup A' \str = C(r)$.\\ [2mm]

\noindent \textbf{The case $j=2,j'=5$.}
Then  $O :=V^{(\rechts, 2)}= \scrA_{2}(A)$ satisfies $\caP_{\leq M}(O)=O$ and $V^{(\links, 5)}$ is (a sum of) operators of the form $(1-\caP_{\leq M})(h_{x})$ or $(1-\caP_{\leq M})(h_{x_1}h_{x_2})$ for some $x,x_1,x_2$.  Let us first do the simpler case $(1-\caP_{\leq M})(h_{x})$.   If $x \not \in s(O)$ then the commutator vanishes so we assume $x \in s(O)$. 
We split
\beq \label{eq: prop ax}
(1-\caP_{\leq M})(h_x) = \caP_{>2 M}(h_x)  + E_x, \qquad    \caP_{\leq 2 M}(E_x)  =(E_x), \qquad  \norm E_x \norm \leq M^C 
\eeq
Obviously, $O \caP_{>2M}(B) = \caP_{>2M}(B) O=0$  for any $B$ so it suffices to consider the $E_x$-term.   We set 
\beq \label{eq: first definition otilde}
O':= [O, E_x], \qquad   \caP_{\leq 2M} (O')=O',
\eeq
such that $K:=O'$ is of $K$-type, 
and we estimate, using the information on $O$ from Theorem \ref{thm: splitting of hamiltonian} 1), 
\beq \label{eq: linfty linfty}
\norm O' \norm \leq 2 \norm E_x  \norm \norm O \norm \leq  C(r) \be^{-C+cr +c \str  A \str}
\eeq
Since $\norm O' \norm_{\om_0} \leq \norm O' \norm$ and $  \str A \cup A' \str  \geq c \str A \str$, the desired estimate \eqref{eq: desired bound on k} holds. 

Next, let us consider the case $(1-\caP_{\leq M})(h_{x_1}h_{x_2})$ and we again consider $x_1,x_2 \in s(O)$. 
We can split
\begin{align}
(1-\caP_{\leq M})(h_{x_1}h_{x_2}) & =  (1-\caP_{\leq M})(h_{x_1}) (1-\caP_{\leq M})(h_{x_2})  \\
& + (1-\caP_{\leq M})(h_{x_1}) \caP_{\leq M}(h_{x_2})  \\
& +  \caP_{\leq M}(h_{x_1}) (1-\caP_{\leq M})(h_{x_2}) 
\end{align}
and then 
\beq
(1-\caP_{\leq M})(h_{x_i}) = \caP_{>2 M}(h_{x_i})  + E_{x_i}
\eeq
with $E_{x_i}$ the same properties as in \eqref{eq: prop ax}.  Terms with $ \caP_{>2 M}(h_{x_i})$ vanish again such that all non-vanishing terms consist of operators invariant under $\caP_{\leq 2M}$ whose norm is estimated as in \eqref{eq: linfty linfty}, so also in this case we get operators of $K$-type with the desired estimate on $w(K)$.

In the case where $x_1 \in s(O), x_2 \not \in s(O)$, we define $O':= [O, E_{x_1}]$. 
We  split $h_{x_2}= \caP_{> M}(h_{x_2})+(1- \caP_{> M})(h_{x_2})$. Taking the first term, we obtain the operator
\beq
O' \otimes \caP_{> M}(h_{x_2})
\eeq
which is  of $K$-type, and the desired bound on $w(K)$ follows by the bounds on $O'$ above. For the second term, we now set $E_{x_2}:=(1- \caP_{> M})(h_{x_2})$ and we have again $\caP_{\leq 2M}(E_{x_2})=E_{x_2} $ so that we obtain terms of the type
\beq
O'' = O' \otimes E_{x_2}, \qquad \caP_{\leq 2M}(O'')=O''
\eeq
which is of $K$-type, and the bound on $w(K)$ follows by $\norm O'' \norm_{\om_0} \leq \norm O'' \norm_{} \leq \norm O' \norm \norm E_{x_2} \norm \leq  \be^{-C+cr+c\str  A \str}$ and, again $  \str A \cup A' \str  \geq c \str A \str$.  \\ [2mm]
As already indicated above, the other cases follow analogously. 
\end{proof}

Analogously to Lemma \ref{lem: all currents are k}, we have to check
\begin{lemma}\label{lem: terms in oscillating part}
The operators $\tilde O_{\bbS}$ introduced in Section \ref{sec: decomposition of the current} 
can be written as 
\beq
\tilde O_{\bbS}= \sum_{A \subset \bbS} \tilde O_A, \qquad     \tilde O_A= \sum_{i=1}^C K^{i}_{A} 
\eeq
 where each of the operators $K^{i}_{A} $ is of the $K$-type introduced in Section \ref{sec: observables} (Appendix) such that, for any $K=K^{i}_{A} $, we have $s(K) \subset A $  and
\beq  \label{eq: desired bound on k for o}
w(K) \leq C(r) \be^{ -C +c(r) \str s(K) \str }
\eeq
Similarly, for any pair of the operators   $
\tilde O_A, \tilde O_{A'} $ (not necessarily distinct) with $A \cap A' \neq \emptyset $,  the product   $
\tilde O_A \tilde O^*_{A'} $
is a sum of $C$ operators of  $K$-type satisfying \eqref{eq: desired bound on k for o}. 
\end{lemma}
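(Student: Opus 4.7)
The plan is to mimic the bookkeeping in Lemma~\ref{lem: all currents are k}, but for the simpler object $\tilde O_{\bbS} = H_{\bbS}^{(\links)} - \tilde H_{\bbS}^{(\links)} = \sum_{x:\,a-r^2+2\leq x_1\leq a} H_x \;-\; \sum_{y} \tilde Z_{\links,y}$. I would declare the decomposition by assigning each single-site term $H_x$ to the star $A_x:=\{x':\str x-x'\str\leq 1\}$, and each $\tilde Z_{\links,y}$ to its support set $\tilde B_y$, i.e.\ $\tilde O_{A_x}:= H_x$ and $\tilde O_{\tilde B_y}:=-\tilde Z_{\links,y}$, with $\tilde O_A:=0$ for every other $A\subset \bbS$. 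Both families have $\str A\str \le C(r)$, so the factor $c(r)\str s(K)\str$ in \eqref{eq: desired bound on k for o} is harmless and will be absorbed in $C(r)$.

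For $\tilde Z_{\links,y}$ there is nothing to do: Proposition~\ref{thm: splitting of hamiltonian}(2) already gives $\caP_{\leq M}(\tilde Z_{\links,y})=\tilde Z_{\links,y}$, support in $\tilde B_y$, and $\norm \tilde Z_{\links,y}\norm \le C(r)\be^{-C}$. Hence, using $\norm K\norm_{\om_0}\le \norm K\norm$, this term is directly of the bounded $K$-type of Section~\ref{sec: observables} with $w(K)\le C(r)\be^{-C}$. For $H_x$ I would split
\[
H_x \;=\; \caP_{\leq M}(H_x) \;+\; (1-\caP_{\leq M})(H_x),
\]
and then, exactly as in the proof of Lemma~\ref{lem: all currents are k}, further split $(1-\caP_{\leq M})(H_x)=\caP_{>2M}(h)+E_x$ with $\caP_{\leq 2M}(E_x)=E_x$ and $\norm E_x\norm\le M^C$ on each of the three terms ($N_x^q$, $(a_x+a_x^*)^2$, $a_x^*a_{x'}+a_xa_{x'}^*$) appearing in $H_x$. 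The pieces $\caP_{\leq M}(H_x)$ and $E_x$ are of bounded $K$-type (low-energy cutoff and polynomial operator norm in $M=\be^{-(1+c(r))/q}$), and $\caP_{>2M}(H_x)$ is of the ``high-energy'' $K$-type, where $w(K)$ is controlled not by the operator norm but by the strong Gibbs suppression of configurations with some $N_{x'}>2M$, which is summable/stretched-exponentially small in $M$ and hence easily absorbed in $\be^{-C+c(r)\str s(K)\str}$.

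For the product statement, I would expand $\tilde O_A\tilde O_{A'}^*$ using the above splittings of each factor into $O(1)$ $K$-type pieces. The pairings to examine are exactly the analogues of those in Lemma~\ref{lem: all currents are k}: bounded$\times$bounded pieces stay bounded in operator norm, so we just combine $\norm\cdot\norm_{\om_0}\leq \norm\cdot\norm$ with $\norm O_1O_2\norm_{\om_0}\le \norm O_1\norm_{\om_0}^{1/2}\norm O_1\norm^{1/2}\norm O_2\norm$; pieces containing a factor $\caP_{>2M}$ are killed when multiplied by a $\caP_{\leq M}$-invariant piece of disjoint site support and otherwise give a new high-energy $K$-type operator whose $w(K)$ is again controlled by the thermal tail on the intersection of the supports. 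Crucially $s(K)\subset A\cup A'$ throughout, so the exponent $c(r)\str s(K)\str$ is still controlled by $c(r)\str A\cup A'\str \le C(r)$.

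The only nontrivial step is the treatment of the unbounded tails $(1-\caP_{\leq M})(H_x)$ coming from the $N_x^q$ contribution to $H_x$: in operator norm they are not small (in fact infinite), so the argument must really use the Gibbs weight via the $\norm\cdot\norm_{\om_0}$-type component of $w(K)$ on the high-energy $K$-piece. Once this single point is handled exactly as in Lemma~\ref{lem: all currents are k}, the remaining verifications are routine and produce the claimed bound $w(K)\le C(r)\be^{-C+c(r)\str s(K)\str}$ for both the single and product statements.
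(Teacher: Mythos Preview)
Your decomposition and overall strategy match the paper's own sketch (``the same ideas as in Lemma~\ref{lem: all currents are k}, though there are much less terms to consider''), and the treatment of $\tilde Z_{\links,y}$ and of the bounded pieces $\caP_{\leq M}(H_x)$, $E_x$ is correct.

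The point you flag as ``the only nontrivial step'' is, however, a genuine gap. You assert that $\caP_{>2M}(H_x)$ is ``of the high-energy $K$-type'' and can be handled ``exactly as in Lemma~\ref{lem: all currents are k}'', but in the cases actually worked out there (e.g.\ $(j,j')=(2,5)$) the piece $\caP_{>2M}(h_x)$ is \emph{annihilated} by the commutator with a $\caP_{\leq M}$-invariant partner and is never exhibited as a $K$-type operator at all. Here there is no commutator to save you, so $\caP_{>2M}(N_x^q)$ survives in $\tilde O_A$ (and in the products $\tilde O_A\tilde O_{A'}^*$). The $K$-type of Section~\ref{sec: observables} requires the high-energy factor to be $\caP_{>M}(Y)$ with $Y$ a \emph{monomial} in $a,a^*$. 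The quadratic pieces of $H_x$ fit this after expanding, but $N_x^q$ is a polynomial in $a_x,a_x^*$ only when $q\in\bbN$; in that case you should normal-order $N_x^q=\sum_{k=1}^{q}S(q,k)(a_x^*)^k a_x^k$ and apply the $K$-bound to each monomial. For non-integer $q$ you need a small extension of Theorem~\ref{thm: correlation decay} allowing $\caP_{>M}(f(N_x))$ for polynomially growing diagonal $f$ --- this is routine (diagonality simplifies the relevant appendix estimate), but it is not what ``exactly as in Lemma~\ref{lem: all currents are k}'' actually provides.
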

This is proven using the same ideas as in Lemma \ref{lem: all currents are k}, though there are much less terms to consider. Therefore, we skip the proof. 

\begin{proof}[Proof of Theorem \ref{thm: splitting of current}]
We put
\beq
\tilde I_{A}= \sum_{A_1,A_2: A_1 \cup A_2} \tilde I^{(j,j')}_{A_1,A_2}
\eeq
and from Lemma \ref{lem: all currents are k} and Theorem \ref{thm: correlation decay} 1), and noting that the number of terms in the above sum is bounded by $C^{\str A \str}$,  we get that (for $A \cap A' \neq \emptyset$)
\begin{align}
\om(\tilde I_{A}) & \leq    C(r) \be^{-C+c r+c(r) \str A \str  }  \label{eq: good prop i tilde one}  \\[1mm]
\om(\tilde I_{A}^* \tilde I_{A'}) & \leq     C(r) \be^{-C+c r+c(r) \str A \cup A' \str  }     \label{eq: good prop i tilde two}
\end{align}
We now put 
$ I_A: = \tilde I_A - \om(\tilde I_A) $. Since  $\om( J^{(2)})=0$, we get that 
\beq
J^{(2)}= \sum_A  \tilde I_A \qquad \Rightarrow \qquad J^{(2)}= \sum_A  I_A
\eeq
Moreover,  \eqref{eq: good prop i tilde one}, \eqref{eq: good prop i tilde two} are still valid  for  $I_A,I_{A'}$ replacing $\tilde I_A, \tilde I_{A'}$. 
Hence we have shown that the operators $I_A$ have all properties claimed in Theorem \ref{thm: splitting of current} 

Analogously, from Lemma \ref{lem: terms in oscillating part} and Theorem \ref{thm: correlation decay} 1), we get   (for $A \cap A' \neq \emptyset$)
\begin{align}
\om(\tilde O_{A}) & \leq    C(r) \be^{-C+c(r) \str A \str  }  \label{eq: good prop o tilde one}  \\[1mm]
\om(\tilde O_{A}^* \tilde O_{A'}) & \leq     C(r) \be^{-C+c(r) \str A \cup A' \str  }     \label{eq: good prop o tilde two}
\end{align}
we put  $ O_A: = \tilde O_A - \om(\tilde O_A) $, then we have
\beq
[H, \sum_{A} \tilde O_A] =  [H, \sum_{A}O_A]
\eeq
and the bounds \eqref{eq: good prop o tilde one}, \eqref{eq: good prop o tilde two} are still valid for $O_A, O_{A'}$ replacing $\tilde O_A, \tilde O_{A'}$.
\end{proof}

\subsection{Proof of Theorem \ref{thm: vanishing conductivity} }\label{sec: proof of main theorem}
At this point, we reinstate the dependence on the hyperplane position $a$, writing $J_{\bbH_a}^{(j)}$. 
Let us now define
\beq
\caJ^{(j)}_\tau =  \frac{1}{\sqrt{\tau \str \bigvolume \str }} \int^\tau_0  \d t  \sum_{a }   J^{(j)}_{\bbH_a}(t)
\eeq
and recall from \eqref{eq: precise current decomp} that $\caJ_\tau= \caJ^{(1)}_\tau+\caJ^{(2)}_\tau$. 
By Cauchy-Schwarz, 
\beq
\om( \caJ_\tau   \caJ_\tau) \leq 2 \sum_{j=1,2}   \om( \caJ^{(j)}_\tau   \caJ^{(j)}_\tau )
\eeq
And hence we can estimate $j=1,2$ separately. 
\subsubsection{The current $\caJ^{(1)}_\tau$}\label{sec: first current}
For $j=1$, we use, with $O_{\bbS_a}= \tilde O_{\bbS_a}-\om(\tilde O_{\bbS_a})$, 
\beq
\caJ^{(1)}_\tau  =  \frac{1}{\sqrt{\tau \str \bigvolume \str }}    \sum_{ a}   (O_{\bbS_a}(\tau) -  O_{\bbS_a}(0) )
\eeq
and hence, by using again Cauchy-Schwarz and  the invariance of  $\om$ under the dynamics, 
\begin{align}
\om( \caJ^{(1)}_\tau  \caJ^{(1)}_\tau )  &  \leq    \frac{4}{\tau \str \bigvolume \str} \sum_{a,a'}     \om( O_{\bbS_a} O_{\bbS_{a'}} ) \nonumber  \\[2mm]
& =  \frac{4}{\tau \str \bigvolume \str} \sum_{a,a'}     \om( \tilde O_{\bbS_a} ;\tilde O_{\bbS_{a'}} )  \nonumber  \\[2mm]
&=   \frac{4}{\tau \str \bigvolume \str} \sum_{a,a',A,A'}     \om( \tilde O_{A,a} ;\tilde O_{A',{a'}} )  \chi(A \subset \bbS_{a,r^2}) \chi(A' \subset \bbS_{a',r^2}).  \label{eq: decorrelating sum o}
  \end{align}
Using Theorem \ref{thm: correlation decay} 2),  Lemma \ref{lem: terms in oscillating part}, and particular the fact that all our estimates are uniform in the hyperplane position $a$, we see that  $ \om(  \tilde O_{A,a} ;   \tilde O_{A',a'}  )$ decays exponentially in  $\distance(A,A')$. 
and we bound \eqref{eq: decorrelating sum o} by $\be^{-C}/\tau$. 
\subsubsection{The current $\caJ^{(2)}_\tau$} \label{sec: second current}
For $j=2$, we proceed somewhat differently; by Cauchy-Schwarz,
\beq
\om( \caJ^{(2)}_\tau   \caJ^{(2)}_\tau  )  =   \frac{1}{\str \bigvolume \str} \sum_{a,a'} \int^{\tau}_0 \d t  \int^{\tau}_0\d t'     \om(  J^{(2)}_{\bbH_{a}}(t) J^{(2)}_{\bbH_{a'}}(t') ) \leq    \frac{\tau}{\str \bigvolume \str}\sum_{a,a'}     \om(  J^{(2)}_{\bbH_a}(t) J^{(2)}_{\bbH_{a'}}(t)) \label{eq: variance caj two}
\eeq
By  time-translation invariance we can drop $t$ in the argument. 
Since  $\om(J^{(2)}_{\bbH_a})=0$,   the last expression equals a connected correlation function
\beq \label{eq: corr current two}
 \frac{\tau}{\str \bigvolume \str}\sum_{a,a'} \om(  J^{(2)}_{\bbH_a} ;  J^{(2)}_{\bbH_{a'}}  )  =  \frac{\tau}{\str \bigvolume \str}\sum_{a,a'} \sum_{A,A'}    \om(  \tilde I_{A,a} ;   \tilde I_{A',a'}  )  \, \chi(A \subset \bbS_{a, r^2+2} )  \chi(A' \subset \bbS_{a', r^2+2} )
\eeq
Using Theorem \ref{thm: correlation decay} 2),  Lemmas \ref{lem: all currents are k}, and particular the fact that all our estimates are uniform in the hyperplace position $a$, we see that  $ \om(  \tilde I_{A,a} ;   \tilde I_{A',a'}  )$ decays exponentially in $\distance(A,A')$, and in particular, we  bound \eqref{eq: variance caj two}  by
\beq \label{eq:  bound decorrelating sum}
\tau \be^{-C+cr}
\eeq

\subsubsection{Bound on $\ka_{\tau}(\be)$ }

Combining the conclusions from Sections \ref{sec: first current} and \ref{sec: second current}
\beq
\ka_{\tau}(\be) = \be^2 \om( \caJ_\tau   \caJ_\tau) \leq 2\be^2 \sum_{j=1,2}   \om( \caJ^{(j)}_\tau   \caJ^{(j)}_\tau ) \leq     \frac{\be^{-C}}{\tau}+   \tau \be^{-C+cr}
\eeq
Taking now $ \tau = \be^{-m}$, we get Theorem \ref{thm: vanishing conductivity}.

\appendix
 \section{Appendix: Decay of correlations}  

%

In this section, we prove some clustering properties of the high-temperature states in our model.  Recall  the states 
$
\om_{\be, \bigvolume}(\cdot)$ and $ \om_{\be, \bigvolume,0}(\cdot)$ defined in \eqref{def: recall and introduce states}.
In what follows, we again suppress the dependence on $\be,\bigvolume$ since all of our estimates will be uniform in $\bigvolume$ and in $\be$ whenever $\be$ is small enough, hence we simply write $\om(\cdot), \om_0(\cdot)$.    

\subsection{Result} \label{sec: observables}

Recall from Section \ref{sec: energy cutoff} the projection operators $\caP_{\leq M}$ and $\caP_{>M}$ acting on potentials and operators.  Throughout this section, we set
\beq
M=  \be^{-(1+\gamma_c)/q}, \qquad   \textrm{for some}\,\,  0< \gamma_c < q/(q-1)-1
\eeq
We specify two classes of observables. 
The first class consists of low-energy operators $O$, satisfying
\beq
O= \caP_{\leq 2M} (O), \qquad   \text{and} \qquad \str s(O) \str < \infty
\eeq 
The second class of observables is defined starting from monomials $Y$ in creation/annihilation operators
\beq
  Y=    a^{\sharp}_{x_m} \ldots  a^{\sharp}_{x_2} a^{\sharp}_{x_1},
\eeq
for some $x_1,x_2,\ldots, x_m \in \bbZ^d$, $m \in \bbN$ and $a^{\sharp}_x$ either $a^{*}_x$ or $a_x$.  Moreover, we assume the polynomial to be normal-ordered, i.e.\ all $a_x$ appear to the right of $a^*_x$. 
 We let $\degree(Y):=m$, i.e.\  the degree of $Y$ and, for any $x \in s(Y)=\{x_1,\ldots, x_m\}$, we define  $\degree_x(Y)$ as the number of $j \in \{1, \ldots, m \}$ such that $x_j=x$. Then  
 \beq \sum_{x \in s(Y)} \degree_x(Y)=\degree(Y). \eeq  
Given a low-energy observable $O$ and a monomial $Y$  as above, with $s(O) \cap s(Y) =\emptyset$,  we consider
\beq \label{eq: def k}
K= O \otimes  \caP_{>M}(Y) 
\eeq
allowing that $O=\lone$ or $Y=\lone$, corresponding to $s(O)=\emptyset $ and $ \degree(Y)=0$.    We will refer to operators of the form \eqref{eq: def k}  'observables of $K$-type'. 
This class of operators is chosen so that it matches our needs as closely as possible, but it is of course in no sense the maximal one for which a result like the upcoming theorem can be proven.

\bet[Correlation decay at high temperature] \label{thm: correlation decay} Assume that $q>1$  and 
 fix a parameter $\alpha$ such that $0 < 2 \al <1-1/q$.   Let us abbreviate, for monomials $Y$ as above, 
 \beq
v(Y)  :=     \be^{- \degree(Y)/2}  \,    (\e^{-\be^{-\gamma_c/2}})^{\str s(Y) \str  }     \prod_{x \in s(Y)}  \degree_x(Y)! 
\eeq
provided that  $Y \neq \lone$ and $v(Y)=1$ if  $Y =\lone$. 
 There is a $\be_c>0$ such that for $\be<\be_c$, the following hold, for all observables $K,K'$ of $K$-type with $O,Y$ as in \eqref{eq: def k}, 
   \ben
 \item
 \beq \label{eq: definition of w k}
\str \om(K)  \str \leq    C^{\str s(O)\str+ \degree(Y)}  \norm O \norm_{\om_0}   v(Y)  =: w(K)
\eeq
\item 
\beq \label{eq: decorrelation}
\str \om(K; K')  \str \leq  w(K) w(K')   \sum_{x \in s(K),x' \in s(K')}\be^{\al\str x-x'\str}, \qquad \text{for}\,\,  s(K) \cap s(K') =\emptyset
\eeq
\een
The constant $C$ in \eqref{eq: definition of w k} depends only on $\al,\gamma_c$, the exponent $q$ and the spatial dimension $d$. 
\eet
From the estimate in \eqref{eq: decorrelation} and inspecting the range of values for $\al$, we could guess the behaviour of the correlation length  as a function of $q,\beta$
\beq
\xi_{\mathrm{corr.}} \propto \frac{ q}{(q-1)} \str\ln\be\str^{-1}, \qquad \text{for small $\be$}.
\eeq
We see that $\xi_{\mathrm{corr.}}$
diverges as $q \to 1$. This is consistent with the fact that for $q=1$, the system is harmonic and the correlation length  is seen to be independent of $\be$. In contrast,  as the above formula shows, for $q>1$, our upper bound for the correlation length decreases with decreasing $\be$.  

There is an extensive literature on exponential decay of correlations at high temperature, i.e.\ results like Theorem \ref{thm: correlation decay}.  However, we did not find any existing result that fits our needs. This is due to 1) the fact that our one-site space is unbounded and 2) the necessity to have bounds in terms of the Hilbert-Schmidt norm (or some other norm that can capture the sparseness) of the observables, as we have on the right hand side of the inequality \eqref{eq: definition of w k} and hence the right hand side of \eqref{eq: decorrelation}. 
The work \cite{abdesselamprocacciscoppola} addresses the first point, in that it treats unbounded spin systems, and \cite{ciprianipra} gets close to addressing the second point, but we have not found any combination of these results.   In classical spin systems, the approach to decay of correlations via the logarithmic Sobolev inequality or Poincare inequality provides just the type of bounds we need, see e.g.\ \cite{bodineauhelffer}, but, as far as we know, this approach has not been fully adapted to the quantum case yet.

 Therefore, we set up a cluster expansion to prove Theorem \ref{thm: correlation decay}, following to some extent \cite{netocnyredig}.  This is organized as follows. In Section \ref{sec: polymer rep}, we give the general setup which is not specific to our model and which contains some basic results and philosophy from cluster expanions. In Section \ref{sec: bounds on polymer weights}, we prove bounds on so-called polymer weights needed to carry through the cluster expansion. It is this part where we deal with the unboundedness of the on-site Hilbert space, and, more generally, where we need the observables to be of $K$-type. 
  In the short Section \ref{sec: proof of correlation decay}, we combine the bounds of Section \ref{sec: bounds on polymer weights} with the machinery of Section \ref{sec: polymer rep} to give the proof of Theorem \ref{thm: correlation decay}. We should stress that the material in Sections \ref{sec: polymer rep} and \ref{sec: proof of correlation decay} is completely standard, therefore we present proofs in those sections in a compact way.

\subsection{Polymer representations and cluster expansion}\label{sec: polymer rep}
To decompose the Hamiltonian, we will use 'plaquettes' $B$. Each plaquette is defined to consist of a finite set $s(B) \subset \bbZ^d$ and, for each $x \in s(B)$, a pair of variables $(\si_{x,+},\si_{x,-}) \in \bbN \times \bbN $.   To such a plaquette $B$, we associate  the operator
\beq
V_B= \prod_{x \in s(B)}   (a^*_x)^{\si_{x,+}} (a_x)^{\si_{x,-}}
\eeq
Moreover, we restrict ourselves to the case where, $ s(B)= \{x\}$ or $s(B)=\{x,x'\}$ for some $x,x'$ with $\str x-x'\str=1$, and 
$$\sum_{x \in s(B)} (\si_{x,+}+\si_{x,-}) \leq 2. $$
If needed, we indicate that $\si_{x,\pm}$ are associated to a plaquette by writing $ \si_{x,\pm}(B)$.

Then, our  Hamiltonian can be written as
\beq
H_\bigvolume= H^{(0)}_\bigvolume + \sum_{B: s(B) \subset \bigvolume} g(B) V_B
\eeq
where the sum is over plaquettes $B$, and $g(B)$ is a coupling constant satisfying $\str g(B)\str \leq 1$. 
We consider finite collections  $\Ga$ of  pairs $(B,\tau)$ with  $ \tau \in [0,\be] $ and we write them  as ordered sequences
\beq \label{def: biggamma}
\Ga =  ((B_1, \tau_1), \ldots, (B_n, \tau_n)), \qquad \text{with  $n =\str \Ga \str$}
\eeq
such that $(\tau_1,\ldots, \tau_n) $ is in the simplex
\beq
\Delta_n(\be) = \{0 \leq  \tau_1 \leq \tau_2 \ldots \leq \tau_n \leq \be  \}
\eeq
The ambiguity in \eqref{def: biggamma} that occurs when $\tau_j=\tau_{j+1}$ will be irrelevant as we will mostly  integrate $\tau_1,\ldots, \tau_n$ with the Lesbegue measure.
For convenience, we also define the collection of plaquettes appearing in \eqref{def: biggamma};
\beq
\caB(\Ga):=\{ B_j: j=1,\ldots, n  \},
\eeq
and $n_B$ the multiplicity with which a plaquette $B$ appears, such that  
\beq
\sum_{B \in \caB(\Ga)} n_B = n
\eeq

\subsubsection{Polymer representation of the partition function}

For a sequence $\Ga$ as in  \eqref{def: biggamma}, we set, for $\Ga\neq \emptyset$, 
\beq
R(\Ga):= V_{B_n}(\tau_n) \ldots V_{B_2}(\tau_2) V_{B_1}(\tau_1), \qquad \text{with}\,\,  V_B(\tau) = \e^{\tau H_\bigvolume^{(0)}}  V_B \e^{-\tau H_\bigvolume^{(0)} }   \label{eq: formula polymer}
\eeq
and $R(\emptyset):=\lone$. 
Then we can represent the partition function
\beq
Z_\bigvolume=Z_\bigvolume(\be) = \Tr \e^{-\be H_\bigvolume}
\eeq
as a  series;
\beq \label{eq: expansion partition function}
\frac{Z_\bigvolume}{Z_{\bigvolume,0}} =\int_{} \,  \d\Ga \,  \om_0( R(\Ga)) , \qquad 
\eeq
where we used the shorthand
\beq
\int_{} \,  \d\Ga \, \ldots \; =\;  \sum_{n \geq 0}   \,\,  \sum_{\substack{B_1, \ldots, B_n \\   s(B_j) \subset \bigvolume   }}   \,\,    \mathop{ \int}\limits_{\Delta_{n}(\be)} \d \tau_1 \ldots \d \tau_n \quad \ldots
\eeq
and it is understood that for  $n=0$, the sums/integrals are absent. For example, the right hand side of \eqref{eq: expansion partition function} starts with the term $\om_0(R(\emptyset))=\omega_0(\lone)=1$.  
Formally, the identity \eqref{eq: expansion partition function} follows readily by the Duhamel expansion. To establish this rigorously, one first checks that  the series on the right hand side is absolutely convergent, uniformly for $g(B) \in \{ z \in \bbC: \, \str z \str \leq 1 \} $. This is not explicitly proven here but one can easily deduce it from the bounds derived in Section \ref{sec: bounds on polymer weights}.  Therefore, the right hand side of   \eqref{eq: expansion partition function} is the Taylor series of an analytic function in $g(B)$. By explicit calculation, one checks that it coincides with the Taylor series of the left hand side.

For two finite sets $S,S' \subset \bbZ^d$, we define the adjacency relation
\beq
S \sim S' \quad \Leftrightarrow   \quad S \cap S' \neq \emptyset
\eeq
and we call a collection $\caS$ of sets $S$  connected if the collection is connected by the adjacency relation $\sim$.   A connected collection will below also be called a cluster.
We say that $\Ga$ is connected iff.\ the collection $ \caS(\Ga):= \{s(B): \, B \in \caB(\Ga)\}$ is connected. 
If $\Ga$ is not connected, then we can decompose $\caS(\Ga)$ in a unique way into maximally connected components, and this induces a decomposition  $\Ga_1, \ldots, \Ga_m$, such that
\beq
\Ga= \Ga_1 \cup \ldots  \cup \Ga_m,
\eeq
We then obtain the factorization
\beq \label{eq: factorization}
\om_0( R(\Ga))= \prod_{j=1}^m  \om_0( R(\Ga_j))
\eeq
because $\om_0$ is a product state, i.e.\ $\om_0(O_1O_2)= \om_0(O_1)\om_0(O_2)$ whenever $s(O_1) \cap s(O_2) = \emptyset$. 
It is now advantageous to reorganize the expansion \eqref{eq: expansion partition function} by collecting the contributions of connected $\Ga$ corresponding to  the same domain  $s(\Ga):=\cup_{B \in \caB(\Ga)} s(B)$. 

 To that end, we define, for a finite, nonempty $S$,  
\begin{align}
\varrho(S)   &  := \mathop{\int}\limits_{\substack{s(\Ga)=S \\ \Ga \,  \textrm{connected}}}  \d\Ga \,      \omega_{0} (R(\Ga))
\end{align}
Let us denote by $\frB_{\bigvolume}$ the set of all finite collections $\caS$ of sets $S \subset \bigvolume$ and we call such a collection $\caS \in \frB_{\bigvolume}$ admissible iff., for any two different $S,S' \in \caS$, $S \nsim S'$. 
Then our polymer representation for the partition function reads
\beq \label{eq: expansion partition function again}
\frac{Z_\bigvolume}{Z_{\bigvolume,0}} = \sum_{\substack{\caS \in \frB_{\bigvolume} \\  \caS\,  \textrm{admissible}} }  \prod_{S \in \caS}  \varrho(S)  
\eeq
where the term with $\caS =\emptyset$ is defined to be $1$.
To check \eqref{eq: expansion partition function again}, one relies on \eqref{eq: factorization} and a similar factorization property for the sums/integrals abbreviated by $\int \d \Ga$.

\subsubsection{Abstract cluster expansion}\label{sec: abstract cluster}
In this section, it is convenient to take an abstract point of view.   Consider complex weights  $\varpi(S) \in \bbC$ for finite sets $S \subset \bbZ^d$. 
Define
\beq
\Upsilon_{\bigvolume} : =  \sum_{\substack{\caS \in \frB_{\bigvolume} \\  \caS\,  \textrm{admissible}} }  \prod_{S \in \caS}  \varpi(S)  \eeq
For a collection $\caS$, we introduce `truncated weights'
 \beq \label{def: truncated weights}
 \varpi^T(\caS)   = \sum_{\scrG \in \frG^c(\caS) }  (-1)^{\str  \scrE(\scrG) \str}    \prod_{\{S, S'\} \in  \scrE(\scrG)} 1_{[S \sim S']}  \prod_{S'' \in \caS} \varpi(S'')      
 \eeq
where the sum runs over $\frG^c(\caS)$, the set of connected graphs with vertex set $\caS$,
 $\scrE(\scrG) $ is the edge set of the graph $\scrG$ (there are no self-edges), and the first product runs over the edge set  $\scrE(\scrG) $. 
Note that if $\caS$ is not a cluster, then $\varpi^T(\caS)=0$. 

Next, we state the basic result of cluster expansions, cfr.\ (eq. 4) in \cite{ueltschi}.

\begin{theorem}   \label{thm: basic cluster expansion result}
Assume there is  ${a} >0$  such that, for any $x$,
\beq
\sum_{S \subset \bigvolume: S \sim \{x\}}   \e^{{a} \str S \str}  \str \varpi(S)\str   \leq {a}.     \label{eqkotecky preiss abstract}
\eeq
Then $\Upsilon_\bigvolume \neq 0$, 
 \beq
\log \Upsilon_\bigvolume  =  \sum_{\caS \in \frB_\bigvolume}    \varpi^T(\caS),
 \eeq
and, for any $x$,
\beq
 \sum_{\caS \in \frB_\bigvolume:   \caS \sim \{x\}  }    \left \str \varpi^T(\caS) \right\str    \leq    {a}   \label{eqbound on clusters touching something}
\eeq
where the condition $\caS \sim S'$ means that there is a $S \in \caS$ such that $S \sim S'$.
\end{theorem}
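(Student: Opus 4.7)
The plan is to prove this standard Kotecký-Preiss-style cluster expansion result by following the route laid out in \cite{ueltschi}. I would organize the proof into three steps, bypassing most detailed combinatorics since the identities are well known.

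First I would make the formal identity $\log \Upsilon_\bigvolume = \sum_{\caS \in \frB_\bigvolume} \varpi^T(\caS)$ precise at the level of formal power series. Treating $\Upsilon_\bigvolume$ as the grand partition function of a hard-core polymer gas with activities $\varpi(S)$, I expand
\[
\Upsilon_\bigvolume \; = \; \sum_{\caS \in \frB_\bigvolume} \prod_{S \in \caS} \varpi(S) \prod_{\{S,S'\} \subset \caS} \chi(S \nsim S')
\]
and apply the standard Mayer trick $\chi(S \nsim S') = 1 - \chi(S \sim S')$, expanding each factor. The resulting sum is indexed by pairs $(\caS, \scrG)$ with $\scrG$ an arbitrary graph on $\caS$, and the usual linked-cluster argument (exponential-of-connected) shows that the formal logarithm collects exactly the connected graphs on $\caS$ with the signs and indicator factors of \eqref{def: truncated weights}. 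Nothing deeper is needed at this step.

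The heart of the argument is the inductive bound
\[
\sum_{\caS : \caS \ni S_0} |\varpi^T(\caS)| \; \leq \; |\varpi(S_0)| \, e^{a|S_0|}, \qquad \text{for every polymer } S_0 \subset \bigvolume.
\]
My plan is to prove this by induction on $|\caS|$ using the Penrose tree-graph identity, which rewrites the alternating sum over connected graphs on $\caS$ as a nonnegative sum over spanning trees of $\caS$ weighted by $\prod_{\{S,S'\} \in \scrE(\scrT)} \chi(S \sim S')$. Rooting each spanning tree at $S_0$ and summing the children independently turns the bound into a product structure: the contribution of each child $S_1 \sim S_0$ factorizes into (i) the weight $|\varpi(S_0)|$ at the root, (ii) a factor $\sum_{S_1 \sim S_0} |\varpi(S_1)| e^{a|S_1|}$ coming from the induction hypothesis applied to the subtree rooted at $S_1$, and (iii) combinatorial factorials that are absorbed by the exponential series. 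Invoking \eqref{eqkotecky preiss abstract} at each node gives a geometric factor of at most $a^k / k!$ for $k$ children at the root, and summing over $k$ produces the factor $e^{a|S_0|}$, closing the induction.

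Given the above pointwise bound, summing over $S_0 \sim \{x\}$ and using \eqref{eqkotecky preiss abstract} once more yields \eqref{eqbound on clusters touching something}. Absolute convergence of $\sum_\caS \varpi^T(\caS)$ in turn legitimizes the formal identity of Step~1. To conclude $\Upsilon_\bigvolume \neq 0$, I would introduce the one-parameter family of weights $t \varpi(S)$ for $t \in [0,1]$: the partition function $\Upsilon_\bigvolume(t)$ is a polynomial in $t$ with $\Upsilon_\bigvolume(0)=1$, and the same bounds show that $\log \Upsilon_\bigvolume(t)$ is given by an absolutely convergent series on $[0,1]$, so $\Upsilon_\bigvolume(t)$ cannot vanish on the interval, and in particular not at $t=1$. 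The main obstacle is exclusively the combinatorial step: handling the cancellations among $(-1)^{|\scrE(\scrG)|}$ cleanly, which is precisely what Penrose's identity is designed to circumvent; everything else is bookkeeping.
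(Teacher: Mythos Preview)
The paper does not prove this theorem at all: it is stated without proof, with only the pointer ``cfr.\ (eq.~4) in \cite{ueltschi}'' immediately preceding the statement. Your proposal is a correct sketch of the standard Koteck\'y--Preiss/Ueltschi argument from that reference, so you are simply supplying what the paper chose to cite rather than reprove; there is nothing to compare.
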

In what follows, we use the notation $\varrho^T(\cdot)$, defined from weights $\varrho(\cdot)$, as in the abstract case above.  

\subsubsection{Expansion for observables and correlations} \label{sec: expansion observables correlations}

We have already defined the notion of connectedness for sequences $\Ga$ as connectedness for the collection $\caS(\Ga)$. Given a nonempty set $A$, we say that $\Ga$ is $A$-connected
if the collection $\caS(\Ga) \cup_j \{A_j\}$ is connected, with $A_j$ the connected components of $A$. 

Consider an operator $K$ with $\str s(K) \str <\infty$. 
 For a finite set $S \subset s(K)^c$, we define formally (because we do not address here the convergence of the series on the right hand side)
\beq \label{def: varrho k}
\varrho_{K}(S) :=    \mathop{\int}\limits_{\substack{ \Ga  \text{ $s(K)$-connected} \\ s(\Ga) \cap s(K)^c =S }}    \d \Ga \,    \om_0(R(\Ga) K)  \,  
\eeq
The contribution to the right hand side from $\Ga=\emptyset$ is  $\om_0(K)$ whenever $s(K)$ is connected, and $0$ whenever $s(K)$ is not connected.  
Note that for $S=\emptyset$, the constraint in \eqref{def: varrho k} reads simply $s(\Ga) \subset s(K)$ whenever $s(K)$ is connected,  and then  $\varrho_{K}(\emptyset) $ does in general not vanish, whereas  $\varrho_{K}(\emptyset)=0$ whenever $s(K)$ is not connected. Note also that, if $\varrho_{K}(S)\neq 0$ and $S\neq \emptyset$, then  $S$ has distance $1$ to any of the connected components of $s(K)$. 
  Let us for the time being, until the end of Section \ref{sec: expansion observables correlations}, assume that $s(K), s(K')$ are connected. 
By mimicking the steps leading to \eqref{eq: expansion partition function again}, we then obtain the following polymer representation for $\om(K) $
\begin{align}
\om(K)  & = \frac{Z_0}{Z} \sum_{S_0 \subset \bigvolume}  \varrho_{K}(S_0)   \sum_{\substack{\caS \in \frB_{\bigvolume \setminus (S_0 \cup s(K)) } \\  \caS\,  \textrm{admissible}} }  \prod_{S \in \caS}  \varrho(S) \\[3mm]
& = \frac{Z_0}{Z} \sum_{S_0 \subset \bigvolume}  \varrho_{K}(S_0)   \sum_{\substack{\caS \in \frB_{\bigvolume }  \\  \caS\,  \textrm{admissible}}}   \prod_{S \in \caS}  \varrho(S)  \chi(S \nsim (S_0 \cup s(K)))   \label{eq: modified sum over admissible cas} \end{align}

Let us now assume that the  criterion \eqref{eqkotecky preiss abstract} of Theorem \ref{thm: basic cluster expansion result} is satisfied for some ${a}$, then we can apply Theorem \ref{thm: basic cluster expansion result}  both  to the quotient of partition functions in \eqref{eq: expansion partition function again} and to each term in the $S_0$-sum in \eqref{eq: modified sum over admissible cas} to obtain
\begin{align}
\log \frac{Z}{Z_0} &= \mathop{\sum}\limits_{\caS \in \frB_\bigvolume}  \varrho^T(\caS) \\[3mm] 
\log \sum_{\substack{\caS \in \frB_{\bigvolume }  \\  \caS\,  \textrm{admissible}}}   \prod_{S \in \caS}  \varrho(S)  \chi(S \nsim (S_0 \cup s(K)))  &=  \mathop{\sum}\limits_{\caS \in \frB_\bigvolume} \chi[\caS \nsim (S_0 \cup s(K))]  \varrho^T(\caS)
\end{align}
Therefore, we can write. 
\begin{align}
\om(K)  & = \sum_{S_0}  \varrho_{K}(S_0) \,   \e^{ -\mathop{\sum}\limits_{\caS \in \frB_\bigvolume}  \varrho^T(\caS) }\e^{ \mathop{\sum}\limits_{\caS \in \frB_\bigvolume} \chi[\caS \nsim (S_0 \cup s(K))]  \varrho^T(\caS) } \nonumber \\
& = \sum_{S }  \varrho_{K}(S) \,  \e^{- f (S \cup s(K) )}  \label{eq: final expression omega k}
\end{align}
where it is understood (also below) that $S,S_0$ range over subsets of $\bigvolume$  and we used the  shorthand (up to now only with $m=1$)
\beq
f(A_1,A_2, \ldots, A_m) :=  \mathop{\sum}\limits_{\caS \in \frB_\bigvolume} \chi(\caS \sim A_1,\caS \sim A_2, \ldots, \caS \sim A_m)  \varrho^T(\caS)
\eeq

Take now $K,K'$ such that  $\distance(s(K), s(K')) > 1$ and both $s(K), s(K')$ are connected.  Then $s(KK')= s(K) \cup s(K')$ is not connected.  Mimicking again all the above steps,  and using the definition of $\varrho_{KK'}(\cdot)$, 
we can then derive 
\begin{align}
\om(KK')  & =    \sum_{\substack{S,S'   \\   (S \cup s(K))  \nsim  (S' \cup s(K'))  }}   \varrho_{K}(S)\varrho_{K}(S') \,   \e^{- f (S \cup S' \cup s(KK')) }  +   \sum_{S}  \varrho_{K K'}(S)  \,   \e^{- f (S \cup s(K K')) } 
\end{align}
such that, after some algebra involving in particular the identity 
\beq
f(A_1 \cup A_2)= f(A_1)+ f(A_2)- f(A_1,A_2) 
\eeq
we obtain
\begin{align}
\om(K;K')  & =  \sum_{S,S'}  \varrho_{K}(S)\varrho_{K'}(S') \e^{- f (S \cup S' \cup s(KK')) }  \,  \left(  \e^{  f (S \cup s(K), S' \cup s(K')) } - 1\right) \nonumber \\[2mm]
 & -     \sum_{\substack{S,S'   \\   (S \cup s(K))  \sim  (S' \cup s(K'))  }}       \varrho_{K}(S)\varrho_{K'}(S') \,   \e^{- f (S \cup S' \cup s(KK')) }   \nonumber  \\[2mm]
 &+ \sum_{S}  \varrho_{K K'}(S)  \,   \e^{- f (S \cup s(K K') )  }     \label{eq: final expression omega two k}
\end{align}
where we used the shorthand $\om(K;K')=\om(KK')-\om(K)\om(K') $. 
This formula can be used to exhibit some decay of the correlation $\om(K;K')$ in $\distance(s(K),s(K'))$, as we explain now. 
\begin{lemma}\label{eq: cluster decay}
Assume that the criterion \eqref{eqkotecky preiss abstract} is satisfied for the weights $\varrho_{\theta}(S):=  \theta^{-\str S \str}  \varrho(S)$ for some $a>0$ and $0<\theta<1$. Then 
\beq \label{eq: bounds on f one and two}
\str f(A) \str\leq a \theta \str A \str, \qquad   \str f(A,A') \str \leq a \sum_{x \in A, x' \in A'}\theta^{\str x-x'\str}
\eeq
Let $K,K'$ be observables such that $s(K),s(K')$ is connected, but $\distance(s(K),s(K'))>1$.  For an observable $\tilde K$, let $\conn(\tilde K)$ be the number of connected components of the set $s(\tilde K)$ and let
\beq
b_{\tilde\theta}(\tilde K):=\str \varrho_{\tilde K}(\emptyset)\str + \sum_{S: S \neq \emptyset}   \tilde\theta^{-(\str S \str+ \conn(\tilde K))} \, \str S \str \str \varrho_{\tilde K}(S)\str, \qquad 0 < \tilde \theta<1.
\eeq
Then 
\begin{align}  
\str \om(K) \str  &\leq     \theta_1^{-\str s(K) \str} b_{ \theta_1}(K).  \label{eq: stuff with bk one}  \\[2mm]
\str \om(K;K') \str  &\leq   C(1+a)  \theta_1^{-\str s(K) \str-\str s(K')\str}  (b_{\theta \theta_1}(K)b_{\theta \theta_1}(K') + b_{\theta \theta_1}(KK')) \,  \sum_{x \in s(K), x' \in s(K')} \theta^{\str x -x' \str}.   \label{eq: stuff with bk}
\end{align}
with $\theta_1= \e^{-2 a \theta}$ and with the constant $C$ independent of $\theta, a$. 
\end{lemma}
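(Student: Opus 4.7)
The starting point is to observe that the renormalized truncated weights satisfy the identity $\varrho_\theta^T(\caS) = \theta^{-\sum_{S\in\caS}|S|}\varrho^T(\caS)$, because the graph expansion \eqref{def: truncated weights} depends only linearly on each polymer weight. By hypothesis, Theorem \ref{thm: basic cluster expansion result} applies to $\varrho_\theta$, giving $\sum_{\caS\sim\{x\}}|\varrho_\theta^T(\caS)|\le a$. To get the first bound in \eqref{eq: bounds on f one and two}, I would estimate $|f(A)|\le\sum_{x\in A}\sum_{\caS\sim\{x\}}|\varrho^T(\caS)|\le\sum_{x\in A}\theta\sum_{\caS\sim\{x\}}|\varrho_\theta^T(\caS)|\le a\theta|A|$, using that any nonempty cluster has $\sum_{S\in\caS}|S|\ge 1$. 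For the second bound, if $\caS\sim A$ and $\caS\sim A'$ then the union $\bigcup_{S\in\caS}S$ must contain sites $x\in A$, $x'\in A'$ and be connected under the nearest-neighbour adjacency induced by the plaquette structure; hence $\sum_{S\in\caS}|S|\ge|x-x'|$. Pulling out the factor $\theta^{|x-x'|}$ and summing over pairs $(x,x')$ then yields $|f(A,A')|\le a\sum_{x\in A, x'\in A'}\theta^{|x-x'|}$.

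For the one-point estimate \eqref{eq: stuff with bk one}, I start from \eqref{eq: final expression omega k}, bound $|e^{-f(S\cup s(K))}|\le e^{a\theta(|S|+|s(K)|)}=\theta_1^{-(|S|+|s(K)|)/2}$ (with $\theta_1=e^{-2a\theta}$), split off $S=\emptyset$, and recognize the resulting sum as controlled by $b_{\theta_1}(K)$ (using $|S|\ge 1$ for the nonempty terms and $\mathrm{conn}(K)=1$ since $s(K)$ is assumed connected). The extra factor $|S|$ inside $b_{\theta_1}$ is harmless in this step but will be essential in the two-point estimate.

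For \eqref{eq: stuff with bk} I process the three terms in \eqref{eq: final expression omega two k} separately. In the first term, I apply $|e^{f(S\cup s(K),S'\cup s(K'))}-1|\le|f(S\cup s(K),S'\cup s(K'))|\,e^{|f|}$, bound the remaining exponential by $e^{a\theta(|S|+|S'|+|s(K)|+|s(K')|)}$, and invoke the second inequality of \eqref{eq: bounds on f one and two}; the resulting factor $\sum_{x\in S\cup s(K), x'\in S'\cup s(K')}\theta^{|x-x'|}$ is then split into the four pieces $s(K)$--$s(K')$, $s(K)$--$S'$, $S$--$s(K')$, $S$--$S'$. For the first piece I peel off the target $\sum_{x\in s(K),x'\in s(K')}\theta^{|x-x'|}$; for the three remaining pieces I use the triangle inequality $\theta^{|x-x'|}\le\theta^{d(s(K),s(K'))}\theta^{\mathrm{dist}(x,s(K))+\mathrm{dist}(x',s(K'))}$ together with the elementary fact that $\mathrm{dist}(x,s(K))\le|S|$ whenever $x\in S$ and $S$ contributes to a non-vanishing $\varrho_K(S)$, absorbing these into the factors $b_{\theta\theta_1}(K)$ and $b_{\theta\theta_1}(K')$ (this is where the extra $|S|$ weight in the definition of $b$ is used, via a geometric-series estimate). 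The second term is bounded similarly after noting that the constraint $(S\cup s(K))\sim(S'\cup s(K'))$ together with $\mathrm{dist}(s(K),s(K'))>1$ forces a path of length at least $\mathrm{dist}(s(K),s(K'))$ through $S\cup S'$, again producing the required $\theta^{|x-x'|}$ decay. The third term is handled by the definition of $\varrho_{KK'}(S)$: the $s(KK')$-connectedness requirement imposed on $\Gamma$, combined with $\mathrm{conn}(KK')=2$, forces $s(\Gamma)\cup s(K)\cup s(K')$ to be connected, so either $S$ itself bridges $s(K)$ and $s(K')$ or $s(\Gamma)\cap s(K)$ and $s(\Gamma)\cap s(K')$ both contribute to the bridging — in either case we extract a factor $\theta^{|x-x'|}$ and recognize the remainder as $b_{\theta\theta_1}(KK')$, the exponent of $\mathrm{conn}$ in the definition of $b$ precisely accounting for the fact that $s(KK')$ has two components.

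The combinatorial bookkeeping in the first term is the part I expect to be the main obstacle: one has to distribute the four-piece splitting of $\sum_{x\in S\cup s(K),x'\in S'\cup s(K')}\theta^{|x-x'|}$ in a way that (i) still produces the clean factorized bound $b_{\theta\theta_1}(K)b_{\theta\theta_1}(K')\sum_{x\in s(K),x'\in s(K')}\theta^{|x-x'|}$, and (ii) survives the exponential loss $e^{a\theta(|S|+|S'|)}$. The precise choice of the weakened parameter $\theta\theta_1$ in $b$ (rather than $\theta_1$ or $\theta$ alone) is dictated by this trade-off: one factor of $\theta$ is spent on the geometric series estimating $\mathrm{dist}(x,s(K))\le|S|$, and the factor of $\theta_1$ absorbs the exponential $e^{a\theta|S|}$ coming from $e^{-f}$.
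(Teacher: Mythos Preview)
Your proposal is correct and follows essentially the same route as the paper's proof: both start from the identity $\varrho_\theta^T(\caS)=\theta^{-\sum_{S\in\caS}|S|}\varrho^T(\caS)$, derive \eqref{eq: bounds on f one and two} by extracting the geometric factor $\theta^{\sum|S|}$ and invoking \eqref{eqbound on clusters touching something}, and then estimate the three terms of \eqref{eq: final expression omega two k} separately, using $|e^z-1|\le|z|e^{|z|}$ for the first one and splitting $\sum_{x\in S\cup s(K),\,x'\in S'\cup s(K')}\theta^{|x-x'|}$ into the four subsums exactly as you describe. The paper in fact only writes out the third of the four subsums (your $s(K)$--$S'$ piece), using the triangle inequality $|S'|+1+|x-x'|\ge\distance(x,s(K'))$, and dismisses the rest, together with the second and third lines of \eqref{eq: final expression omega two k}, as analogous; your account of those remaining pieces and of why the parameter $\theta\theta_1$ appears is more explicit than what the paper provides.
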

\begin{proof}
Note first, by inspection of \eqref{def: truncated weights}, that 
\beq
\varrho^T_{\theta}(\caS) =    \theta^{-\sum_{S \in \caS} \str S \str} \varrho^T(\caS).
\eeq
 Therefore the estimate
\begin{align}
\sum_{\caS}\chi(\caS \sim \{x\},\caS \sim \{x'\})  \str \varrho^T(\caS)\str & \leq   \theta^{\str x-x'\str} \sum_{\caS}  \chi(\caS \sim \{x\})  \str \varrho^T_\theta(\caS)\str  \leq a   \theta^{\str x-x'\str}. 
\end{align}
follows from the simple observation  $\inf_{\caS: \caS \sim \{x\},\caS \sim \{x'\}}   \theta^{\sum_{S \in \caS} \str S \str} \leq    \theta^{\str x-x'\str}  $ and Theorem \ref{thm: basic cluster expansion result} applied with $\varpi=\varrho_\theta$. Summing over $x \in A,x' \in A'$, this yields the second claim in  \eqref{eq: bounds on f one and two}, whereas the first one follows more directly from Theorem \ref{thm: basic cluster expansion result}. 

With the estimates   \eqref{eq: bounds on f one and two} in hand, the proof of \eqref{eq: stuff with bk} is a lengthy but straightforward calculation starting from \eqref{eq: final expression omega two k}.  The stated bound is proven for the three terms on the right hand side of \eqref{eq: final expression omega two k}. Let us do the first term which is the most complicated one. 
Using the bounds $\str \e^{z}-1 \str \leq \str z \str \e^{\str z \str} $ for $z \in \bbC$, and the bounds in \eqref{eq: bounds on f one and two}, we get 
\begin{align}
 & \Big \str \sum_{S,S'}  \varrho_{K}(S)\varrho_{K}(S') \e^{- f (S \cup S' \cup s(KK')) }  \,  \left(  \e^{  f (S \cup s(K), S' \cup s(K')) } - 1\right) \Big \str\nonumber  \\[2mm]
 & \leq      a  \sum_{S,S'}  \str \varrho_{K}(S) \str \str \varrho_{K}(S')\str \,   \e^{2a \theta (\str S \str+ \str S' \str + \str s(K)\str+\str s(K')\str )} \,\,  \times \nonumber  \\[2mm]
& \qquad   \Big( \sum_{x \in S, x' \in S'} + \sum_{x \in S, x' \in s(K')} +\sum_{x \in s(K), x' \in S'} + \sum_{x \in s(K), x' \in s(K')}  \Big) \Big( \theta^{\str x-x'\str}\Big)  \label{eq: split in four}
\end{align}
We again split this into four terms corrsponding to the four sums in the last expression. The fourth sum gives, upon summing $S,S'$
\beq
a\e^{2a \theta ( s(K)+s(K')} b_{\theta_1}(K) b_{\theta_1}(K')    \sum_{x \in s(K), x' \in s(K')}  \theta^{\str x-x'\str}.
\eeq
The third sum gives, upon summing $S$,  
\begin{align}
 & a\e^{2a \theta ( s(K)+s(K'))} b_{\theta_1}(K)   \sum_{S'} \sum_{x \in s(K), x' \in S'}  \theta^{\str x-x'\str}   \str \varrho_{K}(S')\str \, \theta^{(\str S' \str+1)}   (\theta\theta_1)^{ -(\str S' \str+1) } \\[2mm]
 &  \leq   a\e^{2a \theta ( s(K)+s(K'))} b_{\theta_1}(K)  b_{\theta_1\theta}(K')   \sum_{x \in s(K)}  \theta^{\distance(x,s(K'))}    \end{align}
 where we used the triangle inequality $\str S' \str+1 + \str x-x'\str \geq \distance(x,s(K'))$ to get the last inequality.  The first and second sums in \eqref{eq: split in four} are similar. Hence, since $b_{\theta'}(K) \leq b_{\theta''}(K)$ for $\theta' \geq \theta''$, we have obtained the desired bound on \eqref{eq: split in four}, namely \eqref{eq: stuff with bk}. 
 In the two remaining terms of \eqref{eq: final expression omega two k}, we always estimate $\e^{f(A)}$ by $\e^{a \theta \str A \str}$. The smallness comes then from the constraint on $S,S'$.  The bound on \eqref{eq: stuff with bk one}  is obtained analogously, but simpler. 

\end{proof}

\subsection{Bounds on polymer weights} \label{sec: bounds on polymer weights}

The following lemma contains estimates on the weights $\varrho$, from which Theorem \ref{thm: correlation decay} will easily follow.  Throughout this section, we assume that $\be$ is taken small enough and we not repeat this at every step.

\begin{lemma} \label{lem: bound varrho} 
Fix a parameter $\al=\al(q)$ satisfying $0<2\al<1-1/q$.
Recall the weights $\varrho(S), \varrho_K(S)$ from Section \ref{sec: polymer rep}. Then 
\begin{enumerate}
\item   \beq \str \varrho(S) \str \leq   (C\be)^{\al\str S \str} \eeq
\item   Consider an observable $K= O \otimes \caP_{\geq M}(Y)$ `of $K$-type', as defined in Section \ref{sec: observables}.  Then,   
 \beq \str \varrho_K(S) \str \leq w(K)\times \begin{cases} (C\be)^{ \al (\str S \str+\conn(K))}  & S \neq \emptyset \\ 1 & S=\emptyset  \end{cases}, \eeq
 with $w(K)$ as defined in Theorem \ref{thm: correlation decay} and $\conn(K)$ the number of connected components of $s(K)$.  
\end{enumerate}
 uniformly in $\bigvolume$, provided that $S, s(K) \subset \bigvolume$.  
\end{lemma}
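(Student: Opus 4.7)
The plan is to expand $\varrho(S)$ and $\varrho_K(S)$ explicitly using the definition of the time-ordered integral $\int d\Gamma$, estimate $|\omega_0(R(\Gamma))|$ and $|\omega_0(R(\Gamma) K)|$ pointwise, then perform the combinatorial sums. The factorization $\omega_0 = \bigotimes_x \omega_{0,x}$ will be the workhorse: on each site, $\omega_{0,x}$ is diagonal in the number basis with density $\propto e^{-\beta n^q}$, so moments satisfy the Gaussian-type estimate $\omega_{0,x}(N_x^k) \leq C^k (k!)\, \beta^{-k/q}$ and more generally $\omega_{0,x}((a_x^*)^{k_+}(a_x)^{k_-})$ vanishes unless $k_+ = k_-$ and is otherwise bounded by $C^{k_+ + k_-}(k_+!)(k_-!)\, \beta^{-(k_++k_-)/q}$. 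Each plaquette operator $V_B$ has degree at most $2$ in creation/annihilation operators, so inserting it in $\omega_0$ contributes at most $\beta^{-2/q}$, while the time integration over a single $\tau_j$ contributes a factor $\beta$.

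For part \textbf{1}, I would write $\varrho(S) = \sum_n \sum_{B_1,\ldots,B_n} \int_{\Delta_n(\beta)} d\tau \, \omega_0(R(\Gamma))$ restricted to connected $\Gamma$ with $s(\Gamma) = S$. Connectivity forces $n \geq |S|/2$ (plaquettes have $|s(B)|\leq 2$). Bounding $\omega_0(R(\Gamma))$ by iterating the single-site moment estimate above through the product structure of $\omega_0$ and the monomial structure of $R(\Gamma)$ yields $|\omega_0(R(\Gamma))| \leq C^n \beta^{-2n/q}$; combined with the $\beta^n/n!$ from $\int_{\Delta_n(\beta)}$ and at most $C^n$ choices of sequence of plaquettes with fixed $s(\Gamma) = S$ (bounded independently of $S$ up to factors $C^{|S|}$), this yields $|\varrho(S)| \leq \sum_{n \geq |S|/2} C^{n+|S|} \beta^{n(1-2/q)}/n!$. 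For small $\beta$ the series is dominated by its first term and the bound $(C\beta)^{\alpha |S|}$ follows provided $2\alpha < 1-1/q$ — the slight loss from $1-2/q$ to $1-1/q$ can be absorbed by noting the first nonvanishing term requires pairings of creation/annihilation operators, so effectively one gains an extra $\beta^{c n}$ in the non-resonant case; alternatively one passes to the normal-ordered expansion and uses $2\alpha<1-1/q$ with some slack.

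For part \textbf{2}, the extra operator $K = O \otimes \mathcal{P}_{>M}(Y)$ is inserted into $\omega_0$: by Cauchy–Schwarz on $\omega_0$, $|\omega_0(R(\Gamma)K)| \leq \|R(\Gamma)\|_{\omega_0} \|K\|_{\omega_0}$. The factor $\|K\|_{\omega_0}$ produces the $w(K)$ weight: on the $O$-factor we use $\|O\|_{\omega_0}$ directly, and on the $\mathcal{P}_{>M}(Y)$-factor we note that $\mathcal{P}_{>M}$ restricts each site to $N_x > M = \beta^{-(1+\gamma_c)/q}$ where $e^{-\beta N_x^q}$ is already exponentially small, giving the stretched-exponential factor $(e^{-\beta^{-\gamma_c/2}})^{|s(Y)|}$; the remaining monomial moments yield $\beta^{-\deg(Y)/2}$ and $\prod_x \deg_x(Y)!$. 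The $s(K)$-connectivity constraint forces the non-$s(K)$ part of $\Gamma$ to connect each of the $\mathrm{conn}(K)$ components of $s(K)$ to $S$, which requires at least $\mathrm{conn}(K)-1$ additional ``linking'' plaquettes when $S=\emptyset$ and at least $|S|+\mathrm{conn}(K)$ plaquettes when $S\neq\emptyset$ (each component of $s(K)$ must be touched by a plaquette in $\Gamma$, and the plaquettes must reach $S$). This explains the $(C\beta)^{\alpha(|S|+\mathrm{conn}(K))}$ factor via the same $\beta^{1-1/q}$ per plaquette budget as in part \textbf{1}.

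The main obstacle is the unbounded on-site space together with the need to sum over arbitrarily large multiplicities $n_B$ of the same plaquette: repeated insertions of $a_x^*a_x$-like operators can produce combinatorial factorials $\prod_x (n_B)!$ that must be dominated by the $1/n!$ from the simplex volume and by the strong decay of $\omega_{0,x}$ on high $N_x$. The cleanest way I see to handle this is to perform Wick-style pairings of the normal-ordered creation/annihilation operators at each site and use that the number of pairings on $n_x$ symbols is $n_x!$, which is then absorbed by $\beta^{n_x/q}/n_x!$-type factors coming from the time-ordered simplex; one must be careful that the resulting series converges uniformly in $\Lambda$, which is where smallness of $\beta$ and the condition $2\alpha<1-1/q$ enter crucially.
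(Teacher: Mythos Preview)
Your proposal has a genuine gap: you treat $R(\Gamma) = V_{B_n}(\tau_n)\cdots V_{B_1}(\tau_1)$ as if it were a product of the bare operators $V_{B_j}$, and bound $\omega_0(R(\Gamma))$ by a product of single-site moments of $a_x, a_x^*$. But the $V_{B_j}(\tau_j) = e^{\tau_j H_0} V_{B_j} e^{-\tau_j H_0}$ are imaginary-time evolved; for instance $\langle n{+}1|a_x^*(\tau)|n\rangle = e^{\tau((n+1)^q-n^q)}\sqrt{n+1}$ grows like $e^{c\tau n^{q-1}}$, which is not controlled by any polynomial moment of $N_x$. The paper handles this by inserting resolutions of identity to rewrite $\omega_0(R(\Gamma))$ as a sum over \emph{closed} paths $\eta_0\to\eta_1\to\cdots\to\eta_n=\eta_0$ in $\Omega_{s(\Gamma)}$, and then exploits closedness to bound the accumulated exponential factor by $\exp\bigl(\beta E(\eta_0) - \beta\min_j E(\eta_j)\bigr)$, which in turn is controlled because the path can deviate from $\eta_0(x)$ by at most $n(x)/2$ at each site before it must return. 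This is what yields $\beta^{-n(x)\kappa/2}$ per site with any $\kappa>1/q$, and hence the exponent range $2\alpha<1-1/q$; your attempted repair via ``pairings'' does not capture this mechanism and would at best give $2\alpha<1-2/q$.

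A second gap is the factorial accounting. The path bound produces only $\caN(\Gamma)^{1/2}=\prod_x\sqrt{n_x!}$ (not $\prod_x n_x!$), and a separate combinatorial inequality shows $\prod_x\sqrt{n_x!}\leq C^{\sum_x n_x}\prod_B n_B!$, so that the simplex volumes $\prod_B \beta^{n_B}/n_B!$ exactly kill the factorials. Your sketch conflates $\prod_x n_x!$ with $\prod_B n_B!$ and does not explain why only a square root arises. For part~2, your global Cauchy--Schwarz $|\omega_0(R(\Gamma)K)|\leq\|R(\Gamma)\|_{\omega_0}\|K\|_{\omega_0}$ would require bounding $\omega_0(R(\Gamma)^*R(\Gamma))$ \emph{without} any energy cutoff, where the doubled exponential factors are no longer compensated by the Gibbs weight; the paper instead splits $\Gamma=\Gamma_{s(O)}\cup\Gamma_{s(O)^c}$ along the product structure of $\omega_0$, applies Cauchy--Schwarz only on the $O$-factor (where the cutoff $\caP_{\leq 2M}(O)=O$ induces the cutoff state $\omega_{0,2M}$, taming the exponential via bounded $\eta$), and treats the $Y$-factor by a direct path bound that also extracts the stretched-exponential gain $e^{-\beta^{-\gamma_c/2}}$ from the constraint $\eta(x)\geq M$.
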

Note that replacing $(C\beta)$ by $\beta$ in the above lemma yields an equivalent claim upon adjusting $\al$. The same will be true often in the proof, below in Section  \ref{sec: proof of lemma varrho}, but we prefer to keep the constants to avoid repeated readjusting of exponents.    However, we do need to readjust constants, in particular the constant in the definition of $w(K)$. 
Before giving the lengthy proof of Lemma \ref{lem: bound varrho}, let us first use it to give the 
\subsubsection{Proof of Theorem \ref{thm: correlation decay}} \label{sec: proof of correlation decay}
We give the proof in the case where the sets $s(K),s(K')$ are connected (because  Lemma \ref{eq: cluster decay} is restricted to this case). The general case follows by the same reasoning.\\

\noindent\emph{Step $1$} For any $\al'$ satisfying $0<2\al'<1-1/q$,  the criterion \eqref{eqkotecky preiss abstract} is satisfied for the weights $\varrho_{\theta}(S):=  \theta^{-\str S \str}  \varrho(S)$ with $a=1$ and $\theta=\be^{\al'}$.    To see this, we combine Lemma \ref{lem: bound varrho} 1) for some $\al'' >\al'$ with
 the geometrical fact
\beq \label{eq: geo fact}
\sum_{S: S \ni x} \chi(S \, \textrm{connected})    c^{\str S \str}  \leq 1, \qquad \text{for small enough}\, c  
\eeq

\noindent \emph{Step $2$} For any $\al'$ satisfying $0<2\al'<1-1/q$ and observable $K$ of $K$-type, we establish
\beq
b_\theta(K) \leq  w(K), \qquad \text{with $\theta= \be^{\al'}$}.
\eeq
  This is a straightforward consequence of  Lemma \ref{lem: bound varrho} 2) for some $\al''>\al'$, using again the geometrical fact \eqref{eq: geo fact} (and keeping in mind that $\distance(S,s(K))=1$ whenever $\varrho_K(S)\neq 0$ and $S \neq \emptyset$).\\
  
\noindent \emph{Step $3$}    
The two claims of Theorem \ref{thm: correlation decay}  follow by the results \eqref{eq: stuff with bk one} and \eqref{eq: stuff with bk}  of Lemma \ref{eq: cluster decay}, using Steps $1$ and $2$ above with $\al'>\al$ and noting that, for $\theta=\be^{\al}$, the quantity $\theta_1=\e^{-2a \theta}$ in Lemma \ref{eq: cluster decay} can made arbitrarily close to $1$ by taking $\beta$ large enough, and that $w(KK')=w(K)w(K') $ whenever $s(K) \cap s(K')=\emptyset $.

%
%

\subsubsection{Proof of Lemma \ref{lem: bound varrho}} \label{sec: proof of lemma varrho}

Let us first fix some additional notation. For a given $\Ga$, we set
 $$     \si_{x}(B) := \si_{x,+}(B)+\si_{x,-}(B), \qquad  n(x):=  \sum_{B_j : s(B_j) \ni x}   \si_{x}(B_j),   $$
  and  
  \beq
  \caN(\Ga) := \prod_{x \in s(\Ga)} n_x !
  \eeq
  We introduce a 'cut-off state'
  \beq
  \om_{0,2M}(O) :=     \om_{0}(  \caP_{\leq 2M}(O)) 
  \eeq
The following lemma is a purely combinatorial bound.  Recall the quantity
\beq
v(Y)  = \be^{- \degree(Y)/2}  \,    (\e^{-\be^{-\gamma_c/2}})^{\str s(Y) \str  }     \prod_{x \in s(Y)}  \degree_x(Y)! 
\eeq
for a monomial $Y$ (introduced in Theorem \ref{thm: correlation decay}).  
\begin{lemma} \label{lem: bounds on gammas}
Fix a parameter  $\kappa$ such that $1>\ka>1/q$. Then, for any $\Ga$ and monomial $Y$, 
\begin{enumerate}
\item
\begin{align}
  \str \om_{0} (R( \Ga) \str   & \leq    \caN(\Ga)^{1/2}   \prod_{x \in s(\Ga)}    C^{n(x)}   \be^{-n(x) \ka/2}    \label{eq: bound by max one}
\end{align}
\item
\begin{align}
  \str \om_{0,2M} (R(\Ga) R^*(\Ga) \str  
  & \leq  \caN(\Ga)     \prod_{x \in s(\Ga)}     C^{n(x)}  \be^{-n(x)\ka} 
\end{align}
\item 
\begin{align}
\str \om_{0} (R( \Ga) \caP_{\geq M}(Y) \str   & \leq   \caN(\Ga)^{1/2} C^{\degree(Y)} v(Y)     \prod_{x \in s(\Ga)}    C^{n(x)}   \be^{-n(x) \ka/2}    \label{eq: bound by max two}
\end{align}

\end{enumerate}
\end{lemma}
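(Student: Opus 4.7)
All three bounds reduce to single-site calculations by exploiting the product structure of $\om_0$. Since bosonic operators at distinct sites commute, I would regroup the factors of $R(\Ga)=V_{B_n}(\tau_n)\cdots V_{B_1}(\tau_1)$ by site without introducing signs; writing $R_x$ for the time-ordered product of the $n(x)$ creation/annihilation operators sitting at $x$ with their respective imaginary-time factors, I obtain
\begin{equation*}
\om_0(R(\Ga))\;=\;\prod_{x\in s(\Ga)} \om_{0,x}(R_x),
\end{equation*}
and analogous factorizations for $\om_{0,2M}(R(\Ga)R^*(\Ga))$ and for $\om_0(R(\Ga)\,\caP_{>M}(Y))$. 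For the last one I would first commute the $a^{\sharp}$ factors in $Y$ past those in $R_x$ at the common sites $s(Y)\cap s(\Ga)$, which produces at most $C^{\degree(Y)}$ similar terms. The single-site expectation vanishes unless the numbers of $a^*_x$ and $a_x$ agree, so $n(x)$ is even.

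For each single-site factor I would insert resolutions of identity in the number basis and sum over closed paths $k_0\to k_1\to\cdots\to k_{n(x)}=k_0$. Each step $\langle k_{j-1}|a^{\sharp}_x(\tau_j)|k_j\rangle$ contributes $\sqrt{k_{j-1}\vee k_j}\cdot \e^{\tau_j(E(k_{j-1})-E(k_j))}$ with $E(k)=k^q$; the product of the square-root factors over the closed path is dominated by $(k_0+n(x))^{n(x)/2}$, and since $\str E(k\pm 1)-E(k)\str\leq Ck^{q-1}$ and $\tau_j\leq\be$, the product of time-phases is bounded by $\e^{Cn(x)\be(k_0+n(x))^{q-1}}$. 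Combined with the thermal weight $\e^{-\be k_0^q}/Z_0$, this still decays like $\e^{-\be k_0^q/2}$ outside a bounded range of $k_0$, so the $k_0$-sum is controlled by the moment estimate
\begin{equation*}
\om_{0,x}(N_x^{n(x)/2})\;\leq\; C^{n(x)}\,\Gamma\!\big(n(x)/(2q)+1\big)\,\be^{-n(x)/(2q)}.
\end{equation*}
The leading $\be^{-n(x)/(2q)}$ gives the desired $\be^{-n(x)\ka/2}$ after exploiting the slack $\ka>1/q$, while the Gamma factor together with the multiplicity of paths is absorbed into $\caN(\Ga)^{1/2}=\prod_x (n(x)!)^{1/2}$ by Stirling and that same slack.

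Point 2 follows by applying the identical reasoning to $R(\Ga)R^*(\Ga)$, which at each site carries $2n(x)$ operators, doubling both the exponent of $\be^{-1}$ and the factorial; here the cut-off $\caP_{\leq 2M}$ with $M=\be^{-(1+\ga_c)/q}$ keeps all intermediate $k_j\leq 2M$, and since $\be M^{q-1}=\be^{(1-\ga_c(q-1))/q}\to 0$ as $\be\to 0$, the time-phases are uniformly $O(1)$ and the bookkeeping simplifies. Point 3 combines the previous arguments: the projection $\caP_{>M}$ on each $x\in s(Y)$ restricts $k_0\geq M$, cutting the thermal weight by $\sum_{k\geq M}\e^{-\be k^q}/Z_0\leq \e^{-\be M^q/2}=\e^{-\be^{-\ga_c}/2}$ per site and producing the factor $(\e^{-\be^{-\ga_c/2}})^{\str s(Y)\str}$ in $v(Y)$, while the $\degree(Y)$ extra operators coming from $Y$ supply the additional $\be^{-\degree(Y)/2}$ and the $\degree_x(Y)!$ factorials via the same moment estimate. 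The hardest step will be the combinatorial bookkeeping needed to obtain $\caN(\Ga)^{1/2}$ rather than $\caN(\Ga)$ in points 1 and 3: a naive enumeration of paths produces $n(x)!$ orderings per site. I would handle this by a Cauchy--Schwarz step that separates creation from annihilation operators at each site, so that the $\sim n(x)!$ pairings collapse to a single moment of $N_x$ at the cost of only $(n(x)!)^{1/2}$.
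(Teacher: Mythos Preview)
Your site-by-site factorization of $\om_0(R(\Ga))$ is correct and matches the paper's starting point. The genuine gap is in the time-phase estimate. Bounding each step by $\str E(k_{j-1})-E(k_j)\str\le C(k_0+n(x))^{q-1}$ and multiplying gives $\e^{Cn(x)\be(k_0+n(x))^{q-1}}$, but this is far too crude: for $k_0$ small (take $k_0=0$) it produces a factor $\e^{C\be\,n(x)^q}$, which cannot be absorbed into $C^{n(x)}\be^{-n(x)\ka/2}$ once $n(x)\gtrsim\be^{-1/(q-1)}$. Since the lemma must hold for \emph{arbitrary} $\Ga$ and hence arbitrarily large $n(x)$, this is a failure of the argument, not a bookkeeping issue. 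Your claim in Part~2 that the cut-off keeps all intermediate $k_j\le 2M$ is also wrong---only the endpoint $k_0$ is constrained by $\caP_{\leq 2M}$---so the same blow-up recurs there.

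The paper avoids this by an Abel-summation identity: since $0\le\tau_1\le\cdots\le\tau_n\le\be$ and $\eta_n=\eta_0$, one has
\[
\sum_{j=1}^{n}\tau_j\big(E(\eta_j)-E(\eta_{j-1})\big)\;\le\;\be\big(E(\eta_0)-\min_jE(\eta_j)\big).
\]
Because the path is closed, $\min_j\eta_j(x)\ge\eta_0(x)-n(x)/2$, so the exponent is bounded site-by-site by $\be\big(E(\eta_0(x))-E((\eta_0(x)-n(x)/2)_+)\big)$. Combined with the thermal weight $\e^{-\be E(\eta_0(x))}$, this leaves $\e^{-\be E((\eta_0(x)-n(x)/2)_+)}\le 1$, uniformly in $n(x)$. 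The factorial $\lfloor n(x)/2\rfloor!$---and hence $\caN(\Ga)^{1/2}$---then drops out of a binomial estimate on $\tfrac{\lfloor\eta_0(x)+n(x)/2\rfloor!}{\eta_0(x)!}$ with parameter $z=\e^{-\be^\ka}$, not from any Cauchy--Schwarz separation of creation and annihilation operators. For Part~2 the same Abel trick yields $\e^{2\be(E(\eta_0(x))-E((\eta_0(x)-n(x))_+))}$, which is tamed using the cut-off $\eta_0(x)\le 2M$ via the observation that $\xi\mapsto \e^{2\be\xi^q}(2\be)^{\ka_2\xi}$ is decreasing on $[0,2M]$ for small $\ka_2>0$.
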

\begin{proof}
Consider a sequence $\eta_0, \eta_1, \ldots, \eta_n$ in $\Om_{s(\Ga)}$ with $n=\str \Ga \str$.  We note, 
 by inserting decompositions of identity and using cyclity of the trace,   that
\beq
\str  \om_0( R(\Ga)) \str \leq  \sum_{\substack{\eta_0,\eta_1,\ldots, \eta_{n-1}\\[1mm] \eta_n=\eta_0     } }  \e^{\sum_{j=1}^n\tau_j(E(\eta_j)-E(\eta_{j-1}))}  
 \prod_{j=1}^n \str  \langle \eta_{j},V_{B_j} \eta_{j-1}\rangle\str \label{eq: naive bound paths}
\eeq 
Since $0 \leq \tau_1 \leq \ldots \leq \tau_n \leq \beta$, we can bound the exponent as
\begin{align}
\sum_{j=1}^n\tau_j(E(\eta_j)-E(\eta_{j-1})  & = \int_0^{\beta} \d \tau  \,  \tau \frac{\partial e}{\partial \tau} = \int_0^{\beta} \d \tau  \frac{\partial (\tau e) }{\partial \tau}- \int_0^{\beta} \d \tau \, e  \nonumber\\[1mm] 
 &   \leq \beta  e(\beta)-\beta \inf_{\tau} e(\tau)     
= \be E(\eta_n)-\min_{j}  \be E(\eta_j) = \be E(\eta_0)-\min_{j} \be E(\eta_j)  \nonumber \\[1mm] 
 & \leq \be  \sum_x \big( E(\eta_0(x))-\min_{j} E(\eta_j(x))\big)  \nonumber \\[1mm]        
  & \leq \be  \sum_x \big( E(\eta_0(x))-E(\eta_0(x)-n(x)/2)\big)  \label{eq: path bound}   
\end{align}
where we let the function $e(\tau)$ on $[0,\be]$ be the linear interpolation of $\tau_j \mapsto E(\eta_j)$ with $e(0)=E(\eta_0)$ and we adopted the convention that $E(\xi)=\xi^q$ for $\xi>0$ and $E(\xi)=0 $ for $ \xi \leq 0$. 
The last inequality follows by using that $n(x)$ is the number of field $a_x/a^*_x$ operators appearing on site $x$, and $\eta_0(x)=\eta_n(x)$.
Combining \eqref{eq: naive bound paths} and \eqref{eq: path bound}, using the basic bound $\str \langle \eta(x), a^*_x (\eta(x)-1) \rangle \str \leq \sqrt{\eta(x)}$  and abbreviating 
\beq  Z_0(\be)= \sum_{\xi \in \bbN}  \e^{-\be E(\xi) }, \eeq we get   
\begin{align}
  \str \om_{0} (R( \Ga) \str
   & \leq \sum_{\eta_0 \in \Om_{s( \Ga)}}  \om_0(P_{\eta_0}) \prod_{x \in s( \Ga) }     \frac{\lfloor \eta_0(x)+ n(x)/2 \rfloor! }{\eta_0(x)!}  \e^{\be (E(\eta_0(x))- E(\eta_{0}(x)- n(x)/2 )  )} 
   \label{eq: basic one path}    \\[2mm]
  & \leq   \prod_{x \in s( \Ga)}  \sum_{\eta(x)}   \frac{\lfloor \eta(x)+ n(x)/2 \rfloor! }{(\eta(x))!} \, \times \,  \frac{\e^{-\be E(\eta(x)- n(x)/2 )}}{Z_0(\beta)} \label{eq: binomial} 
\end{align}
For any $0 \leq z \leq  1$, we can use the bound $\frac{m!}{p!(m-p)!} \leq z^{-p}(1-z)^{-(m-p)}$ to get
\begin{align}
 \frac{1 }{ \prod_x \lfloor n_x/2 \rfloor! } \str \om_{0} (R( \Ga) \str  &  \leq    \prod_{x \in s( \Ga)}    \sum_{\eta(x)}  (1-z)^{-n(x)/2}  z^{-\eta(x)}  \, \times \,  \frac{\e^{-\be E(\eta(x)- n(x)/2 )}}{Z_0(\beta)}
\end{align}
Similarly (take $z=1/2$), we have
\beq
\prod_{x \in s(\Ga)} C^{n_x} \lfloor n_x/2 \rfloor! \geq  \caN(\Ga)^{1/2}
\eeq
Let us now choose $z=e^{-\be^{\kappa}}$. 
 For sufficiently small $\be$, we can then estimate
\beq
(1-z)^{-n} \leq  2^n \be^{-n \kappa}
\eeq
and we obtain 
\begin{align}
 \caN(\Ga)^{-1/2} \str \om_{0} (R( \Ga) \str   & \leq   \prod_{x \in s(\Ga)}  \sum_{\eta(x)}  C^{n(x)}  \be^{-n(x) \kappa}    \, \times \,  \frac{\e^{-\be E(\eta(x)- n(x)/2 ) + \be^{\kappa} \eta(x)}}{Z_0(\beta)}    
  \label{eq: bound by max}
\end{align}
Since  $\kappa \geq 1/q$,  we can bound
\beq
  \sum_{\eta(x)}  \frac{\e^{-\be E(\eta(x)- n(x)/2 ) + \be^{\kappa} \eta(x)}}{Z_0(\beta)}   \leq   C^{n(x)}
\eeq
by using the explicit form of $E(\cdot)$. 
Hence
\begin{align}
 \caN(\Ga)^{-1/2} \str \om_{0} (R( \Ga) \str   & \leq   \prod_x    C^{n(x)}   \be^{-n(x) \kappa/2}   
\end{align}
The claim of $1)$ now follows since $n(x) \geq 1$ for any $x \in s(\Ga)$. \\[1mm]

To get  $3)$, we first restrict ourselves to the case $s(Y) \subset s(\Ga)$.
We view for notational convenience the $a_x/a_x^*$-operators in $Y=    a^{\sharp}_{x_m} \ldots  a^{\sharp}_{x_2} a^{\sharp}_{x_1}$ as additional plaquettes $B_{i=1,\ldots,m}$ with $ s(B_i)= \{x_i\}$, $\si_{x_i,+/-}=1/0$ if $a^\sharp_{x_i}= a^*_{x_i}$ and $\si_{x_i,+/-}=0/1$ if $a^\sharp_{x_i}= a_{x_i}$.  We define the ordered sequence
\beq
\Ga'=\big (           (B_1,\tau_1), \ldots, (B_m,\tau_m), (B_{m+1},\tau_{m+1}), \ldots   (B_{m+n},\tau_{m+n}) \big)
\eeq
where $\tau_{1,\ldots,m}=0$ and $(B_{m+1},\tau_{m+1}), \ldots   (B_{m+n},\tau_{m+n})$ are the (ordered) elements of $\Ga$ with renamed indices. 
Now we apply the same reasoning as in the proof of $1)$ to get the analogue of \eqref{eq: bound by max}, which now reads
\begin{align}
 \caN(\Ga')^{-1/2} \str \om_{0} (R( \Ga) \caP_{\geq M}(Y)) \str   & \leq   \prod_{x \in s(\Ga)}  C^{n'(x)}   \be^{-n'(x) \ka}    \sum_{\substack{\eta(x) \geq 0\, \textrm{for}\,  x \not \in s(Y)\\[1mm]
 \eta(x) \geq M\, \textrm{for}  \, x \in s(Y)   }}  \frac{   \e^{-\be E(\eta(x)- n'(x)/2 ) + \be^{\kappa} \eta(x)}    }{Z_0(\beta)}    
  \label{eq: bound by max three} 
  \end{align}
  with $n'(x)$ corresponding to $\Ga'$.   For $x \not \in s(Y)$, we bound the the sum over $\eta(x)$ by $C^{n'(x)}$, as in 1), and, for $x  \in s(Y)$, we bound it as
\beq
\sum_{ \eta(x) \geq M}  \frac{   \e^{-\be E(\eta(x)- n'(x)/2 ) + \be^{\kappa} \eta(x)}    }{Z_0(\beta)}      \leq C^{n'(x)}\e^{-\be^{-\gamma_c/2}}
\eeq  
Hence, altogether, we bound \eqref{eq: bound by max three} as
\begin{align}
   \str \om_{0} (R( \Ga) \caP_{\geq M}(Y)) \str & \leq   \caN(\Ga')^{1/2}     (\e^{-\be^{-\gamma_c/2}})^{\str s(Y) \str  }    \prod_{x \in s(\Ga)}  C^{n'(x)}   \be^{-n'(x) \ka}    
\end{align}
Since $n'(x)=n(x)+\degree_x(Y)$, we can bound
\beq
 \caN(\Ga') \leq  \caN(\Ga)  2^{\degree(Y)} \prod_{x \in s(Y)} \degree_x(Y)!
\eeq
and we get the claim of point $3)$ for the restricted case $s(Y) \subset s(\Ga)$.
In the  general case, we split $Y=Y_1Y_2$ such that $s(Y_1) \subset s(\Ga), s(Y_2) \cap s(\Ga)= \emptyset$, and we use the fact that $\om_0$ is a product state:
\beq
 \om_{0} (R( \Ga) \caP_{\geq M}(Y)) =  \om_{0} (R( \Ga) \caP_{\geq M}(Y_1))  \, \om_{0} (\caP_{\geq M}(Y_2)) 
\eeq
For the first factor we use the bound above (for the restricted case). For the second factor, we show, by analogous but simpler reasoning, that it is bounded by $v(Y_2)$.  Since $v(Y)=v(Y_1)v(Y_2)$, this proves the full claim of $3)$.  \\[1mm] 

For $2)$, we mimick the derivation of  \eqref{eq: basic one path}   to arrive at 
\begin{align}
    \str \om_{0,2M} (R(\Ga) R^*(\Ga) \str      &  \leq    \prod_{x \in s( \Ga)}  \sum_{\eta(x) \leq 2M}   
    \frac{\e^{-\be E(\eta(x))}}{Z_0(\beta)} \,   \frac{(\eta(x)+ n(x))! }{(\eta(x))!}  \e^{2\be (E(\eta(x))- E(\eta_{0}(x)- n(x) )  )} \label{eq: double}
\end{align}
The main difference with the argument  in 1) is that every perturbation term appears twice now (therefore we have now $n(x)$ instead of $n(x)/2$ in the argument of  the factorial) and that we had to apply the bounds of \eqref{eq: path bound} twice. Proceeding as in \eqref{eq: binomial} and \eqref{eq: bound by max}, we bound  \eqref{eq: double} by 
\begin{align}
    \prod_{x \in s( \Ga)}     \sum_{\eta(x) \leq 2M}   \frac{\e^{-\be E(\eta(x))}}{Z_0(\beta)} \,  C^{n(x)}  \be^{-n(x) \kappa_1} \e^{ \be^{\kappa_1} \eta(x)}   \e^{2\be (E(\eta(x))- E(\eta(x)- n(x) ))} 
\end{align}
for any $0< \kappa_1 <1$.  To deal with the right-most  exponential, we note that 
\beq  \label{eq: acro with cutoff}
\sup_{0< 2\be < 1} \sup_{0\leq \tilde\xi\leq \xi \leq 2M} \e^{2\be( \xi^q- \tilde\xi^q)} (2\be)^{\kappa_2(\xi-\tilde\xi)} \leq C(\kappa_2)  <\infty,     
\eeq
for  any $\kappa_2>0$.  Indeed,  for  $0< 2\be < 1$, the function $\xi \mapsto f(\xi)= \e^{2\be \xi^q} (2\be)^{\kappa_2 \xi}$ is decreasing on the interval $[0, (\tfrac{\kappa_2 \str \ln 2\beta \str}{2q \beta})^{\tfrac{1}{q-1}} ]$ and, since   $M =  \be^{-(1+\gamma_c)/q}$ with $0<\gamma_c< q/(q-1)-1$,  we see that  $2M$ lies in this interval.   We use \eqref{eq: acro with cutoff} with $\xi=\eta(x), \tilde \xi= \eta(x)-n(x)$ to obtain, for  $\kappa_1-1/q \geq 0$,
\begin{align}
\caN(\Ga)^{-1}  \str \om_{0,2M} (R(\Ga) R^*(\Ga) \str   
  & \leq        \prod_{x \in s( \Ga)}   \sum_{\eta(x) \leq 2M} \frac{ \e^{-\be E(\eta(x))+ \be^{\kappa_1} \eta(x)}}{Z_0(\beta)}   C^{n(x)}  \be^{-n(x) \kappa_1}   \be^{-2\kappa_2 n(x)}   \\[2mm]
    & \leq      \prod_{x \in s( \Ga)}     C^{n(x)}  \be^{-n(x) (\kappa_1+2 \kappa_2)} 
\end{align}
where the last inequality uses the explicit form of $E(\eta(x))$ to perform the sums over $\eta(x)$.   The claim of  $2)$ follows by choosing $\kappa_1,\kappa_2$ such that 
$\kappa_1 +2 \kappa_2=\kappa$, taking advantage of the fact that $\kappa_2$ can be chosen arbitrarily small. 
\end{proof}
To perform the sum/integral over the sequences $\Ga$, we will need to exploit the smallness of the Lesbegue mass over the simplex $\Delta_{m}(\be)$ for large $m$
\begin{lemma}  \label{lem: combi bound}  For any $\ka' < 1/2$
\beq \label{eq: combi bound}
\int_{s(\Ga)=S} \d\Ga  \,  (c\be)^{-\ka' \sum_{x } n_x } \caN(\Ga)^{1/2} \leq   (C\be)^{\str S\str (1/2 -\ka')}
 \eeq
\end{lemma}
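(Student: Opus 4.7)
The plan is to rewrite the left-hand side as a sum over plaquette \emph{multiplicities}, extract the dominant $\be$-power via a per-plaquette geometric-series bound, and then enforce the support constraint $s(\Ga)=S$ by a minimal-covering argument.

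First, I would perform the time integration explicitly. Since $\int_{\Delta_n(\be)}\d^n\tau=\be^n/n!$ and each unordered multiset with multiplicities $(n_B)_B$ corresponds to $n!/\prod_B n_B!$ ordered sequences ($n=\sum_B n_B$), the left-hand side equals
\begin{equation*}
\sum_{(n_B)_B:\,\cup_{n_B>0}s(B)=S}\prod_B\frac{\be^{n_B}}{n_B!}\,(c\be)^{-\ka'\sum_{x}n_x}\prod_{x\in S}(n_x!)^{1/2},
\end{equation*}
with $n_x=\sum_{B\ni x}n_B\si_x(B)$. Because $\str s(B)\str\le 2$ and $\sum_{x\in s(B)}\si_x(B)\le 2$, only a bounded number $K=K(d)$ of plaquettes touch any given site.

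Second, I would decouple the $(n_x!)^{1/2}$ factor from the plaquette counts. The multinomial inequality $n_x!\le K^{n_x}\prod_{B\ni x}(n_B\si_x(B))!$, combined with $(2m)!\le 4^m(m!)^2$ applied only when $\si_x(B)=2$, yields a factorization over plaquettes. Classifying plaquettes into three types---(I) $s(B)=\{x\}$, $\si_x(B)=1$; (II) $s(B)=\{x\}$, $\si_x(B)=2$; (III) $s(B)=\{x,x'\}$ with $\si_x(B)=\si_{x'}(B)=1$---and using $\prod_x(c\be)^{-\ka' n_x}=\prod_B(c\be)^{-\ka' n_B\alpha_B}$ with $\alpha_B=\sum_{x\in s(B)}\si_x(B)\in\{1,2\}$, the summand factorizes as $\prod_B G_B(n_B)$ with
\begin{equation*}
G_B(n)\le
\begin{cases}
C_{\mathrm{I}}^{n}\,\be^{n(1-\ka')}(n!)^{-1/2} & \text{(type I)},\\
C_{\mathrm{II}}^{n}\,\be^{n(1-2\ka')} & \text{(types II, III)}.
\end{cases}
\end{equation*}
Since $0<\ka'<1/2$, the series $g_B:=\sum_{n\ge 1}G_B(n)$ converges for small $\be$ and satisfies $g_B=O(\be^{1-\ka'})$ (type I) or $O(\be^{1-2\ka'})$ (types II, III).

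Third, I would enforce the coverage condition by selecting, from any supporting multiset $\caB^{*}=\{B:n_B>0\}$, a minimal cover $\caB^{\min}\subset\caB^{*}$ of $S$. If $m_1,m_2$ denote the numbers of one-site and two-site plaquettes in $\caB^{\min}$, then $m_1+2m_2=\str S\str$, so $m_1+m_2\ge \str S\str/2$, and the product of $g_B$ over $\caB^{\min}$ is at most $C^{\str S\str}\be^{(1-2\ka')(m_1+m_2)}\le (C\be^{1/2-\ka'})^{\str S\str}$ in the worst case. The remaining ``extra'' plaquettes contribute a multiplicative factor $\prod_{B:s(B)\subset S}(1+g_B)\le e^{C\str S\str\be^{1-2\ka'}}\le 2^{\str S\str}$, since the number of plaquettes $B$ with $s(B)\subset S$ is $O(\str S\str)$. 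Since the number of minimal covers of $S$ is bounded by $C^{\str S\str}$, summing yields the claimed bound $(C\be)^{\str S\str(1/2-\ka')}$.

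The main obstacle lies in the type-II plaquettes, where $\si_x(B)=2$ produces a $(2n_B)!$ inside $(n_x!)^{1/2}$ that naively beats the $1/n_B!$ coming from the time integration. The inequality $(2m)!\le 4^m(m!)^2$ is just barely enough to save summability, but only if applied selectively---an indiscriminate use on type-III plaquettes would insert a spurious $(n_B!)^2$ and destroy the geometric-series behaviour. Keeping the three plaquette types separated throughout, and verifying in each case that the per-site exponent of $\be$ is at least $1/2-\ka'$, is what keeps the combinatorial bookkeeping under control.
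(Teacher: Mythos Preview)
Your proposal is correct and follows essentially the same route as the paper. Both arguments (i) perform the time integration to get $\prod_B \be^{n_B}/n_B!$, (ii) use the multinomial-type bound $n_x!\le K^{n_x}\prod_{B\ni x}(\si_x(B)n_B)!$ together with $(2m)!\le 4^m(m!)^2$ to convert $\caN(\Ga)^{1/2}$ into a plaquette product, and (iii) extract the factor $\be^{\str S\str(1/2-\ka')}$ from the covering constraint. The only real difference is in bookkeeping: you retain the distinction between the three plaquette types and then run a minimal-cover argument, whereas the paper uses the cruder uniform bound $\caN(\Ga)^{1/2}\le C^{\sum_x n_x}\prod_B n_B!$ (which exactly cancels the $1/n_B!$), observes $\sum_B n_B\ge\tfrac12\sum_x n_x\ge\str S\str/2$, and sums a single geometric series. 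Your sharper type-I estimate $(n_B!)^{-1/2}$ costs you the minimal-cover machinery but does not improve the final exponent, since the bottleneck is type~III; the paper's coarser bound is therefore cleaner here. One small slip: for a minimal cover one has $m_1+2m_2\ge\str S\str$ rather than equality (two-site plaquettes can overlap), but the inequality $m_1+m_2\ge\str S\str/2$ that you actually use is unaffected.
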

\begin{proof}
We first establish, for any $\Ga$, 
\beq \label{eq: basic bound}
\big(  \prod_{x \in s(\Ga)} \sqrt{n_x!}  \big)   \leq    C^{\sum_{x} n_x } \big(  \prod_{B \in \caB(\Ga)}  n_B!  \big)
 \eeq
 In the remainder of the proof, it is understood (unless mentioned otherwise) that $x$ ranges over $s(\Ga)$ and $B$ over $\caB(\Ga)$. 
To prove \eqref{eq: basic bound}, note that 
\beq n_x = \sum_{B: s(B) \ni x}  \sigma_x(B)n_B  \eeq  where  the maximal number of nonzero terms on the right hand side is $C=C(d)$, hence
\beq \label{eq: prod n b}
n_x!  \leq  C^{n_x}  \prod_{B: s(B) \ni x}    (\sigma_x(B) n_B)!
\eeq
and
\begin{align}
 \prod_x n_x!  &\leq  C^{\sum_x n_x}    \big(\prod_{B: \str s(B) \str=1}  \sigma_x(B) n_B)! \big)     \big(\prod_{B: \str s(B) \str=2}   n_B! \big)^2  \\[2mm]
&\leq  C^{\sum_x n_x}   2^{\sum_{B: \str s(B) \str=1  } 2 n_B }   \big(\prod_{B}   n_B! \big)^2  \label{eq: basic bound three}
\end{align}
where the first inequality follows because $\sigma_x(B) \leq 1$ whenever $\str s(B) \str=2$ and because every factor with $\str s(B) \str=2$ appears twice in the product in \eqref{eq: prod n b}. The second inequality follows from $\sigma_x(B) \leq 2$. 
Observing that (for now, we use only the first inequality)
\beq \label{eq: nb and nx}
\sum_{x}n_x \geq  \sum_B n_B  \geq  (1/2) \sum_{x}n_x,
\eeq
we get \eqref{eq: basic bound} from \eqref{eq: basic bound three}.

Since $\caN(\Ga)$ does not depend on the $\tau$-variables, we can perform all $\d \tau$-integrals on the LHS of \eqref{eq: combi bound}. This gives the products of the Lesbegue measure of the simplices;
\beq
 \prod_{B}  \frac{\be^{n_B}}{n_B !}  
\eeq
Using this bound, the definition of $\caN(\Ga)$,  \eqref{eq: basic bound} and \eqref{eq: nb and nx}, we bound the LHS of \eqref{eq: combi bound} by 
\beq
 \sum_{\caB: \mathop{\cup}\limits_{B \in \caB} s(B)=S} \,\,   \sum_{\substack{n_B \geq 1: B \in \caB \\[1mm] n_\caB \geq \str S \str/2 }} \,\,  (C \be)^{(1-2\ka')n_\caB }, \qquad \text{with $n_\caB : = \sum_{B \in \caB} n_{B}$} 
\eeq
where the sum is now over collections $\caB$ of plaquettes.   Using that the number of terms in the leftmost sum is bounded by $C^{\str S \str}$, the claim follows by straightforward combinatorics.
\end{proof}

\begin{proof}[Proof of Lemma \ref{lem: bound varrho}]
To prove $1)$, we write for any $1 > \ka >1/q$, 
\begin{align}
\str \varrho(S)\str \leq & \mathop{\int}\limits_{\substack{s(\Ga)=S \\ \Ga \,  \textrm{connected}}}  \d\Ga \,     \str \omega_{0} (R(\Ga)) \str   \\[1mm] 
\leq &     \mathop{\int}\limits_{s(\Ga)=S}  \d \Ga \, \str \om_0(R(\Ga)) \str     \\[1mm]
      \leq & \mathop{\int}\limits_{s(\Ga)=S}  \d \Ga    \,   \caN(\Ga)^{1/2}   \prod_x    C^{n(x)}   \be^{-n(x)\ka/2}        \\[1mm]
            \leq &   \,   \be^{\str S \str(1/2-\ka/2)}   C^{\str S \str}
\end{align}
where the third inequality follows from Lemma \ref{lem: bounds on gammas} 1) and the fourth from  Lemma  \ref{lem: combi bound} with $\kappa'=\kappa/2$.  The claim follows by setting $\ka=1-2 \al$.\\[1mm]
Next, we prove $2)$:   For any set $S$ and plaquette $B$, we can consider the reduced plaquette $B_S$ with $s(B_S):= s(B) \cap S$ and $\sigma_{x,\pm}(B_S):=\sigma_{x,\pm}(B) $ whenever $x \in s(B_S)$.   Then given a collection $\Ga$, we define the collection $\Ga_{S}$  
\beq
\Ga_{S} := \{ (B_S, \tau): \,    (B,\tau) \in \Ga \,\, \text{and}\,\,  s(B_S)\neq \emptyset \}
\eeq
Note that
\beq \label{eq: disjoint union}
s(\Ga)= s(\Ga_S) \cup s(\Ga_{S^c}), \qquad     s(\Ga_S) \cap s(\Ga_{S^c}) = \emptyset. 
\eeq
and, for $x \in S$, $n(x)$ as defined from $\Ga_S$ equals $n(x)$ as defined from $\Ga$. In particular, we have 
\beq \label{eq: factorzation can}
\caN(\Ga_S) \caN(\Ga_{S^c}) =  \caN(\Ga).  
\eeq
 We apply this below with $S=s(O), s(O)^c$. 
 We have
\begin{align}
 \om_0(R(\Ga) K) &=    \om_0(R(\Ga_{s(O)}) O)   \om_{0} (R( \Ga_{s(O)^c})  \caP_{\geq M}(Y)  ) \\[2mm]
 & =   \om_{0}(\caP_{\leq 2M}(R(\Ga_{s(O)})) \caP_{\leq 2M}(O)) \, \times \,   \om_{0} (R( \Ga_{s(O)^c})  \caP_{\geq M}(Y)  ) \label{eq: acro cutoff one}
\end{align}
where the second equality follows because $\caP_{\leq 2M}(O)=O$ and the density matrix of $\om_0$ is diagonal in the $N_x$-basis.    Using Cauchy-Schwarz, positivity of the projector $ \otimes_{x \in A}\chi(N_x \leq 2 M)$ for any $A$, and $(\caP_{\leq 2M}(O))^*= \caP_{\leq 2M}(O^*)$, the first factor can be bounded as
\begin{align}
\str  \om_{0}(\caP_{\leq 2M}(R(\Ga_{s(O)})) \caP_{\leq 2M}(O)) \str^2  &\leq   \om_{0}(\caP_{\leq 2M}(O^*)\caP_{\leq 2M}(O)) \, \times \, \om_0(\caP_{\leq 2M}(R(\Ga_{s(O)})) \caP_{\leq 2M}(R^*(\Ga_{s(O)}))    )  \\[2mm]  
& \leq   \om_{0, 2M}(O^*O) \,  \times \, \om_{0, 2M}(R(\Ga_{s(O)}) R^*(\Ga_{s(O)})   )  \label{eq: acro cutoff two}
\end{align}

We recall the definition of  $\varrho_{K}(S)$ in \eqref{def: varrho k} and we abbreviate
\beq
 \int_{K} \d \Ga \ldots : = \mathop{\int}\limits_{\substack{ \Ga  \text{ $s(K)$-connected} \\ s(\Ga) \cap s(K)^c =S }} \d \Ga \ldots 
\eeq 
We estimate, for any $1 > \ka >1/q$, 
\begin{align}
\str \varrho_{K}(S) \str &  \leq  \int_{K} \d \Ga \,   \str \om_0(R(\Ga) K)\str \\[2mm]
 & \leq \str\om_{0,2M}(O^* O) \str^{1/2} \int_{K} \d \Ga     \,    \str\om_{0,2M} (R( \Ga_{s(O)})R^*( \Ga_{s(O)}) )\str^{1/2}     \,    \str \om_{0} (R( \Ga_{s(O)^c})  \caP_{\geq M}(Y)  )  \str   \\[2mm] 
 & \leq \str\om_{0,2M}(O^* O) \str^{1/2}  C^{\degree(Y)} v(Y)\int_{K} \d \Ga  \str\caN(\Ga_{s(O)}) \caN(\Ga_{s(O^c)})\str^{1/2}     \prod_{x \in s(\Ga)} (c\be)^{-\ka n(x)/2}  \label{eq: splitt o and c}
\end{align}
The first inequality follows from \eqref{eq: acro cutoff one} and \eqref{eq: acro cutoff two} and the second from Lemma \ref{lem: bounds on gammas} $2),3)$, using \eqref{eq: disjoint union}.   Let us now first take $S \neq \emptyset$. Starting from \eqref{eq: factorzation can}, we rewrite and  bound the $\d \Ga$-integral in \eqref{eq: splitt o and c} as (recall that $\conn(K)$ is the number of connected components of $s(K)$), 
\begin{align}
& \int_K    \d\Ga     \,  \caN(\Ga)^{1/2}     \prod_{x \in s(\Ga)} (c\be)^{-\ka n(x)/2} \\[2mm]
&\leq  \sum_{\substack{S': S' \cap s(K)^c=S\\[1mm]    \str S' \cap s(K) \str \geq  \conn(K)  }} \mathop{\int}\limits_{s(\Ga)=S' }   \d\Ga   \,    \caN(\Ga)^{1/2}     \prod_{x \in s(\Ga)} (c\be)^{-\ka n(x)/2}  \\[2mm]
&\leq   \sum_{\substack{S': S' \cap s(K)^c=S\\[1mm]    \str S' \cap s(K) \str \geq  \conn(K)  }} (C\be)^{\al \str S' \str}  \leq   C^{\str s(K)\str} (C\be)^{\al (\str S \str+\conn(K))} \label{eq: resulting bound}
\end{align}
where we used Lemma \ref{lem: combi bound} with $\ka'=\ka/2$ and we set $\ka=1-2 \al$.  Plugging this into \eqref{eq: splitt o and c}  and recalling the definition of $w(K)$ yields the desired claim.   For $S= \emptyset$, the above proof still applies if we drop the constraint $ \str S' \cap s(K) \str \geq  \conn(K)$ in the last lines. Then the resulting bound on the right hand side of \eqref{eq: resulting bound} is simply $C^{\str s(K)\str}$, and we can again conclude by plugging into  \eqref{eq: splitt o and c}.

\end{proof}


\bibliographystyle{ieeetr}
\bibliography{mylibrary11}

\begin{thebibliography}{10}

\bibitem{andersonabsence}
P.~W. Anderson, ``Absence of diffusion in certain random lattices,'' {\em
  Physical review}, vol.~109, no.~5, p.~1492, 1958.

\bibitem{frohlichspencerabsence}
J.~Fr{\"o}hlich and T.~Spencer, ``Absence of diffusion in the {A}nderson tight
  binding model for large disorder or low energy,'' {\em Communications in
  Mathematical Physics}, vol.~88, no.~2, pp.~151--184, 1983.

\bibitem{baskoaleineraltschuler}
D.~Basko, I.~Aleiner, and B.~Altshuler, ``Metal--insulator transition in a
  weakly interacting many-electron system with localized single-particle
  states,'' {\em Annals of physics}, vol.~321, no.~5, pp.~1126--1205, 2006.

\bibitem{vosk2013many}
R.~Vosk and E.~Altman, ``Many-body localization in one dimension as a dynamical
  renormalization group fixed point,'' {\em Physical review letters}, vol.~110,
  no.~6, p.~067204, 2013.

\bibitem{palhuse}
A.~Pal and D.~A. Huse, ``Many-body localization phase transition,'' {\em
  Physical Review B}, vol.~82, no.~17, p.~174411, 2010.

\bibitem{oganesyanhuse}
V.~Oganesyan and D.~A. Huse, ``Localization of interacting fermions at high
  temperature,'' {\em Physical Review B}, vol.~75, no.~15, p.~155111, 2007.

\bibitem{imbriespencer}
J.~Imbrie and T.~Spencer {\em personal communication and research talks}, 2012.

\bibitem{lindenpopescushortwinter}
N.~Linden, S.~Popescu, A.~J. Short, and A.~Winter, ``Quantum mechanical
  evolution towards thermal equilibrium,'' {\em Physical Review E}, vol.~79,
  p.~061103, 2009.

\bibitem{lebowitzgoldsteinmastrodonatotumulka}
J.~L. Lebowitz, S.~Goldstein, C.~Mastrodonato, R.~Tumulka, and N.~Zanghi, ``On
  the approach to thermal equilibrium of macroscopic quantum systems,'' {\em
  Physical Review E}, vol.~81, p.~011109, 2010.

\bibitem{aizenmanwarzel}
M.~Aizenman and S.~Warzel, ``Extended states in a {L}ifshitz tail regime for
  random {S}chr{\"o}dinger operators on trees,'' {\em Physical review letters},
  vol.~106, no.~13, p.~136804, 2011.

\bibitem{imbriespencernote}
J.~Imbrie and T.~Spencer {\em unpublished note}, 2013.

\bibitem{oganesyanpalhuse}
V.~Oganesyan, A.~Pal, and D.~A. Huse, ``Energy transport in disordered
  classical spin chains,'' {\em Physical Review B}, vol.~80, no.~11, p.~115104,
  2009.

\bibitem{basko}
D.~Basko, ``Weak chaos in the disordered nonlinear {S}chr{\"o}dinger chain:
  destruction of {A}nderson localization by {A}rnold diffusion,'' {\em Annals
  of Physics}, vol.~326, no.~7, pp.~1577--1655, 2011.

\bibitem{fishman2009perturbation}
S.~Fishman, Y.~Krivolapov, and A.~Soffer, ``Perturbation theory for the
  nonlinear {S}chr{\"o}dinger equation with a random potential,'' {\em
  Nonlinearity}, vol.~22, no.~12, p.~2861, 2009.

\bibitem{huveneers}
F.~Huveneers, ``Drastic fall-off of the thermal conductivity for disordered
  lattices in the limit of weak anharmonic interactions,'' {\em Nonlinearity},
  vol.~26, no.~3, pp.~837--854, 2013.

\bibitem{aubry1980analyticity}
S.~Aubry and G.~Andr{\'e}, ``Analyticity breaking and {A}nderson localization
  in incommensurate lattices,'' {\em Ann. Israel Phys. Soc}, vol.~3, no.~133,
  p.~18, 1980.

\bibitem{iyer}
S.~Iyer, V.~Oganesyan, G.~Refael, and D.~A. Huse, ``Many-body localization in a
  quasiperiodic system,'' {\em arXiv:1212.4159}, 2012.

\bibitem{Schiulaz}
M.~Schiulaz and M.~M{\"u}ller, ``Ideal quantum glass transitions: many-body
  localization without quenched disorder,'' {\em arXiv:1309.1082}, 2013.

\bibitem{deroeckhuveneerssearch}
W.~De~Roeck and F.~Huveneers, ``Search for many-body localization in
  translation-invariant systems,'' {\em arXiv: 1405.3279}, 2014.

\bibitem{deroeckhuveneers}
W.~De~Roeck and F.~Huveneers, ``Asymptotic localization of energy in
  non-disordered oscillator chains,'' {\em arXiv:1305.5127}, 2013.

\bibitem{rumpf}
B.~Rumpf, ``Simple statistical explanation for the localization of energy in
  nonlinear lattices with two conserved quantities,'' {\em Physical Review E},
  vol.~69, no.~1, p.~016618, 2004.

\bibitem{bonettolebowitzreybellet}
F.~Bonetto, J.~L. Lebowitz, and L.~Rey-Bellet, ``{F}ourier's law: a challenge
  to theorists.,'' in {\em Mathematical Physics 2000} (A.~Fokas, A.~Grigoryan,
  T.~Kibble, and B.~Zegarlinsky, eds.), p.~128, Imperial College, London, 2000.

\bibitem{nachtergaele2010lieb}
B.~Nachtergaele and R.~Sims, ``Lieb-{R}obinson bounds in quantum many-body
  physics,'' {\em arXiv:1004.2086}, 2010.

\bibitem{abdesselamprocacciscoppola}
A.~Abdesselam, A.~Procacci, and B.~Scoppola, ``Clustering bounds on n-point
  correlations for unbounded spin systems,'' {\em Journal of Statistical
  Physics}, vol.~136, no.~3, pp.~405--452, 2009.

\bibitem{ciprianipra}
A.~Cipriani and P.~D. Pra, ``Decay of correlations for quantum spin systems
  with a transverse field: a dynamic approach,'' {\em arXiv:1005.3547}, 2010.

\bibitem{bodineauhelffer}
T.~Bodineau and B.~Helffer, ``The log-{S}obolev inequality for unbounded spin
  systems,'' {\em Journal of functional analysis}, vol.~166, no.~1,
  pp.~168--178, 1999.

\bibitem{netocnyredig}
K.~Neto{\v{c}}n\'y and F.~Redig, ``Large deviations for quantum spin systems,''
  {\em Journal of Statistical Physics}, vol.~117, p.~521, 2004.

\bibitem{ueltschi}
D.~Ueltschi, ``Cluster expansions and correlation functions,'' {\em Moscow
  Mathematical Journal}, vol.~4, pp.~511--522, 2004.

\end{thebibliography}

\end{document}